\definecolor{forestgreen}{rgb}{0.13, 0.55, 0.13}
\definecolor{ForestGreen}{rgb}{0.0333,0.4451,0.0333}
\definecolor{DarkRed}{rgb}{0.65,0,0}
\definecolor{Red}{rgb}{1,0,0}
\newtheorem{theorem}{Theorem}[section]
\newtheorem{lemma}[theorem]{Lemma}
\newtheorem{claim}[theorem]{Claim}
\newtheorem{remark}[theorem]{Remark}
\newtheorem{definition}[theorem]{Definition}
\newenvironment{LabeledProof}[1]{\noindent{\bf Proof of #1: }}{\qed}
\newcommand{\poly}{\text{poly}}
\newcommand{\OPT}{\text{OPT}}
\newcommand*\mc[1]{\mathcal{#1}}
\DeclarePairedDelimiter\abs{\lvert}{\rvert}
\DeclarePairedDelimiter\norm{\lVert}{\rVert}%
\let\oldabs\abs
\def\abs{\@ifstar{\oldabs}{\oldabs*}}
\let\oldnorm\norm
\def\norm{\@ifstar{\oldnorm}{\oldnorm*}}
\DeclareMathOperator{\dtr}{dtr}
\DeclareMathOperator{\final}{final}
\newcommand{\polylog}{{\rm polylog}}
\newcommand{\pw}{{\rm pw}}
\newcommand{\etal}{{et al. \xspace}}
\newcommand{\initOneLiners}{%
	\setlength{\itemsep}{0pt}
	\setlength{\parsep }{0pt}
	\setlength{\topsep }{0pt}
}
\newenvironment{OneLiners}[1][\ensuremath{\bullet}]
{\begin{list}
		{#1}
		{\initOneLiners}}
	{\end{list}}
\newcounter{note}[section]
\newcommand{\mcC}{\mathcal{C}}
\newcommand{\calC}{\mathcal{C}}
\newif\ifcomments
\newcommand{\atodo}[1]{\todo[linecolor=red,backgroundcolor=green!25,bordercolor=red]{\textbf{AF:~}#1}}
\newcommand{\dtodo}[1]{\todo[linecolor=red,backgroundcolor=yellow!25,bordercolor=red]{\textbf{DH:~}#1}}
\newcommand{\dtodoin}[1]{\todo[linecolor=red,backgroundcolor=yellow!25,bordercolor=red,inline]{\textbf{DH:~}#1}}
\newcommand{\rrtodo}[1]{\todo[linecolor=red,backgroundcolor=yellow!25,bordercolor=red]{\textbf{RR:~}#1}}
\newcommand{\tcr}[1]{{\color{red} #1}}
\newcommand{\rnote}[1]{\tcr{Ravi: #1}}
\newcommand{\ddim}{d}
\newcommand{\sddim}{d}
\newcommand{\tnote}[1]{}
\newcommand{\yaowei}[1]{}
\newcommand{\dtodo}[1]{}
\newcommand{\dtodoin}[1]{}
\newcommand{\rnote}[1]{}
\newcommand{\rrtodo}[1]{}
\title{
One Tree to Rule Them All:\\ Poly-Logarithmic Universal Steiner Tree}
\author{
\begin{tabular}[t]{c@{\extracolsep{1.3em}}cc} 
        Costas Busch  & \qquad Da Qi Chen & Arnold Filtser \\
        \small Augusta University  & \small \qquad University of Virginia & \small  Bar-Ilan University \\
        \\
        Daniel Hathcock  & \qquad D Ellis Hershkowitz & Rajmohan Rajaraman \\
        \small Carnegie Mellon University  & \small \qquad ETH Zürich & \small Northeastern University \\
\end{tabular}
}
\date{}
\begin{document}

\maketitle

\begin{abstract}
A spanning tree $T$ of graph $G$ is a $\rho$-approximate \emph{universal
Steiner tree} (UST) for root vertex $r$ if, for any subset of vertices
$S$ containing $r$, the cost of the minimal subgraph of $T$ connecting
$S$ is within a $\rho$ factor of the minimum cost tree connecting $S$
in $G$. Busch et al.\ (FOCS 2012) showed that every graph admits
$2^{O(\sqrt{\log n})}$-approximate USTs by showing that USTs are
equivalent to strong sparse partition hierarchies (up to
poly-logs). Further, they posed poly-logarithmic USTs and strong
sparse partition hierarchies as open questions.

We settle these open questions by giving polynomial-time algorithms for computing both $O(\log ^ 7 n)$-approximate USTs and poly-logarithmic strong sparse partition hierarchies. For graphs with constant doubling dimension or constant pathwidth we improve this to $O(\log n)$-approximate USTs and $O(1)$ strong sparse partition hierarchies. Our doubling dimension result is tight up to second order terms. We reduce the existence of these objects to the previously studied cluster aggregation problem and what we call dangling nets. 
\end{abstract}

\pagenumbering{gobble}
\newpage

\setcounter{secnumdepth}{5}
\setcounter{tocdepth}{3} \tableofcontents
\newpage

\pagenumbering{arabic}
\setcounter{page}{1}

\section{Introduction}
Consider the problem of designing a network that allows a server to broadcast a message to a single set of clients. If sending a message over a link incurs some cost then designing the best broadcast network is classically modelled as the Steiner tree problem \cite{hwang1992steiner}. Here, we are given an edge-weighted graph $G = (V,E, w)$, terminals $S \subseteq V$ and our goal is a subgraph $H \subseteq G$ connecting $S$ of minimum weight $w(H) := \sum_{e \in H} w(e)$. We let $\OPT_S$ be the weight of an optimal solution.

However, Steiner tree fails to model the fact that a server generally broadcasts different messages to different subsets of clients over time. If constructing network links is slow and labor-intensive, we cannot simply construct new links each time a new broadcast must be performed. Rather, in such situations we must understand how to construct a {\em single network} in which the broadcast cost from a server is small for every subset of clients. Ideally, we would like our network to be a tree since trees have a simple routing structure. Our goals are similar if our aim is to perform repeated aggregation of data of different subsets of clients. 
Motivated by these settings, Jia \etal \cite{jia2005universal} introduced the idea of universal Steiner trees (USTs), defined below and illustrated in \Cref{fig:UST}.

\begin{definition}[$\rho$-Approximate Universal Steiner Tree]
Given an edge-weighted graph $G = (V, E, w)$ and root  $r \in V$, a $\rho$-approximate universal Steiner tree is a spanning tree $T \subseteq G$ such that for every $S \subseteq V$ containing $r$, we have
\begin{align*}
    w(T\{S\}) \leq \rho \cdot \OPT_S
\end{align*}
where $T\{S\} \subseteq T$ is the minimal subtree of $T$ connecting $S$, and $\OPT_S$ is the minimum weight Steiner tree connecting $S$ in $G$.
\end{definition}

\begin{wrapfigure}{r}{0.32\textwidth}
		\vspace{-15pt}		
\scalebox{0.65}{\renewcommand{\arraystretch}{1.60}
\begin{tabular}{|c|c|c|}
	\hline
	\textbf{\begin{tabular}[c]{@{}l@{}}Family\end{tabular}}        &  \textbf{Approximation}            & \textbf{Ref.}      \\ \hline
	\multicolumn{3}{|c|}{\textbf{Complete Graphs}}    \\ \hline
	\multirow{2}{*}{\begin{tabular}[c]{@{}l@{}}{\large General}\end{tabular}}                                                       & {\Large $O(\log^2 n)$  }                   & \cite{gupta2006oblivious} \\ \cline{2-3} 
	& {\Large $\Omega(\log n)$}                  & \cite{jia2005universal}   \\ \hline
	
	\multirow{2}{*}{{\large Planar}}  	                                                     & {\Large $O(\log n)$ }                      & \cite{BLT14}              \\ \cline{2-3} 
	& {\Large $\tilde{\Omega}(\log n)$}          & \cite{jia2005universal}   \\ \hline

	\multirow{2}{*}{Doubling}                                                                                                       & {\Large $\tilde{O}(\ddim^3)\cdot\log n$}           & \cite{filtser20}              \\ \cline{2-3} 
& {\Large $\tilde{\Omega}(\log n)$ }         & \cite{jia2005universal}   \\ \hline
	
	Pathwidth                                                                                                     & {\Large $O(\pw\cdot\log n)$ }              & \cite{filtser20}              \\ \hline
	
	\multicolumn{3}{|c|}{\textbf{General Graphs}}    \\ \hline
	\multirow{2}{*}{\begin{tabular}[c]{@{}l@{}}{\large General}\\\end{tabular}}        & {\Large $2^{O(\sqrt{\log n})}$}            & \cite{busch2012split}     \\ \cline{2-3} 
	& {\Large $O(\log^7n)$ }                     & Thm. \ref{thm:USTGen}           \\ \hline

	Planar                                                  & {\Large $O(\log ^{18} n)$ }                & \cite{busch2012split}     \\ \hline 		
	
	Doubling	 & {\Large $\tilde{O}(\ddim ^7 )\cdot \log n$} & Thm.\ \ref{thm:USTDD}                \\ \hline	
	
	Pathwidth    & {\Large $O(\pw^8\cdot\log n)$ }              & Thm.\ \ref{thm:USTPW}                 \\ \hline
\end{tabular}
}		
	\vspace{-15pt}
\end{wrapfigure}

Known UST results are given to the right. Surprisingly, it is known that every $n$-vertex graph admits a $2^{O(\sqrt{\log n})}$-approximate and poly-time-computable UST, as proven by \cite{busch2012split} more than a decade ago. On the other hand, the best known lower bound is $\rho \geq \Omega(\log n)$ \cite{jia2005universal}. 
In fact, even when $G$ is the complete graph whose distances are induced by an $\sqrt{n}\times\sqrt{n}$ grid, there is an
$\Omega(\log n / \log \log n)$ lower bound. Improved upper bounds are known for several special cases:
fixed minor-free (e.g.\ planar) graphs admit $O(\log ^{18} n)$-approximate USTs \cite{busch2012split}.
Complete graphs induced by a metric admit $O(\log^2 n)$-approximate USTs~\cite{gupta2006oblivious}. If the inducing metric has doubling dimension $\ddim$, then the complete graph admits $O(\ddim^3\cdot\log n)$-approximate USTs \cite{filtser20}. Furthermore, better bounds are known for complete graphs when the inducing metric is the shortest path metric of a restricted graph $H$: if $H$ is planar or has pathwidth $\pw$ then the complete graph admits $O(\log n)$- \cite{BLT14} and $O(\pw\cdot\log n)$- \cite{filtser20} approximate USTs, respectively. 

\begin{figure}[t]
	\centering
	\begin{subfigure}[b]{0.3\textwidth}
		\centering
		\includegraphics[width=\textwidth,trim=200mm 100mm 200mm 60mm, clip]{./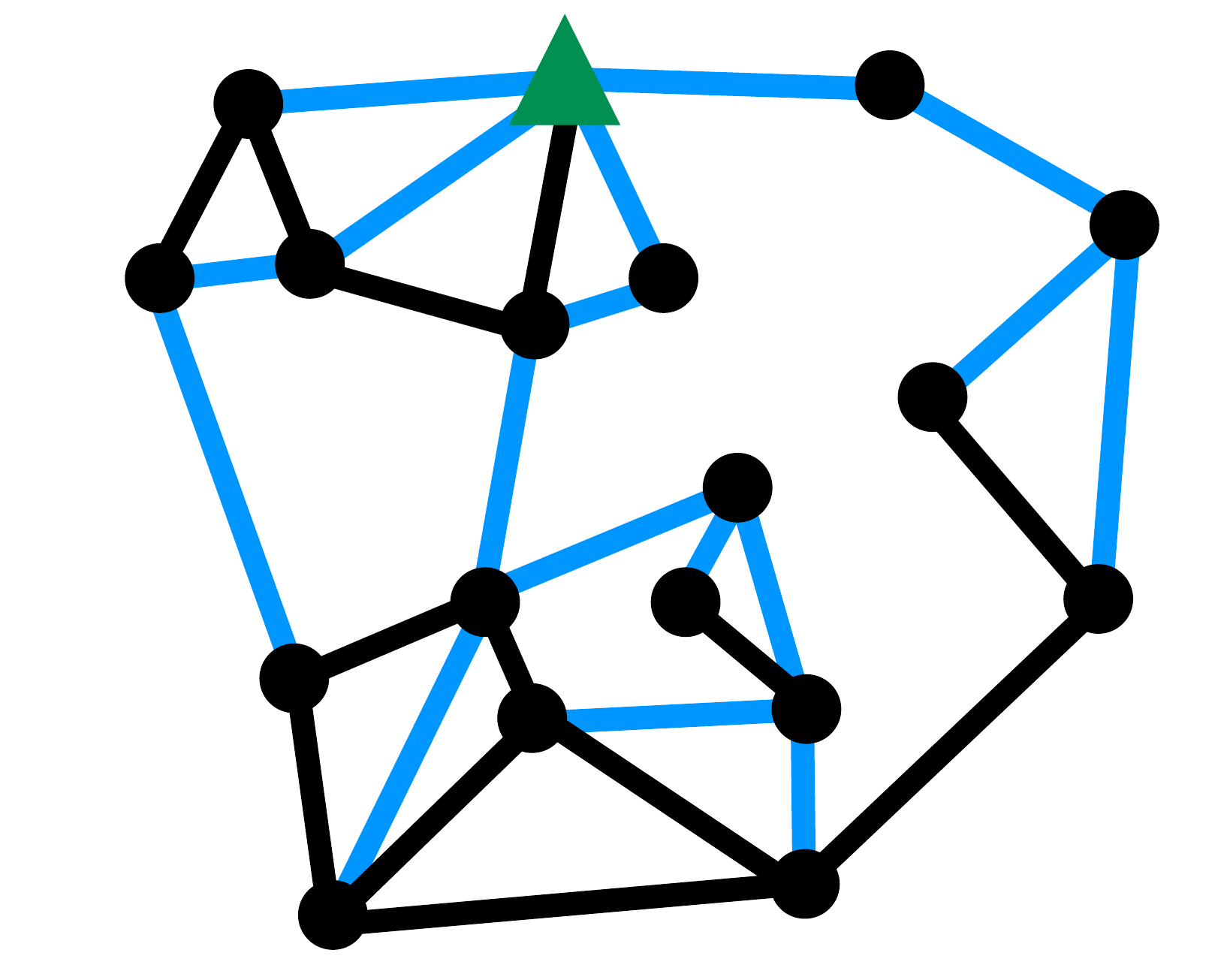}
		\caption{UST $T$.}\label{sfig:UST1}
	\end{subfigure}    \hfill
	\begin{subfigure}[b]{0.3\textwidth}
		\centering
		\includegraphics[width=\textwidth,trim=200mm 100mm 200mm 60mm, clip]{./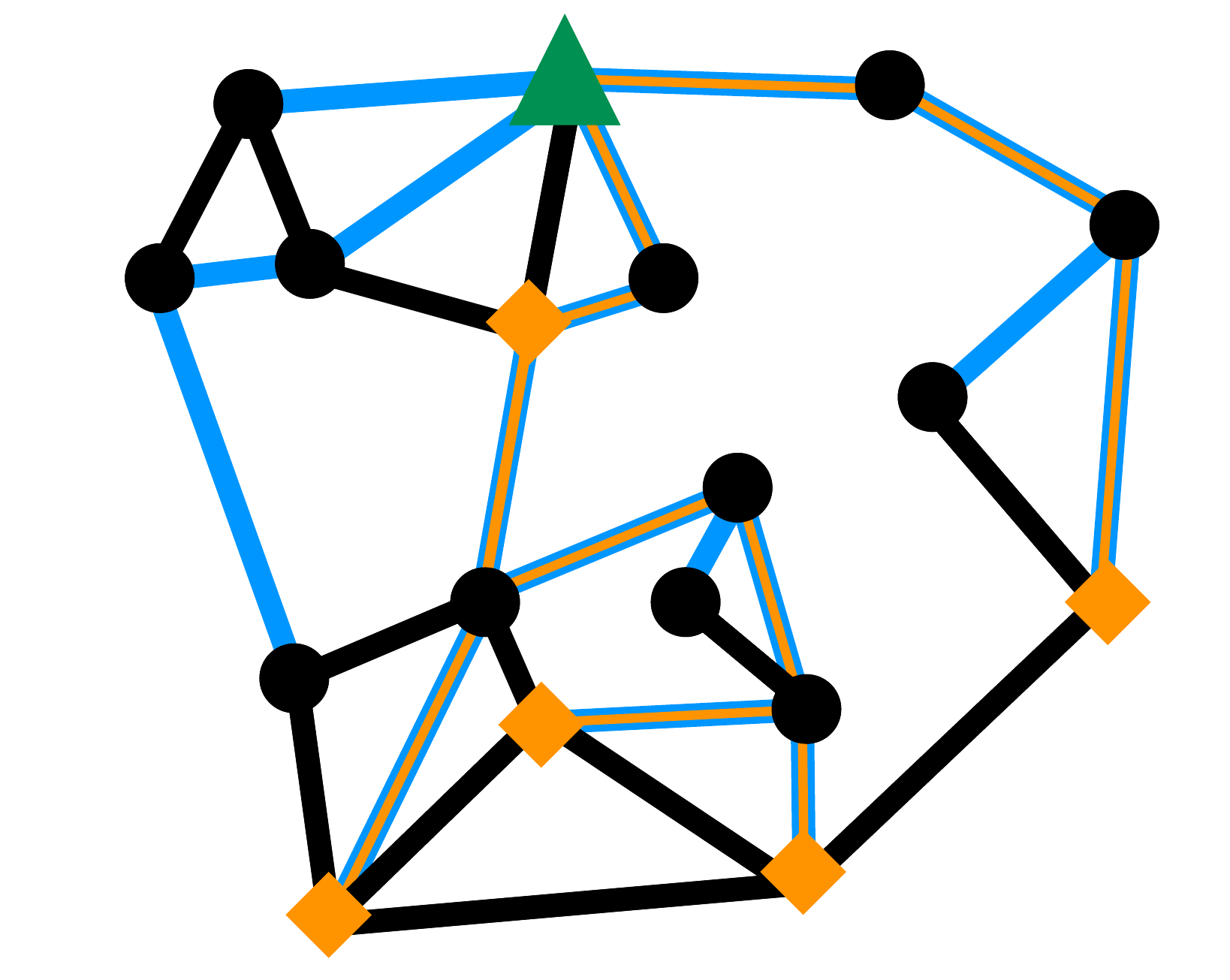}
		\caption{$T\{S\} \subseteq T$.}\label{sfig:UST2}
	\end{subfigure}    \hfill
	\begin{subfigure}[b]{0.3\textwidth}
		\centering
		\includegraphics[width=\textwidth,trim=200mm 100mm 200mm 60mm, clip]{./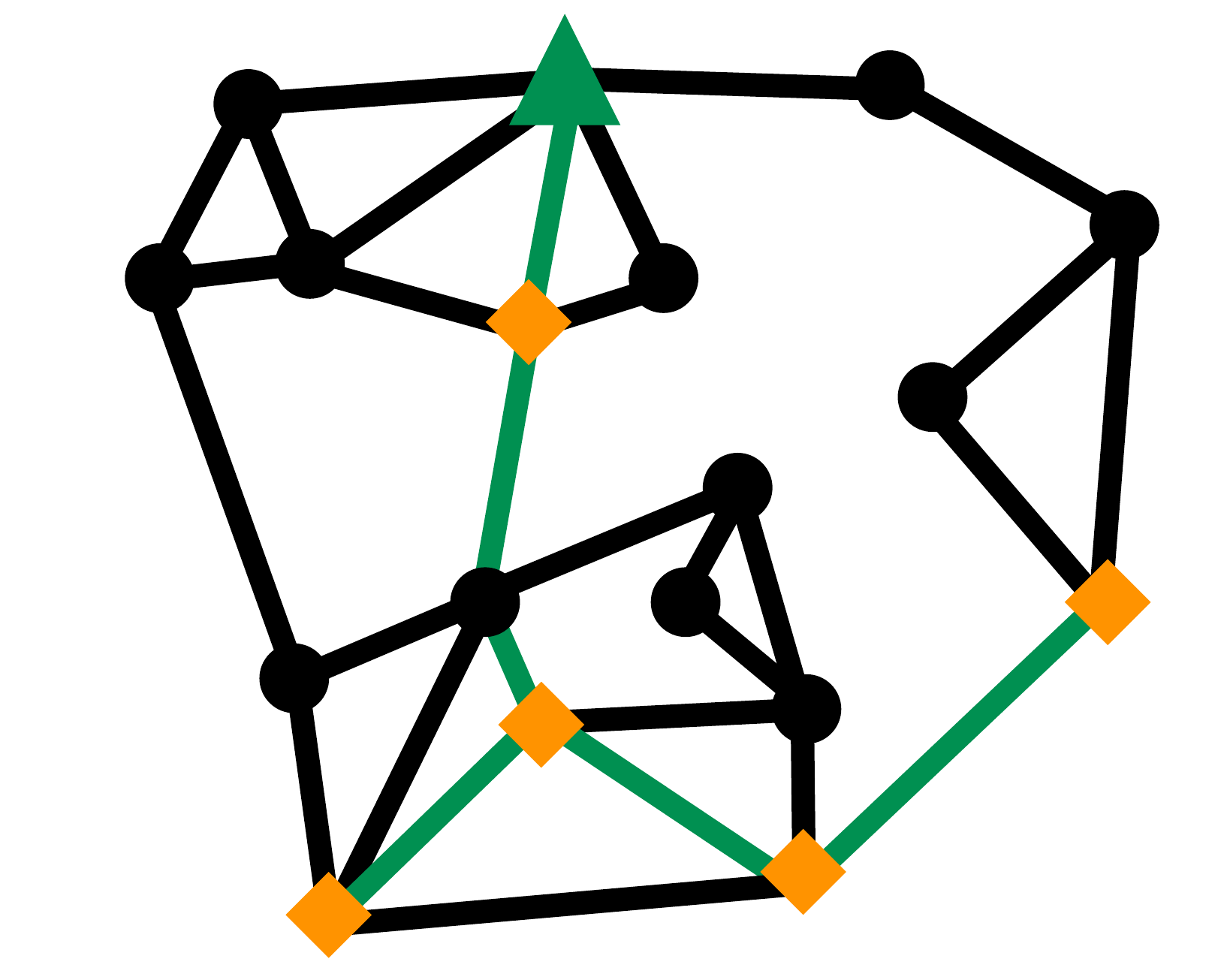}
		\caption{Optimal Steiner tree for $S$.}\label{sfig:UST3}
	\end{subfigure}
	\caption{\ref{sfig:UST1} is a UST (blue) in unit-weight $G$ with root $r$ (green triangle). \ref{sfig:UST2} is induced subtree $T\{S\}$ (orange) of weight $11$ for $S$ (orange diamonds). \ref{sfig:UST3} is weight $6$ optimal Steiner tree (green).}\label{fig:UST}
\end{figure}

Thus, for general graphs, there is a huge gap between the upper and lower bounds of  $2^{O(\sqrt{\log n})}$ and $\Omega(\log n)$. Closing this gap has been posed as an open question \cite{busch2012split, jia2005universal,filtser20}. 
\begin{quote}\textit{
Our main result is a poly-time $O(\log ^ 7 n)$-approximate UST, settling this open question.}
\end{quote}

\noindent Furthermore, if $G$ has constant doubling dimension or pathwidth, we provide an $O(\log n)$-approximate UST. The doubling dimension result is tight up to second order terms \cite{jia2005universal}.

We obtain our results by proving the existence of certain graph hierarchies---strong sparse partition hierarchies---
and leveraging a previously-established connection between these hierarchies and USTs. We prove the existence of these hierarchies, in turn, by reducing their existence to two objects: (1) low distortion solutions to the (previously studied) cluster aggregation problem, and (2) a certain kind of net which we call dangling nets that provide \emph{additive} sparsity guarantees. The existence of these nets can be inferred from an analysis in \cite{filtser20} of the random-shift techniques of \cite{MillerPX2013}. For cluster aggregation, we improve the best bounds in general graphs from $O(\log ^ 2 n)$ \cite{busch2012split} to $O(\log n)$ and prove $O(1)$ bounds for trees, constant doubling dimension and constant pathwidth graphs.  Our results are summarized in \Cref{tab:overview}. We spend the rest of this section describing them in greater detail.\footnote{We make use of standard graph notation and concepts throughout this work; see \Cref{sec:notation} for definitions.}

\def\arraystretch{1.2}
\begin{table}[h!]
	\begin{center}
		\begin{tabular}{|c|c|cc|cc|cc|} 
			\hline
			\textbf{Problem} & \textbf{Param.}& \textbf{General} & \textbf{Thm.} & \textbf{Doubling} &\textbf{Thm.}& \textbf{Pathwidth}& \textbf{Thm.}\\
			\hline
			UST & $\rho$ & $O(\log ^ 7 n)$& (\ref{thm:USTGen})& $\tilde{O}(\ddim^7 )\cdot \log n$ & (\ref{thm:USTDD})& $O(\pw^8 \cdot \log n)$& (\ref{thm:USTPW}) \\
			\hline
			Strong  Sparse & $\alpha$ &  $O(\log n)$ & & $O(\ddim)$ & & $O(\pw)$ & \\
			Hierarchy &$\tau$ & $O(\log n)$  & (\ref{thm:SSPGen})& $\tilde{O}(\ddim)$ &(\ref{thm:SSPDD})& $O(\pw^2)$ & (\ref{thm:SSPPW})\\
			&$\gamma$ & $O(\log^2 n)$  && $\tilde{O}(\ddim^3)$ && $O(\pw^2)$ &\\
			\hline
			Dangling  Nets & $\alpha$ & $O(\log n)$ && $O(\ddim)$ && $O(\pw)$ &\\
			\cite{filtser20}  & $\tau$ & $O(\log n)$ &&  $\tilde{O}(\ddim)$ && $O(\pw^2)$ &\\ \hline
			Cluster Agg.\ & $\beta$ & $O(\log n)$ & (\ref{thm:caGen})&  $\tilde{O}(\ddim^2)$ & (\ref{thm:caDD}) & $O(\pw)$ & (\ref{thm:caPW})\\ 
			\hline
		\end{tabular}
		\caption{A summary of our results. Our solutions for cluster aggregation on doubling dimension $\ddim$ only work for the instances of cluster aggregation we must solve to compute hierarchies of strong sparse partitions. $\tilde{O}$ notation hides $\poly(\log \ddim)$ factors. The results for dangling nets are proven implicitly in \cite{filtser20}. Our algorithms for general and doubling metrics are randomized and succeed with high probability ($1 - n^{-\Omega(1)}$), while our result for pathwidth graphs and trees are deterministic.
		}\label{tab:overview}
	\end{center}
\end{table}


\subsection{Poly-Logarithmic USTs}
As mentioned, our main result is a polynomial-time computable $O(\log ^ 7 n)$-approximate UST in general graphs. Not only is this an exponential improvement for general graphs\footnote{Following the conventions in theoretical computer science, we call $f=\polylog(g)$ an exponential improvement.}, it significantly improves upon the best bounds known for planar graphs (previously $O(\log^{18} n)$ \cite{busch2012split}). 
%

We also give improved UST bounds for graphs with doubling dimension $\ddim$ and graphs with pathwidth $\pw$: $\poly(\ddim)\cdot\log n$ and $\poly(\pw)\cdot\log n$ respectively.
Bounded doubling dimension graphs are a well-studied graph class that generalizes the ``bounded growth'' of low-dimensional Euclidean space to arbitrary graphs \cite{fraigniaud2006doubling,abraham2006routing,abraham2016forbidden,konjevod2008dynamic,filtser2016greedy,filtser2019relaxed}.
Bounded pathwidth graphs are a fundamental graph class that plays a key role in the celebrated graph minor theorem \cite{robertson1986graph}. 
As discussed above, it was previously known that $O(\log n)$-approximate USTs are possible if $G$ is a complete graph whose edge lengths are induced by either a constant doubling dimension metric or the shortest path metric of a constant pathwidth graph. 
Our results significantly strengthen this, showing $O(\log n)$-approximate USTs are possible for these two cases without the additional assumption that $G$ is the complete graph.

\subsection{Strong Sparse Partitions via Cluster Aggregation and Dangling Nets}\label{sec:contriSparsePartitions}

As mentioned, we achieve our UST algorithm by way of new results in graph hierarchies. 

We build on works over the past several decades on efficiently decomposing and extracting structure from graphs and metrics. Notable examples of this work are ball carvings, low-diameter decompositions (LDDs), network decompositions, padded decompositions and sparse neighborhood covers, all of which have numerous algorithmic applications, especially in parallel and distributed computing \cite{AP90,linial1993low,KPR93,forster2019dynamic,blelloch2011near,Fil19Padded,fox2010decompositions,elkin2022deterministic,awerbuch1996fast,abraham2008nearly,chang2021strong,bartal2004graph,kamma2017metric,rozhovn2020polylogarithmic,FL22LSO}. Generally speaking, these constructions separate a graph into clusters of nearby vertices while respecting the graph's distance structure.

The decompositions of our focus are \emph{strong sparse partitions}, first defined by \cite{jia2005universal} (in their weak diameter version) and studied in several later works \cite{busch2012split,czumaj2022streaming,filtser20}.


\begin{definition}[Strong $\Delta$-Diameter $(\alpha, \tau)$-Sparse Partitions]\label{dfn:sparsePart}
    Given edge-weighted graph $G = (V, E, w)$, a strong $\Delta$-diameter $(\alpha, \tau)$-sparse partition is a partition $\calC$ of $V$ such that:
	\begin{itemize}
		\item \textbf{Low (Strong) Diameter:} $\forall C\in\calC$, the induced graph $G[C]$ has diameter at most $\Delta$;
		\item \textbf{Ball Preservation:} $\forall v \in V$, the ball $B_G(v, \frac{\Delta}{\alpha})$ intersects at most $\tau$ clusters from $\calC$ .
	\end{itemize}
\end{definition}


\noindent Sparse partitions with \emph{weak diameter} and poly-logarithmic parameters can be constructed directly from classic sparse covers \cite{AP90,jia2005universal,filtser20} or ball carving techniques \cite{Bar96,czumaj2022streaming}.
However, to date, the only known techniques for poly-logarithmic sparse partitions with \emph{strong diameter} guarantees in general graphs are the $(O(\log n),O(\log n))$-sparse partitions of \cite{filtser20}, constructed using exponentially-shifted starting times.\footnote{Worse partitions with $2^{O(\sqrt{\log n})}$ parameters are possible by adapting the greedy approach of \cite{busch2012split}.} These start times were first used by \cite{MillerPX2013} to compute low diameter decompositions and spanners.\footnote{Even sparse partitions for pathwidth $\pw$ graphs (with parameters depending only on $\pw$) still use \cite{MillerPX2013}.} 

The simple class of tree graphs cannot do much better than the strong $(O(\log n),O(\log n))$-sparse partitions in general graphs: both $\alpha$ and $\tau$ have to be essentially $\Omega(\log n)$. As such, bounded pathwidth and doubling dimension graphs are of particular interest in this context. In particular, graphs with bounded pathwidth are exactly the graph family that excludes a fixed tree as a minor, circumventing this barrier with constant parameter strong sparse partitions.\footnote{Another interesting example comparing pathwidth with trees: trees require distortion $\Omega(\sqrt{\log\log n})$ for embeddings into $\ell_2$ \cite{Bou86,Mat99} while pathwidth $\pw$ graphs admit distortion $\sqrt{\pw}$ embeddings \cite{AFGN22}. }
On the other hand, trees that do not have good sparse partitions have doubling dimension $\Omega(\log n)$.


Little is known about graph decompositions in hierarchical settings; in particular, if our goal is a series of decompositions of increasing diameter where each decomposition coarsens the previous. One notable such hierarchy introduced by \cite{busch2012split} is a hierarchy of strong sparse partitions.\footnote{We assume that the minimal pairwise distance in $G$ is $1$. Otherwise, we can scale all distances accordingly.}

\begin{definition}[$\gamma$-Hierarchy of Strong $(\alpha, \tau)$-Sparse Partitions]\label{dfn:hierarchies}
    Given edge-weighted graph $G = (V, E, w)$, a $\gamma$-hierarchy of strong $(\alpha, \tau)$-sparse  partitions consists of vertex partitions $\{ \{v\} : v \in V\} = \mcC_0, \mcC_1, \ldots, \mcC_k = \{\{V\}\}$ such that: 
\begin{itemize}
		\item \textbf{Strong Partitions:} $\mcC_i$ is a strong $\gamma^i$-diameter $(\alpha, \tau)$-sparse partition for every $i$;
		\item \textbf{Coarsening:} $\mcC_{i+1}$ coarsens $\mcC_{i}$, i.e.\ for each $U \in \mcC_i$ there is a $U' \in \mcC_{i+1}$ such that $U \subseteq U'$.%
	\end{itemize}
\end{definition}
\noindent If we did not enforce the above coarsening property, we could trivially compute the above partitions with poly-logarithmic parameters by using the strong sparse partitions from \cite{filtser20} independently for each level of the hierarchy. However, the coarsening property renders computing such hierarchies highly non-trivial as it prevents such independent construction. Indeed, while hierarchies with poly-logarithmic parameterizations have been stated as an open question (see, e.g. \cite{filtser20}), the previous best bounds known for such hierarchies are $\gamma = \alpha = \tau = 2^{O(\sqrt{\log n})}$ \cite{busch2012split}. Thus, there is an exponential gap between the bounds known for ``one-level'' and hierarchical strong sparse partitions.

Nonetheless, previous work has demonstrated that these hierarchies can serve as the foundation of remarkably powerful algorithmic result such as USTs.
\begin{theorem}[\cite{busch2012split}]\label{thm:decompToUST} Given edge-weighted graph $G = (V,E,w)$ and a $\gamma$-hierarchy of strong $(\alpha, \tau)$-sparse partitions, one can compute an $O(\alpha^2\tau^2 \gamma \log n)$-approximate UST in polynomial time.
\end{theorem}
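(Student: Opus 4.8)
The plan is to build the spanning tree $T$ directly and recursively from the hierarchy, bottom-up, and then to bound $w(T\{S\})$ for an arbitrary $S \ni r$ by charging its edges level by level against an optimal Steiner tree for $S$. First, by the coarsening property the family $\bigcup_i \mcC_i$ is laminar, hence is the node set of a rooted tree with $\mcC_i$ at depth $i$, root $V$, and the singletons of $\mcC_0$ as leaves; fix a portal $\pi(C) \in C$ for each cluster, with $\pi(V) = r$. I would then construct a tree $T_C$ spanning exactly $C$ for every cluster and output $T := T_V$: the base case $T_{\{v\}}$ is a single vertex, and for $C \in \mcC_i$ with children $C_1, \dots, C_m \in \mcC_{i-1}$, contract each $C_j$ inside $G[C]$ to obtain a connected multigraph $\hat G_C$ of weighted diameter at most $\gamma^i$ (this is exactly the strong-diameter guarantee of $\mcC_i$), choose a spanning tree $\hat T_C$ of $\hat G_C$ (the only nontrivial choice, discussed last), realise each of its $m-1$ edges by a lightest original edge of $G[C]$ joining the two clusters, and let $T_C$ be the union of those $m-1$ edges with $T_{C_1}, \dots, T_{C_m}$. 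A counting argument ($\sum_j (|C_j| - 1) + (m-1) = |C| - 1$) shows $T_C$ is a spanning tree of $C$, and its restriction to any child $C_j$ is exactly $T_{C_j}$, so the recursion is consistent and the edge set of $T$ splits cleanly by level.

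For the analysis, fix $S \ni r$, let $Z$ be an optimal Steiner tree for $S$, and write $\OPT := \OPT_S = w(Z)$. Since each $T_C$ agrees with its children's trees on the children, $w(T\{S\}) = \sum_{i \ge 1} \sum_{C \in \mcC_i} w\big(E_i(C) \cap T\{S\}\big)$, where $E_i(C)$ is the set of level-$i$ edges of $T_C$. Call a cluster active if it contains a terminal; using that $T$ is a tree, for active $C \in \mcC_i$ the set $E_i(C) \cap T\{S\}$ is (up to $O(1)$ boundary clusters recording where $T\{S\}$ links $C$ to the rest of $T$) the minimal subtree of $\hat T_C$ joining the active children of $C$, and is empty unless $C$ has at least two such children. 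Two facts then control the sum. First, doubling a spanning tree of $Z$ gives a closed walk of length at most $2\OPT$; chopping it into at most $1 + 2\alpha\OPT/\gamma^i$ arcs of length $\gamma^i/\alpha$ and placing a $G$-ball of radius $\gamma^i/\alpha$ at one endpoint of each arc covers $Z$, and each such ball meets at most $\tau$ clusters of $\mcC_i$ by ball preservation; since every terminal lies on $Z$, the number of active clusters in $\mcC_i$ is at most $a_i := \tau\big(1 + 2\alpha\OPT/\gamma^i\big)$. Second, each active cluster of $\mcC_{i-1}$ has a unique (active) parent in $\mcC_i$, so the total number of active children, summed over active $C \in \mcC_i$, is at most $a_{i-1}$; combined with the $O(1)$-per-cluster boundary terms and the bound $a_i \le a_{i-1}$, at most $O(a_{i-1})$ clusters are involved at level $i$.

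The remaining task --- which I expect to be the real work --- is to bound the weight of the minimal subtree of $\hat T_C$ joining the active children of $C$, and here the choice of $\hat T_C$ is decisive. Taking $\hat T_C$ to be a plain shortest-path tree of $\hat G_C$ rooted at the cluster of $\pi(C)$ yields only weight $\gamma^i$ per involved child, hence at most $O(\gamma^i a_{i-1})$ at level $i$; but summing $\gamma^i a_{i-1}$ over all levels leaves a term of order $\tau \sum_i \gamma^i$, which is governed by the diameter of $G$, not by $\OPT$, and is useless when $\OPT$ is small. The fix is to pick $\hat T_C$ so that \emph{every} subset $R$ of $C$'s children is joined within $\hat T_C$ at weight $O(\poly(\alpha, \tau))$ times the cost of joining $R$ in $\hat G_C$ (equivalently, up to lower-order terms, in $G$) --- essentially a strong, sparse-partition-respecting spanning structure of the contracted cluster graph, which one builds by invoking the ball-preservation guarantee \emph{at every scale inside $G[C]$}; this is the source of the $\alpha^2\tau^2\gamma$ factor. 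Given such a $\hat T_C$, the level-$i$ charge becomes $O(\poly(\alpha,\tau)\,\gamma)$ times the cost $Z$ itself must pay to interconnect the active clusters of $\mcC_{i-1}$; charging these against disjoint-enough portions of $Z$, with one unit of overcounting per level and only $O(\log n)$ relevant levels (using that, after a standard reduction, the aspect ratio of $G$ is $\poly(n)$), yields $w(T\{S\}) = O(\alpha^2\tau^2\gamma \log n) \cdot \OPT$. The laminar bookkeeping, the per-level decomposition, and the ball-covering count are routine; designing $\hat T_C$ with the required near-optimal interconnection property for all subsets of children is the crux. Polynomial time is immediate, as every step of the construction is a standard polynomial-time graph operation.
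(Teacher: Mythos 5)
First, a framing note: the paper does not prove this theorem — it is imported as a black box from \cite{busch2012split} (it is the ``leveraging a previously-established connection'' step), so there is no in-paper proof to compare against. Evaluating your proposal on its own terms and against the known argument: your setup is sound where it is concrete. The laminar tree of clusters, the recursive construction of $T_C$ from a spanning tree of the contracted child graph $\hat{G}_C$, and the count of at most $\tau(1+2\alpha\OPT_S/\gamma^i)$ active clusters per level (Euler tour of the optimal tree cut into arcs of length $\gamma^i/\alpha$, one ball per arc, $\tau$ clusters per ball) are all correct and standard.

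The genuine gap is exactly where you say ``the real work'' is, and it is not a missing detail but a circularity. The property you demand of $\hat{T}_C$ — that the minimal subtree of $\hat{T}_C$ spanning \emph{every} subset $R$ of $C$'s children costs $O(\poly(\alpha,\tau))$ times the optimal cost of connecting $R$ in $\hat{G}_C$ — is precisely a $\poly(\alpha,\tau)$-approximate universal Steiner tree of the contracted cluster graph. Since $S$ is adversarial, any subset of children can be the active set, so you really need all subsets; $\hat{G}_C$ is an essentially arbitrary minor of $G$; and single-scale ball preservation of $\mcC_{i-1}$ inside $G[C]$ does not give a UST of the contracted graph (USTs have an $\Omega(\log n)$ lower bound even on very structured instances). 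So the object your fix relies on need not exist, and constructing it is the very problem the theorem is meant to solve. The known proof avoids any per-cluster universal structure: each cluster gets a designated leader (with $r$ the leader of its own cluster at every level), each leader is joined to its parent's leader by a shortest path of length at most $\gamma^{i+1}$ inside the parent cluster, and $w(T\{S\})$ is bounded by charging the $O(\gamma^{i+1})$ connection cost of each active cluster, level by level, against the ball cover of the optimal tree — the $\alpha^2\tau^2\gamma\log n$ factor comes entirely from this charging. Relatedly, your closing step (``charging these against disjoint-enough portions of $Z$, with one unit of overcounting per level'') is asserted rather than argued, and by your own accounting the residual term $\tau\sum_i\gamma^i$, governed by the diameter of $G$ rather than by $\OPT_S$, survives unless the crux construction is supplied. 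As written, the proposal does not close.
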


\noindent \cite{busch2012split} gave $2^{O(\sqrt{\log n})}$-approximate USTs by combining the above theorem with their hierarchies.

Our second major contribution is a reduction of such hierarchies to the previously-studied cluster aggregation problem \cite{busch2012split} and what we call dangling nets. Informally, cluster aggregation takes a vertex partition and a collection of portal vertices and coarsens it to a partition with a portal in each coarsened part. The goal is to guarantee that the portal in each vertex's coarsened cluster is nearly as close in its cluster as its originally-closest portal. Crucially for our purposes, the distortion of a solution is measured \emph{additively}. See \Cref{fig:CA} for an illustration of cluster aggregation.
\begin{figure}
    \centering
    \begin{subfigure}[b]{0.3\textwidth}
        \centering
        \includegraphics[width=\textwidth,trim=200mm 70mm 150mm 50mm, clip]{./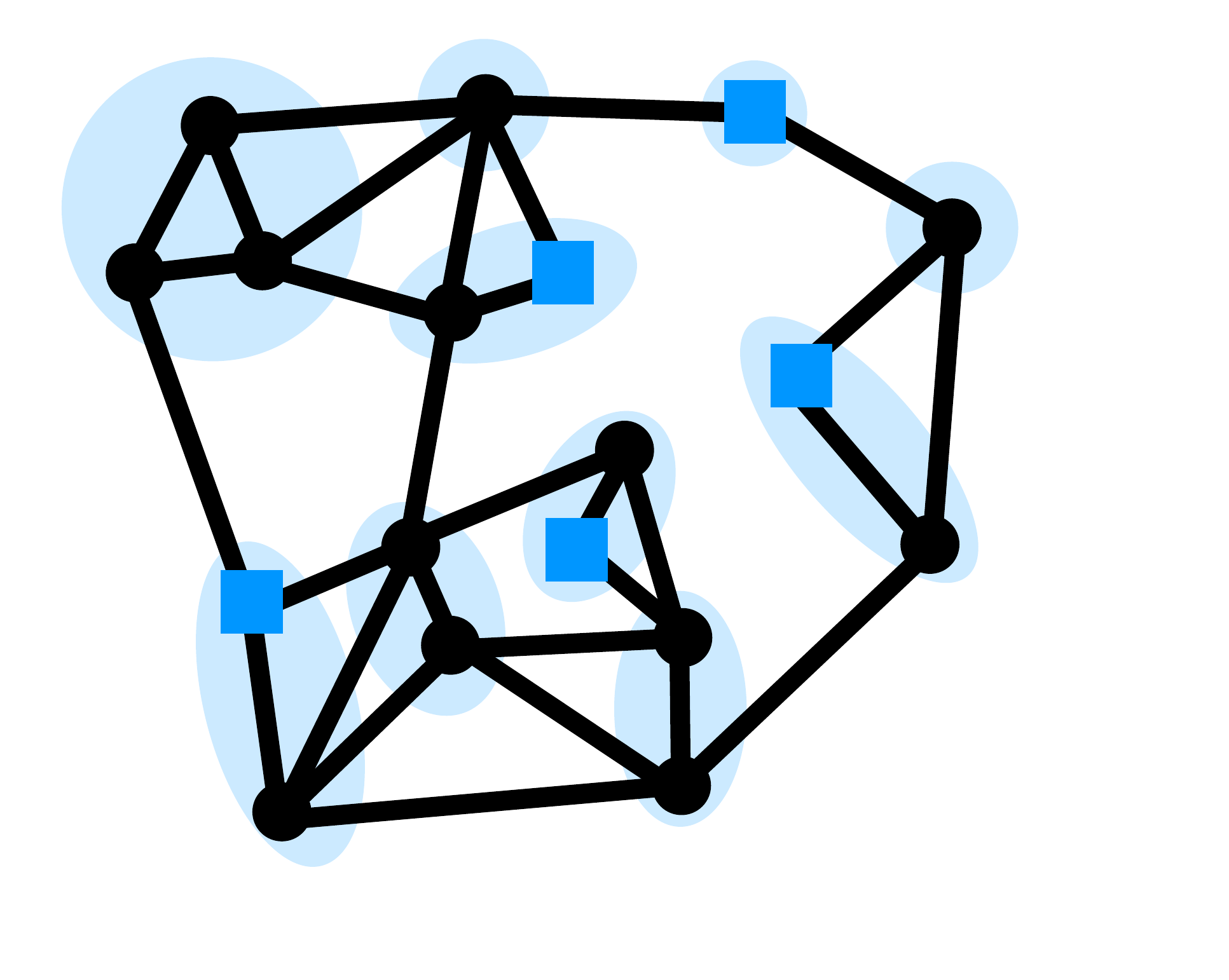}
        \caption{Cluster aggregation instance.}\label{sfig:CA1}
    \end{subfigure}    \hfill
    \begin{subfigure}[b]{0.3\textwidth}
        \centering
        \includegraphics[width=\textwidth,trim=200mm 70mm 150mm 50mm, clip]{./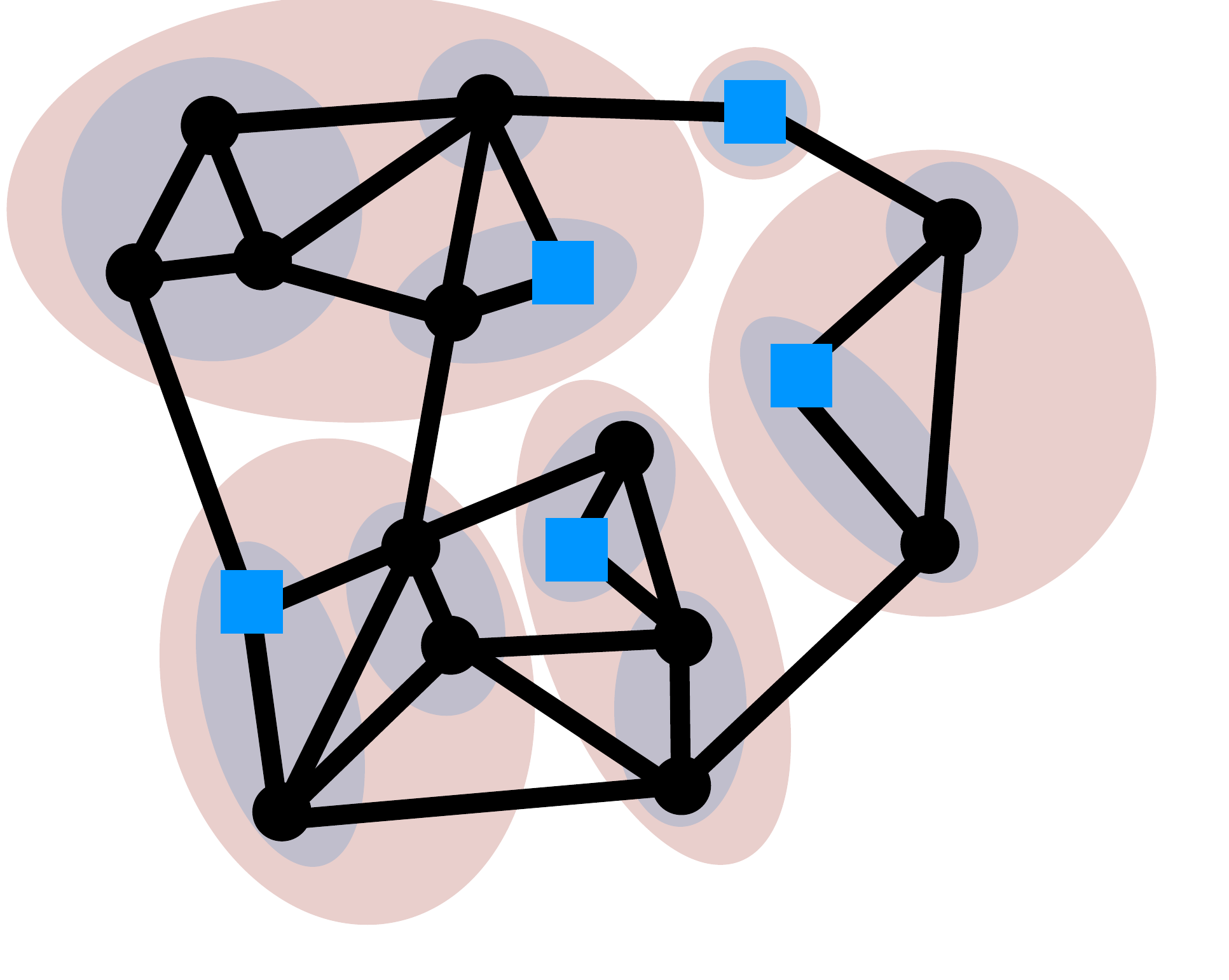}
        \caption{Cluster aggregation\ solution.}\label{sfig:CA2}
    \end{subfigure}    \hfill
    \begin{subfigure}[b]{0.3\textwidth}
        \centering
        \includegraphics[width=\textwidth,trim=200mm 70mm 150mm 50mm, clip]{./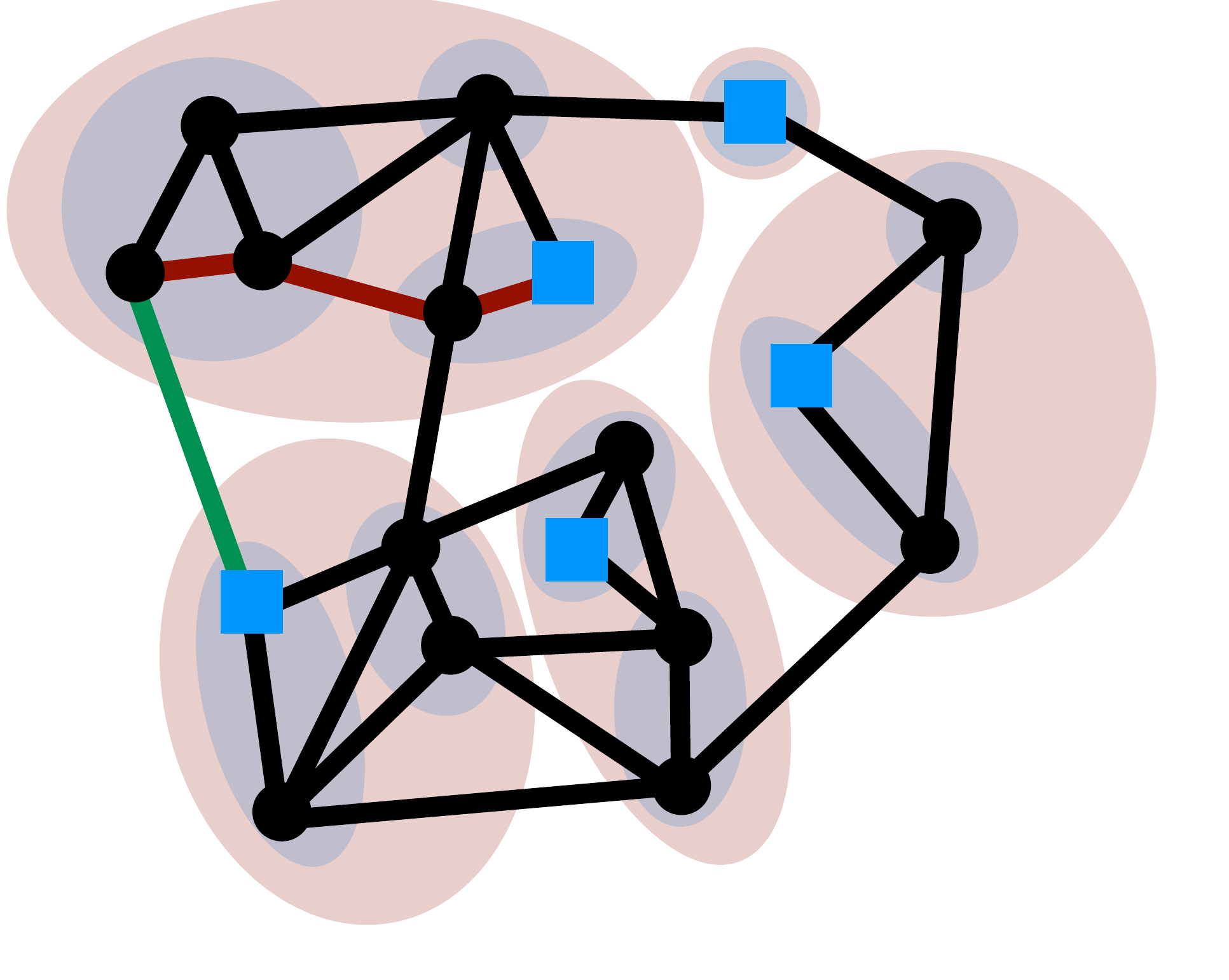}
        \caption{Solution distortion.}\label{sfig:CA3}
    \end{subfigure}
    \caption{A cluster aggregation instance with unit-weight edges. \ref{sfig:CA1} gives the instance; portals $P$ are blue squares and the input partition parts $\calC$ are blue ovals. \ref{sfig:CA2} gives solution where each red oval is the pre-image of some portal. \ref{sfig:CA3} illustrates why the solution is $2$-distortion with the path of a vertex to its nearest portal in green and to its nearest portal in its coarsened cluster in red.}\label{fig:CA}
\end{figure}

\begin{definition}[Cluster Aggregation]\label{dfn:clusterAgg}
    An instance of cluster aggregation consists of an edge-weighted graph $G = (V, E, w)$, a partition $\mcC$ of $V$ into clusters of strong diameter $\Delta$ and a set of portals $P \subseteq V$. A $\beta$-distortion solution is an assignment $f : \mcC \to P$ such that for every $v \in V$
    \begin{align*}
        d_{G[f^{-1}(f(v))]}(v, f(v)) \leq d_G(v, P)+\beta \cdot \Delta
    \end{align*}
    where $C_v \in \mcC$ is the cluster containing $v$ and we let $f(v) := f(C_v)$ and $f^{-1}(p) :=\{v : f(v)=p\}$.
\end{definition}
\noindent In other words, a $\beta$-distortion cluster aggregation solution $f$ requires that the distance from $v$ to $p=f(v)$ in the cluster induced by $p$, is at most $\beta\cdot \Delta$ larger than the distance from $v$ to it's closest portal in $G$. Observe that any solution $f$ on input cluster aggregation partition $\mcC$ naturally corresponds to a coarser partition $\mcC'$. Also, observe that, in general, we have that $\beta \geq 1$ by \Cref{fig:CATrivialLB}. 

Informally, a dangling net is a collection of net vertices we ``dangle'' off of a graph so that every vertex is close to a net vertex but no vertex has too many net vertices nearby. Crucially, the sense of ``nearby'' is also measured \emph{additively}. See \Cref{sfig:alg2} for an illustration.
\begin{definition}[$\Delta$-Covering $(\alpha,\tau)$-Sparse Dangling Net]\label{dfn:danglingNet}
    A dangling net for graph $G=(V,E,w)$ consists of vertices $N$ where $N \cap V = \emptyset$ and a matching $M$ with edge weights $w_M$ from $N$ to $V$. We let $G+N := (V \sqcup N, E \sqcup M, w \sqcup w_M)$ be the resulting graph. $N$ is $\Delta$-covering $(\alpha, \tau)$-sparse if
    \begin{itemize}
        \item \textbf{Covering:} $d_{G+N}(v, N) \leq \Delta$ for every $v \in V$;
        \item \textbf{Additive Sparsity:} $\left|\left\{t \in N : d_{G+N}(v,t)\le d_{G+N}(v,N)+\frac{\Delta}{\alpha}\right\} \right|\le\tau~$ for all $v \in V$.
    \end{itemize}
\end{definition}


\noindent While not explicitly stated in terms of dangling nets, the random shift analysis of \cite{filtser20} implicitly prove the existence of good parameter dangling nets: e.g.\ $\alpha=\tau=O(\log n)$ for general graphs. See Theorems \ref{thm:MPXbasedClusteringGeneral}, \ref{thm:MPXbasedClusteringPathwidth} and \ref{thm:MPXbasedClusteringDoubling} for details.

We state our reduction of strong sparse hierarchies to cluster aggregation and dangling nets.
\begin{restatable}{theorem}{HSSPmainp}\label{thm:MetaHierarchical}
	Fix edge-weighted graph $G$ and $\alpha, \beta, \tau \geq 0$. If for every $\Delta > 0$:
        \begin{itemize}
            \item \textbf{Dangling Net:} there is a dangling net $N$ that is $\Delta$-covering $(\alpha, \tau)$-sparse and;
            \item \textbf{Cluster Aggregation:} $G+N$ cluster aggregation on portals $N$ is always $\beta$-distortion solvable;
        \end{itemize}
        then, $G$ has a $2\beta\cdot\left(2\alpha+1\right)$-hierarchy of strong $(8\alpha+4,\tau)$-sparse partitions.  Furthermore, if each $N$ and cluster aggregation solution is poly-time computable then the hierarchy is poly-time computable.
\end{restatable}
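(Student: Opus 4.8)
The plan is to build the hierarchy $\mcC_0, \mcC_1, \ldots, \mcC_k$ level-by-level, where at each level $i$ we set $\Delta_i = (2\beta(2\alpha+1))^i$ (the target diameter scale) and use the dangling net together with a cluster aggregation solve at scale $\Delta_i$ to coarsen $\mcC_{i-1}$ into $\mcC_i$. The key conceptual point is that coarsening is forced to be consistent across levels precisely because we never re-partition from scratch: at level $i$ we take the current clusters of $\mcC_{i-1}$ as the atomic objects that cluster aggregation groups together. Concretely, given $\mcC_{i-1}$ (which has strong diameter $\Delta_{i-1}$), first build a $\Delta_i$-covering $(\alpha,\tau)$-sparse dangling net $N$ for $G$; then we need to run cluster aggregation on $G+N$ in a way that respects $\mcC_{i-1}$. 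The natural move is to form a contracted/auxiliary instance whose "clusters" are the parts of $\mcC_{i-1}$ and whose portals are $N$, feed it to the $\beta$-distortion cluster aggregation solver, and pull the resulting map $f : \mcC_{i-1} \to N$ back to get $\mcC_i$ by declaring two parts of $\mcC_{i-1}$ to be in the same part of $\mcC_i$ iff they map to the same net vertex. Coarsening is then immediate by construction; we stop when everything has collapsed to a single cluster, which happens after $O(\log_{\,2\beta(2\alpha+1)}(\text{aspect ratio}))$ levels.

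The substance of the proof is verifying the two strong-sparse-partition guarantees for each $\mcC_i$ with the claimed parameters $\gamma = 2\beta(2\alpha+1)$, strong diameter bound driven by $\alpha' = 8\alpha+4$, and sparsity $\tau' = \tau$. For the \textbf{diameter} bound: a part $U \in \mcC_i$ is the union of the $\mcC_{i-1}$-parts assigned to some net vertex $t \in N$. Every vertex $v$ in such a part has, by the cluster aggregation guarantee, $d_{G[f^{-1}(t)]}(v,t) \le d_{G+N}(v,N) + \beta \Delta_i \le \Delta_i + \beta\Delta_i$ — wait, here I need to be careful: the cluster aggregation instance has strong diameter equal to $\Delta_{i-1}$ (the diameter of the input clusters $\mcC_{i-1}$), so the additive term is $\beta\Delta_{i-1}$, and the "$d_G(v,P)$" term is at most $\Delta_i$ by the covering property of the net. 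The net edge from $t$ into $V$ adds at most $\Delta_i$ more. So any two vertices in $U$ are connected within $G+N$ restricted to $U \cup \{t\}$ by a path of length $O(\Delta_i + \beta\Delta_{i-1}) = O(\Delta_i)$ through $t$. To get a \emph{strong} diameter bound within $G[U]$ (not $G+N$), we route through $t$ but then must replace the single net edge and its partner by a path inside some $\mcC_{i-1}$-cluster using its strong diameter $\Delta_{i-1} \le \Delta_i$; chasing the constants with $\Delta_{i-1} = \Delta_i / \gamma$ and $\gamma = 2\beta(2\alpha+1)$ is exactly what yields the $8\alpha+4$ factor, so the ratio $\Delta_i / (8\alpha+4)$ is what we'll call the "true" diameter and $(8\alpha+4)$ plays the role of a normalizing constant. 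For \textbf{ball preservation}: consider $v$ and the ball $B_G(v, \Delta_i/(8\alpha+4))$ at the relevant scale. Each cluster of $\mcC_i$ intersecting this ball corresponds to a distinct net vertex $t$; I claim each such $t$ satisfies $d_{G+N}(v,t) \le d_{G+N}(v,N) + \Delta_i/\alpha$, so the additive-sparsity property of the dangling net caps the number of such $t$ at $\tau$. Establishing this claim is where the cluster aggregation distortion and the net's covering radius must be combined: if a cluster $f^{-1}(t)$ meets the ball at some point $u$, then $d_{G+N}(v,t) \le d_G(v,u) + d_{G[f^{-1}(t)]}(u,t) \le \Delta_i/(8\alpha+4) + d_{G+N}(u,N) + \beta\Delta_{i-1}$, and $d_{G+N}(u,N) \le d_{G+N}(v,N) + d_G(u,v)$, giving a bound of the form $d_{G+N}(v,N) + O(\Delta_i/\alpha + \beta\Delta_{i-1})$; again the choice $\Delta_{i-1} = \Delta_i/\gamma$ with $\gamma \ge 2\beta(2\alpha+1) \ge 2\beta\alpha$ makes $\beta\Delta_{i-1} \le \Delta_i/(2\alpha)$, and collecting constants delivers the needed $\Delta_i/\alpha$ slack.

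\textbf{The main obstacle} I anticipate is making the inductive diameter bookkeeping airtight: the cluster aggregation solver is only guaranteed on instances whose input clusters have a \emph{uniform} strong diameter $\Delta$, but after one round of coarsening our clusters $\mcC_{i-1}$ have strong diameter on the order of $\Delta_{i-1}$ up to the $(8\alpha+4)$ normalization — so I must run the level-$i$ cluster aggregation at a scale consistent with "$\Delta = \Theta(\Delta_{i-1})$" and check that the resulting additive $\beta\Delta$ term, after multiplication across all lower levels, telescopes rather than blows up. This is exactly why $\gamma$ must be at least linear in $\beta$: the geometric decrease of scales by a factor $\gamma$ dominates the $\beta$ loss per level. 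A secondary subtlety is that cluster aggregation distances are measured inside $G[f^{-1}(f(v))]$ which is a \emph{strong} induced subgraph, and I must make sure the recursive clusters $\mcC_i$ inherit strong (not merely weak) diameter bounds — this should follow because a path realizing the cluster aggregation guarantee stays inside $f^{-1}(t)$, and each $\mcC_{i-1}$-cluster on that path is itself strongly connected within its own induced subgraph by the inductive hypothesis, so concatenating gives a walk inside $G[U]$. The poly-time claim is then routine: each level does one dangling-net construction and one cluster aggregation solve, both poly-time by hypothesis, and there are $O(\log(\text{aspect ratio}))$ levels, which is polynomial in the input size under the standing assumption that the minimum distance is $1$.
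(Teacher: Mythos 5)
Your proposal follows the paper's construction essentially step for step: build the hierarchy bottom-up; at each level place a dangling net, run cluster aggregation on $G+N$ with the previous level's clusters (plus singletons for the net vertices) as input and the net vertices as portals; read the coarsening off from $f$; bound the diameter by routing through the assigned portal; and bound sparsity by combining the cluster aggregation guarantee with the net's additive sparsity. The strong-diameter subtlety you raise at the end is handled in the paper more simply than you suggest: each net vertex has degree one in $G+N$, so deleting it from its cluster cannot increase the induced diameter, and no rerouting inside $\mcC_{i-1}$-clusters is needed.

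There is, however, one concrete quantitative error that your deferred ``constant chasing'' does not survive: you take the covering radius of the level-$i$ net to be the target diameter $\Delta_i=\gamma^i$ itself. The cluster aggregation guarantee then gives $d_{G[f^{-1}(t)]}(v,t)\le d_{G+N}(v,N)+\beta\Delta_{i-1}\le \Delta_i+\beta\Delta_{i-1}$, so the coarsened clusters have strong diameter up to $2(\Delta_i+\beta\Delta_{i-1})>2\Delta_i$, and the induction ``$\mcC_i$ has strong diameter $\Delta_i$'' fails. Nor can one repair this by relabelling the result as a $(c\gamma)$-hierarchy for some constant $c>2$: the preserved-ball radius you establish is $\Theta(\gamma^i/\alpha)$, whereas a $(c\gamma)$-hierarchy would require preserving balls of radius $(c\gamma)^i/\alpha'$, which outgrows $\gamma^i$. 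The paper instead sets the covering radius of the net used to build $\mcC_{i+1}$ to $\Delta=2\alpha\beta\gamma^{i}$ (roughly half of $\gamma^{i+1}$), so that $2(\Delta+\beta\gamma^{i})=2\beta(2\alpha+1)\gamma^{i}=\gamma^{i+1}$ exactly; the preserved ball radius is then $\Delta/(4\alpha)=\gamma^{i+1}/(8\alpha+4)$, which is precisely where the constant $8\alpha+4$ comes from. With that single rescaling of the net's covering radius, your argument closes exactly as the paper's does.
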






\subsection{Improved Cluster Aggregation}

The connection we establish between cluster aggregation, strong sparse partition hierarchies and USTs---as well as the fact that \cite{busch2012split} posed improvements on their $O(\log ^ 2 n)$-distortion cluster aggregation solutions as an open question---motivates further study of cluster aggregation.

Our third major contribution is an improvement to cluster aggregation distortion in a variety of graph classes. Notably, we improve the $O(\log ^ 2 n)$-distortion solutions of \cite{busch2012split} to $O(\log n)$-distortion for general graphs and give improved bounds for trees, bounded pathwidth and bounded doubling dimension graphs. For bounded doubling dimension graphs we must make assumptions on the input (see \Cref{thm:caDD}). We know of no bounds prior to our work for cluster aggregation other than the previous $O(\log ^ 2 n)$ of \cite{busch2012split} for general graphs. We summarize our cluster aggregation results below ($\kappa \leq n$ is the number of clusters in the input partition).

\begin{table}[H]
	\begin{tabular}{|c|c|c|c|}
		\hline
		\textbf{Family}                                                           & \textbf{Distortion} & \textbf{Theorem}& \textbf{Location} \\ \hline
		\multirow{2}{*}{\begin{tabular}[c]{@{}l@{}}General\end{tabular}} &  $O(\log ^ 2 n)$                   &   -   & \cite{busch2012split}      \\ \cline{2-4} 
		&    $O(\log \kappa)$                 &   \Cref{thm:caGen}      &\Cref{sec:CAgeneral}      \\ \hline
		Trees                                                                      &  $4$                   & \Cref{thm:caTree}  &\Cref{sec:CAtree}            \\ \hline
		Doubling                                                                  &   $O(\ddim^2 \cdot \log \ddim)$                  &  \Cref{thm:caDD}    &\Cref{sec:CAdoubling}         \\ \hline
		Pathwidth                                                                 &    $8( \pw + 1 )$                 &  \Cref{thm:caPW}   &\Cref{sec:CApathwidth}          \\ \hline
	\end{tabular}
\end{table}





\noindent Combining our reduction (\Cref{thm:MetaHierarchical}) with the above cluster aggregation algorithms and dangling nets (Theorems \ref{thm:MPXbasedClusteringGeneral}, \ref{thm:MPXbasedClusteringPathwidth}, \ref{thm:MPXbasedClusteringDoubling}) gives our strong sparse partition hierarchies. Combining these hierarchies with \Cref{thm:decompToUST} gives our UST solutions. See \Cref{sec:putTogether} for proof details and again, see \Cref{tab:overview} for an overview of the resulting bounds.

As the notation we use is quite standard, we defer a description of it and our (mostly) standard preliminaries to \Cref{sec:notation}. Likewise, we defer additional related work to \Cref{sec:relatedWork}.


\begin{figure}
    \centering
        \centering
        \includegraphics[width=0.4\columnwidth,trim=80mm 130mm 80mm 130mm, clip]{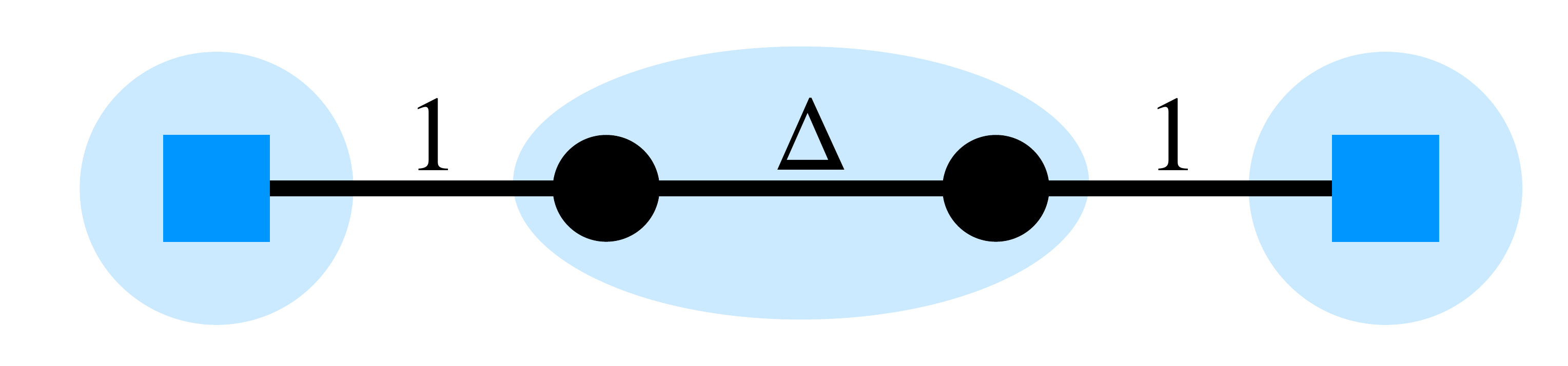}
    \caption{Why $\beta \geq 1$ for cluster aggregation. One vertex in the center cluster must traverse its $\Delta$-diameter cluster to get to a portal in any cluster aggregation solution.}\label{fig:CATrivialLB}
\end{figure}

\subsection{Additional Related Work}\label{sec:relatedWork}

We review additional related work not discussed earlier.

\subsubsection{(Online and Oblivious) Steiner Tree}
As it is an elementary NP-hard problem \cite{garey1979computers}, there has been extensive work on polynomial-time approximation algorithms for Steiner tree and related problems \cite{agrawal1991trees,byrka2013steiner,byrka2010improved,robins2005tighter,haeupler2021tree,filtser2022hop,garg2000polylogarithmic}. 

The subset of this work most closely related to our own is work on online and oblivious Steiner tree. In online Steiner tree the elements of $S\setminus\{r\}$ arrive one at a time and the algorithm must add a subset of edges to its solution so that it is feasible and cost-competitive with the optimal Steiner tree for the so-far arrived subset of $S\setminus\{r\}$. Notably, the greedy algorithm is a tight $O(\log n)$-approximation \cite{imase1991dynamic}, though improved bounds are known if elements of $S \setminus\{r\}$ leave rather than arrive \cite{gupta2014online,gu2013power}. See \cite{alon1992line,naor2011online,angelopoulos2007improved,xu2022learning} for further work. Even harder, in oblivious Steiner tree, for each possible vertex $v \in V \setminus \{r\}$, the algorithm must pre-commit to a path $P_v$ from $r$ to $v$. Then, a subset $S$ containing $r$ is revealed and the algorithm must play as its solution the union of its pre-commited-to paths for $S$, namely $\bigcup_{v \in S \setminus \{r\}}P_v$. The goal of the algorithm is for its played solution to be cost-competitive with the optimal Steiner tree for $S$ for every $S$. Notably, unlike USTs, the union of the paths played by the algorithm need not induce a tree. \cite{gupta2006oblivious} gave an $O(\log ^ 2 n)$-approximate polynomial-time algorithm for this problem and its more general version ``oblivious network design.''

Observe that any $\rho$-approximate UST immediately gives a $\rho$-approximate oblivious Steiner tree algorithm which, in turn, gives a $\rho$-approximate online Steiner tree algorithm. Thus, in this sense UST is at least as hard as both online and oblivious Steiner tree.

\subsubsection{Tree Embeddings and (Hierarchical) Graph Decompositions}

There has been extensive work on approximating arbitrary graphs by distributions over trees by way of so-called probabilistic tree embeddings \cite{bartal1998approximating,dhamdhere2006improved,abraham2012using,blelloch2016efficient,fakcharoenphol2003tight,abraham2020ramsey,filtser2021clan,haeupler2021tree,filtser2022hop}. Notably, any graph admits a distribution over trees that $O(\log n)$-approximate distances in expectation \cite{fakcharoenphol2003tight} and a distribution over subtrees that $O(\log n \log \log n)$-approximate distances in expectation \cite{abraham2012using}. 

USTs and probabilistic tree embeddings both attempt to flatten the weight structure of a graph to a tree. 
However, tree embeddings only aim to provide pairwise guarantees in expectation, while USTs provide guarantees for every possible subset of vertices deterministically. While one can always sample many tree embeddings to provide pairwise guarantees with high probability, the corresponding subgraph will not be a single tree, unlike a UST.

As mentioned in \Cref{sec:contriSparsePartitions}, decompositions of graphs into nearby vertices that  respect distance structure have been extensively studied. 
See, for example, \cite{czumaj2022streaming} for a recent application of sparse partitions in streaming algorithms. The graph decomposition most similar to sparse partitions are the scattering partitions of \cite{filtser20}. Informally, scattering partitions provide the same guarantees as sparse partitions but with respect to shortest paths rather than balls.

These sorts of decompositions (and, in particular, hierarchies of them) are intimately related to tree embeddings. For example, the tree embeddings of \cite{bartal1998approximating} can be viewed as a hierarchy of low-diameter decompositions. However, we note that, unlike strong sparse partition hierarchies, these hierarchies generally do not provide deterministic guarantees and, for example, \cite{bartal1998approximating}'s hierarchy only provides weak diameter guarantees. Somewhat similarly, \cite{abraham2020ramsey} produce a strong diameter padded decomposition hierarchy.

\subsubsection{Universal Problems}
In addition to Steiner tree, there are a number of problems whose universal versions have been studied. For example, the universal travelling salesman problem has been extensively studied \cite{schalekamp2008algorithms,gorodezky2010improved,HKL06,bhalgat2011optimal,jia2005universal,platzman1989spacefilling,bertsimas1989worst}. There are also works on universal set cover \cite{jia2005universal,grandoni2008set} and universal versions of clustering problems \cite{ganesh2023universal}.

\section{Overview of Challenges and Intuition}\label{sec:overview}

Before moving on to our formal results, we give a brief overview of our techniques.

\subsection{Reducing Hierarchies to Cluster Aggregation and Dangling Nets}

Similarly to previous work, we take a bottom-up approach to compute strong sparse partition hierarchies. We begin with the singleton partition $\mcC_0 = \{\{v\} : v \in V\}$ and then compute each $\mcC_{i+1}$ using $\mcC_i$. Recall that our goal is a strong $\gamma^{i+1}$-diameter partition $\mcC_{i+1}$ which coarsens $\mcC_i$ and which guarantees that any ball of radius $\gamma^{i+1}/\alpha$ intersects at most $\tau$ clusters of $\mcC_{i+1}$. 

\paragraph*{Previous Approach.}  A natural strategy for computing the cluster $C_j' \in \mcC_{i+1}$ containing cluster $C_j \in \mcC_i$ is to start with $C_j$ and expand it whenever it intersects a ``violated'' ball. Namely, if this cluster is incident to a diameter $\gamma^{i+1}/\alpha$ ball $B$ intersecting more than $\tau$ clusters, grow this cluster to contain all clusters intersecting $B$. The issue with this is that we may end up with a very long sequence of violated balls, each of which forces us to grow $C_j'$ further. See Figures \ref{sfig:seq1} and \ref{sfig:seq2}. 

The main observation of \cite{busch2012split} was that if the number of clusters each violated ball is incident to is at least $2^{O(\sqrt{\log n})}$, this sequence of violated balls can have length at most $2^{O(\sqrt{\log n})}$, which gives strong sparse partition hierarchies with $\alpha = \tau = \gamma = 2^{O(\sqrt{\log n})}$. Notably, the approach of \cite{busch2012split} is ``all or nothing'' in that if there is a violated ball incident to more than $\tau$ clusters of $\mcC_i$, then \emph{all} of these clusters are forced to be in the same cluster of $\mcC_{i+1}$. See \Cref{sfig:seq3}.

\begin{figure}[t]
	\centering
	\begin{subfigure}{.32\textwidth}
		\includegraphics[width=\textwidth]{./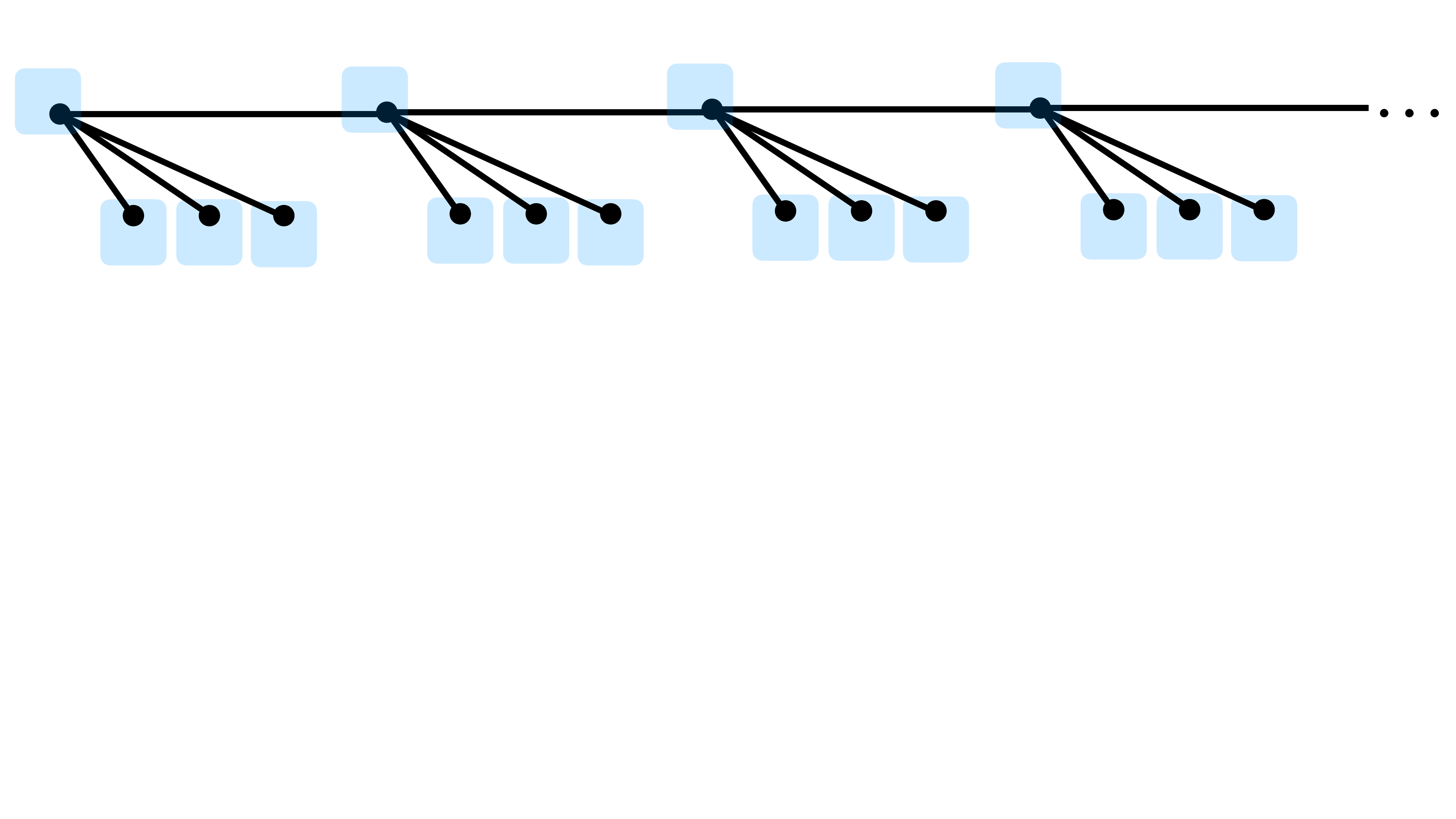}
    \vspace{-65pt}
		\caption{Partition $\mcC_{i}$.}\label{sfig:seq1}
	\end{subfigure}
	\begin{subfigure}{.32\textwidth}
		\includegraphics[width=\textwidth]{./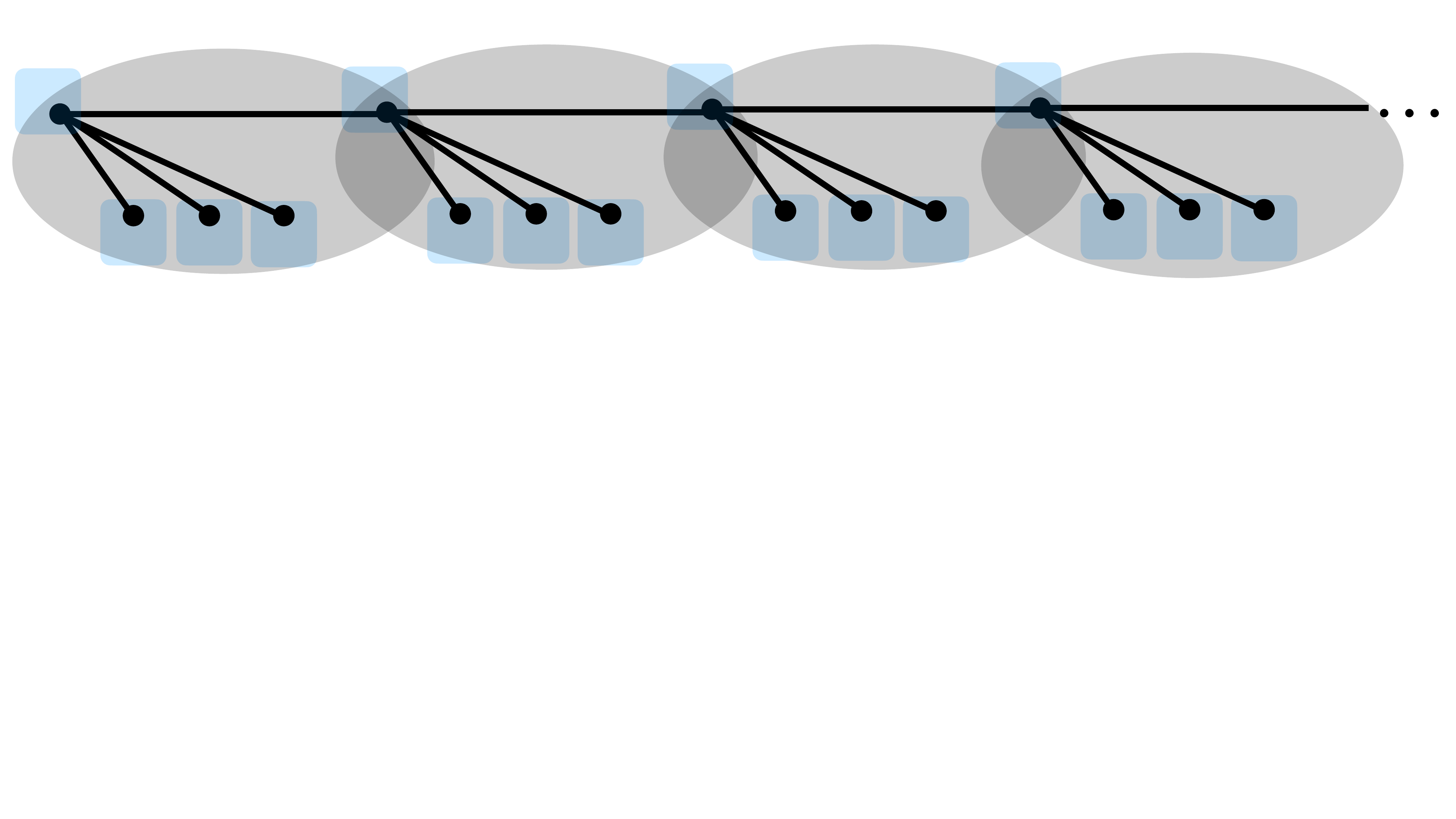}
    \vspace{-65pt}
		\caption{Sequence of violated balls.}\label{sfig:seq2}
	\end{subfigure}
	\begin{subfigure}{.32\textwidth}
		\includegraphics[width=\textwidth]{./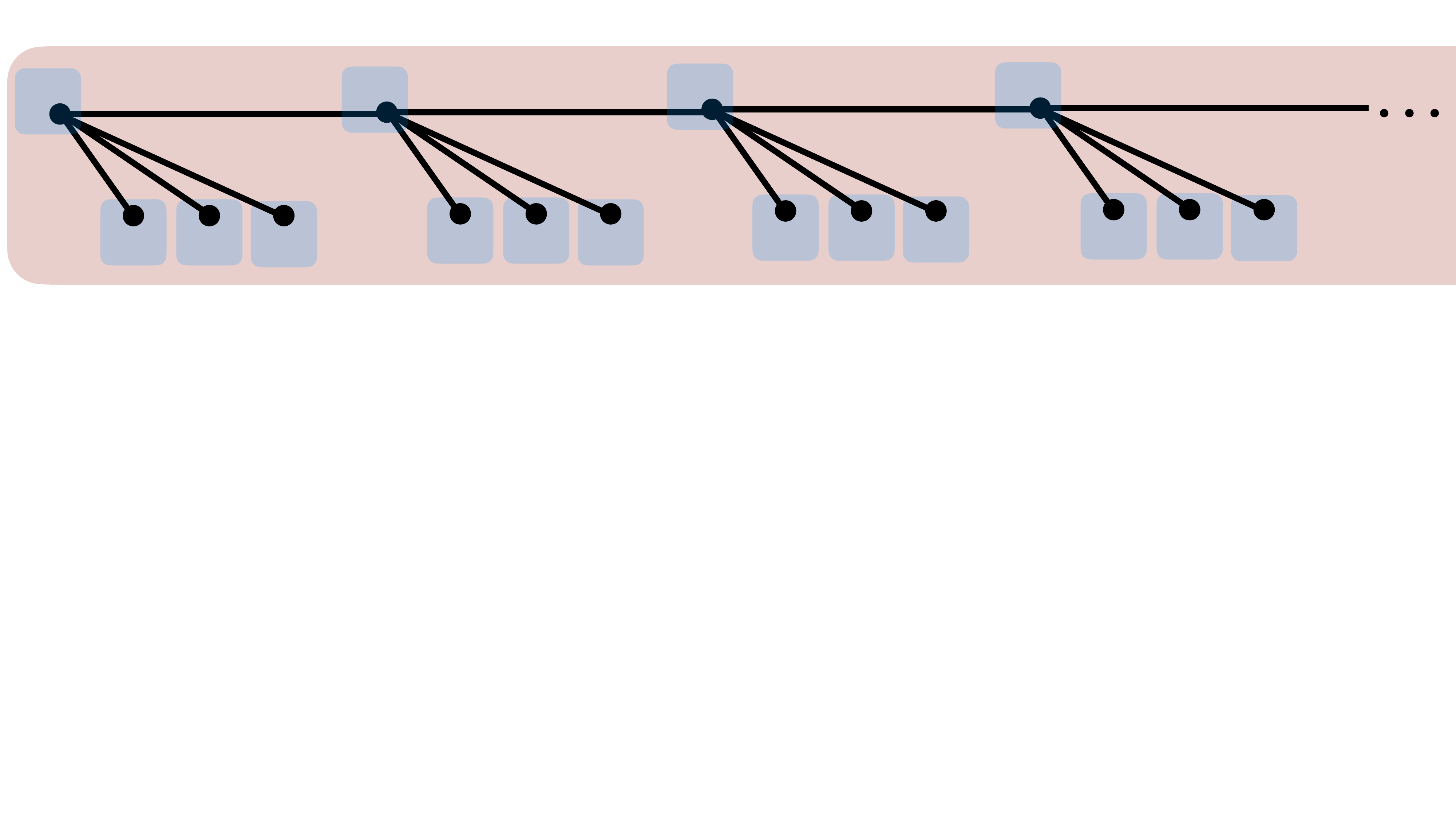}
    \vspace{-65pt}
		\caption{\cite{busch2012split} all or nothing $C_j'$.}\label{sfig:seq3}
	\end{subfigure}
	\caption{The challenge of a sequence of violated balls when constructing the cluster in $\mcC_{i+1}$ containing $C_j \in \mcC_i$. \ref{sfig:seq1} gives $\mcC_i$ as blue squares with $C_j$ upper-left. \ref{sfig:seq2} shows the ``violated balls'' of diameter $\gamma^{i+1}/\alpha$. 
 \ref{sfig:seq3} shows the solution computed by \cite{busch2012split} assuming that $\tau < 5$.}\label{fig:seq1To4}
\end{figure}

\paragraph*{Our Approach.} Our approach uses dangling nets to coordinate cluster aggregation in a way that coarsens $\mcC_i$ without being all or nothing. On one hand, a dangling net respects balls but not in a way that has anything to do with $\mcC_i$ or coarsening it. In particular, a dangling net $N$ corresponds to a natural sparse Voronoi partition (where each vertex goes to the closest net vertex in $N$) whose sparsity properties are robust to small (additive) changes. On the other hand, cluster aggregation provides a principled way of coarsening a partition $\mcC_i$ but does not necessarily respect balls. In particular, it coarsens a partition at the cost of small (additive) changes. We use dangling nets as portals for cluster aggregation to get the best of both techniques: cluster aggregation ensures that we coarsen with small additive costs while dangling nets ensure that these additive costs do not negatively impact sparsity. See \Cref{fig:alg} for an illustration of our approach (and its later analysis).

\begin{figure}
    \centering
    \begin{subfigure}[b]{0.3\textwidth}
        \centering
        \includegraphics[width=\textwidth,trim=0mm 0mm 0mm 0mm, clip]{./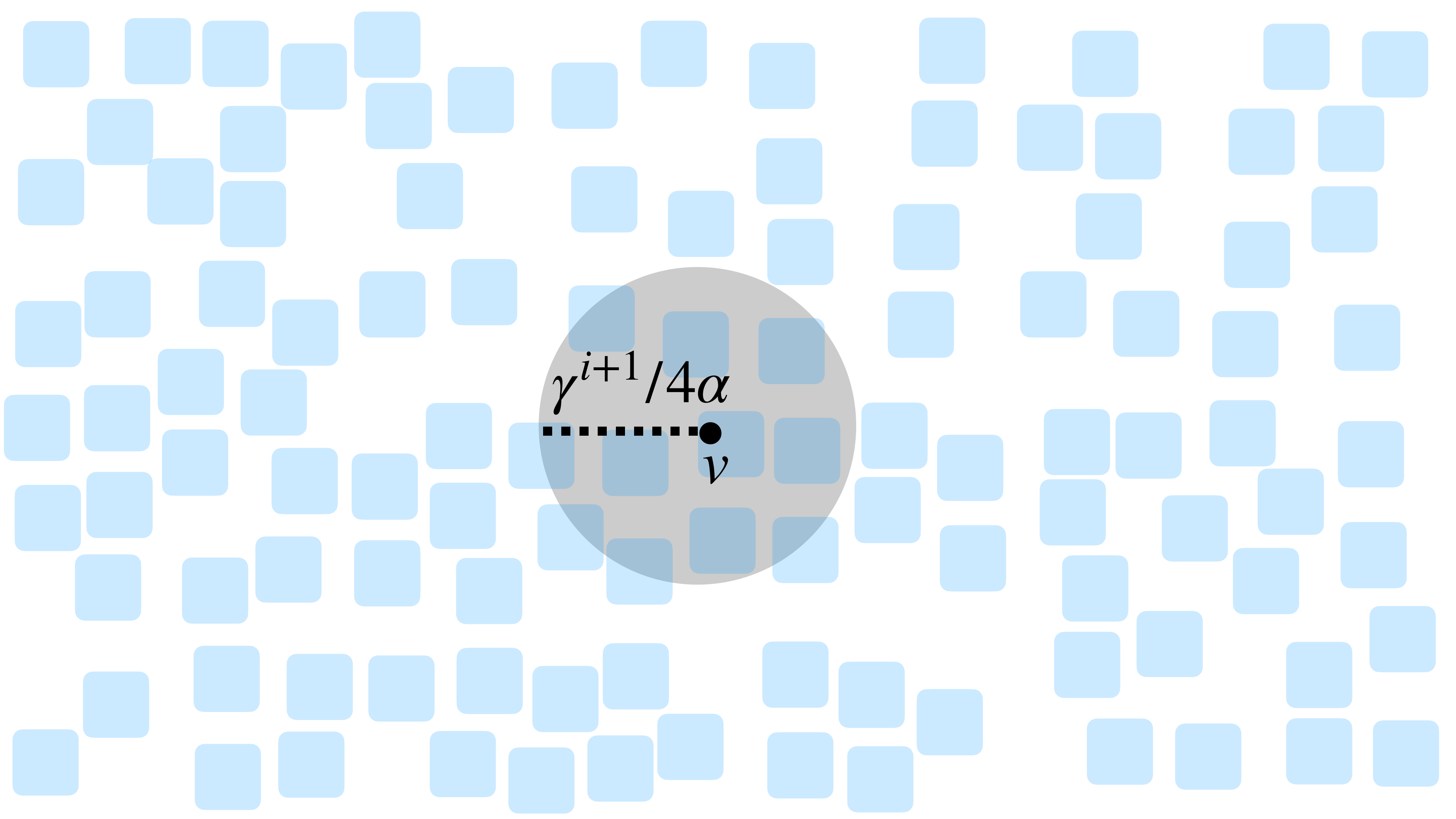}
        \caption{$\mcC_{i}$ and one $\gamma^{i+1}/\alpha$ ball.}\label{sfig:alg1}
    \end{subfigure}    \hfill
    \begin{subfigure}[b]{0.3\textwidth}
        \centering
        \includegraphics[width=\textwidth,trim=0mm 0mm 0mm 0mm, clip]{./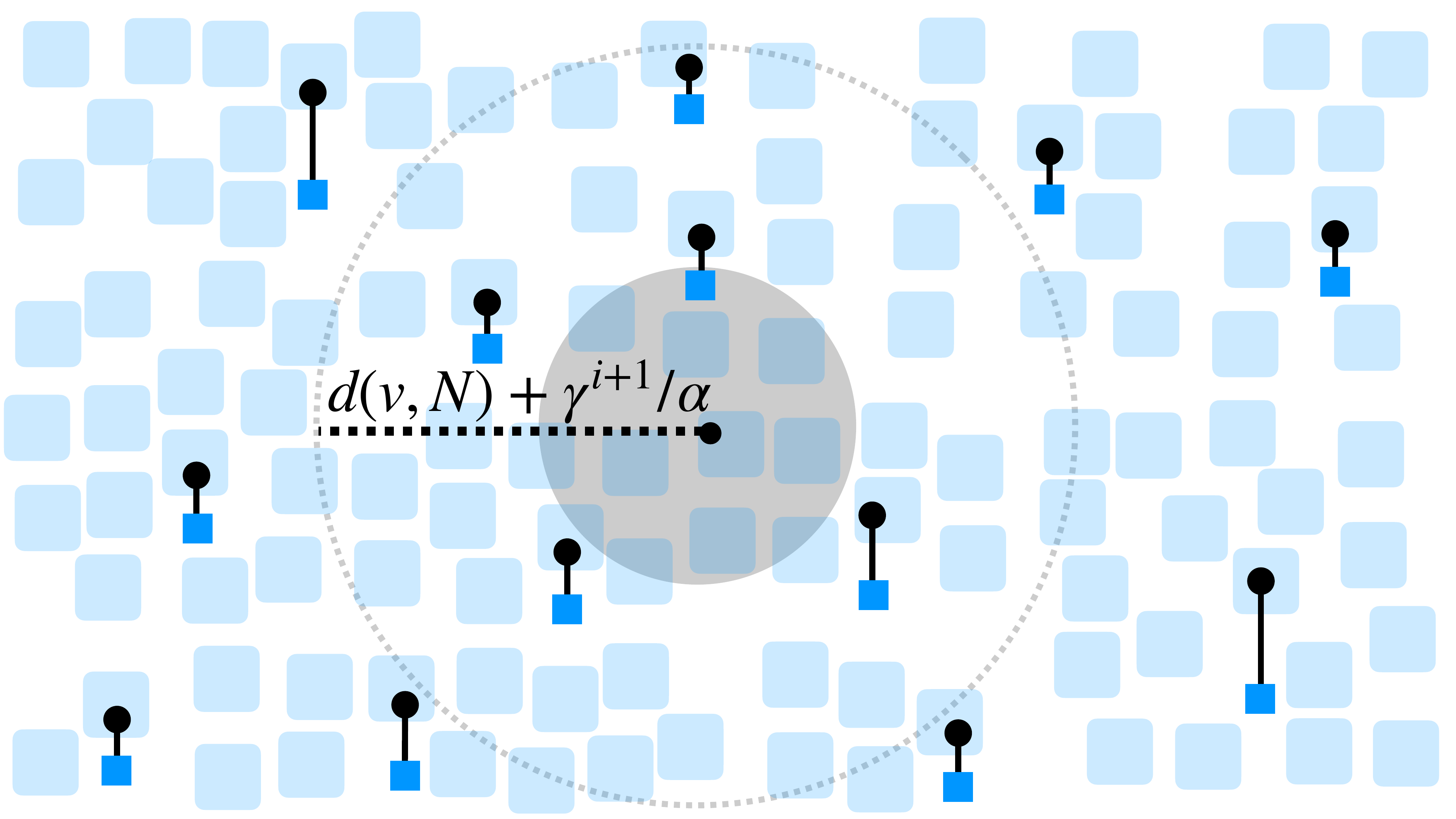}
        \caption{Dangling net $N$.}\label{sfig:alg2}
    \end{subfigure}    \hfill
    \begin{subfigure}[b]{0.3\textwidth}
        \centering
        \includegraphics[width=\textwidth,trim=0mm 0mm 0mm 0mm, clip]{./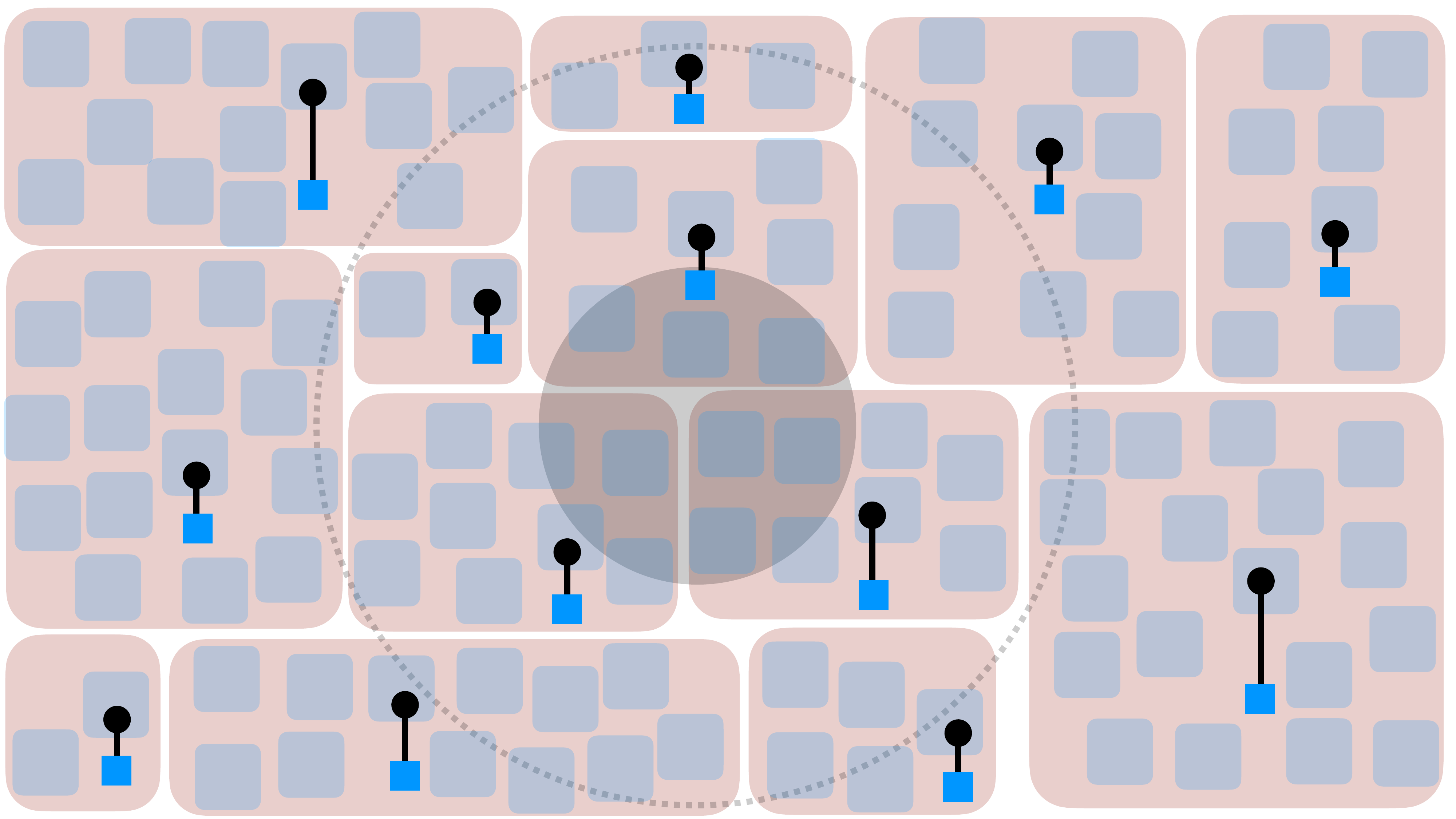}
        \caption{$\mcC_{i+1}$ via cluster aggregation.}\label{sfig:alg3}
    \end{subfigure}
    \caption{An illustration of our algorithm for coarsening $\mcC_{i}$ to $\mcC_{i+1}$. \ref{sfig:alg1} gives $\mcC_i$ as transparent blue squares and one ball of radius $\gamma^{i+1}/4\alpha$ centered at $v$. \ref{sfig:alg2} illustrates our dangling net $N$ (opaque blue squares) and the fact that there are $\tau$ net vertices within distance $d(v, N) + \gamma^{i+1}/\alpha$ of $v$; in this case $5$ net vertices. \ref{sfig:alg3} gives the $\mcC_{i+1}$ resulting from cluster aggregation (in red) which guarantees that every vertex 
    $u\in B_G(v,\frac{\gamma^{i+1}}{4\alpha})$ is sent to a portal at distance at most $d_{G}(v,N)+\beta\cdot\gamma^{i}+\frac{\gamma^{i+1}}{2\alpha}\le d_{G}(v,N)+\frac{\gamma^{i+1}}{\alpha}$
    from $v$, which are exactly the $5$ net vertices.}\label{fig:alg}
\end{figure}

\subsection{Improved Cluster Aggregation}
We now briefly discuss our techniques for producing improved cluster aggregation solutions.

\paragraph*{Our General Graphs Approach.} Our approach for achieving an $O(\log n)$-distortion cluster aggregation is a round-robin process of $O(\log n)$ phases. In each phase, each unassigned cluster has a constant probability of merging with a cluster containing a portal. We accomplish this as follows. Define the \emph{maximal internally disjoint} (MID) path of an unassigned cluster $C$ as the maximal prefix of the shortest path from some representative node in $C$ to a portal which is disjoint from all assigned clusters. In each phase we iterate through the clusters with portals. For each cluster $C_i'$ with a portal we repeatedly flip a fair coin until we get a tails at which point we move on to the next cluster with a portal. Each time we get a heads we do an ``expansion iteration'', merging $C_i'$ with all clusters incident to a MID path that ends at $C_i'$. Intuitively, this is a sort of geometric ball growing where MID paths are always treated as having weight $0$. See \Cref{fig:CAalg} for an illustration. 

Every unassigned cluster is assigned in each phase with constant probability. Therefore with high probability after $O(\log n)$ iterations, every cluster is assigned to some portal.
The additive distortion of this process can be bounded by the maximum number of heads any one cluster gets across all phases. The key to arguing $O(\log n)$ distortion is to observe that, while the distortion we incur may be as large as $\Theta(\log n)$ in one phase, the distortion any portal incurs \emph{across all phases} is also $O(\log n)$. Thus, we bound across all phases at once. This can be contrasted with \cite{busch2012split} who performed $O(\log n)$ phases of merging with $O(\log n)$ distortion per phase.

\paragraph*{Our Approaches for Special Graph Families.}
For special graph families, we exploit the structure of the family to argue that there are limited conflicts when merging clusters. 
\begin{itemize}
    \item \textbf{Trees.} The main observation for trees is the fact that every path has a monotone ``up'' part and then a monotone ``down'' part. We first merge clusters with shortest paths to portals that never ``switch directions.'' We then merge clusters that contain the vertex where one such path does switch directions. Lastly, all remaining clusters have a monotone shortest path towards an assigned cluster along which we can merge. 
    \item \textbf{Pathwidth.}  
    Using the structure of the path decomposition $T$, 
    we show it is possible to perform the cluster aggregation in $\pw + 1$ phases. In each phase, at least one node is assigned to a portal from each bag of $T$.
    Each phase incurs $O(1)$ additive distortion giving at most $O(\pw)$ total distortion.
    To control the distortion at each phase, 
    we use consecutive bags in $T$ to organize unassigned clusters into a sequence of groups.
    Clusters in non-adjacent groups are conflict-free, 
    but they may conflict with adjacent groups in the sequence.
    Hence, the aggregation can be done in two subphases, one for the odd subsequence and the other for the even subsequence of groups, where each subphase adds only $O(1)$ distortion.
    
    \item \textbf{Bounded doubling dimension.} Our approach for graphs with bounded doubling dimension is quite different from that for general graphs. In general graphs after $i$ phases each cluster remains unassigned with probability at most $2^{-i}$ and the expected distortion of each vertex is proportional to $i$. A union bound then shows that $O(\log n)$ phases suffice. One approach for the doubling case would be to reuse this algorithm and argue that far away clusters have independent probabilities of being clustered. Since the bounded doubling dimension guarantees that the number of ``nearby'' clusters is small, one might then attempt to use the  Lov\'asz Local Lemma to argue that no cluster remains unassigned after $\poly(\ddim)$ iterations.

    Unfortunately, this approach does not work. 
    Indeed, even if one ensures that the number of expansion iterations is always constant, it is possible for a portal $p$ to affect the merging of a far away cluster $C$ (even though $\Pr[f(C)=p]=0$). Instead, we run the clustering algorithm of~\cite{MillerPX2013} with the portals as centers. Every cluster aggregation cluster $C$ which is deeply inside an MPX cluster is assigned to the center (i.e.\ portal) of this MPX cluster. We then contract all the assigned clusters and continue recursively. This process guarantees that the probability that a cluster $C$ remains unassigned after $i$ iterations is $2^{-\Omega(i)}$ and, unlike our general graphs algorithm, far away clusters \emph{are} completely independent, allowing us to invoke Lov\'asz Local Lemma and obtain the desired bound.
\end{itemize}

\section{Notation, Conventions and Preliminaries}\label{sec:notation}
We review the (mostly standard) notation we use throughout this work.

\paragraph*{General.} We use $\sqcup$ for disjoint union; i.e.\ $U \sqcup V$ is the same set as $U \cup V$ but indicates that $U \cap V = \emptyset$.  $\text{Geom}(p)$ is the geometric distribution where the probability for value $i$ is $(1-p)^{i-1}\cdot p$, and the expectation is $\frac1p$. $\text{Bin}(n,p)$ stands for a binomial distribution with $n$ samples, each with success probability $p$.

\paragraph*{Graphs.} Given edge-weighted graph $G = (V,E, w)$ and vertex subset $U \subseteq V$, we let $G[U] = (U, \{e : e \subseteq U\}, w)$ be the induced graph of $U$. Given two edge-weight functions $w$ and $w'$ on disjoint edge sets $E$ and $E'$, we let $w \sqcup w'$ be the edge-weight function that gives $w(e)$ to each $e \in E$ and $w'(e')$ to each $e' \in E'$. We let $d_G(u,v)$ be the weight of the shortest path between $u$ and $v$ according to $w$ in $G$ and for $S \subseteq V$ we let $d_G(v, S) = \min_{u \in S} d_G(v,u)$. The diameter of $G$ is the maximum distance between a pair of vertices, i.e.\ $\max_{u,v \in U} d_{G}(u,v)$. The strong diameter of $S\subseteq V$ is the diameter of the induced graph $G[S]$, as opposed to the weak diameter $\max_{u,v \in S} d_{G}(u,v)$ (which is the maximum distance w.r.t. $d_G$). A partition $\mcC$ of $V$ has strong (resp.\ weak) diameter $\Delta$ if $G[C_i]$ has strong (resp.\ weak) diameter for every $C_i \in \mcC$. The (closed) ball $B_G(v, r) := \{u : d_G(u,v) \leq r\}$ is all vertices within distance $r$ from $v$ in $G$. We drop the $G$ subscript when the graph is clear from context. We let $n := |V|$ be the number of nodes in $G$ throughout this paper. A metric space $(X,d_X)$ induces a complete graph $G$ with $X$ as a vertex set, where the weight of the edge $\{u,v\}$ equals to the metric distance $d_X(u,v)$.

\paragraph*{Path Decomposition and Pathwidth.} 
Given a graph $G=(V,E)$, a \emph{path decomposition} of $G$ is a path
$P$ with nodes $X_1,\dots,X_s$ (called \emph{bags}) where each $X_i$ is
a subset of $V$ such that the following properties hold:
\begin{OneLiners}
	\item For every edge $\{u,v\}\in E$, there is a bag $X_i$ containing
	both $u$ and $v$.
	\item For every vertex $v\in V$, the set of bags containing $v$ form a
	connected subpath of $P$.
\end{OneLiners}
The \emph{width} of a path decomposition is $\max_i\{|X_i|-1\}$. The \emph{pathwidth} $\pw$ of $G$ is the minimum
width of a path decomposition of $G$.  For readers familiar with tree decompositions, a path decomposition is just a tree decomposition whose tree is a path. 
%

\paragraph*{Doubling dimension.}  The doubling dimension of a metric space is a measure of its local ``growth rate''. 
A metric space $(X,d)$ has doubling constant $\lambda$ if for every $x\in X$ and radius
$r>0$, the ball $B(x,2r)$ can be covered by $\lambda$ balls of radius $r$; that is, there exist $\lambda$ such balls whose union contains $B(x,2r)$. The doubling dimension is defined as $\ddim=\log\lambda$.
We say that a weighted graph $G=(V,E,w)$ has doubling dimension $\ddim$, if the corresponding shortest path metric $(V,d_G)$ has doubling dimension $\ddim$.
A $d$-dimensional $\ell_p$ space has $\ddim=\Theta(d)$, every $n$-vertex graph has $\ddim=O(\log n)$, and every weighted path has $\ddim=1$.
The following lemma gives the standard packing property of doubling metrics (see, e.g., \cite{GKL03}).
\begin{lemma}[Packing Property] \label{lem:doubling_packing}
	Let $(X,d)$ be a metric space  with doubling dimension $\ddim$.
	If $S \subseteq X$ is a subset of points with minimum inter-point distance $r$ that is contained in a ball of radius $R$, then 
	$|S| \le \left(\frac{2R}{r}\right)^{\sddim}$ .
\end{lemma}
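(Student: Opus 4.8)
The plan is to use the classical covering argument for doubling metrics: cover the enclosing ball of $S$ by a small number of balls of radius roughly $r/2$ by iterating the doubling property, and then observe that each such small ball can contain at most one point of $S$ because $S$ is $r$-separated.

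Here are the steps I would carry out, in order. First, fix $x_0 \in X$ with $S \subseteq B(x_0,R)$ (such $x_0$ exists by hypothesis). I will work in the regime $r \le 2R$; otherwise any two points of $B(x_0,R)$ are at distance at most $2R < r$, so $|S| \le 1$ and the statement is trivial. Second, let $k$ be the number of halving steps needed to shrink the radius below $r/2$, i.e.\ the least integer with $R/2^k < r/2$, so $k \approx \log_2(2R/r)$. Applying the definition of doubling dimension once writes $B(x_0,R) = B\!\left(x_0, 2\cdot \tfrac R2\right)$ as a union of $\lambda = 2^{\ddim}$ balls of radius $R/2$; applying it to each of those covers $B(x_0,R)$ by $\lambda^2$ balls of radius $R/4$; iterating $k$ times covers $B(x_0,R)$ by at most $\lambda^k$ balls, each of radius $R/2^k < r/2$. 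Third, I observe that each of these balls contains at most one point of $S$: two points lying in a common ball of radius strictly less than $r/2$ are at distance strictly less than $r$ (triangle inequality through the center), whereas two distinct points of $S$ are at distance at least $r$. Since $S \subseteq B(x_0,R)$, summing over the $\le \lambda^k$ balls gives $|S| \le \lambda^k = 2^{k\cdot\ddim}$. Finally, plugging in $2^k$ in terms of $2R/r$ (the choice of $k$ gives $2^{k-1} \le 2R/r$) yields the stated bound $|S| \le (2R/r)^{\ddim}$; for use elsewhere in the paper one may also use the packing property in the lemma's cited form \cite{GKL03}.

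Honestly, there is no real obstacle here — this is a routine packing estimate — so the only thing requiring a little care is the bookkeeping in step two and four: choosing the exact number of halving steps $k$ so that the resulting exponent matches $(2R/r)^{\ddim}$ (the ceiling/floor convention for $k$, and whether one takes the small balls of radius exactly $r/2$ versus strictly below $r/2$ to make the ``at most one point of $S$'' claim clean, only affect lower-order constants in the base and are absorbed in the standard statement). I would state the enclosing-ball reduction and the $r \le 2R$ assumption explicitly, then present the $k$-fold iteration of the doubling property as a one-line induction, and conclude.
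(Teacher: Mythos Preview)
The paper does not prove this lemma; it is stated as a standard fact with a reference to \cite{GKL03}, so there is no ``paper's proof'' to compare against. Your covering argument is precisely the textbook proof.

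The one point worth making explicit is the constant you hedge on at the end. With $k$ the least integer satisfying $R/2^k < r/2$ you get $2^{k-1}\le 2R/r$, hence $|S|\le 2^{k\ddim}\le (4R/r)^{\ddim}$, not $(2R/r)^{\ddim}$. In fact the bound $(2R/r)^{\ddim}$ as literally written is a hair too tight: take $S=\{0,1\}\subseteq\mathbb{R}$ with $r=1$, $R=\tfrac12$, $\ddim=1$, so $(2R/r)^{\ddim}=1<2=|S|$. This is a well-known artifact of how one states the packing lemma (open vs.\ closed balls, strict vs.\ non-strict separation), and every use of \Cref{lem:doubling_packing} in the paper is insensitive to a constant in the base, so your caveat that it is ``absorbed in the standard statement'' is the honest and correct thing to say.
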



\paragraph*{Dangling Net Constructions.} We summarize results regarding dangling nets.
\begin{theorem}[\cite{filtser20}]\label{thm:MPXbasedClusteringGeneral}
	Every weighted graph $G=(V,E,w)$ has a poly-time computable $\Delta$-covering $(O(\log n), O(\log n))$-sparse dangling net for every $\Delta>0$.
\end{theorem}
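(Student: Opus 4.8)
The plan is to construct the dangling net directly from the random-shift / exponential-start-time clustering of \cite{MillerPX2013}, essentially repackaging the analysis of \cite{filtser20} for scattering/sparse partitions in the ``additive'' language of \Cref{dfn:danglingNet}. Fix $\Delta > 0$. I would take a fresh vertex $t_v$ for \emph{every} $v \in V$, so $N = \{t_v : v \in V\}$ and the matching $M$ pairs $t_v$ with $v$; the only remaining design choice is the weights $w_M(t_v, v)$, which I would set to be i.i.d.\ (truncated) exponential random variables with mean $\Theta(\Delta / \log n)$, offset so that $d_{G+N}(v, N) \le w_M(t_v,v) \le \Delta$ deterministically. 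This immediately gives the \textbf{Covering} property. One can equivalently think of each net vertex $t_v$ as a source that ``starts'' its ball-growing process at a random negative time $-\delta_v$ with $\delta_v \sim \mathrm{Exp}$; a vertex $u$ is ``claimed'' by the source minimizing $d_G(u,v) - \delta_v$, and $d_{G+N}(u, t_v) = d_G(u,v) + (\text{offset} - \delta_v)$.

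\textbf{The heart of the argument} is \textbf{Additive Sparsity}: for a fixed $v$, bound the number of net vertices $t_x$ with $d_{G+N}(v, t_x) \le d_{G+N}(v,N) + \Delta/\alpha$. Writing $\rho_x := d_G(v,x) + c - \delta_x$ for the distance from $v$ to $t_x$ in $G+N$ (where $c$ is the offset), the question is: how many of the $\rho_x$ lie within an additive window of width $\Delta/\alpha$ above $\min_x \rho_x$? Since the $\delta_x$ are exponentials with mean $\Theta(\Delta/\log n)$, the memorylessness of the exponential is exactly what controls this. I would order the candidate net vertices by $d_G(v,x)$ and argue, as in the MPX analysis, that the probability that $t_x$ is among the near-minimizers decays geometrically in how far $d_G(v,x)$ is from the eventual minimum; the expected count of net vertices in the window is then $O(\log n \cdot (\Delta/\alpha) / \mathbb{E}[\delta]) = O(1)$ when $\alpha = O(\log n)$, and a Chernoff-type concentration bound (the relevant random variable stochastically dominated by a sum of geometrics) upgrades this to $O(\log n)$ with probability $1 - n^{-\Omega(1)}$ for a single $v$. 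A union bound over all $n$ choices of $v$ finishes it, and also over the (at most $\mathrm{poly}(n)$) dyadic scales of $\Delta$ one actually needs, if a single net for all scales is desired; here though the statement is per-$\Delta$, so one union bound over $v$ suffices. Polynomial-time computability is immediate: sampling the $n$ weights and computing a single-source shortest-path-style Voronoi assignment in $G+N$ takes near-linear time.

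\textbf{The main obstacle} I anticipate is the concentration step rather than the expectation step. The events ``$t_x$ is a near-minimizer for $v$'' are not independent across $x$ (they all depend on the same winning source), and naively the count could have a heavy tail. The fix, following \cite{filtser20, MillerPX2013}, is to reveal the exponentials in increasing order of $d_G(v,x)$ and observe that conditioned on the current running minimum, each new source falls in the window with a probability that is the product of (i) the chance its $d_G(v,x)$ is not already hopelessly far and (ii) an independent memoryless ``overshoot'' event; this stochastically dominates the window-count by $1 + \mathrm{Geom}(p)$ for a constant $p$, which concentrates. Getting this domination bookkeeping exactly right — in particular handling the truncation of the exponentials at $\Delta$ so that Covering is deterministic without destroying the tail bound — is the delicate part; everything else is routine. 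I would structure the write-up as: (1) define $N$, $M$, $w_M$; (2) verify Covering deterministically; (3) reduce Additive Sparsity for fixed $v$ to the window-count random variable; (4) prove the geometric domination and conclude via Chernoff; (5) union bound over $v$; (6) note computability.
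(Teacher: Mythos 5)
Your construction is exactly the one the paper relies on: the theorem is imported from \cite{filtser20}, whose MPX-style exponentially-shifted clustering the paper explicitly reinterprets by attaching $t_v$ to $v$ at distance $\Delta-\delta_v$, so that ``near-maximizers of $\delta_x - d_G(v,x)$'' become ``net vertices within an additive window of $d_{G+N}(v,N)$,'' and the memorylessness/geometric-domination argument you outline (with the standard whp handling of truncation) is precisely the cited analysis. The proposal is correct and takes essentially the same approach.
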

\begin{theorem}[\cite{filtser20}]\label{thm:MPXbasedClusteringPathwidth}
	Every weighted graph $G=(V,E,w)$ with pathwidth $pw$ has a poly-time computable $\Delta$-covering $(O(\pw),O(\pw^2))$-sparse dangling net for every $\Delta>0$.
\end{theorem}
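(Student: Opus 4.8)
The plan is to obtain the dangling net directly from the bounded-pathwidth strong sparse partition of \cite{filtser20}, which is itself an instance of the \MPX{} random-delay clustering of \cite{MillerPX2013}: I would hang a fresh net vertex off of \emph{every} vertex of $G$, with matching weight equal to the (reversed) \MPX{} delay, so that the additive sparsity of the net becomes essentially a restatement of the ``scattering''/sparsity guarantee that \cite{filtser20} prove for path-decomposable graphs. Concretely: sample an independent delay $\delta_v\in[0,\Delta/2]$ for each $v\in V$ from the (truncated) exponential distribution used in \cite{filtser20}; introduce a new vertex $t_v\notin V$ for each $v$ and add the matching edge $\{t_v,v\}$ of weight $w_M(t_vv):=\tfrac{\Delta}{2}-\delta_v\ge 0$; set $N:=\{t_v:v\in V\}$ and $M:=\{\{t_v,v\}:v\in V\}$. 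Then $M$ is a matching saturating $N$, $N\cap V=\emptyset$, and for every $u\in V$,
\[
d_{G+N}(u,t_v)=d_G(u,v)+\tfrac{\Delta}{2}-\delta_v\qquad\text{for all }v\in V ,
\]
so the net vertex nearest $u$ is $t_{c(u)}$, where $c(u):=\argmin_{v\in V}\bigl(d_G(u,v)-\delta_v\bigr)$ is exactly the \MPX{} center of $u$ under the delays $\delta$.

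The covering property is then immediate from the identity above: $d_{G+N}(u,N)\le d_{G+N}(u,t_u)=\tfrac{\Delta}{2}-\delta_u\le\tfrac{\Delta}{2}\le\Delta$. For the additive sparsity, write $\rho:=\Delta/\alpha$ for the slack, with $\alpha=\Theta(\pw)$ to be fixed. Using the same identity, a net vertex $t_v$ lies in the set counted by the additive-sparsity condition at $u$ precisely when
\[
d_G(u,v)-\delta_v\le\min_{w\in V}\bigl(d_G(u,w)-\delta_w\bigr)+\rho ,
\]
i.e.\ precisely when $v$ is a ``near-winner'' \MPX{} center for $u$ up to additive slack $\rho$. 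So the whole statement reduces to showing that, for every fixed $u$, the number of near-winner centers of $u$ is $O(\pw^2)$ with probability $1-n^{-\Omega(1)}$, and then union-bounding over the $n$ choices of $u$.

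That near-winner bound is exactly the pathwidth sparsity estimate proved in \cite{filtser20}, and I would reuse it. Its two ingredients are (i) the memorylessness of the exponential delays, which — as in the standard \MPX{} analysis — controls the number of near-winner centers seen from a \emph{single} source within slack $\rho$ by a quantity with an exponential tail around $O(1+\beta\rho)$, where $\beta$ is the delay rate; and (ii) the separator structure of a path decomposition: a bag $X_i$ containing $u$ has $|X_i|\le\pw+1$ and separates the vertices appearing only in bags before $X_i$ from those appearing only in bags after it, so the shortest path from $u$ to any near-winner center outside $X_i$ must pass through one of the $\le\pw+1$ ``portals'' of $X_i$; after grouping the near-winners by side and by portal, each group becomes a single-source near-winner problem (rooted at a portal, inside one side of the decomposition) to which (i) applies, and carrying this out along the decomposition yields the $O(\pw^2)$ bound. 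Poly-time computability is then clear: the construction is one round of sampling plus $n$ single-source shortest-path computations in $G+N$.

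I expect the main obstacle to be purely that of reconciling formulations and parameters: checking that \cite{filtser20}'s partition-level guarantee transfers to the dangling-net language through the distance identity $d_{G+N}(u,t_v)=d_G(u,v)+\tfrac{\Delta}{2}-\delta_v$ (which holds because every vertex carries its own net vertex and $w_M$ merely re-encodes the delays), and pinning down the delay rate $\beta$ and the constant hidden in $\alpha=\Theta(\pw)$ so that step (i) contributes $O(\pw)$ per portal while step (ii)'s per-bag bookkeeping keeps the total at $O(\pw^2)$. All of this quantitative work is already in \cite{filtser20}; the only genuinely new point is the reinterpretation of their construction as a dangling net.
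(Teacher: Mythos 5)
Your proposal matches the paper's treatment: the paper does not reprove this statement but cites Theorem~13 of \cite{filtser20} and observes that the \MPX{} shift framework (each vertex joining the center maximizing $\delta_t-d_G(v,t)$) is equivalent to a dangling net in which $t$ hangs at distance $\Delta-\delta_t$ from its corresponding vertex, which is exactly your distance identity $d_{G+N}(u,t_v)=d_G(u,v)+\text{(reversed delay)}$ up to a harmless renormalization of the shift range. Your reinterpretation of the near-winner count as the additive-sparsity condition, with the quantitative $O(\pw^2)$ bound deferred to \cite{filtser20}, is precisely what the paper intends.
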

\begin{theorem}[\cite{filtser20}]\label{thm:MPXbasedClusteringDoubling}
	Every weighted graph $G=(V,E,w)$ with doubling dimension $\ddim$ has a poly-time computable $\Lambda$-covering $(O(\ddim),\tilde{O}(\ddim))$-sparse dangling net for every $\Lambda>0$. Furthermore, let $N_V\subseteq V$ be the endpoints of the matching with the dangling net (i.e. $M$ is a matching from $N$ to $N_V$). Then $\min_{u,v\in N_v}d_G(u,v)\ge c_\Lambda\cdot \Lambda$ for some universal constant $c_\Lambda$.
\end{theorem}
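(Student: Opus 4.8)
The plan is to recast the random-shift scattering-partition construction of \cite{filtser20} --- built on the clustering of \cite{MillerPX2013} --- in the dangling-net language of \Cref{dfn:danglingNet}. Take $N_V$ to be a greedily built maximal $\tfrac\Lambda2$-separated subset of $(V,d_G)$, so $N_V$ is a $\tfrac\Lambda2$-net (every $v$ has a net point within $\tfrac\Lambda2$) whose points are pairwise at distance $\ge\tfrac\Lambda2$; this is computable in polynomial time, and the pairwise separation already yields the ``furthermore'' clause with $c_\Lambda=\tfrac12$. For each $x\in N_V$ draw an independent shift $\delta_x\sim\min\{\mathrm{Exp}(\lambda),\tfrac\Lambda2\}$ with rate $\lambda=\Theta(\ddim/\Lambda)$, introduce a fresh vertex $t_x\notin V$, and add a matching edge $\{t_x,x\}$ of weight $w_M(t_x,x):=\tfrac\Lambda2-\delta_x\ge0$. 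Let $N:=\{t_x:x\in N_V\}$. Each $t_x$ is a leaf of $G+N$, so no shortest path routes through a dangling vertex, giving $d_{G+N}(v,t_x)=d_G(v,x)+\tfrac\Lambda2-\delta_x$ and hence $d_{G+N}(v,N)=\tfrac\Lambda2+m(v)$, where $m(v):=\min_{x\in N_V}\bigl(d_G(v,x)-\delta_x\bigr)$.

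Covering is immediate: the net point $x^\ast$ nearest $v$ has $d_G(v,x^\ast)\le\tfrac\Lambda2$, so $m(v)\le\tfrac\Lambda2$ and $d_{G+N}(v,N)\le\Lambda$. Unwinding the definitions, $t_x$ lies in the additive window of $v$ iff $d_G(v,x)-\delta_x\le m(v)+s$, where $s:=\Lambda/\alpha$; call such $x$ the \emph{near-winners} of $v$. Thus the theorem reduces entirely to showing that every $v$ has at most $\tilde O(\ddim)$ near-winners (with $\alpha=O(\ddim)$), and this is the step I expect to be the main obstacle.

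For the near-winner bound, note first that since $\delta_x\le\tfrac\Lambda2$ and $m(v)\le\tfrac\Lambda2$, any near-winner $x$ has $d_G(v,x)\le m(v)+s+\delta_x\le\Lambda+s$; by the packing property (\Cref{lem:doubling_packing}) applied to the $\tfrac\Lambda2$-separated set $N_V$, the ball $B_G(v,\Lambda+s)$ contains only $2^{O(\ddim)}$ net points --- far too many for a union bound, so the shifts must do the work. I would bucket these net points into $O((\Lambda+s)/s)=O(\alpha)=O(\ddim)$ ``shells'' by distance from $v$, each of width $\Theta(s)$, and bound the near-winners per shell. Inside a shell $\mathcal S$: taking $z^\ast=\argmax_{z\in\mathcal S}\delta_z$ gives $m(v)\le d_G(v,z^\ast)-\delta_{z^\ast}\le(\text{shell distance})+\Theta(s)-\max_{z\in\mathcal S}\delta_z$, so a near-winner $x\in\mathcal S$ must satisfy $\delta_x\ge d_G(v,x)-m(v)-s\ge\max_{z\in\mathcal S}\delta_z-\Theta(s)$. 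A short computation via memorylessness (using $\E[e^{-\lambda Y}]=1/K$ for $Y$ the maximum of $K-1$ i.i.d.\ rate-$\lambda$ exponentials) shows $\E\bigl[\#\{i:\delta_i\ge\max_j\delta_j-c\}\bigr]\le e^{\lambda c}+o(1)$ among $K$ such exponentials; since $\lambda\cdot\Theta(s)=\Theta(\ddim/\alpha)=O(1)$, each shell contributes $O(1)$ near-winners in expectation --- provided the constant in $\lambda=\Theta(\ddim/\Lambda)$ is large enough that the truncation at $\tfrac\Lambda2$ does not bite the maximum of the $2^{O(\ddim)}$ shell-shifts (which concentrates near $\tfrac{\ln K}{\lambda}\lesssim\tfrac{\ddim}{\lambda}=\Theta(\Lambda)$). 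Summing over the $O(\ddim)$ shells gives $O(\ddim)$ near-winners per vertex in expectation.

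The remaining work --- and the main obstacle --- is upgrading this per-vertex expectation to a bound holding for every vertex with high probability without a $\log n$ loss. The near-winner count of $v$ depends only on the (independent) shifts of the $2^{O(\ddim)}$ net points of $B_G(v,\Lambda+s)$, and the per-shell count has an exponential upper tail (a Chernoff bound on a sum of independent exponentials gives $\Pr[\text{shell count}\ge k]\le e^{-\Omega(k)}$), so I would apply the Lov\'asz Local Lemma --- constructively, via Moser--Tardos, preserving polynomial time --- to the bad events ``$y$ has more than $\Theta(\ddim\log\ddim)$ near-winners'' indexed by a fine $\Theta(s)$-net $Y$ of $V$. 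Packing bounds the mutual dependency degree of these events by $\ddim^{O(\ddim)}$, so winning the LLL inequality forces the threshold up to $\Theta(\ddim\log\ddim)=\tilde O(\ddim)$ --- exactly the source of the $\polylog\ddim$ hidden by $\tilde O(\cdot)$ --- after which a triangle-inequality/additive-slack argument (shrinking $s$ by a constant) transfers the bound from $Y$ to all of $V$. Polynomial-time computability holds throughout: building $N_V$ and the net $Y$, sampling the $|N_V|$ shifts, and running Moser--Tardos are all polynomial.
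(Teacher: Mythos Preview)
The paper does not give its own proof of this theorem: it is stated as a citation to \cite{filtser20} (Theorem~10 there), with only a one-paragraph explanation that the MPX shifted-clustering picture of \cite{filtser20} --- sample shifts $\delta_t$ and assign $v$ to the $t$ maximizing $\delta_t-d_G(v,t)$ --- translates into the dangling-net language by attaching $t$ at distance $\Delta-\delta_t$ from its anchor in $V$. Your proposal is exactly this translation made explicit, and your ingredients (a greedy $\Theta(\Lambda)$-net for $N_V$, truncated exponential shifts with rate $\Theta(\ddim/\Lambda)$, packing to localize the candidate near-winners, memorylessness to bound the expected number of near-winners, and the Lov\'asz Local Lemma via Moser--Tardos to remove the $\log n$ from the union bound) are precisely the ingredients of the cited argument. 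So your approach is correct and is essentially the same as what the paper invokes; you have simply written out what the paper delegates to the reference.

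One minor stylistic difference worth noting: your shell decomposition is a slightly different bookkeeping device than the direct ``order the candidates by shifted distance and bound the gap to the leader'' argument more common in the MPX literature, but both reduce to the same exponential-order-statistics computation $\E[\#\{i:\delta_i\ge\max_j\delta_j-c\}]\le e^{\lambda c}$ and give the same $O(\ddim)$-in-expectation, $\tilde O(\ddim)$-after-LLL conclusion. Your handling of the truncation (choosing the rate constant so the cap is rarely hit among $2^{O(\ddim)}$ candidates) and of the net-to-all-vertices transfer (shrinking the additive slack by a constant) are also standard and correct.
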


Theorems \ref{thm:MPXbasedClusteringGeneral}, \ref{thm:MPXbasedClusteringPathwidth} and \ref{thm:MPXbasedClusteringDoubling} are proven in \cite{filtser20} in the context of ``MPX partitions''. There we sample shifts $\{\delta_t\}_{t\in N}$ and each vertex $v$ joins the cluster of the center $t$ maximizing $\delta_t-d_G(v,t)$. This is equivalent to our framework here, where in our dangling net we add $t$ at distance $\Delta-\delta_t$ from its corresponding vertex in $G$. The statements corresponding to Theorems \ref{thm:MPXbasedClusteringGeneral}, \ref{thm:MPXbasedClusteringPathwidth} and \ref{thm:MPXbasedClusteringDoubling} in \cite{filtser20} are Theorems 4, 13, and 10 (respectively).

\section{Hierarchies via Cluster Aggregation and Dangling Nets}\label{sec:hierarchiesFromCA}

In this section we reduce the existence of strong sparse partition hierarchies to the existence of good dangling nets and cluster aggregation solutions. Our algorithm for doing so is \Cref{alg:Hierarchical}. It may be useful for the reader to recall the relevant definitions: strong sparse hierarchies (\Cref{dfn:hierarchies}), cluster aggregation (\Cref{dfn:clusterAgg}) and dangling nets (\Cref{dfn:danglingNet}). 

\begin{algorithm}[]
	\caption{\texttt{Hierarchical-Strong-Sparse-Partition}}\label{alg:Hierarchical}
	\DontPrintSemicolon
	\SetKwInOut{Input}{input}\SetKwInOut{Output}{output}
	\Input{Weighted graph $G = (V,E,w)$ (edge weights at least $1$), 
		algorithm for $\Delta$-covering $(\alpha, \tau)$-sparse dangling net, algorithm for $\beta$-distortion cluster aggregation.}
	\Output{A $2\beta\cdot\left(2\alpha+1\right)$-hierarchy of strong $(8\alpha+4,\tau,\gamma)$-sparse partitions.}
	\BlankLine
	
	$i=0$ and $\gamma=2\beta(2\alpha+1)$.\;
	Set $\calC_0=\{\{v\} : v\in V\}$.\;
	\While{$\calC_i\ne\{\{V\}\}$}{
		Set $\Delta=2\alpha\beta\cdot\gamma^{i}$.\;
		Compute a $\Delta$-covering $(\alpha, \tau)$-sparse dangling net $N$.\;
		Compute a $\beta$-distortion cluster aggregation solution $f$ on $G+N$ with portals $N$ and clusters $\calC_{i} \cup \{\{t\}\}_{t \in N}$ with corresponding coarsened partition $\mcC' := \{f^{-1}(t) : t \in N\}$.\;
		Let ${\cal C}_{i+1}=\{C\setminus N\mid C\in \calC'\}$.\; 
  		$i\leftarrow i+1$.\;
	}
	\Return $\calC_0,\calC_1, \mcC_2 \ldots$
\end{algorithm}

Formally, we show the following theorem whose proof is illustrated in \Cref{fig:alg}.

\HSSPmainp*

\begin{proof}
We begin by describing our algorithm for strong sparse partition hierarchies in words; see \Cref{alg:Hierarchical} for pseudo-code. Our algorithm proceeds in rounds in a bottom up fashion, with round $0$ being the trivial partition ${\cal C}_0$ to singletons, and round $i$ constructing the coarsening of strong sparse partition ${\cal C}_{i}$ to obtain strong sparse partition ${\cal C}_{i+1}$.  

In the remainder, we elaborate on the coarsening step of
round $i$. Here, we receive as input a strong $\gamma^{i}$-diameter  $(8\alpha+4,\tau)$-sparse partition $\calC_{i}$. Let $\Delta=2\alpha\beta\cdot\gamma^{i}$.
Using the assumption of our theorem, we create a $\Delta$-covering $(\alpha,\tau)$-sparse dangling net $N$.


Next, we apply the cluster aggregation algorithm in the graph $G+N$ using $N$ as the portals and $\mcC_i$ with a singleton cluster for each element of $N$ as the input clusters.
As a result we obtain assignment function $f$ and corresponding coarsening $\mcC' := \{f^{-1}(t)\}_{t \in N}$. We obtain $\mcC_{i+1}$ by removing any vertex in $N$ from any cluster in $\calC'$.

	We now establish that for every $i$, ${\cal C}_i$ forms a strong $\gamma^i$-diameter $(\alpha, \tau)$-sparse partition.  The claim holds
	for $i = 0$ since ${\cal C}_0$ is a strong $\gamma^0$-diameter $(4(\alpha+1),
	\gamma)$-sparse partition. Consider arbitrary $i > 0$. We assume by induction that $\calC_{i}$ is a strong $\gamma^i$-diameter $(4(\alpha+1),\tau)$-sparse partition.
	Recall that $\Delta=2\alpha\beta\cdot\gamma^{i}$.
	
 We begin by bounding the diameter of every cluster in $\mcC_i$. For any vertex $v\in V$, we know that $d_{G+N}(v,N)\le\Delta$. Next we obtain a solution to the cluster aggregation problem $f$ such that
        \begin{align*}
            d_{G+N[f^{-1}(v)]}(v,f(v))\le d_{G+N}(v,N)+\beta\cdot\gamma^{i} \le \Delta+\beta\cdot\gamma^{i}.
        \end{align*}
	It follows that $f^{-1}(v)$ has strong diameter at most 
        \begin{align*}
            2\cdot\left(\Delta+\beta\cdot\gamma^{i}\right)=2\cdot\left(2\alpha\beta+\beta\right)\cdot\gamma^{i}=2\beta\cdot\left(2\alpha+1\right)\cdot\gamma^{i}=\gamma^{i+1}.
        \end{align*}
Finally, in the actual partition that we use, $\calC_{i+1}$, we only remove vertices of degree $1$ and this can only decrease the diameter.
	We conclude that $\calC_{i+1}$ has  strong diameter at most $\gamma^{i+1}$ as required. It is also clear that $\calC_{i+1}$ coarsens $\calC_{i}$.
	
	Next, we prove the ball preservation property. Fix a vertex $v\in V$. 
	Consider a ball $B_G(v, R)$ around $v$ of radius $R=\frac{\Delta}{4\alpha}$.
	For every $u\in B_{G}(v,R)$, the cluster aggregation solution assigns $u$ to a portal $t_u\in N$.
	By the guarantees of cluster aggregation we have
	\begin{align*}
		d_{G+N}(v,t_u) & \le d_{G+N}(v,u)+d_{G+N}(u,t_u)\\
		& \le d_{G}(v,u)+d_{G+N}(u,N)+\beta\cdot\gamma_{i}\\
		& \le d_{G+N}(v,N)+2d_{G}(v,u)+\beta\cdot\gamma_{i}\\
		& \le d_{G+N}(v,N)+\frac{\Delta}{2\alpha}+\frac{\Delta}{2\alpha}\\
            &=d_{G+N}(v,N)+\frac{\Delta}{\alpha}~.
	\end{align*}
	As $N$ is a $\Delta$-covering $(\alpha, \tau)$-sparse dangling net, it holds that $$\left|\left\{ t\in N\mid d_{G+N}(v,t)\le d_{G+N}(v,N)+\frac{\Delta}{\alpha}\right\} \right|\le\tau.$$ It follows that the vertices in $B_{G}(v,R)$ are assigned to at most
	$\tau$ different portals, as required.
	
	Finally, to conclude that $\mcC_{i+1}$ is $(8\alpha + 4, \tau)$-sparse, we observe that
	\[
	\frac{\gamma^{i+1}}{R}=\frac{\gamma^{i+1}}{\frac{\Delta}{4\alpha}}=\frac{4\alpha\cdot\gamma^{i+1}}{2\alpha\beta\cdot\gamma^{i}}=\frac{2\gamma}{\beta}=\frac{2\cdot2\beta\cdot\left(2\alpha+1\right)}{\beta}=8\alpha+4~,
	\]
	concluding our analysis and proof.
\end{proof}

\section{Improved Cluster Aggregation}\label{sec:improvedCA}
Having reduced strong sparse partition hierarchies to dangling nets and cluster aggregation in the previous section, we now give our new algorithms for cluster aggregation in general graphs, trees, doubling dimension-bounded and pathwidth-bounded graphs.

The reader may want to review the definition of cluster aggregation (\Cref{dfn:clusterAgg}). Throughout this section, given cluster aggregation solution $f$, we will make use of the notion of the detour of a cluster aggregation solution; informally, how much extra distance a vertex travels in the solution.
\begin{definition}[Cluster Aggregation Detour]
Given cluster aggregation solution $f$ in graph $G$ on portals $P$, we let the detour of vertex $v$ be
    \begin{align*}
    \dtr_f(v) := d_{G[f^{-1}(f(v))]}(v, f(v)) - d_G(v, P).
\end{align*}
\end{definition}
\noindent Observe that cluster aggregation solution $f$ has distortion $\beta$ if $\dtr_f(v) \leq \beta \cdot \Delta$ for every vertex $v$.

\subsection{Cluster Aggregation in General Graphs}\label{sec:CAgeneral}

We begin by demonstrating how to achieve $O(\log \kappa)$-distortion cluster aggregation solutions in general graphs when we are given $\kappa \leq n$ input clusters.

\begin{restatable}{theorem}{CAGeneral}\label{thm:caGen}
	Every instance of cluster aggregation with input partition $\calC=\{C_1,\ldots ,C_\kappa\}$ has an $O(\log \kappa)$-distortion solution that can be computed in polynomial time.
\end{restatable}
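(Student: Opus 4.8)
The plan is to implement the round-robin ball-growing process sketched in the overview, where MID paths are treated as having zero weight. Concretely, I would maintain a set of \emph{assigned} clusters (initially the $\kappa_0 \le \kappa$ clusters already containing a portal, each assigned to a portal inside it) and a set of \emph{unassigned} clusters. The algorithm runs in $O(\log \kappa)$ phases; in each phase I iterate over all currently-assigned ``seed'' clusters $C_i'$ (one per portal $p$), and for each I flip a fair coin repeatedly until a tails, performing one \emph{expansion iteration} per heads. An expansion iteration for $C_i'$ looks at every still-unassigned cluster $C$, computes its MID path: take a fixed shortest path in $G$ from a representative vertex of $C$ to $P$, and let the MID path be its maximal prefix that avoids all currently-assigned clusters; if this prefix's far endpoint lies in (or enters) $C_i'$, then merge $C$ into $C_i'$ and mark $C$ assigned to $p$. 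After all phases, any cluster still unassigned is attached greedily along any shortest path to an assigned cluster.

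First I would prove correctness of the partition structure: $f$ is well-defined (each cluster ends up assigned to exactly one portal), and $f^{-1}(p)$ is connected in $G$ because every merge step attaches $C$ to $C_i'$ along a path (the MID path plus the portion inside $C$) that lies entirely within the current pre-image of $p$ — here I must be slightly careful that the MID path avoids \emph{other} portals' clusters, which is exactly what ``disjoint from all assigned clusters'' buys. Second, the progress/high-probability argument: I claim each unassigned cluster $C$ gets assigned in a given phase with probability $\ge 1/2$ (or some constant). The point is that $C$'s MID path has some far endpoint lying in some currently-assigned cluster $C_j'$; when it is $C_j'$'s turn in the phase, $C_j'$ performs at least one expansion iteration with probability $1/2$, and one expansion iteration of $C_j'$ suffices to absorb $C$ (modulo the subtlety that earlier expansions in the same phase may have assigned new clusters and shortened $C$'s MID path, but then its endpoint is in an \emph{even closer} assigned cluster, so the claim only gets easier — I need to phrase the invariant so that ``$C$ is adjacent via a zero-weight-MID-path to the assigned set'' is maintained and a constant-probability event assigns it). Hence after $O(\log \kappa)$ phases, all clusters are assigned except with probability $n^{-\Omega(1)}$; and since this is a polynomial-time algorithm, we can also derandomize or simply note the final greedy step handles any leftover at bounded cost, or repeat to get a Las Vegas guarantee.

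The crux — and the step I expect to be the main obstacle — is the distortion bound, i.e.\ showing $\dtr_f(v) = O(\log \kappa) \cdot \Delta$ for every $v$. The detour a vertex $v$ in cluster $C$ incurs decomposes as: (i) the distance along $C$'s MID path, which is $\le d_G(v,P)$ since the MID path is a prefix of a shortest path to $P$ (so this contributes nothing beyond $d_G(v,P)$, up to the at most $\Delta$ needed to reach the representative of $C$); plus (ii) an extra $\Delta$ each time we cross an intermediate absorbed cluster while ball-growing; plus (iii) at most $\Delta$ inside $C$ itself. So the detour is at most $\Delta$ times (number of clusters on the merge-path from $C$'s seed back to its portal), and each of those intermediate clusters was absorbed during some expansion iteration, i.e.\ corresponds to a heads coin flip. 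The key insight to extract is that the number of heads flips ``charged to'' a single portal $p$, summed over \emph{all} $O(\log \kappa)$ phases, is itself $O(\log \kappa)$ with high probability: in each phase $p$'s coin-flip count is $\mathrm{Geom}(1/2)$, so over $O(\log \kappa)$ phases the total is a sum of $O(\log\kappa)$ independent geometric variables, which concentrates at $O(\log\kappa)$ by a Chernoff-type bound; a union bound over the $\le \kappa$ portals then gives the claim simultaneously. Since the chain of intermediate clusters from $C$ to its portal only ever lies among clusters absorbed into that one portal's pre-image, its length is bounded by that total heads count. I would take a union bound over the (at most $\kappa$) portals and over the polynomially many phases to get the $O(\log\kappa)$-distortion guarantee with high probability, and finally note the algorithm is clearly polynomial-time (each expansion iteration is a BFS/Dijkstra-style scan, and the number of iterations is polynomial whp, with a deterministic cutoff to guarantee termination).
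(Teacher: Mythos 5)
Your proposal matches the paper's algorithm and analysis essentially exactly: the same round-robin geometric expansion along MID paths, the same memorylessness argument for a constant per-phase assignment probability, and the same key step of bounding each vertex's detour by $O(\Delta)$ times the total number of expansion iterations its portal performs \emph{across all phases}, which concentrates at $O(\log \kappa)$ as a sum of $O(\log\kappa)$ i.i.d.\ geometric variables followed by a union bound. One phrasing to tighten: the detour should be charged per expansion iteration (each adds $O(\Delta)$ regardless of how many clusters that iteration absorbs along a single MID path, since the MID path itself is a shortest-path prefix and contributes no detour), not per intermediate cluster crossed — but this only strengthens your bound and is exactly how the paper's induction is set up.
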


\noindent Our main approach is to grow the cluster of each portal in a round-robin and geometric fashion but treat each vertex's path to its nearest cluster with a portal as having length $0$; this idea is generally in the spirit of the star decompositions of \cite{dhamdhere2006improved} (see also the related Relaxed Voronoi algorithm \cite{Fil19sicomp}).

To formalize this, for each cluster $C_i \in \mc{C}$, arbitrarily choose a representative vertex $v_i \in C_i$, and let $\pi_i$ denote a shortest path in $G$ from $v_i$ to its closest portal in $P$. At all times in the algorithm, we refer to the \textit{maximal internally disjoint} (MID) prefix of $\pi_i$ as $\pi_i'$. It is the maximal prefix of $\pi_i$ such that its final node is the only node of the prefix belonging to a cluster already assigned to some portal. We denote the final node of the prefix by $\final(\pi_i')$. 
Initially no clusters are assigned, and thus $\pi_i'=\pi_i$, and $\final(\pi_i')$ is the closest portal to $v_i$.

\begin{figure}
    \centering
    \begin{subfigure}[b]{0.24\textwidth}
        \centering
        \includegraphics[width=\textwidth,trim=0mm 0mm 0mm 0mm, clip]{./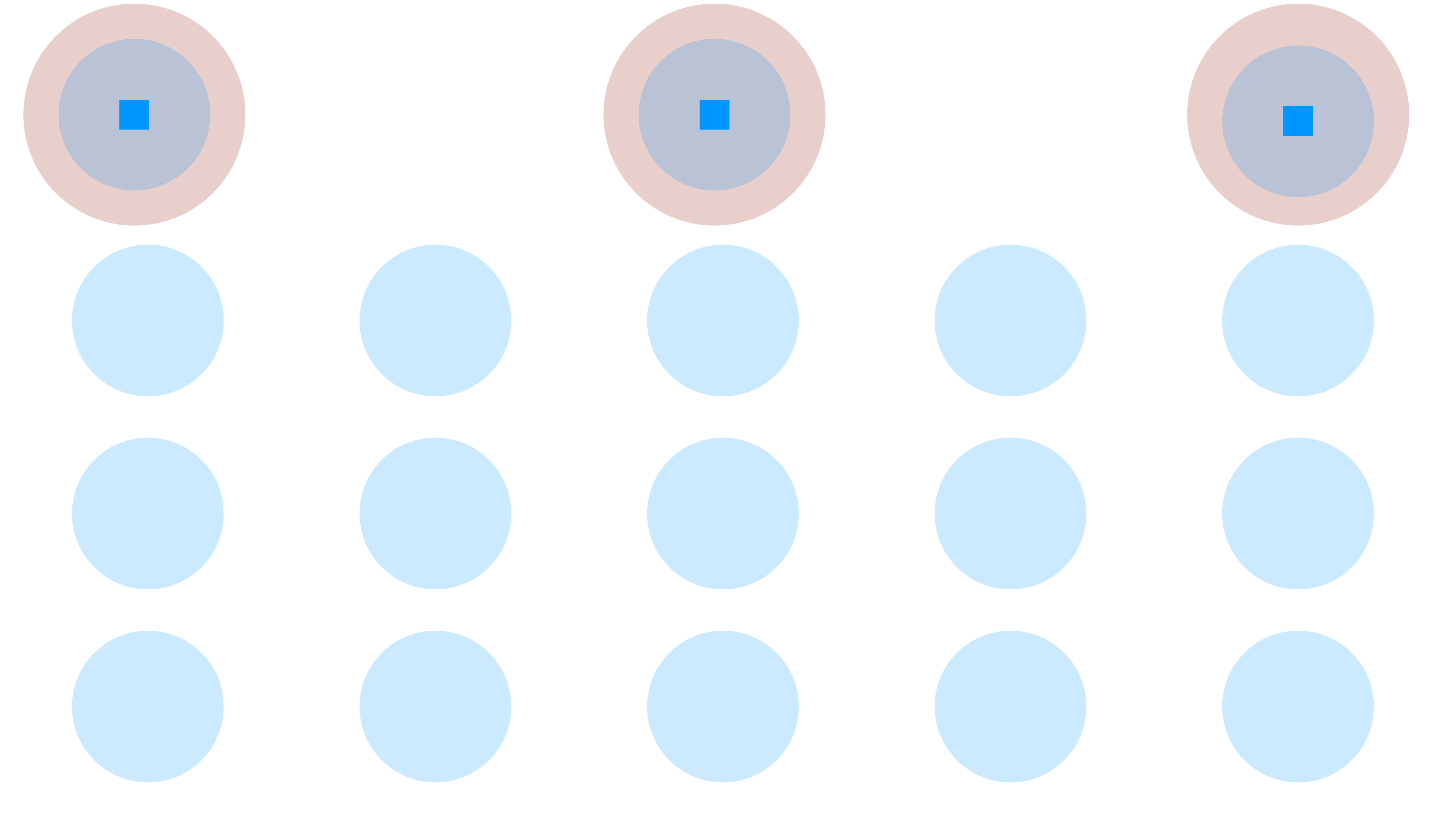}
        \caption{Input partition.}\label{sfig:CAAlg1}
    \end{subfigure}    \hfill
    \begin{subfigure}[b]{0.24\textwidth}
        \centering
        \includegraphics[width=\textwidth,trim=0mm 0mm 0mm 0mm, clip]{./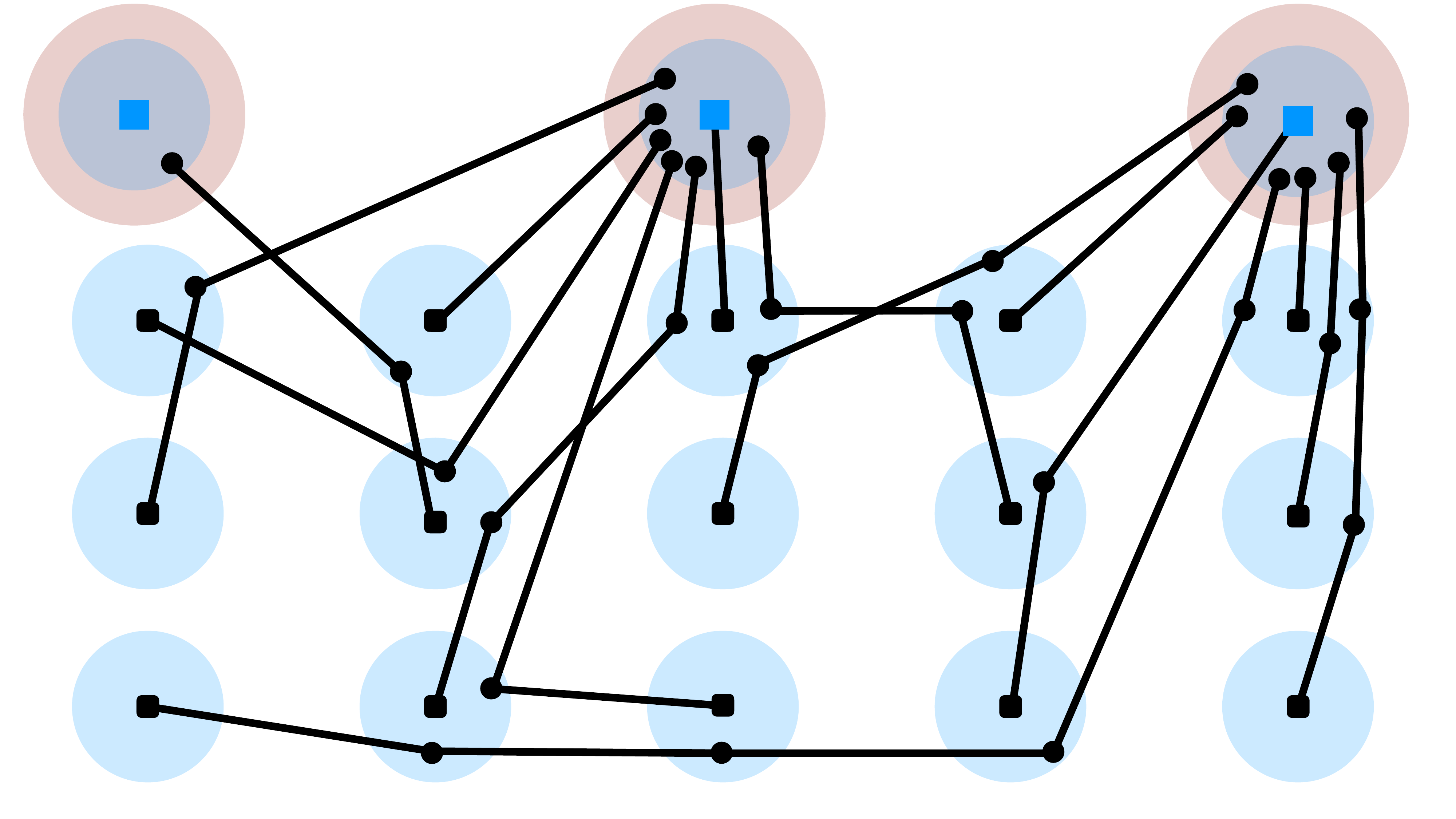}
        \caption{Initial MID paths.}\label{sfig:CAAlg2}
    \end{subfigure}    \hfill
    \begin{subfigure}[b]{0.24\textwidth}
        \centering
        \includegraphics[width=\textwidth,trim=0mm 0mm 0mm 0mm, clip]{./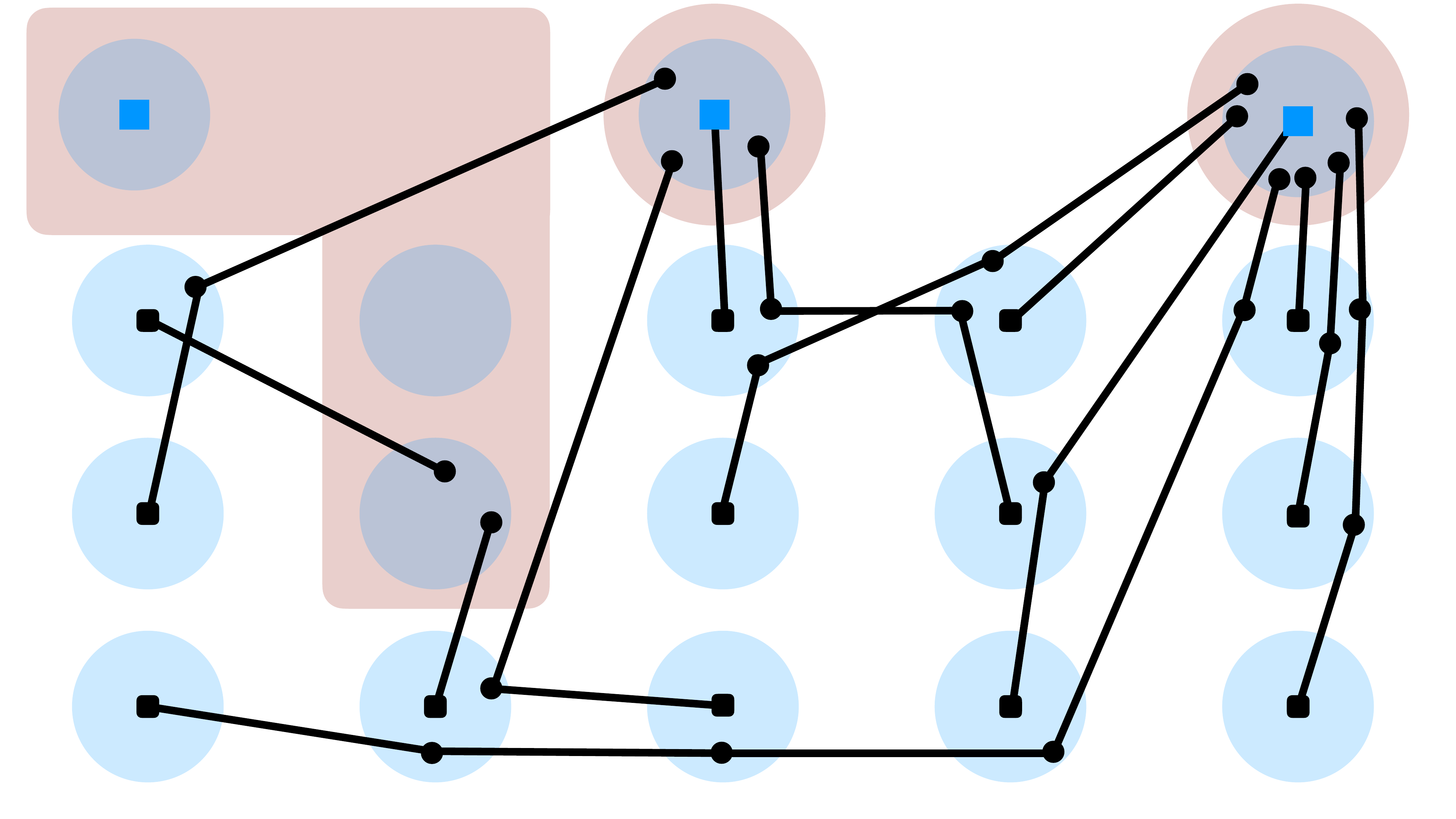}
        \caption{Update 1.}\label{sfig:CAAlg3}
    \end{subfigure}
    \begin{subfigure}[b]{0.24\textwidth}
        \centering
        \includegraphics[width=\textwidth,trim=0mm 0mm 0mm 0mm, clip]{./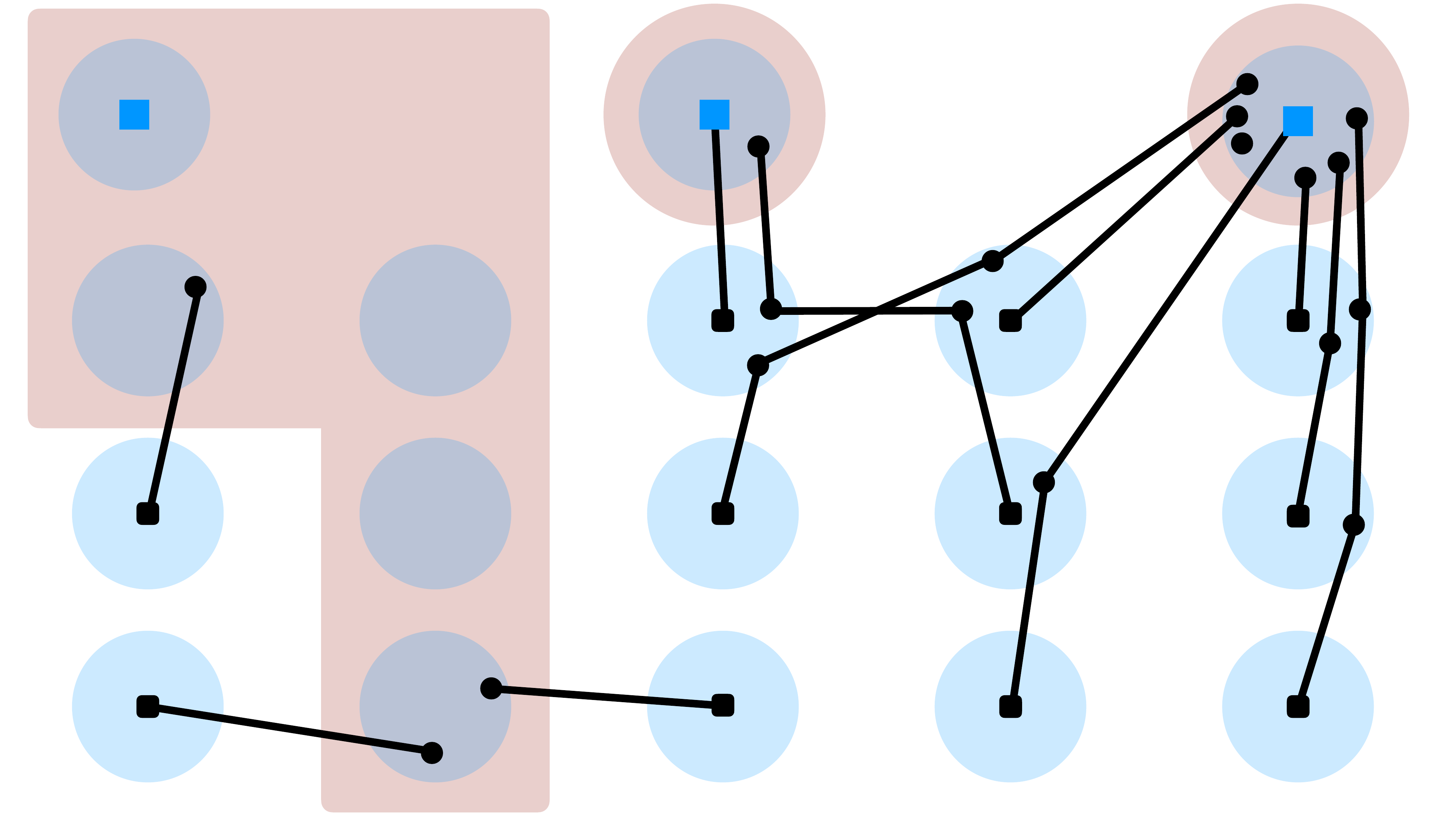}
        \caption{Update 2.}\label{sfig:CAAlg4}
    \end{subfigure}
        \begin{subfigure}[b]{0.24\textwidth}
        \centering
        \includegraphics[width=\textwidth,trim=0mm 0mm 0mm 0mm, clip]{./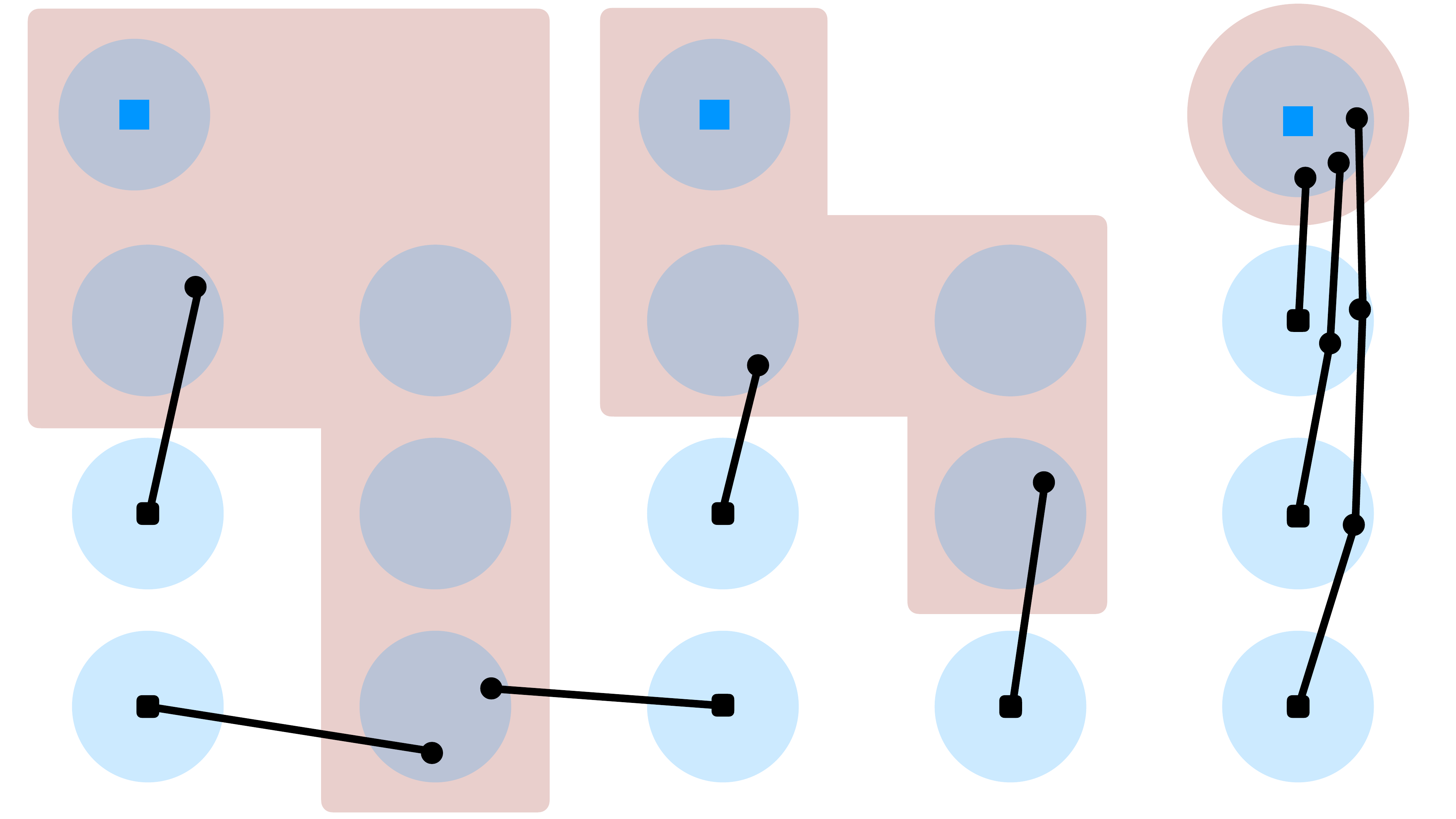}
        \caption{Update 3.}\label{sfig:CAAlg5}
    \end{subfigure}    \hfill
    \begin{subfigure}[b]{0.24\textwidth}
        \centering
        \includegraphics[width=\textwidth,trim=0mm 0mm 0mm 0mm, clip]{./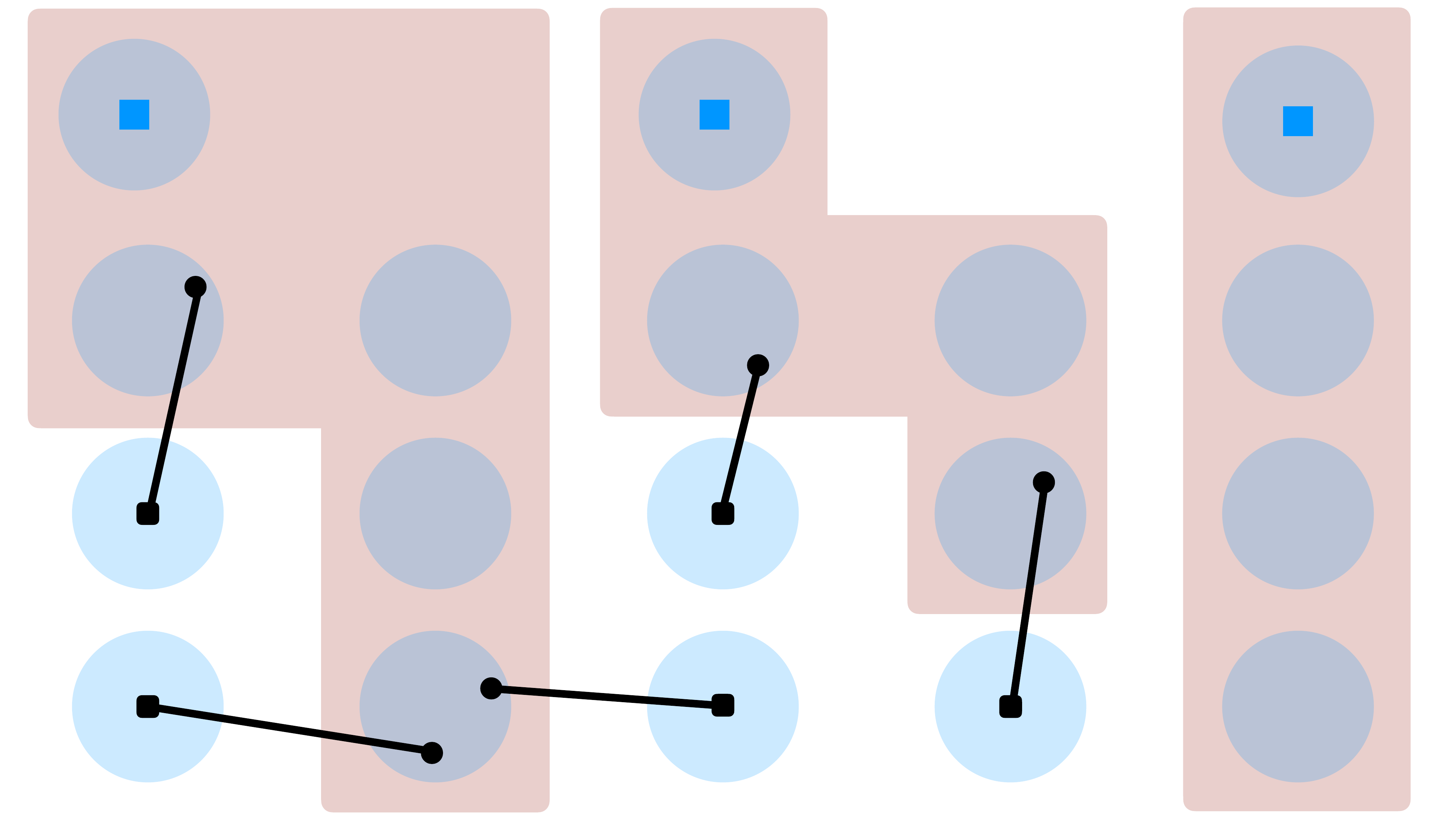}
        \caption{Update 4.}\label{sfig:CAAlg6}
    \end{subfigure}    \hfill
    \begin{subfigure}[b]{0.24\textwidth}
        \centering
        \includegraphics[width=\textwidth,trim=0mm 0mm 0mm 0mm, clip]{./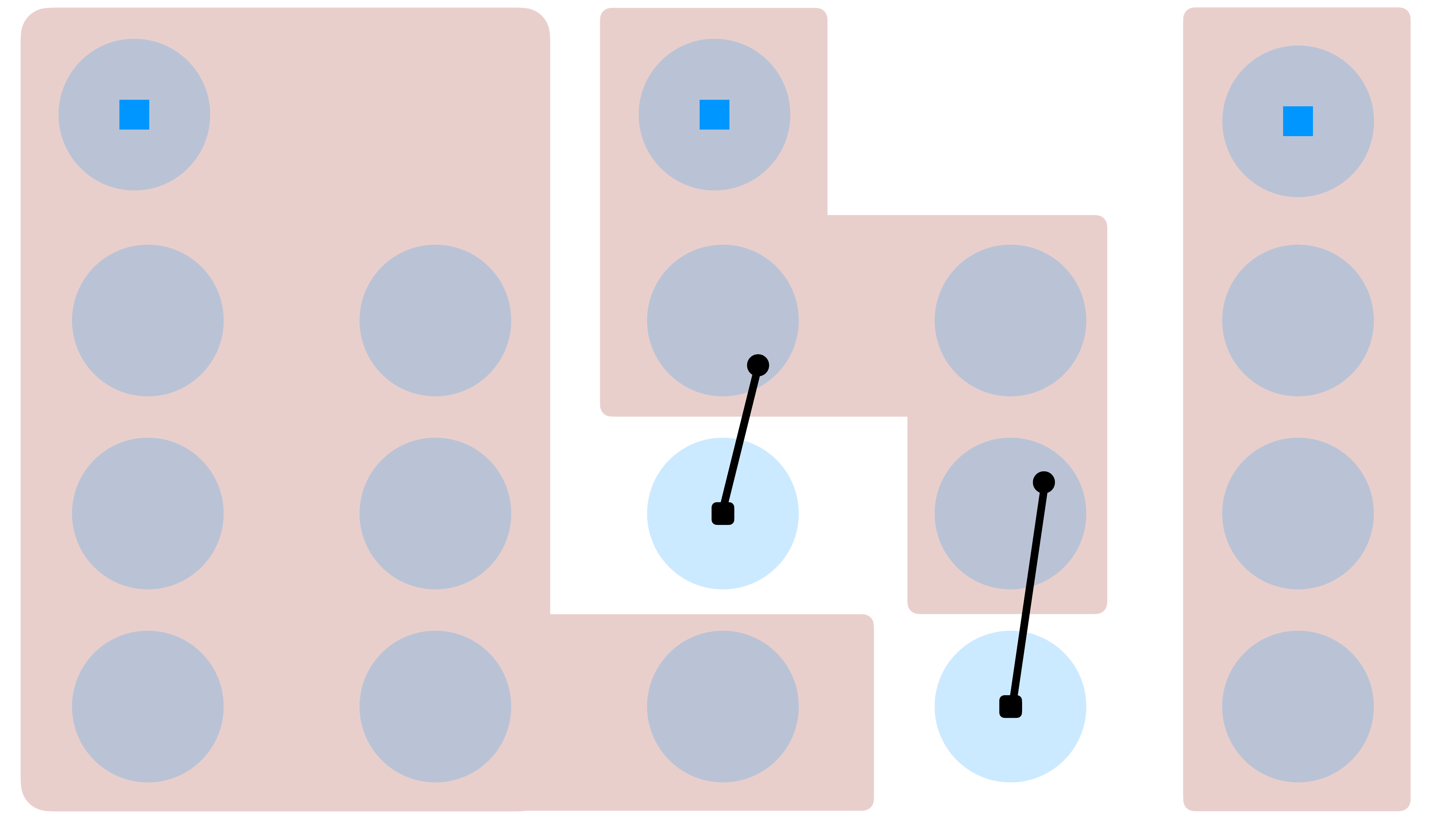}
        \caption{Update 5.}\label{sfig:CAAlg7}
    \end{subfigure}
    \begin{subfigure}[b]{0.24\textwidth}
        \centering
        \includegraphics[width=\textwidth,trim=0mm 0mm 0mm 0mm, clip]{./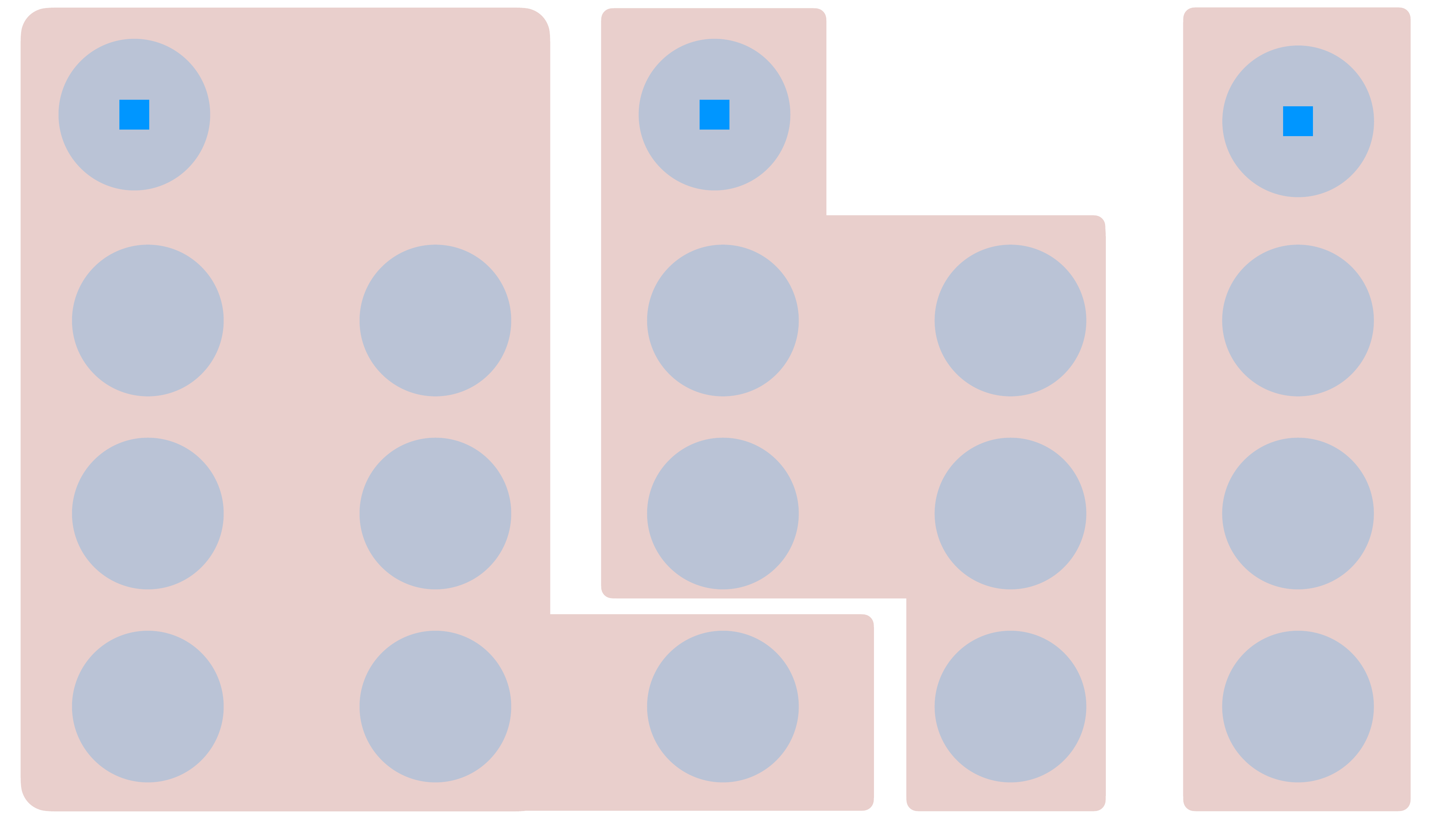}
        \caption{Update 6 (output).}\label{sfig:CAAlg8}
    \end{subfigure}
    \caption{An illustration of our cluster aggregation algorithm. \ref{sfig:CAAlg1} gives the initial partition $\mcC$ in blue and initialized output $\mcC'$ in red. \ref{sfig:CAAlg2} gives the initial MID paths.  We assume that the geometric random variables (left to right) is $2, 1,1$ in the first round and $1,1,0$ in the second round. \ref{sfig:CAAlg3}, \ref{sfig:CAAlg4}, \ref{sfig:CAAlg5}, \ref{sfig:CAAlg6}, \ref{sfig:CAAlg7} and \ref{sfig:CAAlg8} give the updated $\mcC'$ and MID paths after each heads.}\label{fig:CAalg}
\end{figure}

Also observe that we may assume without loss of generality that no cluster contains more than one portal: any assignment that uses more than one portal contained in a given cluster $C_i$ has infinite detour (that is, the portal not assigned cluster $C_i$ is not reachable from any of its assigned clusters), and the use of one portal over another can increase the detour by at most $\Delta$. 

\paragraph*{Algorithm Overview} Order the portals arbitrarily $p_1, \ldots, p_L$, and proceed in rounds. In every round $j$, each portal $p_\ell$ in sequence expands $f^{-1}(p_\ell)$, the set of clusters assigned to it, by claiming \textit{all} clusters $C_i\in\mc{C}$ such that $f(\final(\pi_i'))=p_\ell$, and all of the clusters along the paths $\pi_i'$. In other words, $p_\ell$ claims all the clusters $C_k$ such that for some cluster $C_i$ with $f(\final(\pi_i'))=p_\ell$, $C_k$ intersects $\pi_i'$.

Portal $p_\ell$ repeats this expansion a geometric variable $g_\ell^{(j)}$-many times, then we move on to the next portal. Clearly $f^{-1}(p_\ell)$ remains connected. We will show that only $O(\log(\abs{\mc{C}}))$ rounds are needed to assign every cluster to a portal, and that the total detour of a node assigned to any portal $p_\ell$ is at most $2\Delta \cdot \sum_j g_\ell^{(j)}$, which will suffice to prove the theorem. The algorithm is presented formally in \Cref{alg:cluster-agg-general} and illustrated in \Cref{fig:CAalg}.

\begin{algorithm}[h]
	\caption{\texttt{Cluster aggregation for general graphs}}
        \label{alg:cluster-agg-general}
	\DontPrintSemicolon
	\SetKwInOut{Input}{input}\SetKwInOut{Output}{output}
	\Input{Weighted graph $G = (V, E,w)$, portal set $P \subseteq V$, partition $\mc{C} = \{C_i\}_i$ into clusters of strong diameter at most $\Delta$.}
	\Output{Assignment $f: \mc{C} \to P$ of additive	 distortion $\beta = O(\log\abs{\mc{C}})$.}
	\BlankLine
	
	Name the portals $P = \{p_1, \ldots, p_L\}$\; 
	   
	For each $p_\ell$ contained in cluster $C_i$, set $f(C_i)=p_\ell$\;
	
	\For{rounds $j = 1, 2, \ldots, 10\log\abs{\mc{C}}$}{
		\For{portals $p_\ell = p_1, \ldots, p_L$}{
			Draw $g_{\ell}^{(j)} \sim \text{Geom}(\frac{1}{2})$\;
			\For{$\text{h} = 1, \ldots, g_\ell^{(j)}$ (expansion iterations)}{
            Set $\mc{U}_{1}=\left\{ C_{i}\in\mc{C}\mid C_{i}\text{ is unassigned and }f(\final(\pi_{i}'))=p_{\ell}\right\}$\;
            
            Set $\mc{U}_{2}=\left\{ C_{j}\in\mc{C}\mid\exists C_{i}\in\mc{U}_{1}\text{ such that }C_{j}\cap\pi_{i}'\ne\emptyset\right\}$\hfill\tcp*[h]{Note $\mc{U}_{1}\subseteq \mc{U}_{2}$}\;
            For every cluster $C_i\in \mc{U}_{2}$ set $f(C_i)=p_\ell$\;
			}
		}
	}
	\Return $f$\;
\end{algorithm}

\begin{proof}[Proof of \Cref{thm:caGen}]
	We begin by showing that with high probability, after $10\log|\mcC|$ rounds $f$ is defined on the entire set $C$.
\begin{lemma}
\label{lem:assignment-whp}
The algorithm assigns every cluster, with high probability.
\end{lemma}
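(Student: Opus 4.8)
The plan is to establish a per-round statement: in each round, any cluster that is still unassigned at the start of the round becomes assigned by the end of that round with probability at least $\tfrac12$, conditioned on the entire history of the execution before that round. Granting this, fix a cluster $C_i$ and let $\mcH_j$ collect all random choices made before round $j$; then $\Pr[\,C_i\text{ unassigned at the end of round }j\mid\mcH_j\,]\le\tfrac12$, and iterating over the $10\log|\mcC|$ rounds gives $\Pr[\,C_i\text{ unassigned at the end}\,]\le 2^{-10\log|\mcC|}=|\mcC|^{-\Omega(1)}$; a union bound over the at most $|\mcC|$ clusters finishes the lemma.

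For the per-round bound, fix $C_i$ unassigned at the start of round $j$ and record the structure of its MID path. Let $D_0=C_i,D_1,\dots,D_m$ be the clusters met in order along the fixed shortest path $\pi_i$, where $D_m$ contains the portal endpoint of $\pi_i$; since a cluster containing a portal is assigned (to itself) at initialization and assignments are never undone, the ``frontier'' cluster $D_s$ — the closest-to-$v_i$ currently assigned cluster along $\pi_i$, equivalently the one containing $\final(\pi_i')$ — is always well defined, $\pi_i'$ is exactly the sub-path of $\pi_i$ from $v_i$ to $\final(\pi_i')$, the index $s$ is nonincreasing over time, and $C_i$ becomes assigned precisely when the frontier would reach $D_0$. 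Call the portal $f(\final(\pi_i'))$ the \emph{frontier owner} of $C_i$. The crucial observation is: in any expansion iteration run by a portal $p_\ell$, if $C_i$ is still unassigned and its frontier owner equals $p_\ell$, then $C_i\in\mc{U}_1$ (since $f(\final(\pi_i'))=p_\ell$) and hence $C_i\in\mc{U}_2$ (since $v_i\in C_i\cap\pi_i'$), so $C_i$ is assigned in that iteration. Two consequences follow, using that $\text{Geom}(\tfrac12)$ is supported on $\{1,2,\dots\}$ so every portal performs at least one expansion iteration per round: (i) within a single portal $p_\ell$'s block of iterations all new assignments go to $p_\ell$, so the frontier owner of $C_i$ either stays fixed or becomes $p_\ell$, and once it is $p_\ell$ the very next iteration of that block, if any, assigns $C_i$; and (ii) if the frontier owner at the start of round $j$ is $q_0$ and is still $q_0$ when $q_0$'s block begins, then $C_i$ is assigned in $q_0$'s first iteration.

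Now assume $C_i$ is still unassigned at the end of round $j$. By (ii) the frontier owner must change during the round; let $q_1$ be the first portal to become the frontier owner of $C_i$, and let $h_1$ be the first iteration within $q_1$'s block after which $q_1$ is that frontier owner (such a change can only happen during $q_1$'s own block, and, by (ii), not before $q_0$'s block, so $q_1$ precedes $q_0$ in the round ordering). By (i), immediately after iteration $(q_1,h_1)$ the frontier owner is $q_1$, and iteration $h_1$ itself did not assign $C_i$ (else $C_i$ would be assigned, contradiction); hence if $q_1$ performed any later iteration it would assign $C_i$, so iteration $h_1$ was $q_1$'s last, i.e.\ $g_{q_1}^{(j)}=h_1$. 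The key point is that the execution up to and including iteration $(q_1,h_1)$ — and therefore the identities of $q_1$ and $h_1$, and the event that $C_i$ is still unassigned — depends on $g_{q_1}^{(j)}$ only through $\{g_{q_1}^{(j)}\ge h_1\}$, because the algorithm performs iteration $h_1$ of $q_1$'s block exactly when $g_{q_1}^{(j)}\ge h_1$ and $g_{q_1}^{(j)}$ is independent of everything drawn earlier. By memorylessness of the geometric distribution, conditioned on all of this, $\Pr[g_{q_1}^{(j)}=h_1]=\Pr[g=h_1\mid g\ge h_1]=\tfrac12$. Therefore $\Pr[\,C_i\text{ unassigned at the end of round }j\mid\mcH_j\,]\le\tfrac12$, as needed. (One can continue past $q_1$ — if $q_1$'s block ends without assigning $C_i$ one passes to the next portal to become the frontier owner, whose index is strictly larger, bounding the number of ``critical'' portals by $L$ — but bounding the first step already suffices.)

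The step I expect to be the main obstacle is precisely this last probabilistic bookkeeping: since the critical portal $q_1$ and its iteration count $h_1$ are random functions of the execution, one must argue carefully that conditioning on them (together with ``$C_i$ still unassigned'') leaks no information about whether $g_{q_1}^{(j)}$ equals or exceeds $h_1$. This is exactly what memorylessness delivers, provided the randomness is revealed in processing order and the decision to run iteration $h_1+1$ is treated as a fresh $\text{Bernoulli}(\tfrac12)$ coin. The structural claims about MID paths and the monotonicity of the frontier are routine.
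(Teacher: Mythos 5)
Your proof is correct and follows essentially the same route as the paper's: the per-round constant-probability bound via the "frontier owner" (the paper's $f(\final(\pi_i'))$) either surviving until its own block (certain assignment) or changing for the first time at some iteration $(q_1,h_1)$, after which one more expansion iteration of $q_1$ — which happens with probability $\tfrac12$ by memorylessness of the geometric — assigns $C_i$; dependence across rounds is handled by conditioning and chaining, followed by a union bound. Your write-up is somewhat more explicit than the paper's about why conditioning on the random identities of $q_1$ and $h_1$ reveals only $\{g_{q_1}^{(j)}\ge h_1\}$, but the underlying argument is identical.
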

\begin{proof}
We claim that in each round $j$, an unassigned cluster $C_i$ has probability at least $\frac{1}{2}$ of being assigned to a portal. Consider the MID prefix $\pi_i'$ of $\pi_i$ at the start of round $j$, and let $p_{\ell^*}=f(\final(\pi_i'))$. 
If no vertex along $\pi_i'$ was assigned to another cluster before iteration $\ell^*$, then in iteration $\ell^*$ we will set $f(C_i)=p_{\ell^*}$, and be done with $C_i$.
Otherwise, let $\ell'$ be the first iteration where some node $u$ lying on $\pi_i'$ is assigned to some other cluster.
It may be that $C_i$ was assigned, and we are done. Otherwise, let $h$ be the expansion iteration for $p_{\ell'}$ at which this first occurs. At this point the MID prefix $\pi_i'$ is updated accordingly and $f(\final(\pi_i'))=p_{\ell'}$. If $p_{\ell'}$ performs one additional expansion iteration then the cluster $C_i$ will be assigned to $p_{\ell'}$. By the memorylessness of geometric distributions, this occurs with probability $\frac{1}{2}$. It follows that indeed $C_i$ is clustered in round $j$ with probability at least $\frac{1}{2}$.	

While the rounds are not necessarily independent, it is clear from the above argument that this bound holds for any unassigned cluster in round $j$ conditioned on any events that depend only on previous rounds. Therefore, denoting by $B_{C_i, j}$ the event that cluster $C_i$ is \textit{not} assigned in round $j$, we have
\[
    \Pr[C_i \text{ not assigned}] = \Pr\Bigg[\bigcap_{j} B_{C_i, j}\Bigg] = \prod_{j} \Pr\left[B_{C_i, j} \mid B_{C_i, 1}, \ldots, B_{C_i, j-1}\right] \leq \left(\frac{1}{2}\right)^{10\log\abs{\mc{C}}} = \frac{1}{\abs{\mc{C}}^{10}}.
\]
Taking a union bound over $\mc{C}$ yields the desired result. 
\end{proof}

\begin{lemma}
The algorithm produces an assignment with detour $\dtr_f \leq O(\log \abs{\mc{C}}) \cdot \Delta$, with high probability. 
\end{lemma}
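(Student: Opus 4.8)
The plan is to separate a \emph{deterministic} bound on each vertex's detour---expressed through the number of expansion iterations performed by its assigned portal---from a \emph{probabilistic} bound on that number. For the deterministic part, given a cluster $C$ eventually assigned to portal $p_\ell$, let $\mathrm{gen}_\ell(C)$ denote the index of the expansion iteration of $p_\ell$ (counted over all rounds) during which $C$ is assigned, with $\mathrm{gen}_\ell=0$ for the unique cluster containing $p_\ell$. The core claim, proved by induction on $h$, is that every vertex $v$ in a generation-$h$ cluster of $p_\ell$ satisfies $d_{G[f^{-1}(p_\ell)]}(v,p_\ell)\le d_G(v,P)+(2h+2)\Delta$. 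The base case $h=0$ is immediate from the strong-diameter bound $\Delta$. For the step, a cluster $C_i$ assigned in iteration $h\ge 1$ lies in $\mcU_2$, so either (A) $C_i\in\mcU_1$, whence $u:=\final(\pi_i')$ lies in a $p_\ell$-cluster of generation $\le h-1$; or (B) $C_i$ lies on the MID prefix $\pi_{i'}'$ of some $C_{i'}\in\mcU_1$ whose final node lies in a $p_\ell$-cluster of generation $\le h-1$. In both cases the relevant MID prefix is entirely contained in $f^{-1}(p_\ell)$ after iteration $h$, so I route $v$ to $p_\ell$ by walking inside $C_i$ (cost $\le\Delta$) to reach that prefix, then along the prefix to its final node, then---by the inductive hypothesis---to $p_\ell$. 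Since a MID prefix is a sub-path of a shortest path from a cluster representative to $P$, its length plus the distance-to-$P$ of its final endpoint telescopes to a single distance-to-$P$, and case (A) follows directly.

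The crux is case (B), where the node $w\in C_i$ on $\pi_{i'}'$ may be far from the portal $q_{i'}$ terminating $\pi_{i'}$, yet the detour must be charged against $d_G(v,P)$, not $d_G(w,P)$. The key inequality is that, since $q_{i'}$ is the \emph{nearest} portal to $v_{i'}$ and $w$ lies on a shortest $v_{i'}$--$q_{i'}$ path, writing $p^*$ for the nearest portal to the representative $v_i$ of $C_i$,
\begin{align*}
d_G(w,q_{i'}) &= d_G(v_{i'},q_{i'})-d_G(v_{i'},w)\le d_G(v_{i'},p^*)-d_G(v_{i'},w)\\
&\le d_G(w,p^*)\le \Delta+d_G(v_i,P)\le 2\Delta+d_G(v,P),
\end{align*}
using the triangle inequality $d_G(v_{i'},p^*)\le d_G(v_{i'},w)+d_G(w,p^*)$, the fact that $v_i$ and $w$ both lie in the diameter-$\Delta$ cluster $C_i$, and $d_G(v_i,P)=d_G(v_i,p^*)$. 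Substituting this into the routing estimate closes the induction. Consequently, since $\mathrm{gen}_\ell(C)\le\sum_j g_\ell^{(j)}$ for every $C$, the detour of any $v$ with $f(v)=p_\ell$ is at most $\big(2\sum_j g_\ell^{(j)}+2\big)\Delta$ deterministically.

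For the probabilistic part, recall we may assume every cluster contains at most one portal, so the number of portals $L$ is at most $|\mcC|$. Fix $p_\ell$ and set $H_\ell:=\sum_{j=1}^{10\log|\mcC|} g_\ell^{(j)}$, a sum of $10\log|\mcC|$ independent $\mathrm{Geom}(\tfrac12)$ variables; equivalently $H_\ell$ is the number of fair coin flips needed to see $10\log|\mcC|$ heads. Hence $\Pr[H_\ell>40\log|\mcC|]$ equals the probability of fewer than $10\log|\mcC|$ heads in $40\log|\mcC|$ flips, which a standard Chernoff bound makes polynomially small in $|\mcC|$, with an exponent that grows with the round-count constant. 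A union bound over the at most $|\mcC|$ portals shows that, with high probability, $H_\ell\le 40\log|\mcC|$ for all $\ell$, and then the deterministic bound gives $\dtr_f(v)\le O(\log|\mcC|)\cdot\Delta$ for every assigned $v$. Combined with \Cref{lem:assignment-whp}, this yields the lemma.

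The main obstacle is case (B) of the deterministic step: routing $v$ through a potentially long MID prefix of a \emph{different} cluster while still bounding the incurred detour by $v$'s own distance to $P$; the shortest-path/nearest-portal telescoping displayed above is exactly what makes this go through. The probabilistic step is routine concentration, the only point worth noting being that the $g_\ell^{(j)}$ are genuinely independent (unlike the successive rounds in \Cref{lem:assignment-whp}), so no conditioning argument is required there.
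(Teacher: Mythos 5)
Your proposal follows essentially the same route as the paper: the same induction over expansion iterations of a fixed portal $p_\ell$ bounding $d_{G[f^{-1}(p_\ell)]}(v,p_\ell)$ by $d_G(v,P)$ plus a term linear in the iteration count, the same routing $v\to w\to\final(\pi_{i'}')\to p_\ell$ with the same telescoping observation that $w$ lies on a shortest path to its nearest portal, and the same negative-binomial/Chernoff plus union-bound step (the paper handles your cases (A) and (B) uniformly by allowing $i=i'$, but that is cosmetic).

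One small arithmetic slip: in case (B) you bound $d_G(w,q_{i'})\le 2\Delta+d_G(v,P)$ by routing through the representative $v_i$ and its nearest portal $p^*$, but then the step yields $\Delta+(2\Delta+d_G(v,P))+2h\Delta=d_G(v,P)+(2h+3)\Delta$, which exceeds your claimed inductive bound of $(2h+2)\Delta$, so the induction as stated does not close. The fix is either to tighten the estimate to $d_G(w,P)\le d_G(w,v)+d_G(v,P)\le\Delta+d_G(v,P)$ directly (as the paper does, since $v,w\in C_i$), or to weaken the inductive claim to, say, $(3h+2)\Delta$; either way the $O(\log\abs{\mc{C}})\cdot\Delta$ conclusion is unaffected.
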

\begin{proof}
We claim that in any round $j$, a cluster $C_i$ assigned to some $p_\ell$ in its $h$'th iteration of expansion satisfies 
\begin{equation}\label{eq:cluster-agg-induction}
  \forall v\in C_i\qquad d_{G[f^{-1}(p_{\ell})]}(v,p_{\ell})\leq d_{G}(v,P)+2\left(\sum_{j'=1}^{j-1}g_{\ell}^{(j')}+h\right)\cdot\Delta 
\end{equation}
We prove this by induction. When $j = 0$, 
only the portals are sent to themselves, and hence the distortion is $0$. 
For any $j \geq 1$, consider some cluster $C_i$ claimed by portal $p_\ell$ during round $j$ after $h$ expansion iterations.
This means there was some cluster $C_{i'}$ such that it's MID prefix $\pi'_{i'}$ intersects $C_i$, and $f(\final(\pi_{i'}'))=p_{\ell}$ (note that it is possible that $i=i'$). Let $w\in C_i\cap \pi_{i'}'$. Note that a shortest path from $w$ to $P$ follows $\pi'_{i}$. 
As $\final(\pi_{i'}')$ is already assigned to $p_\ell$ at this time, by the induction hypothesis it holds that $d_{G[f^{-1}(p_{\ell})]}(\final(\pi_{i'}'),p_{\ell})\le d_{G}(\final(\pi_{i'}'),P)+2\left(\sum_{j'=1}^{j-1}g_{\ell}^{(j')}+(h-1)\right)\cdot\Delta$. We conclude that for every vertex $v\in C_i$ it holds that (see \Cref{fig:genCA} for an illustration).
\begin{align*}
d_{G[f^{-1}(p_{\ell})]}(v,p_{\ell}) & \le d_{G[f^{-1}(p_{\ell})]}(v,w)+d_{G[f^{-1}(p_{\ell})]}(w,\final(\pi_{i'}'))+d_{G[f^{-1}(p_{\ell})]}(\final(\pi_{i'}'),p_{\ell})\\
 & \le d_{G[f^{-1}(p_{\ell})]}(v,w)+d_{G}(w,\final(\pi_{i'}'))+d_{G}(\final(\pi_{i'}'),P)+2\left(\sum_{j'=1}^{j-1}g_{\ell}^{(j')}+(h-1)\right)\cdot\Delta\\
 & =d_{G[f^{-1}(p_{\ell})]}(v,w)+d_{G}(w,P)+2\left(\sum_{j'=1}^{j-1}g_{\ell}^{(j')}+(h-1)\right)\cdot\Delta\\
 & \le d_{G}(v,P)+2d_{G[f^{-1}(p_{\ell})]}(v,w)+2\left(\sum_{j'=1}^{j-1}g_{\ell}^{(j')}+(h-1)\right)\text{}\cdot\Delta\\
 & \le d_{G}(v,P)+2\left(\sum_{j'=1}^{j-1}g_{\ell}^{(j')}+h\right)\cdot\Delta~.
\end{align*}

\begin{figure}[h]
    \centering
        \centering        \includegraphics[width=0.5\columnwidth,trim=0mm 140mm 0mm 0mm, clip]{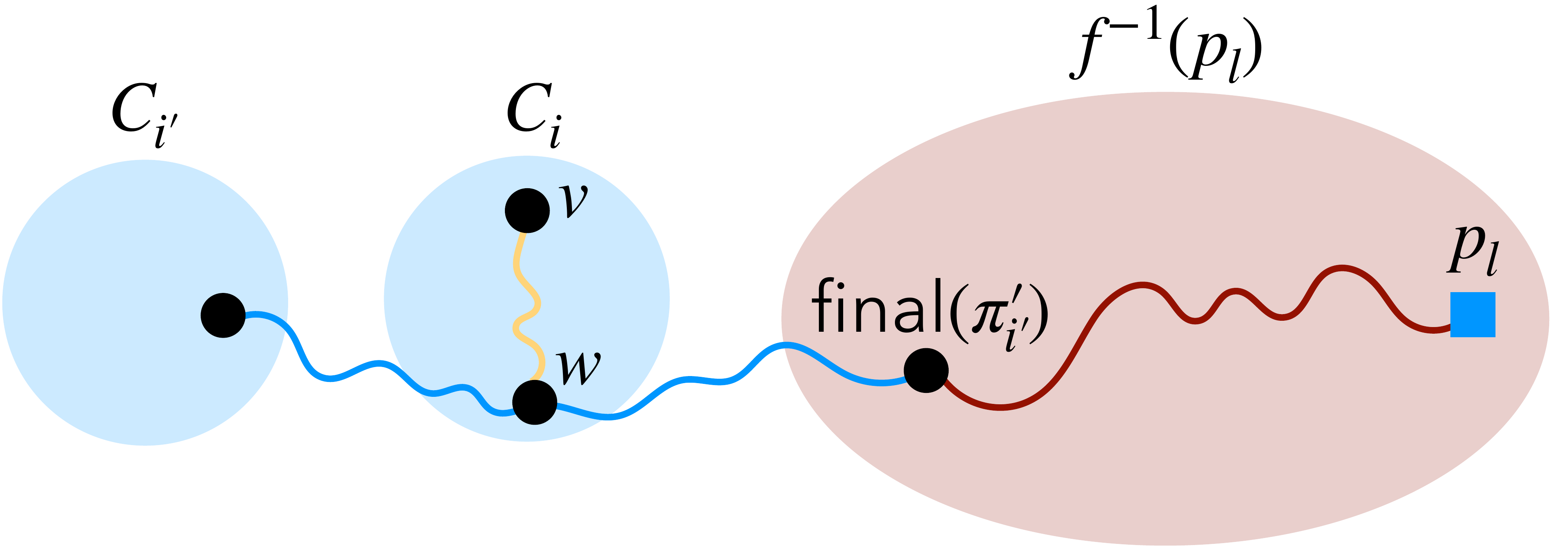}
    \caption{Paths involved in bounding $d_{G[f^{-1}(p_{\ell})]}(v,p_{\ell})$. Blue path $\pi'_{i'}$ is the MID prefix that caused $C_i$ to be assigned to $p_\ell$; yellow path is a shortest path in $C_i$, so has length at most $\Delta$; red path is a shortest path within coarsened cluster $f^{-1}(p_{\ell})$, so its length is bounded by the induction hypothesis.}\label{fig:genCA}
\end{figure}

With the claim proved, we get that at the end of the algorithm every $p_\ell$ and $v\in f^{-1}(p_\ell)$, satisfies $d_{G[f^{-1}(p_\ell)]}(v,p_{\ell})\leq d_{G}(v,P)+2\cdot(\sum_{j'=1}^{10\log\abs{\mc{C}}}g_{\ell}^{(j')})\cdot\Delta$.
Let $X \sim \text{Bin}(40\log\abs{\mc{C}}, \frac{1}{2})$, and observe that the probability of needing more than $40\log\abs{\mc{C}}$ coin tosses to get $10\log\abs{\mc{C}}$ tails is equal to the probability that \textit{exactly} $40\log\abs{\mc{C}}$ coin tosses results in \textit{fewer than} $10\log\abs{\mc{C}}$ tails. That is, we have for the sum of IID geometric random variables
\[
\Pr\left[\sum_{j'=1}^{10\log|{\mc C}|}g_{\ell}^{(j')}>40\log|{\mc C}|\right]=\Pr\left[X<10\log|{\mc C}|\right]=\Pr\left[X<\frac{1}{2}
\mathbb{E}
\left[X\right]\right]
<e^{-\mathbb{E}
    \left[X\right]/8}\leq\frac{1}{|{\mc C}|^{2}}
\]
by a standard Chernoff bound. 
Then, a union bound over $\mc{C}$ shows that the algorithm produces an assignment with detour $\leq 80\log \abs{\mc C} \cdot \Delta$.
Finally, letting $B_{U}$ denote the event that there is an unassigned cluster, and $B_R$ the event that some assigned node has detour $> 80\log \abs{\mc C}\cdot \Delta$, we have by \cref{lem:assignment-whp} and a union bound:
 \[
\Pr[\dtr_{f}>80\log\abs{\mc C}\cdot\Delta]\leq\Pr[B_{U}]+\Pr[B_{R}]\leq\frac{1}{\abs{\mc{C}}^{9}}+\frac{1}{\abs{\mc{C}}}\leq\frac{2}{\abs{\mc{C}}}~.\qedhere
\]
\end{proof}
\noindent As $\kappa = \abs{\mc C}$, the theorem follows. 
\end{proof}

\subsection{Cluster Aggregation in Trees}\label{sec:CAtree}
In this section we give our $O(1)$-distortion cluster aggregation solutions for trees.
\begin{restatable}{theorem}{CATrees}\label{thm:caTree}
	Every instance of cluster aggregation on a tree has a $4$-distortion solution that can be computed in polynomial time.
\end{restatable}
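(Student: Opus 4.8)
The plan is to exploit the single global structure of paths in a rooted tree: every $u$–$w$ path ascends from $u$ to $\mathrm{lca}(u,w)$ and then descends to $w$. Root $T$ at an arbitrary vertex $\rho$. As in the general-graph case we may assume each cluster contains at most one portal (a different portal costs at most an extra $\Delta$, so this changes the distortion by at most $\Delta$, and there is a cluster-aggregation instance here only when this assumption is feasible). Next I would reduce the per-vertex guarantee to a per-representative one: for each cluster $C$ fix a representative $v_C$; since $d_{G[C]}(v,v_C)\le\Delta$ (strong diameter) and $|d_G(v,P)-d_G(v_C,P)|\le\Delta$ for all $v\in C$, it suffices to build $f$ with $d_{G[f^{-1}(f(C))]}(v_C,f(C))\le d_G(v_C,P)+2\Delta$, which then gives $\dtr_f(v)\le 4\Delta$ for every $v$. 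For each $C$ let $p_C$ be a portal nearest to $v_C$, let $\pi_C$ be the (unique) tree path from $v_C$ to $p_C$, and let $a_C=\mathrm{lca}(v_C,p_C)$, so $\pi_C$ splits into an \emph{ascending} segment $v_C\leadsto a_C$ and a \emph{descending} segment $a_C\leadsto p_C$.

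\textbf{The three-phase assignment.} Following the up/down decomposition I would assign clusters in three phases, each time merging along path segments and, whenever a segment first meets an already-assigned cluster, redirecting the whole walked-through chain to that cluster's portal rather than to $p_C$.
\emph{Phase 1} handles clusters whose $\pi_C$ never switches direction (one endpoint is an ancestor of the other): for each such $C$, walk from $v_C$ along $\pi_C$, assigning $C$ and every cluster met to $p_C$, stopping and redirecting to the portal of the first already-assigned cluster encountered. Because all these segments are vertical (ancestor chains), two conflicting segments are \emph{nested}, so resolving conflicts in favor of the earlier assignment is consistent. \emph{Phase 2} handles clusters whose $\pi_C$ genuinely bends at $a_C$: merge the (not-yet-assigned) cluster containing $a_C$ together with the clusters along the descending segment $a_C\leadsto p_C$ toward $p_C$, again redirecting into already-assigned clusters when hit; the descending segments are again vertical, so the same nesting argument applies. \emph{Phase 3} is clean-up: every still-unassigned cluster $C$ now has an ascending route from $v_C$ that hits an already-assigned cluster at or below $a_C$ (the apex cluster of $\pi_C$ was assigned in Phase 2 if $\pi_C$ bends, and otherwise some cluster on the ascending route was handled in Phase 1), so merge $C$ and the intermediate unassigned clusters along that ascending route into that cluster's group. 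A straightforward pass over clusters/paths shows each phase is polynomial-time, and Phase 3 certifies that $f$ ends up total.

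\textbf{Distortion analysis.} For any cluster $C$ assigned in some phase, the route I use from $v_C$ to $f(C)$ inside $f^{-1}(f(C))$ is: move within $C$ to the vertex $x$ where $\pi_C$ exits $C$ into the next (already-assigned) cluster $D$, at cost $\le\Delta$; then travel inside $f^{-1}(f(D))$ from $x$ to the portal. The key identity making the phases not compound is the telescoping $d_G(x,P)\le d_G(x,p_C)=d_G(v_C,p_C)-d_G(v_C,x)=d_G(v_C,P)-d_G(v_C,x)$, valid because $x$ lies on $\pi_C$. Plugging the inductive bound for $D$ (any vertex $y\in f^{-1}(f(D))$ routes to its portal within $d_G(y,P)+c\Delta$) gives $d_{G[f^{-1}(f(C))]}(v_C,f(C))\le d_G(v_C,x)+\big(d_G(x,P)+c\Delta\big)\le d_G(v_C,P)+c\Delta$, i.e. no blow-up — the invariant is maintained with the \emph{same} additive slack from phase to phase, so I only need to check the base case(s) and the within-cluster $\Delta$'s to pin the constant at $\beta=4$ (equivalently, $2\Delta$ at representatives).

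\textbf{Main obstacle.} The routine distance chasing above is easy; the real work is the combinatorial correctness of the three-phase scheme: (i) that the redirection is always well defined, i.e.\ no cluster is left stranded with no assigned cluster reachable along the appropriate segment — this is exactly where the up-then-down shape of tree paths is essential, since walking up the ascending segment of $\pi_C$ one provably meets a Phase-1/2 assignment by the time one reaches $a_C$; and (ii) that the conflict resolution (first-come-first-served among overlapping segments) never breaks an earlier cluster's committed route, which uses that overlapping vertical segments are nested rather than merely crossing. Getting these two invariants stated and proved cleanly, and then confirming the bookkeeping yields exactly $4$ rather than a slightly larger constant, is the crux.
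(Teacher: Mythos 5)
Your high-level plan is the same as the paper's (root the tree, split each $v_C$--$p_C$ path into an ascending and a descending part, handle straight paths first, then apex clusters, then the rest), but as written the argument has a genuine gap in exactly the place you flag as the crux, and the gap is not just bookkeeping. The problem is your claimed invariant ``any vertex $y\in f^{-1}(f(D))$ routes to its portal within $d_G(y,P)+c\Delta$, and this is maintained with the same slack.'' When a cluster $C$ is redirected into an already-assigned group $D$ at a vertex $x\in\pi_C$, the telescoping identity only controls the vertices lying \emph{on} $\pi_C$; every other vertex of $C$ (and of the intermediate absorbed clusters) must first detour inside its own cluster to reach $\pi_C$, which costs an extra $2\Delta$ in the per-vertex slack. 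If a later path then enters one of these clusters at such an off-path vertex and redirects there, its slack starts at $c\Delta+2\Delta$, and first-come-first-served redirection within a phase allows these chains to have unbounded length, so the distortion compounds rather than staying at $4\Delta$. The ``two conflicting vertical segments are nested'' claim you use to dismiss this is false: two ancestor chains meeting at a vertex $z$ coincide above $z$ but can branch below it, and the real conflict --- two paths hitting the same \emph{cluster} at different vertices, headed to different portals --- involves no nesting at all.

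What closes the gap in the paper is a specific choice you leave arbitrary: the representative of each cluster must be its \emph{topmost} vertex $r_i$ (the vertex closest to the root), together with the reduction making every portal a pendant leaf in a singleton cluster. With that choice one proves the structural lemmas that your scheme is missing: every cluster met by a descending path $\pi_i$ has its own topmost vertex \emph{on} $\pi_i$, hence is itself ``monotone'' with the \emph{same} preferred portal $p_i$ (so Phase~1 has no conflicts and no redirection at all, and $r_i$ gets slack exactly $0$); every cluster on the ascending segment from $r_i$ to the apex $t_i$ shares the same apex cluster, so all of them are redirected to the \emph{same} portal and the ascending segment stays inside one coarsened cluster. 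These lemmas cap the number of $2\Delta$ within-cluster detours on any route at two (one to reach $r_i$, one inside the apex cluster to reach $t_j$), which is where the constant $4$ comes from. Without fixing the representative and proving these no-conflict statements, the three-phase scheme you describe does not yield a constant bound.
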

The fact that trees admit constant distortion solutions is not trivial at first glance. Indeed, related structures such as strong sparse partitions in trees exhibit $\tilde{\Omega}(\log n)$ lower bounds \cite{filtser20}. 
The basic idea is to exploit the fact that all simple paths in a rooted tree have an ``up'' part and then a ``down'' part which allows us to limit the number of conflicts when trying to coarsen our partition.

We begin by describing our algorithm. Consider an instance of cluster aggregation on $G = (V,E, w)$ with portals $P \subseteq V$ and initial partition $\mcC$ of diameter $\Delta$ where $G$ is a tree.  
For simplicity we will assume that all portals are leafs, and contained in a singleton clusters.
A general instance can be reduced to this case by simply adding an auxiliary portal $p'$ for each portal $p\in P$ with a single edge $(p,p')$ of weight $0$, and the singleton clusters $\{p'\}_{p\in P}$ to $\mcC$. It is easy to see that a solution to the auxiliary case immediately implies a solution to the original tree of the same distortion. 
 
Root $G$ arbitrarily at root $r \in V$. As we may assume that each cluster of $\mcC$ is connected,\footnote{Since otherwise distortion $0$ cluster aggregation is trivial.} each cluster $C_i \in \mcC$ has a unique vertex $r_i$ which is closest to $r$.
Let $p_i$ be the closest portal to $r_i$, and let $\pi_i$ be the shortest path in $G$ from $r_i$ to $p_i$. 
For simplicity of presentation we assume that $p_i$ is unique (in general, consistent tie breaking suffices). 

We partition the clusters $\mcC$ into two sets:
\begin{itemize}
    \item \textbf{Monotone:} Say that $C_i$ is 
    \emph{monotone} if $p_i$ is a descendant of $r_i$ in $G$. Note that it is impossible for $p_i$ to be an ancestor of $r_i$ (as $p_i$ is a leaf).
    \item \textbf{Bitone:} Say that $C_i$ is \emph{bitone} if $p_i$ is not a descendant of $r_i$ in $G$. For a bitone cluster, we let $t_i$ be the unique vertex in $\pi_i$ where $\pi_i$ switches from ``going up'' to ``going down''; that is, $t_i$ is an ancestor of all vertices of $\pi_i$ in $G$.
\end{itemize}
\noindent We also say that a cluster $C_i$ is \emph{reflective} if there is a bitone cluster $C_j$ such that $C_i$ contains $t_j$. 

We now construct our cluster aggregation solution $\mcC'$ as follows. See \Cref{fig:treeAlg} for an illustration.

\begin{enumerate}
    \item \textbf{Assign Monotone Clusters:} For each monotone cluster $C_i$, assign $C_i$ to its preferred portal; that is,  $f(C_i) = p_i$ for each monotone $C_i$. Note that each portal is assigned to itself.
    \item \textbf{Assign Reflective Bitone Clusters:} For each reflective bitone cluster $C_i$ containing a $t_j$ we assign $C_i$ to $C_j$'s preferred portal, namely $f(C_i) = p_j$. If there are multiple such $C_j$ we pick an arbitrary one. 
    \item  \textbf{Assign Non-Reflective Bitone Clusters:} For each so far unassigned bitone cluster $C_i \in \mcC$, let $C_j$ be the unique cluster containing $t_i$. In this case, $C_i$ joins the cluster to which $C_j$ was already assigned; that is, we let $f(C_i) = f(C_j)$.
\end{enumerate}

\begin{figure}[h]
    \centering
    \begin{subfigure}[b]{0.48\textwidth}
        \centering
        \includegraphics[width=\textwidth,trim=0mm 0mm 0mm 0mm, clip]{./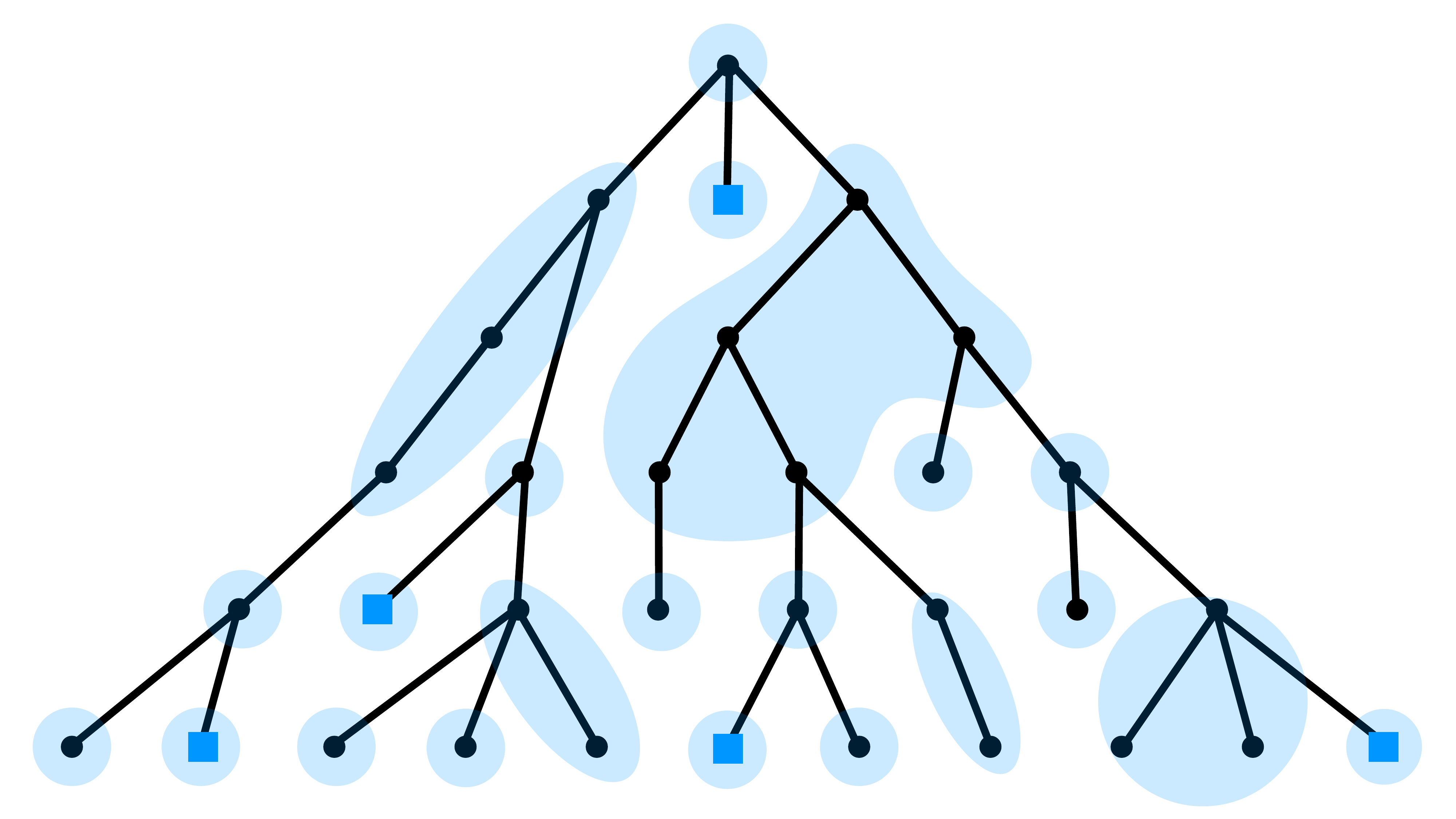}
        \caption{Input clusters and portals.}\label{sfig:treeAlg1}
    \end{subfigure}    \hfill
    \begin{subfigure}[b]{0.48\textwidth}
        \centering
        \includegraphics[width=\textwidth,trim=0mm 0mm 0mm 0mm, clip]{./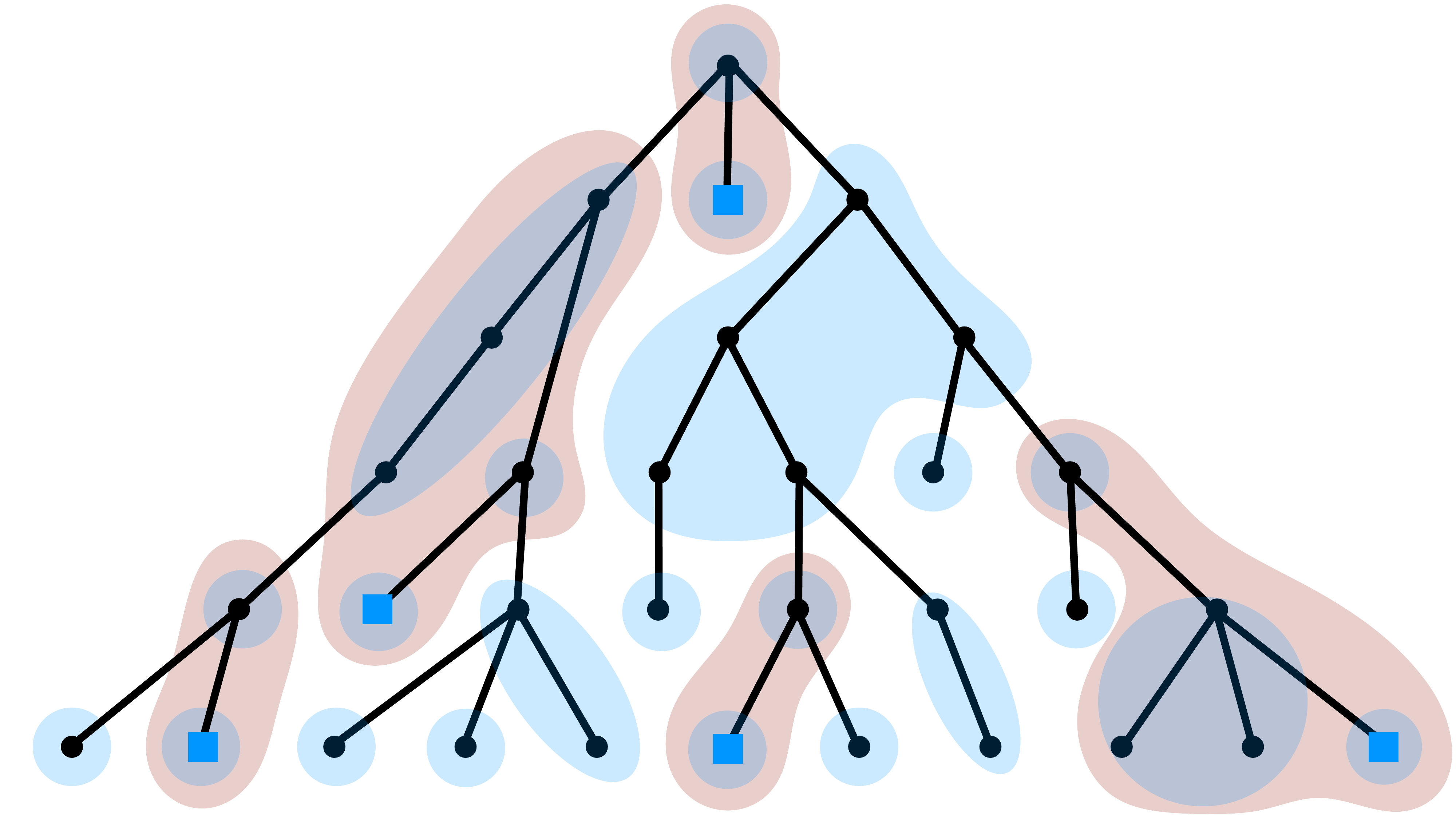}
        \caption{Assigning monotone clusters.}\label{sfig:treeAlg2}
    \end{subfigure}    \hfill
    \begin{subfigure}[b]{0.48\textwidth}
        \centering
        \includegraphics[width=\textwidth,trim=0mm 0mm 0mm 0mm, clip]{./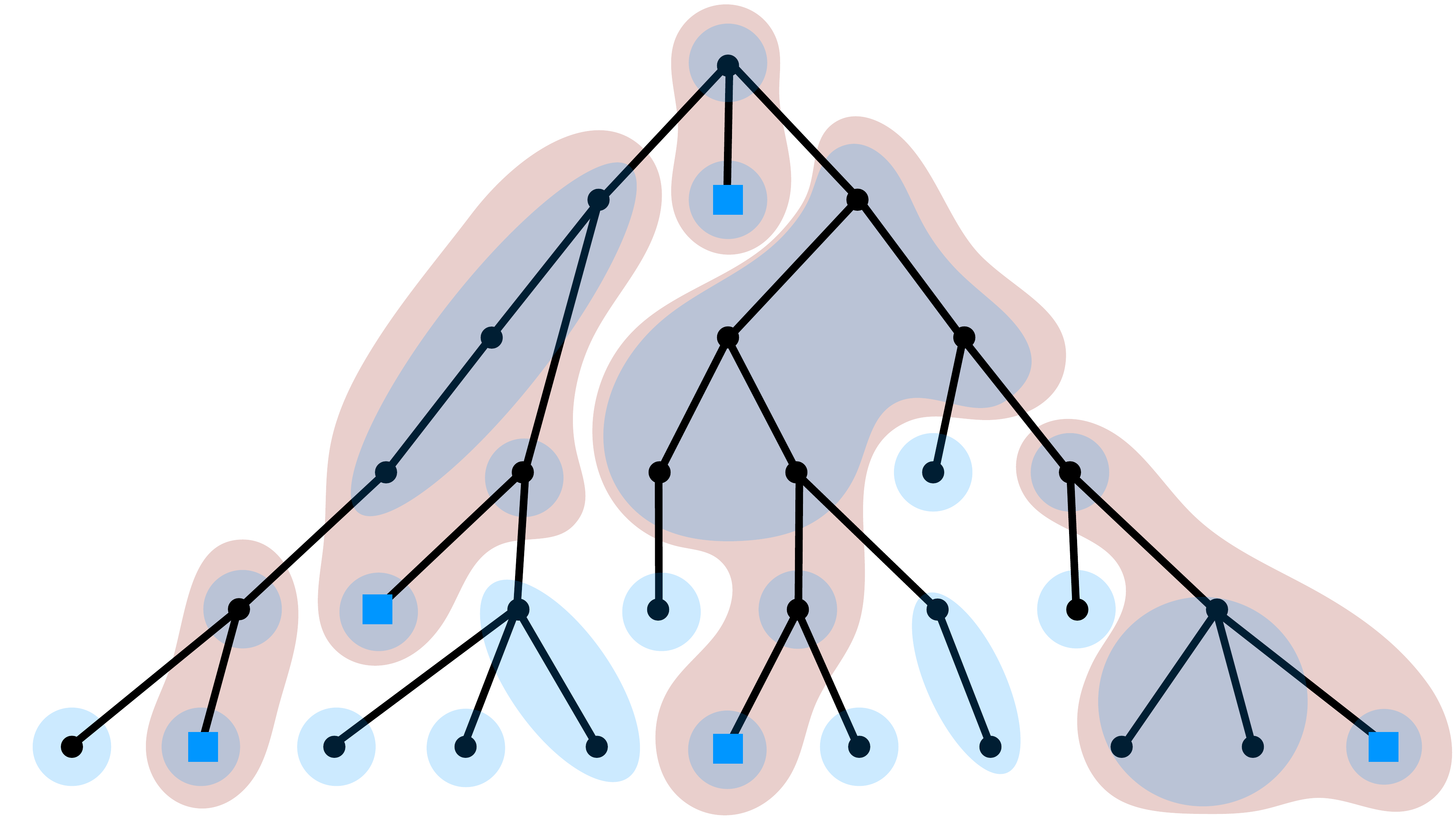}
        \caption{Assigning reflective bitone clusters.}\label{sfig:treeAlg3}
    \end{subfigure}
    \begin{subfigure}[b]{0.48\textwidth}
        \centering
        \includegraphics[width=\textwidth,trim=0mm 0mm 0mm 0mm, clip]{./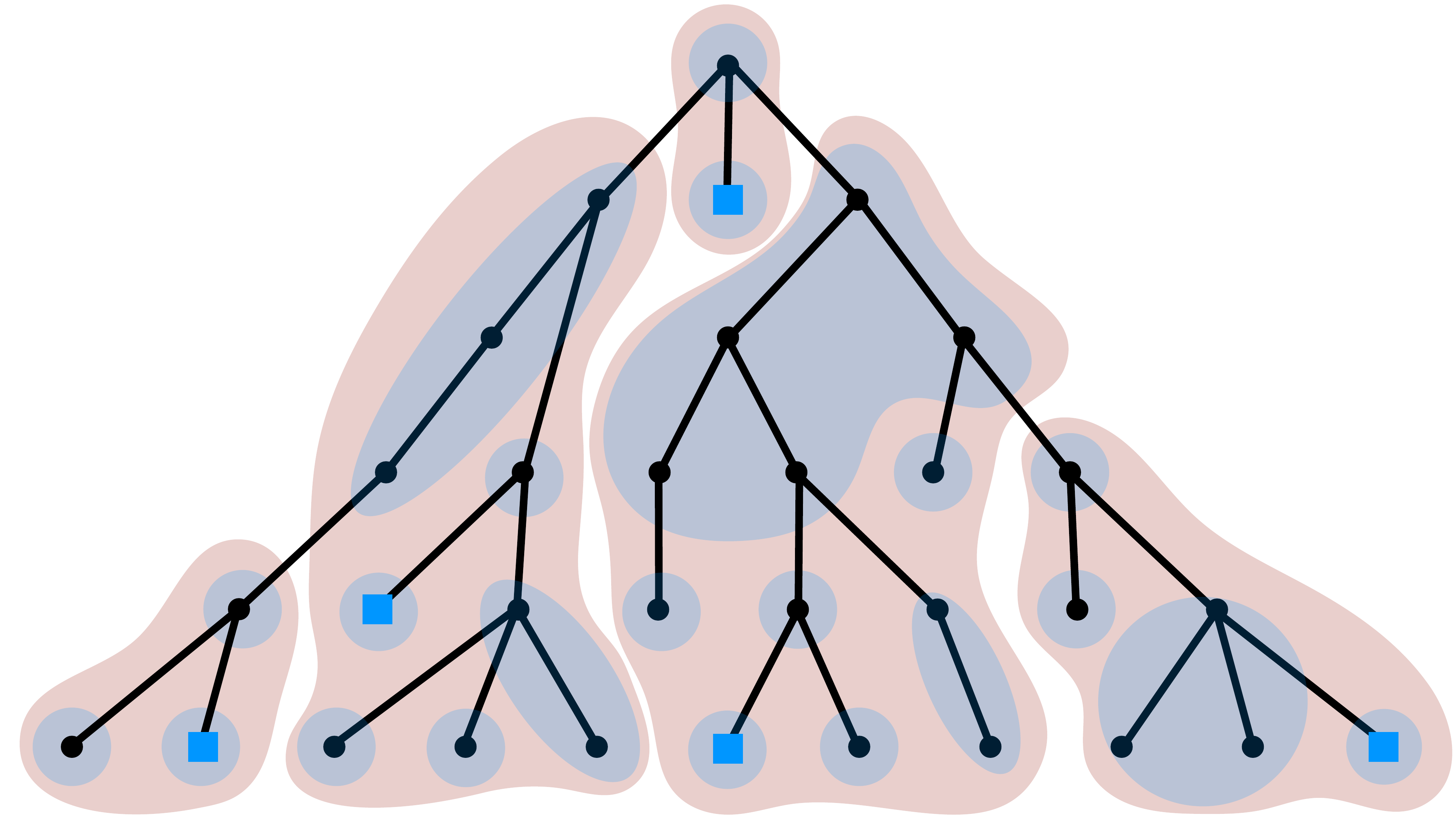}
        \caption{Assigning other bitone clusters (output).}\label{sfig:treeAlg4}
    \end{subfigure}
    \caption{Our cluster aggregation algorithm on a unit-weight tree. The input partition $\mcC$ is blue and our output partition $\mcC'$ is red. \ref{sfig:treeAlg1} gives our input. \ref{sfig:treeAlg2} is the result from processing all monotone clusters. \ref{sfig:treeAlg3} is the result of processing all reflective bitone clusters. \ref{sfig:treeAlg4} is the result of processing all remaining bitone clusters.}\label{fig:treeAlg}
\end{figure}

The following summarizes the properties of our algorithm. Note, also, that our analysis is tight for our algorithm as shown by \Cref{fig:TreeCAAnalysisTight}.
\begin{proof}[Proof of \Cref{thm:caTree}]
We use the algorithm described above. The algorithm trivially runs in polynomial time. We begin by observing that this process is indeed a cluster aggregation solution in that $f(C_i)$ is a portal for every $C_i \in \mcC$. In particular, observe that each trivial, monotone and reflective bitone cluster is trivially assigned. The only remaining clusters of $\mcC$ are those that are bitone but not reflective but any such $C_i$ must be such that the cluster containing $t_i$ is either monotone or is a reflective bitone cluster and so already assigned.

Next, we analyze the distortion of the resulting cluster aggregation solution. Recall the definition of the detour of $v$: 
\begin{align*}
        \dtr_f(v) := d_{G[f^{-1}(f(v))]}(v, f(v)) - d_G(v, P).
\end{align*}
Our goal is to show $\dtr_f(v) \leq 4\Delta$ for every $v$.
The intuition is as follows. Roots of monotone clusters have detour value of $0$ by construction. 
On the other hand, non-root vertices of monotone clusters must traverse their clusters, potentially paying $\Delta$ for the distance traveled and another $\Delta$ for potentially increasing their distance to $P$ by $\Delta$ for a cumulative detour of $2\Delta$. Vertices in reflective bitone cluster $C_i$ contain some $t_j$ which is sent to its closest portal (i.e. $\dtr_f(t_j)=0$). By the same argument as for monotone clusters, all other vertices in $C_i$ will have detour value of at most $2\Delta$.
Lastly, vertices in non-reflective bitone clusters must pay $2 \Delta$ to eventually get to a reflective bitone cluster or monotone cluster at which point they pay the $2 \Delta$ for which this cluster already paid for $4 \Delta$ total.

We now make these bounds formal. For the remainder of this proof we consider a fixed vertex $v\in V$ where $v \in C_i$. We take cases on what type of cluster $C_i$ is.
Clearly, for every $v\in P$, we have $\dtr_f(v) = 0$.

Suppose $C_i$ is monotone. That is $p_i$ is a descendant of $r_i$. Every cluster $C_j$ that $\pi_i$ intersects is also monotone, with $p_j=p_i$.
It follows that to get to $p_i$ in $G[f^{-1}(p_i)]$, $v$ may have to traverse its own cluster until it reaches $r_i$ at which point it can follow $\pi_i$ to $p_i$. We conclude:
\begin{align*}
	d_{G[f^{-1}(p_{i})]}(v,p_{i}) & \le d_{G[f^{-1}(p_{i})]}(v,r_{i})+d_{G[f^{-1}(p_{i})]}(r_{i},p_{i})\\
	& =d_{G}(v,r_{i})+d_{G}(r_{i},P)\\
	& \le d_{G}(v,P)+2\cdot d_{G}(v,r_{i})\\
        &\le d_{G}(v,P)+2\Delta~.
\end{align*}


Next, suppose $C_i$ is a reflective bitone cluster. 
We claim that $\dtr_f(v) \leq 2\Delta$. Specifically, there is a bitone cluster $C_j$ which is descendant of $r_i$, such that $f(C_i)=p_j$, and $t_j\in C_i$ is the ancestor of all the vertices in $\pi_j$. Note that all the clusters on the path from $t_j$ to $p_j$ are monotone clusters sent to $p_j$. In particular, $d_{G[f^{-1}(p_{j})]}(t_j,p_{j})=d_{G}(t_j,p_{j})=d_{G}(t_j,P)$. As $v$ and $t_j$ belong to the same cluster we conclude:
\begin{align*}
	d_{G[f^{-1}(p_{j})]}(v,p_{j}) & \le d_{G[f^{-1}(p_{j})]}(v,t_{j})+d_{G[f^{-1}(p_{j})]}(t_{j},p_{j})\\
	& =d_{G}(v,t_{j})+d_{G}(t_{j},P)\\
	& \le d_{G}(v,P)+2\cdot d_{G}(v,t_{j})\\
        &\le d_{G}(v,P)+2\Delta~.
\end{align*}
%
%
%

 Lastly, suppose $C_i$ is a bitone cluster that is not reflective. We claim $\dtr_f(v) \leq 4 \Delta$.
Let $C_j$ be the reflective cluster containing $t_i$. Suppose that $f(C_j)=p_k$. Note that all the vertices on the path from $r_i$ to $t_i$ belong to bitone clusters with $C_j$ being their corresponding reflective cluster. In particular, all these vertices are assigned to $p_k$. It follows that $d_{G[f^{-1}(p_{k})]}(r_i,t_{i})=d_{G}(r_i,t_{i})$. As the shortest path from $r_i$ to $P$ goes through $t_i$, and we can bound $d_{G[f^{-1}(p_{k})]}(t_i,p_{k})$ using the previous two cases, we conclude
\begin{align*}
	d_{G[f^{-1}(p_{k})]}(v,p_{k}) & \le d_{G[f^{-1}(p_{k})]}(v,r_{i})+d_{G[f^{-1}(p_{k})]}(r_{i},t_{i})+d_{G[f^{-1}(p_{k})]}(t_{i},p_{k})\\
	& \le d_{G}(v,r_{i})+d_{G}(r_{i},t_{i})+d_{G}(t_{i},P)+2\Delta\\
	& =d_{G}(v,r_{i})+d_{G}(r_{i},P)+2\Delta\\
	& \le d_{G}(v,P)+2\cdot d_{G}(v,r_{i})+2\Delta\\
        &\le d_{G}(v,P)+4\Delta~.
\end{align*}
The theorem now follows.
%
\end{proof}

\begin{figure}[h] 
\includegraphics[width=.4\textwidth,trim=0mm 0mm 175mm 0mm, clip]{./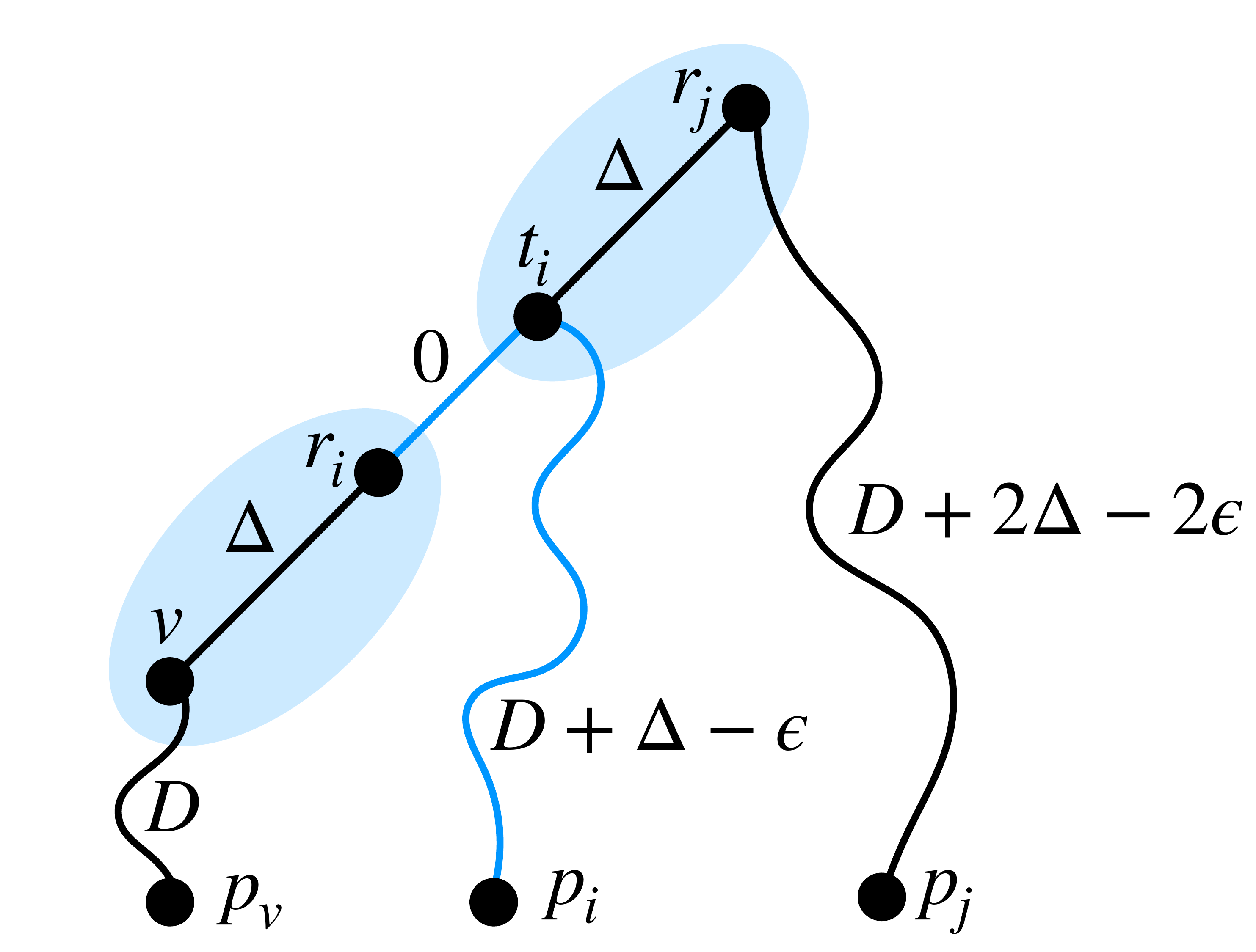}
\caption{Example showing tightness of the analysis of our algorithm for cluster aggregation on trees. The tree is rooted at $r_j$ which belongs to monotone cluster $C_j$ where the closest portal $p_j$ is at distance $D+2\Delta-2\epsilon$. $C_i$ is a bitone cluster, where the shortest path $\pi_i$ (colored in blue) from $r_i$ to $P$ is of length $D+\Delta-\epsilon$, and goes through $t_i\in C_j$. The algorithm sends all the vertices of the cluster $C_i$ and $C_j$ to $p_j$. The vertex $v\in C_i$ has a portal at distance $D$, but is sent to $p_j$ at distance $D+4\Delta-2\epsilon$. As $\epsilon$ goes to $0$, the additive distortion of $v$ goes to $4\Delta$.}\label{fig:TreeCAAnalysisTight}
\end{figure}

\subsection{Cluster Aggregation in Bounded Doubling Dimension Graphs}\label{sec:CAdoubling}

\newcommand{\junk}[1]{}
In this section, we prove the following improved bound for cluster
aggregation in graphs with bounded doubling dimension.

\begin{restatable}{theorem}{CADD}\label{thm:caDD}
	For every large enough constant $\lambda$, there is a constant $c_s$
	such that any instance $\calC$ of cluster aggregation on a graph $G$ of doubling dimension $\ddim$ with portals $P$ and an input partition of strong diameter $\Delta$ where:
    \begin{enumerate}
		    \item $P$ is a $\Lambda$-net for $\Lambda =\lambda\cdot\ddim^3\log\ddim \cdot \Delta$. That is, $\max_{v\in V}d_G(v,P)\le\Lambda$ and the minimum pairwise distance between portals is at least $\min_{p,p'\in P}d_G(p,p')\ge c_{\Lambda}\cdot\Lambda$. (Here, $c_\Lambda$ is the constant $c_\Delta$ from \Cref{thm:MPXbasedClusteringDoubling}); and
	    \item Every cluster $C\in\calC$ has a center $v_C$ such that the minimum pairwise distance between centers is $\min_{C,C'\in\calC}d_G(v_C,v_{C'})\ge \frac{c_\Lambda}{3}\cdot\Delta$;
 
	\end{enumerate}
 has a poly-time $4c_{s}\cdot\ln\frac{4}{c_{\Lambda}}\cdot\ddim^{2}\log
	\ddim\cdot \Delta$-distortion solution.
\end{restatable}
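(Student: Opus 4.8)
## Proof Proposal for Theorem~\ref{thm:caDD}

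\textbf{Overview of the strategy.} The plan is to run the MPX clustering algorithm of \cite{MillerPX2013} (equivalently, the dangling-net / random-shift construction summarized in \Cref{thm:MPXbasedClusteringDoubling}) with the portals $P$ as the cluster centers, at a carefully chosen scale, and then use this to assign a constant fraction of the cluster-aggregation clusters $\calC$ to portals in a way that only incurs $O(\Lambda)=O(\ddim^3\log\ddim\cdot\Delta)$ additive detour. The key structural point I would exploit is that in a single MPX execution at radius $\Theta(\Lambda)$, any cluster-aggregation cluster $C$ that lies \emph{deeply inside} an MPX cluster centered at portal $p$ can be safely assigned $f(C)=p$: every vertex of $C$ reaches $p$ via a path that stays inside that MPX cluster, so its in-cluster distance to $p$ is at most the MPX-cluster radius, which is $O(\Lambda)$, while $d_G(v,P)\le\Lambda$ anyway, so the detour is $O(\Lambda)$. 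After assigning all such ``deep'' clusters, I would \emph{contract} each assigned group to a single vertex, keeping the unassigned clusters' centers, and recurse on the resulting (still doubling, still satisfying the net/separation hypotheses after rescaling distances appropriately) instance. The distortion telescopes: a cluster assigned at recursion depth $i$ has detour $O(i\cdot\Lambda)$, so I need to show that after $O(\ddim^2\log\ddim)$ rounds every cluster is assigned with probability $1$ — or rather, with the right probabilistic/LLL argument, that \emph{no} cluster survives that many rounds.

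\textbf{Why a constant fraction gets assigned each round.} The core lemma I would prove is: in one MPX run at radius parameter $\Theta(\ddim\cdot\Delta)$ — chosen so that MPX clusters have diameter $O(\ddim\cdot\Delta)$ — each unassigned cluster $C$ is ``deep inside'' its MPX cluster with probability bounded below by a constant depending only on $\ddim$ and the constants in \Cref{thm:MPXbasedClusteringDoubling}. This follows from the padding/separation properties of MPX: the boundary region of an MPX cluster (vertices within an additive slack of the cut) has small probability, and a cluster $C$ of diameter $\le\Delta$ fails to be ``deep'' only if its center $v_C$ falls in that boundary region. Because the portals form a $\Lambda$-net with $\Lambda=\lambda\ddim^3\log\ddim\cdot\Delta$ and the MPX radius is only $\Theta(\ddim\Delta)$, the ratio of slack to radius is $\Theta(1)$, giving a constant padding probability. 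The number of ``nearby'' MPX clusters a given $C$ can interact with is $\poly(\ddim)$ by the packing property (\Cref{lem:doubling_packing}) together with the portal separation $\ge c_\Lambda\Lambda$, so the survival event of $C$ depends only on the randomness (shifts) of portals within a bounded radius. This locality is exactly what lets me invoke the Lov\'asz Local Lemma: the events ``$C$ still unassigned after $t$ rounds'' are each of probability $2^{-\Omega(t)}$ and each depends on only $\poly(\ddim)$ other such events, so choosing $t=\Theta(\ddim^2\log\ddim)$ — more precisely $t = c_s\cdot\ddim^2\log\ddim$ for the stated constant $c_s$ — makes the LLL conditions hold and forces a valid assignment to exist; the algorithmic LLL (Moser–Tardos) then gives the polynomial running time.

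\textbf{Putting the detour bound together.} Once I know every cluster is assigned within $T=c_s\ddim^2\log\ddim$ recursive rounds, the detour bound follows by the telescoping argument: at each round, the additive cost a newly-assigned cluster pays on top of the cost already accumulated by the (contracted) cluster it attaches to is at most the diameter of one MPX cluster at that scale, which is $\Theta(\ddim\cdot\Delta)$, plus the $\Lambda=\lambda\ddim^3\log\ddim\cdot\Delta$ slack for reaching the net — wait, more carefully, the dominant per-round term is actually governed by how far the recursion's ``effective $\Delta$'' grows. I would set up the recursion so that the effective diameter at round $i$ stays $\Theta(\ddim\cdot\Delta)$ (by always contracting assigned groups and using that each contracted group has bounded diameter), so that the total detour after $T$ rounds is $O(T\cdot\ddim\cdot\Delta)=O(\ddim^3\log\ddim\cdot\Delta)$, matching $4c_s\ln(4/c_\Lambda)\cdot\ddim^2\log\ddim\cdot\Delta$ after absorbing the $\ln(4/c_\Lambda)$ and the extra $\ddim$ factor into the constant bookkeeping — I would be careful here to track exactly where the final $\ddim^2\log\ddim$ (rather than $\ddim^3\log\ddim$) comes from: it should be that the per-round cost is $O(\Lambda/\ddim)=O(\ddim^2\log\ddim\cdot\Delta)$ because the MPX radius can be taken a factor $\ddim$ smaller than $\Lambda$ while still covering, and $T=O(1)$ rounds of that plus a final cleanup.

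\textbf{Main obstacle.} The hardest part will be the recursive/contraction bookkeeping: after contracting assigned groups I must verify that the contracted graph still has doubling dimension $O(\ddim)$, that the surviving portals still form an appropriate net at the new scale, and — most delicately — that the ``deep inside'' assignment never creates a disconnected preimage $f^{-1}(p)$ or a cluster whose in-group distance to $p$ blows up across rounds. Ensuring the latter is what pins down the precise scale parameters and the constant $c_s$, and is where I expect the bulk of the technical work (controlling how detour accumulates through contractions, and confirming the LLL dependency graph has the claimed $\poly(\ddim)$ degree via \Cref{lem:doubling_packing} and the two separation hypotheses on $P$ and on the cluster centers $v_C$) to lie.
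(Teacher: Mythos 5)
Your high-level plan is the same as the paper's: run MPX with the portals as centers, assign clusters that are safely inside an MPX cluster, contract, recurse, and finish with the constructive Lov\'asz Local Lemma. However, there are genuine gaps in the quantitative accounting that your proposal does not resolve, and in two places the claims as stated are wrong.

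First, the LLL dependency degree is \emph{not} $\poly(\ddim)$. The event that a cluster $C$ survives all rounds depends on the shifts of portals within distance $O(s\cdot\Lambda)$ of $C$, and by the packing property (\Cref{lem:doubling_packing}) together with the $\Omega(\Delta)$ separation of cluster centers, the number of neighboring events is $\bigl(\poly(\ddim)\bigr)^{\ddim}=2^{\Theta(\ddim\log\ddim)}$ --- exponential in $\ddim$. This is precisely why the number of rounds must be $s=\Theta(\ddim\log\ddim)$ (so that the per-cluster failure probability $2^{-\Omega(s)}$ beats the dependency degree in the LLL condition), not $O(1)$ and not $\Theta(\ddim^2\log\ddim)$. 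Your proposal oscillates between incompatible accountings ($T=c_s\ddim^2\log\ddim$ rounds at $\Theta(\ddim\Delta)$ per round, versus $O(1)$ rounds at $O(\Lambda/\ddim)$ per round) and never lands on the correct one: $s=c_s\ddim\log\ddim$ rounds, each adding at most $(c_\top\cdot\ddim+2)\Delta=O(\ddim)\cdot\Delta$ detour because the shifts are bounded by $c_\top\cdot\ddim\cdot\Delta$, for a total of $O(\ddim^2\log\ddim)\cdot\Delta$.

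Second, the constant per-round assignment probability cannot be obtained from generic ``MPX padding.'' The paper uses \emph{truncated} (betailed) exponential shifts with threshold $c_\top=\ln\lceil 4/c_\Lambda\rceil+3$: memorylessness of the untruncated exponential gives probability $e^{-2}$ that the winning portal leads by $2\Delta$ (which forces the whole cluster and all clusters on a shortest path to the portal to choose it), and the threshold is calibrated so that a union bound over the $e^{\ddim\ln\lceil 4/c_\Lambda\rceil}$ candidate portals within distance $O(\Lambda)$ bounds the truncation failure by $e^{-2}/2$. Without truncation the shifts are unbounded and the per-round detour cannot be controlled; without the exponential distribution the $2\Delta$-lead argument fails. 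Finally, the ``recursive/contraction bookkeeping'' you flag as the main obstacle is indeed where the inductive argument lives (the paper's \Cref{lem:DistanceToClosestPortal} tracks both the distance from unclustered vertices to $P$ and the in-cluster distance of assigned vertices through the contractions), and your proposal does not supply it; as written the proof is an outline of the right strategy rather than a proof.
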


\junk{
\begin{theorem}\label{thm:CAdoubling}
	Consider a graph $G=(V,E,w)$ with doubling dimension $d$, and let
	${\cal C}$ be a partition into clusters with strong diameter at most
	$\Delta$, and let $P$ be a set of portals such that:
	\begin{enumerate}
		\item The portal $P$ constitute a $\Lambda=c_{\Lambda}\cdot d^{2}\log d\cdot \Delta$
		net for some large enough constant $c_{\Lambda}$.
		\item Every cluster $C\in{\cal C}$ contains a center $v_{c}\in C$, and
		the minimal pairwise distance between two centers is $\Omega(\frac{\Delta}{d})$.
	\end{enumerate}
	Then there is a solution to the cluster aggregation problem with stretch
	$O(d^{2}\log d)\cdot \Delta$.
\end{theorem}
}
 
%
%


\subsubsection{Preliminary: Clustering Using Exponential Shifts (\cite{MillerPX2013})}
We will be using a modified version of the clustering algorithm by Miller \etal \cite{MillerPX2013} due to \cite{Fil19Padded,Fil20}.
%
Consider a weighted graph $G=(V,E,w)$. Let $N\subseteq V$ be some set of centers, where each $t\in N$ has a parameter $\delta_t$. 
For each vertex $v$ set a function $f_v:N\rightarrow \mathbb{R}$ as follows: for a center $t$, $f_v(t)=\delta_t-d_G(t,v)$. The MPX algorithm has the vertex $v$ join the cluster $C_t$ of the center $t\in N$ maximizing $f_v(t)$.  
Note that apriori (and in some applications even necessarily),  a center $t\in N$ might join the cluster of a different center $t'\in N$.
An intuitive way to think about the clustering process is as follows: each center $t$ wakes up at time $-\delta_t$ and begins to ``spread'' in a continuous manner. The spread of all centers is performed simultaneously. A vertex $v$ joins the cluster of the first center that reaches it. 

The partition created by the MPX algorithm will not necessarily agree with $\calC$ in the sense that an MPX clusters may not fully contain every cluster of $\calC$ that it intersects. Our algorithm will therefore work as follows: we will sample shifts (i.e.\ $\delta_t$ values) to create an MPX partition. We fix and contract areas of the MPX partition which are ``consistent'' with respect to $\calC$ and then recurse on the resulting graph.

\subsubsection{The Doubling Dimension Cluster Aggregation Algorithm}
\label{sec:ca_dd_algorithm}
The cluster aggregation algorithm is described in \Cref{alg:cluster-agg-doubling}.
The cluster aggregation algorithm runs in a sequence of rounds.  In
each round, a subset of the clusters get assigned to portals using a
slight modification of the clustering algorithm of~\cite{MillerPX2013}
due to~\cite{filtser20}.  All clusters that are assigned to a given portal
are then contracted to yield a new graph partitioned into unassigned clusters.  This process continues for a number
of rounds, which is large enough to guarantee that all clusters are
aggregated to their portals as desired.

More formally, our algorithm is as follows; see, also, \Cref{alg:dd}. Fix $c_\top=\ln\left\lceil \frac{4}{c_{\Lambda}}\right\rceil +3$.
We will sample the shifts in the algorithm using a \emph{betailed exponential distribution} with parameters $(1,c_\top)$ (see \cite{Fil20}). 
Specifically, in each round $i$, for each portal $p\in P$ we first sample $\tilde{\delta}_p^{(i)}$ according to exponential distribution with parameter $1$ (the distribution with density function $e^{-x}$ for $x\ge0$), and then the shift is $\delta_p^{(i)}=\min\{\tilde{\delta}_p^{(i)},c_\top\cdot d\}\cdot\Delta$.
%
For every vertex $v\in V$, these shifts induce a function from the portals: $g^{(i)}_v(p) = \delta_p^{(i)} - d_{G_i}(v,p)$.
A cluster $C$ is said to \emph{choose} $p$ if for every $u\in
C$, $g^{(i)}_{u}(p)$ is the maximum of $g^{(i)}_{u}$. A cluster $C$ is said to
be \emph{satisfied by $p$} if there is a shortest path $Q_{u,p}$
from some vertex $u\in C$ to $p$ such that all the clusters $C'$
intersecting $Q_{u,p}$ choose $p$.
For every cluster $C$ satisfied by $p$, the cluster aggregation assignment of $C$ will be $p$.
For each portal $p$, let $A_{p}^{(i)}$ be the cluster formed by the
union of all the clusters satisfied by $p$.		
We construct a graph $G_{i+1}$ from $G_i$ by contracting all the sets $\{A_{p}^{(i)}\}_{p\in P}$. Specifically, each set 
$A_p^{(i)}$ is contracted into a single vertex, which we relabel as portal $p$. If there is an edge in $G_i$ from $v$ to a vertex in
$A_p^{(i)}$, then add an edge from $v$ to $p$ in $G_{i+1}$ with
weight $d_{G[A_{p}^{(i)}\cup\{v\}]}(p,v)$.

\begin{algorithm}[h]\label{alg:dd}
	\caption{\texttt{Cluster aggregation for general graphs}}
	\label{alg:cluster-agg-doubling}
	\DontPrintSemicolon
	\SetKwInOut{Input}{input}\SetKwInOut{Output}{output}
	\Input{Weighted graph $G = (V, E,w)$ with doubling dimension $\ddim$, partition $\mc{C} = \{C_i\}_i$ into clusters of strong diameter at most $\Delta$ with centers $\{v_C\}_{C\in\calC}$ at minimal pairwise distance $\Omega(\Delta)$, portal set $P \subseteq V$ which is a $\Lambda=\Theta(\ddim^2\cdot\log\ddim\cdot\Delta)$-net.}
	\Output{Assignment $f: \mc{C} \to P$ of additive distortion $\beta = O(\ddim^2\cdot\log\ddim)$.}
	\BlankLine
	
	Set $V_0 = V$, $E_0 = E$, $G_0 = G = (V, E)$, $f=\emptyset$, and $s = c_s\cdot d \log d$.\;
	\For{$i$ from 1 to $s$}{
		For every portal $p\in P$, sample $\delta_{p}^{(i)}$ according to a betailed exponential distribution with parameters $(1,c_\top)$. \hfill
		\tcp*[h]{Here $c_{\top}=\ln\left\lceil \frac{4}{c_{\Lambda}}\right\rceil +3$}\;
		
		
		
		For each portal $p$, for each cluster $C$ satisfied (w.r.t. $g^{(i)}$) by $p$, set
		$f(C) = p$.\;
		
		For each portal $p$, let $A_{p}^{(i)}$ be the cluster formed by the
		union of all the clusters satisfied by $p$.		
		Construct $G_{i+1}$ from $G_i$
		as follows: (a) contract each
		$A_p^{(i)}$ into a single vertex, which we relabel as portal $p$;
		(b) if there is an edge in $G_i$ from $v$ to a vertex in
		$A_p^{(i)}$, then add an edge from $v$ to $p$ in $G_{i+1}$ with
		weight $d_{G[A_{p}^{(i)}\cup\{v\}]}(p,v)$\label{line:ddimCreageGi+1}\;
	}
	\Return $f$\;
\end{algorithm}

%
%

\junk{
Following {[}Fil20Scattering{]} (and using the notation from there),
we run MPX with the portals being centers. Every portal $p\in P$
samples a shift $\delta_{v}^{(1)}$ according to \textbf{betailed}
exponential distribution. That is first sample
$\tilde{\delta}_{v}^{(1)}$ according to exponential distribution with
parameter $1$. Set $\delta_{v}^{(1)}=\min\left\{
\tilde{\delta}_{v}^{(1)},c_{\top}\cdot d\right\} \cdot D$, for a
constant $c_{T}$ to be determined later.

Each vertex $v\in V$ has a function from portals to reals $g_{v}(p)=\delta_{p}-d_{G}(v,p)$.

Note that as the distance between
portals is greater than the maximal shift, no portal can ever get
clustered by a different portal. 

This is a single iteration of our algorithm.

Each cluster at the end of a single iteration is denoted
$A_{p}^{(1)}$. We then contract each $A_{p}^{(1)}$ into a single
vertex still called $p$ (as the portal), the newly created graph is
called $G_{1}$. Denote also $G_{0}=G$.  Note that $G_{1}$ contains an
edge from $v$ to $A_{p}^{(1)}$ if there is an edge in $G_{0}$ from $v$
to a vertex in $A_{p}^{(1)}$.  We set the weight of this edge to be
$d_{G[A_{p}^{(1)}\cup\{v\}]}(p,v)$.

We then repeat the process in $G_{1}$ to obtain samples $\left\{ \delta_{p}^{(2)}\right\} _{p\in P}$
, which defines functions $\left\{ g_{v}^{(2)}\right\} _{v\in G_{2}}$and
clusters $A_{p}^{(2)}$. We then contract them in the same way to
obtain $G_{2}$. Note that all pairwise distances in $G_{2}$ can
only grow. We continue in this manner until every vertex is clustered.
}

\subsubsection{Analysis of the Distance to Assigned Portal}
Throughout the execution of the algorithm, every edge of $G_i$ has a weight equal to some path in the original graph $G$. In particular, pairwise distances can never decrease.  
As the maximum possible shift is $c_\top\cdot\Delta$, while the minimum pairwise distance between portals is at least $\Lambda> c_\top\cdot\Delta$, it holds that a portal $p$ will never choose another portal $p'$. It follows that $P$ is a subset of the vertices for all graphs in $\{G_i\}_{i\ge0}$.
%

This subsection is devoted to proving \Cref{lem:DistanceToClosestPortal}, which stated that if a vertex $v$ is clustered during the $j$th iteration, its ``additive distortion'' is at most $O(j\cdot c_\top\cdot d)$. As we will have only $\tilde{O}(d)$ iterations, the overall distortion will be bounded by $\tilde{O}(d^2)$.

\begin{claim}
  \label{clm:DistanceForJoined}
  Fix an iteration $i$ and consider a cluster $C$ in $G_i$.  For
        every $u,v\in C$ and portal $p$,
        $\left|g^{(i)}_{u}(p)-g^{(i)}_{v}(p)\right|\le \Delta$.
\end{claim}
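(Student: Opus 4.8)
**Proof sketch for Claim 4.10 (the statement $\left|g^{(i)}_u(p) - g^{(i)}_v(p)\right| \le \Delta$ for $u,v$ in the same cluster $C$ of $G_i$).**

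The plan is to unfold the definition $g^{(i)}_v(p) = \delta_p^{(i)} - d_{G_i}(v,p)$ and observe that the shift term $\delta_p^{(i)}$ is common to $u$ and $v$, so it cancels:
\[
\left|g^{(i)}_u(p) - g^{(i)}_v(p)\right| = \left|d_{G_i}(u,p) - d_{G_i}(v,p)\right|.
\]
By the triangle inequality in $G_i$, this is at most $d_{G_i}(u,v)$, so the whole claim reduces to showing $d_{G_i}(u,v) \le \Delta$ for any two vertices $u,v$ lying in a common cluster $C$ of the partition of $G_i$. In other words, the claim is really the statement that clusters keep strong diameter at most $\Delta$ throughout the recursion.

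The key step is therefore an invariant argument: I would prove by induction on $i$ that every (unassigned) cluster $C$ in the partition of $G_i$ has strong diameter at most $\Delta$ in $G_i$. The base case $i=0$ is the hypothesis of the theorem (input clusters have strong diameter $\Delta$ in $G = G_0$). For the inductive step, recall how $G_{i+1}$ is formed from $G_i$: the satisfied clusters are unioned into sets $A_p^{(i)}$, which are contracted to single vertices relabeled as portals; the remaining (unassigned) clusters of $G_i$ are left intact, and edge weights of $G_{i+1}$ are defined so that each edge weight equals the length of a corresponding path in $G_i$ (and hence in $G$). Thus distances in $G_{i+1}$ between two vertices both surviving from $G_i$ can only be larger than — no, I must be careful: contracting can only decrease distances, but the new edge weights $d_{G[A_p^{(i)}\cup\{v\}]}(p,v)$ are chosen precisely to preserve the invariant that every $G_{i+1}$-path corresponds to a no-shorter $G$-path. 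The cluster $C$ itself is untouched as a vertex set (it is among the unassigned clusters), and the induced graph $G_{i+1}[C]$ has exactly the same edges and weights as $G_i[C]$ (no vertex of $C$ was contracted, since $C$ is unassigned and disjoint from every $A_p^{(i)}$), so $\mathrm{diam}(G_{i+1}[C]) = \mathrm{diam}(G_i[C]) \le \Delta$ by the induction hypothesis. This gives the invariant, and hence $d_{G_i}(u,v) \le d_{G_i[C]}(u,v) \le \Delta$, completing the proof.

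The main obstacle — and the only subtle point — is verifying that the unassigned cluster $C$ truly survives into $G_{i+1}$ with its induced subgraph unchanged, i.e. that no vertex of an unassigned cluster ever gets absorbed into some $A_p^{(i)}$. This follows because $A_p^{(i)}$ is defined as the union of clusters \emph{satisfied by} $p$, which are entire clusters of the current partition; unassigned clusters are by definition those not satisfied by any portal, so they are disjoint from every $A_p^{(i)}$ and pass to $G_{i+1}$ verbatim. Once this bookkeeping is in place, the triangle-inequality argument above is immediate, so I would expect the formal write-up to be short.
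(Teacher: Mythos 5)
Your proposal is correct and uses the same core argument as the paper: the shifts cancel, the triangle inequality reduces the claim to $d_{G_i}(u,v)\le\Delta$, and this holds because an unassigned cluster survives each contraction step with its induced subgraph (and hence its strong diameter bound) intact. The paper states this last point without elaboration, treating the preserved diameter as immediate from the construction of $G_{i+1}$; your inductive bookkeeping just makes explicit what the paper leaves implicit.
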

\begin{proof}
	The claim follows immediately from triangle
        inequality: $\left|g^{(i)}_{u}(p)-g^{(i)}_{v}(p)\right|$ equals
        $|d_{G_i}(u,p) - d_{G_i}(v,p)|$, which is at most $\Delta$.
\junk{
	For the second part of the claim, note that there is some
        vertex $u \in C$ such that all the vertices along $Q_{u,p}$ choose
        $p$. By definition,\atodo{This is not accurate. Not by definition but rather by shortest path properties} all these clusters will also be satisfied
        by $p$.  Hence $d_{G[A_{p}]}(u,p)=d_{G}(u,p)$. Clearly for
        every vertex $v\in C$, $d_{G[A_{p}]}(u,v)\le \Delta$. We conclude,
        $d_{G[A_{p}]}(v,p)\le d_{G[A_{p}]}(v,u)+d_{G[A_{p}]}(u,p)\le
        d_{G}(u,p)+\Delta\le d_{G}(v,p)+2\Delta$.}
\end{proof}

\begin{lemma}\label{lem:DistanceToClosestPortal}
	The two following properties hold
	in the graph $G_{j}$:
	\begin{enumerate}
		\item For every unclustered vertex $v$ in $G_{j}$, $d_{G_{j}}(P,v)\le d_{G}(P,v)+j\cdot(c_{\top}\cdot d+2)\cdot \Delta$.
		\item For every vertex $v\in A_{p}^{(j)}$, it holds that $d_{G[A_{p}^{(j)}]}(p,v)\le d_{G}(P,v)+j\cdot(c_{\top}\cdot d+2)\cdot \Delta$.
	\end{enumerate}
\end{lemma}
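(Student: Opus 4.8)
The plan is to prove both statements simultaneously by induction on $j$. For the base case $j=0$ we have $G_0=G$, every vertex is unclustered, so property~1 reads $d_G(P,v)\le d_G(P,v)$ and property~2 is vacuous (or trivial for a singleton portal cluster $\{p\}$). Notably this argument will not use the doubling dimension at all: the distance bound holds for arbitrary weighted graphs, and the doubling dimension only enters later when bounding the number of rounds $s$.

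For the inductive step I would pass from $G_j$ to $G_{j+1}$ via the round that uses shifts $\delta_p^{(j+1)}\in[0,\,c_\top d\cdot\Delta]$ (the truncation built into the betailed exponential), the induced functions $g^{(j+1)}_u(p)=\delta_p^{(j+1)}-d_{G_j}(u,p)$, and the contraction of each region $A_p^{(j+1)}$ (the union over all clusters of $G_j$ satisfied by $p$) into the relabelled portal $p$. First I would record three structural facts. (i) Since the minimum inter-portal distance is at least $\Lambda>c_\top d\Delta$, a singleton portal cluster $\{q\}$ always maximizes $g^{(j+1)}_q$ at $q$ itself; hence portals are never absorbed, $P\subseteq V(G_{j+1})$, and no super-vertex other than $p$ itself can lie in $A_p^{(j+1)}$. (ii) If a cluster $C$ of $G_j$ is satisfied by $p$ with witnessing shortest path $Q_{u,p}$ from some $u\in C$ to $p$, then every suffix of $Q_{u,p}$ is again a shortest path all of whose clusters choose $p$, so every cluster meeting $Q_{u,p}$ is itself satisfied by $p$; therefore $Q_{u,p}\subseteq A_p^{(j+1)}$, and consequently $d_{G_j[A_p^{(j+1)}]}(u,p)=d_{G_j}(u,p)$. (iii) An original unassigned cluster keeps its original internal edges, so its diameter in $G_j$ is at most $\Delta$ (this is essentially \Cref{clm:DistanceForJoined}); and every ``contraction'' edge weight created along the way equals the length of an honest $G$-path through the contracted set, so any $G_j[A_p^{(j+1)}]$-path corresponds to a $G$-path of the same length inside the original-vertex set underlying $A_p^{(j+1)}$, which after this round is exactly $f^{-1}(p)$; thus $d_{G[f^{-1}(p)]}(x,p)\le d_{G_j[A_p^{(j+1)}]}(x,p)$ for original $x$.

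With these in hand the core estimate is that, for $x$ in a cluster $C$ newly satisfied by $p$, $d_{G_j[A_p^{(j+1)}]}(x,p)\le d_{G_j}(x,P)+(c_\top d+2)\Delta$. This follows by chaining: $d_{G_j[A_p^{(j+1)}]}(x,p)\le\mathrm{diam}_{G_j}(C)+d_{G_j}(u,p)\le\Delta+d_{G_j}(u,p)$ using (ii)--(iii); then, since $C$ chooses $p$, $g^{(j+1)}_u(p)\ge g^{(j+1)}_u(p')$ for the portal $p'$ nearest $u$ in $G_j$, which rearranges to $d_{G_j}(u,p)\le\delta_p^{(j+1)}+d_{G_j}(u,P)\le c_\top d\Delta+d_{G_j}(u,P)$; and finally $d_{G_j}(u,P)\le d_{G_j}(x,P)+\mathrm{diam}_{G_j}(C)\le d_{G_j}(x,P)+\Delta$. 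Given the core estimate, property~2 at $j+1$ is immediate: if $v\in f^{-1}(p)$ was already assigned by round $j$, monotonicity of induced distances plus the inductive property~2 suffice; if $v$ lies in a cluster newly satisfied by $p$, combine the core estimate with the inductive property~1 ($d_{G_j}(v,P)\le d_G(P,v)+j(c_\top d+2)\Delta$) to get $d_{G[f^{-1}(p)]}(v,p)\le d_G(P,v)+(j+1)(c_\top d+2)\Delta$. For property~1 at $j+1$, take an unclustered $v$ and a shortest $v$-to-$P$ path $\pi$ in $G_j$; if $\pi$ avoids every newly contracted vertex it survives into $G_{j+1}$ and we are done, otherwise let $x$ be the first vertex of $\pi$ that gets absorbed (into $A_q^{(j+1)}$, say), with predecessor $w$, so the portion of $\pi$ from $v$ to $w$ survives and the new edge $(w,q)$ has weight at most $\ell_\pi(w,x)+d_{G_j[A_q^{(j+1)}]}(x,q)$; hence $d_{G_{j+1}}(v,P)\le d_{G_j}(v,x)+d_{G_j[A_q^{(j+1)}]}(x,q)\le d_{G_j}(v,x)+d_{G_j}(x,P)+(c_\top d+2)\Delta$ by the core estimate, and since $x$ lies on $\pi$ we have $d_{G_j}(v,x)+d_{G_j}(x,P)\le d_{G_j}(v,P)$; one last application of the inductive property~1 finishes.

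The hard part will be bookkeeping rather than a conceptual leap: one must be disciplined about the two meanings of $A_p^{(j)}$ --- a single super-vertex of $G_j$ versus the set of original $G$-vertices it represents --- and verify that the re-weighted contraction edges never let a path in $G_j$ be strictly shorter than the corresponding path in $G$, which is exactly what lets the induction run on the $d_{G[\cdot]}$-distances that appear in the statement rather than on $d_{G_j}$-distances. The second delicate point is fact~(ii): the suffix argument is what converts the $g$-based ``choosing/satisfied'' conditions into the honest identity $d_{G_j[A_p^{(j+1)}]}(u,p)=d_{G_j}(u,p)$ on which the whole estimate rests.
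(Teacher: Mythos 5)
Your proposal is correct and follows essentially the same route as the paper: induction on $j$, with your ``core estimate'' being exactly the paper's single-step bound $d_{G[A_{p}^{(j+1)}]}(p,v)\le d_{G_j}(v,P)+(c_\top\cdot d+2)\Delta$ (derived the same way from the cluster diameter, the shift upper bound via the choosing condition, and the witnessing shortest path of a satisfied cluster), and with assertion~1 handled identically via the first absorbed vertex on the shortest path to the nearest portal. Your extra bookkeeping (the suffix argument for why $Q_{u,p}\subseteq A_p^{(j+1)}$, the equality of $G_j$- and $G$-distances through contracted sets, and the already-assigned case in assertion~2) only makes explicit what the paper leaves implicit.
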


\begin{proof}
	The proof is by induction on $j$. For $j=0$, $G_{0}=G$, and hence
	for every vertex $v\in V$, $d_{G_{0}}(P,v)=d_{G}(P,v)=d_{G}(P,v)+0\cdot c_{\top}\cdot d\cdot \Delta$.
	Furthermore, every cluster $A_{p}^{(0)}$ is the singleton $\{p\}$,
	and trivially $d_{G[A_{p}^{(0)}]}(p,p)=0=d_{G}(p,p)+0\cdot c_{\top}\cdot d\cdot \Delta$.
	Assume the claim holds for $G_{j}$ and $\left\{ A_{p}^{(j)}\right\} _{p\in P}$,
	and we will prove it for $j+1$.
	
	Consider a vertex $v\in V$ in a cluster $C$, and let $p_{v}\in P$
	be the closest portal to $v$ in $G_{j}$. We begin by proving the
	second assertion. Suppose that $v$ joins the cluster $A_{p}^{(j+1)}$
	centered in a portal $p$ during the iteration $j+1$. It follows
	that there a vertex $u\in C$ with shortest path $Q_{u,p}$ from $u$
	to $p$ such that all the clusters intersecting $Q_{u,p}$ choose
	$p$. In particular $A_{p}^{(j+1)}$ contains a path from $u$ to
	$p$ of weight $d_{G_{j}}(u,p)$. We conclude
\begin{eqnarray}
	d_{G[A_{p}^{(j+1)}]}(p,v) & \le & d_{G[A_{p}^{(j+1)}]}(p,u)+d_{G[A_{p}^{(j+1)}]}(u,v)\nonumber \\
	& \le & d_{G_{j}}(p,u)+\Delta\nonumber \\
	& \le & d_{G_{j}}(p_{v},u)+c_{\top}\cdot d\cdot \Delta+\Delta\nonumber \\
	& \le & d_{G_{j}}(p_{v},v)+c_{\top}\cdot d\cdot \Delta+2\Delta \label{eq:SingleStep}\\
	& \le & d_{G}(P,v)+j\cdot(c_{\top}\cdot d+2)\cdot \Delta+(c_{\top}\cdot d+2)\cdot \Delta\nonumber \\
	& = & d_{G}(P,v)+(j+1)\cdot(c_{\top}\cdot d+2)\cdot \Delta~.\nonumber 
\end{eqnarray}
The first step follows from triangle inequality.  The second and
fourth steps follow from the diameter bound for every cluster.  The
third step follows from the upper bound on the shifts, as $g_{u}(p_{v})\le g_{u}(p)$
implies $d_{G_{j}}(p,u)\le d_{G_{j}}(p_{v},u)+\delta_{p}\le d_{G_{j}}(p_{v},u)+c_{\top}\cdot d\cdot \Delta$. The fifth step follows from the induction hypothesis.

	We now move to prove the first assertion. Let $Q_{v,p_{v}}=\left(v=x_{0},x_{1},\dots,x_{m}=p_{v}\right)$
	be the shortest path from $v$ to $p_{v}$ in $G_{j}$. Note that
	the only portal in $Q_{v,p_{v}}$ is
	$x_{m}=p_{v}$ (otherwise $p_v$ would not be the closest portal to $v$ in $G_j$). If no vertex
	in $Q_{v,p_{v}}$ joined any cluster (other than that of $p_{v}$), then
	$d_{G_{j+1}}(P,v)\le d_{G_{j+1}}(p_{v},v)=d_{G_{j}}(p_{v},v)=d_{G_{j}}(P,v)$,
	and we are done by induction. Otherwise, let $x_{q}$ be the vertex
	with minimal index that joined some cluster centered at a portal $p\in P$.  By the definition of $G_{j+1}$, there is an edge from
	$x_{q-1}$ to $p$ (as there is an edge from a vertex $x_{q}\in A_{p}^{(j+1)}$
	towards $x_{q-1}$). We conclude:
	\begin{align*}
	  d_{G_{j+1}}(P,v) & \le d_{G_{j+1}}(p,v)\\
                & \le d_{G_{j+1}}(v,x_{q-1})+d_{G_{j+1}}(x_{q-1},p)\\
		& \le d_{G_{j}}(v,x_{q-1})+d_{G[A_{p}^{(j+1)}\cup\{x_{q-1}\}]}(x_{q-1},p)\\
		& \le d_{G_{j}}(v,x_{q-1})+w_{G_{j}}(x_{q-1},x_{q})+d_{G[A_{p}^{(j+1)}]}(x_{q},p)\\
		& \le d_{G_{j}}(v,x_{q})+d_{G_{j}}(x_{q},p_{v})+(c_{\top}\cdot d+2)\cdot \Delta\\
		& =d_{G_{j}}(v,p_{v})+(c_{\top}\cdot d+2)\cdot \Delta\\
        & \le d_{G}(v,P)+(j+1)\cdot(c_{\top}\cdot d+2)\cdot \Delta\,.
	\end{align*}
        The second,
        third and fourth steps follow from triangle inequality.  The
        fifth step follows from Equation~\ref{eq:SingleStep}.  The
        sixth step follows from the definition of $x_q$ while the last
        step follows from the induction hypothesis.
\end{proof}

\subsubsection{Analysis of the Probability of Being Clustered}
In this subsection we analyze the clustering probability from the perspective of a single cluster. From \Cref{lem:clusterProb} it follow that the probability a cluster is unassigned by the end of the algorithm is $2^{-O(s)}$.
\begin{lemma}
  \label{lem:prob_cluster}
	Consider an iteration $j$, and an unclustered vertex $v\in C$, let
	$p_{_{v}}$ be the portal maximizing $g_{v}^{(j)}$. If $g_{v}^{(j)}(p_{v})\ge2\Delta+\max_{p\ne p_{v}}g_{v}^{(j)}(p)$,
	then $C$ is satisfied by $p_{v}$.
\end{lemma}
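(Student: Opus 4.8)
The plan is to propagate the ``$2\Delta$-gap'' enjoyed by $v$ along a shortest path from $v$ to $p_v$ in $G_j$, upgrading it at each step from a single vertex to the entire cluster containing that vertex via \Cref{clm:DistanceForJoined}. The key transfer step is: \emph{if} a vertex $x$ lying in some cluster $C_x$ of $G_j$ satisfies $g^{(j)}_{x}(p_v)\ge 2\Delta+\max_{p\ne p_v}g^{(j)}_{x}(p)$, \emph{then} $C_x$ chooses $p_v$. Indeed, by \Cref{clm:DistanceForJoined}, for every $u\in C_x$ and every portal $p$ we have $g^{(j)}_{u}(p_v)\ge g^{(j)}_{x}(p_v)-\Delta$ and $g^{(j)}_{u}(p)\le g^{(j)}_{x}(p)+\Delta$; subtracting and using the assumed gap gives $g^{(j)}_{u}(p_v)-g^{(j)}_{u}(p)\ge 2\Delta-2\Delta=0$ for every $p\ne p_v$, so $p_v$ maximizes $g^{(j)}_u$ over all portals. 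Applied with $x=v$ this already shows that $C$ itself chooses $p_v$; note that the $2\Delta$ in the hypothesis is exactly $\Delta+\Delta$, one $\Delta$ to push $g(p_v)$ down within $C_x$ and one to push each $g(p)$ up.

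I would then fix any shortest path $Q=Q_{v,p_v}$ from $v$ to $p_v$ in $G_j$ and show that \emph{every} vertex $x$ on $Q$ satisfies this same $2\Delta$-gap condition with the same favored portal $p_v$. Since $x$ lies on a shortest $v$--$p_v$ path, $d_{G_j}(x,p_v)=d_{G_j}(v,p_v)-d_{G_j}(v,x)$, hence $g^{(j)}_{x}(p_v)=g^{(j)}_{v}(p_v)+d_{G_j}(v,x)$; and for any portal $p$ the triangle inequality gives $d_{G_j}(x,p)\ge d_{G_j}(v,p)-d_{G_j}(v,x)$, hence $g^{(j)}_{x}(p)\le g^{(j)}_{v}(p)+d_{G_j}(v,x)$. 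Subtracting, $g^{(j)}_{x}(p_v)-g^{(j)}_{x}(p)\ge g^{(j)}_{v}(p_v)-g^{(j)}_{v}(p)\ge 2\Delta$ for every $p\ne p_v$, which is exactly the desired condition at $x$. As a side remark, no vertex of $Q$ strictly between $v$ and $p_v$ can be a portal $p'$: plugging $p=p'$ into the gap at $v$ together with $d_{G_j}(v,p_v)=d_{G_j}(v,p')+d_{G_j}(p',p_v)$ forces $\delta^{(j)}_{p_v}-d_{G_j}(p',p_v)\ge 2\Delta>0$, contradicting that inter-portal distances exceed the maximum shift; so $Q$ genuinely passes only through clusters of the current partition.

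Finally I would assemble the two pieces: any cluster $C'$ of $G_j$ intersecting $Q$ contains some vertex $x\in Q$, which by the previous paragraph satisfies the $2\Delta$-gap condition, so by the transfer step $C'$ chooses $p_v$. Thus every cluster intersecting $Q_{v,p_v}$ chooses $p_v$, which is precisely the definition of $C$ being satisfied by $p_v$, proving the lemma. I do not anticipate a genuine obstacle here; the one thing to watch is that the gap constant stays pinned at $2\Delta$ rather than degrading as we move along $Q$, and the computation above shows this holds because moving toward $p_v$ along a shortest path raises $g^{(j)}(p_v)$ by exactly $d_{G_j}(v,x)$ while raising every $g^{(j)}(p)$ by \emph{at most} $d_{G_j}(v,x)$.
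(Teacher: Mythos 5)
Your proposal is correct and follows essentially the same argument as the paper: the paper also fixes the shortest path $Q_{v,p_v}$, picks a vertex $z$ of $Q_{v,p_v}$ in each intersecting cluster $C'$, and chains together \Cref{clm:DistanceForJoined} (applied twice, accounting for the $2\Delta$), the shortest-path identity $d(z,p_v)=d(v,p_v)-d(v,z)$, and the triangle inequality to conclude that $C'$ chooses $p_v$. Your modularization into a ``gap propagates along $Q$'' step and a ``gap at a vertex implies its cluster chooses $p_v$'' step is just a cleaner packaging of the paper's single inequality chain.
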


\begin{proof}
	The iteration $j$ is executed on the graph $G_{j-1}$. For simplicity
	we will assume $j=1$, as the other cases are the same. Let $Q_{v,p_{v}}$
	be the shortest path from $v$ to $p_{v}$ in $G$. Consider a cluster
	$C'$ intersecting $Q_{v,p_{v}}$, and let $z\in C'\cap Q_{v,p_{v}}$.
	We argue that $C'$ chooses $p_{v}$. Note that this implies that
	$C$ is satisfied by $p_{v}$. Consider a vertex $u\in C'$ and a
	portal $p'$. Then
	\begin{align*}
	  g_{u}(p_{v}) & \ge g_{z}(p_{v})-\Delta\\
          & =\delta_{p_{v}}-d_{G}(z,p_{v})-\Delta\\
          &=\delta_{p_{v}}-d_{G}(v,p_{v})+d_{G}(v,z)-\Delta\\
		& =g_{v}(p_{v})+d_{G}(v,z)-\Delta\\
		&\ge g_{v}(p')+d_{G}(v,z)+\Delta\\
		& =\delta_{p'}-d_{G}(v,p')+d_{G}(v,z)+\Delta\\
		&\ge\delta_{p'}-d_{G}(z,p')+\Delta\\
		& =g_{z}(p')+\Delta\\
		&\ge g_{u}(p')~.
	\end{align*}
        (The first and last steps follow from
        \Cref{clm:DistanceForJoined}.  The second, fourth, sixth,
        and eighth steps follow from the definition of $f$.  The third
        step follows from the choice of $z$.  The fifth step follows
        from the assumption in the lemma.  The penultimate step
        follows from triangle inequality.)  Thus, all the vertices in
        $C'$ indeed choose $p_{v}$ as required.
\end{proof}
Let $s=c_{s}\cdot d\log d$ for a constant $c_{s}$ to be determined
later.
\begin{lemma}
  \label{lem:clusterProb}
	For $j\le s$, every cluster $C$ in $G_{j}$ is satisfied by some
	portal with probability at least $\frac{e^{-2}}{2}$. In particular,
	the probability that a cluster $C$ remains unclustered after $s$
	iterations, is at most $2^{-\frac{s}{10}}$.
\end{lemma}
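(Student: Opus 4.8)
The plan is to establish the first assertion — that in round $j\le s$ every cluster present in $G_j$ is satisfied with probability at least $e^{-2}/2$ — and then read off the ``in particular'' for free. Since the shifts $\{\delta^{(i)}_p\}_{p\in P}$ are drawn afresh and independently in each round, the first assertion gives that, \emph{conditionally} on the outcome of rounds $1,\dots,j-1$ (hence on $G_j$), a still-unsatisfied cluster $C$ fails to be satisfied in round $j$ with probability at most $1-e^{-2}/2$; by the chain rule the probability $C$ survives all $s$ rounds is at most $(1-e^{-2}/2)^s$, and since $1-e^{-2}/2\le 2^{-1/10}$ (a numerical check) this is at most $2^{-s/10}$.

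For the per-round bound, fix the cluster $C$ in $G_j$ and any vertex $v\in C$; by \Cref{lem:prob_cluster} it suffices to show that, with probability at least $e^{-2}/2$, the maximizer $p_v$ of $g^{(j)}_v$ satisfies $g^{(j)}_v(p_v)\ge 2\Delta+\max_{p\ne p_v}g^{(j)}_v(p)$. Order the portals by $r_k:=d_{G_j}(v,p_k)$. The first step is to see that only boundedly many portals matter. By \Cref{lem:DistanceToClosestPortal}, $r_1\le d_{G_j}(v,P)\le d_G(v,P)+j(c_\top d+2)\Delta\le \Lambda+O(d^2\log d\cdot\Delta)$, which is at most $2\Lambda$ once $\lambda$ is a large enough constant. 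Since every shift is at most $c_\top d\Delta$, any portal with $r_k> r_1+(c_\top d+2)\Delta$ has $g^{(j)}_v(p_k)<-r_1-2\Delta\le g^{(j)}_v(p_1)-2\Delta\le g^{(j)}_v(p_v)-2\Delta$, so it can never be the binding runner-up; hence the target event depends only on the ``relevant'' portals, those lying in $B_{G_j}(v,2\Lambda)$ (and one checks $p_v$ itself is relevant). As distinct portals are $\ge c_\Lambda\Lambda$ apart (using $d_{G_j}\ge d_G$ and the net hypothesis), the packing property \Cref{lem:doubling_packing} bounds the number of relevant portals by $L:=(4/c_\Lambda)^d$, and this is a deterministic set once $G_j$ is fixed.

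The heart is then the standard exponential-start-time padding estimate for these $\le L$ portals. Ignoring truncation (so $\delta_k=\Delta X_k$ with $X_k$ i.i.d.\ $\mathrm{Exp}(1)$), the event ``$g_v(p_k)$ beats every other relevant portal by at least $2\Delta$'' is $\{X_k\ge 2+r_k/\Delta+\max_{l\ne k}(X_l-r_l/\Delta)\}$; these are disjoint over $k$, and conditioning on the $X_l$ with $l\ne k$ and using $\Pr[X_k\ge t]\ge e^{-t}$ (the case $t\le 0$ only helping) lower-bounds the sum of their probabilities by $e^{-2}\sum_k e^{-r_k/\Delta}\,\E[\min_{l\ne k} a_l U_l]$, where $a_l=e^{r_l/\Delta}$ and $U_l=e^{-X_l}\sim\mathrm{Unif}(0,1)$. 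Using the identity $\sum_k a_k^{-1}\prod_{l\ne k}(1-t/a_l)=-\tfrac{d}{dt}\prod_l(1-t/a_l)$, this sum telescopes to $1+a_1^{-1}\int_{a_1}^{a_2}\prod_{l\ge 2}(1-t/a_l)\,dt\ge 1$, so the un-truncated gap event has probability at least $e^{-2}$. Finally I handle truncation by a coupling: the true shift of a relevant portal equals $\Delta X_k$ unless $X_k>c_\top d$, an event of probability $e^{-c_\top d}\le \lceil 4/c_\Lambda\rceil^{-d}e^{-3d}\le e^{-3d}/L$, so by a union bound truncation affects some relevant portal with probability at most $e^{-3d}\le e^{-3}<e^{-2}/2$; hence the true gap event has probability at least $e^{-2}-e^{-3}\ge e^{-2}/2$. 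This is precisely the calibration behind the choice $c_\top=\ln\lceil 4/c_\Lambda\rceil+3$.

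I expect the padding estimate with heterogeneous distances, together with the truncation, to be the main obstacle: one must show both that unequal distances $r_k$ only increase the gap probability (the telescoping computation above) and that the truncation costs at most a constant factor — which is exactly what forces $c_\top$ to grow with $\ln(1/c_\Lambda)$ and, through the packing bound, with the doubling dimension. The remaining ingredients — the reduction to a single vertex via \Cref{lem:prob_cluster}, bounding the number of relevant portals via \Cref{lem:DistanceToClosestPortal} and \Cref{lem:doubling_packing}, the round-independence step, and the numerical inequality $1-e^{-2}/2\le 2^{-1/10}$ — are routine.
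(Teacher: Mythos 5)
Your overall architecture coincides with the paper's: reduce to the single-vertex gap event via Lemma~\ref{lem:prob_cluster}, restrict attention to the at most $(4/c_\Lambda)^{\ddim}$ portals within distance $2\Lambda$ using Lemma~\ref{lem:DistanceToClosestPortal} and the packing property, pay for truncation with a union bound calibrated by $c_\top=\ln\lceil 4/c_\Lambda\rceil+3$, and chain the per-round constant over $s$ rounds of freshly sampled shifts. The truncation arithmetic, the round-independence step, and the numerical check $1-e^{-2}/2\le 2^{-1/10}$ are all fine.

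The gap is in the core estimate. You lower-bound the probability of the disjoint union of gap events by replacing each conditional probability $\Pr[X_k\ge t_k]$ with $e^{-t_k}$, asserting that ``the case $t_k\le 0$ only helps.'' It does not: for $t_k<0$ the true conditional probability is $1$ while $e^{-t_k}>1$, so there $\Pr[X_k\ge t_k]\le e^{-t_k}$; equivalently $\E[\min(1,Y_k)]\le\E[Y_k]$, so your surrogate is an \emph{upper} bound on each term and the chain $\sum_k\Pr[\text{event}_k]\ge e^{-2}\sum_k a_k^{-1}\E[\min_{l\ne k}a_lU_l]\ge e^{-2}$ does not follow as written. (Concretely, with two portals at $r_1=0$ and $r_2=10\Delta$, your surrogate for the first term is about $e^{8}/2$, while the true probability is at most $1$.) The statement is nonetheless true and your computation is repairable: sorting $a_1\le\cdots\le a_L$, one checks $Y_k\le e^{-2}a_1/a_k\le e^{-2}<1$ for every $k\ne 1$, so only the closest portal's term can overshoot, and its excess $\E[(Y_1-1)^+]=e^{-2}a_1^{-1}\int_{e^2a_1}^{a_2}\prod_{l\ge2}(1-t/a_l)\,dt$ is dominated by precisely the boundary term $e^{-2}a_1^{-1}\int_{a_1}^{a_2}\prod_{l\ge2}(1-t/a_l)\,dt$ that your telescoping produces; but you neither notice the issue nor perform this check. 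The paper sidesteps it entirely: it conditions on which two portals attain the top two values of $g_v$ and on the threshold $a$ that the winner's shift must exceed, and applies memorylessness, $\Pr[\delta_{p_1}\ge a+2\Delta\mid\delta_{p_1}\ge a]\ge e^{-2}$, which handles $a<0$ correctly in one line before summing by the law of total probability.
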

\begin{proof}
  Fix a vertex $v\in G_{j}$. Let $\tilde{\Lambda}$ denote the distance from $v$ to the closest portal.  Then, we have
\begin{equation}
	\tilde{\Lambda}=d_{G_{j}}(v,P)\le d_{G}(v,P)+j\cdot(c_{\top}\cdot d+2)\cdot\Delta\le\Lambda+s\cdot(c_{\top}\cdot d+2)\cdot\Delta\le2\cdot\Lambda~,\label{eq:LambdaTildeBound}
\end{equation}
  where the first inequality follows from the second part of
  \Cref{lem:DistanceToClosestPortal}, the second inequality holds
  since every vertex is within $\Lambda$ distance of a portal and there are at most $s$ rounds, and the last inequality holds for
  appropriate choice of the constants $c_s,\lambda$ (see \Cref{rem:constantsChoiseDoubling}).   
%
  We call a portal $p'$ an \emph{almost
    winner of $v$} if and only if $g_{v}^{(j)}(p')\ge\max_{p\in
    P}g_{v}^{(j)}(p)-2\Delta$ (note that a winner is also an almost
  winner).  Therefore, since the shifts are bounded by $c_{\top}\cdot
  d$, a portal $p'$ has a non-zero probability to become an almost
  winner only if $d_{G_{j}}(v,p')\le\tilde{\Lambda}+(c_{\top}\cdot
  d+2)\cdot \Delta$.  Let $P_{v}$ be the set of all the portals at distance at most
  $\tilde{\Lambda}+(c_{\top}\cdot d+2)\cdot \Delta\le2\Lambda$ from $v$ w.r.t.
  $G_{j}$.  Since the portal in $P_{v}$ are all in a ball of radius $2\cdot \Lambda$ around $v$, and the  
  pairwise distance between any pair of vertices of $P_{v}$ in $G$ is at least $c_\Lambda\cdot\Lambda$, the packing property (\Cref{lem:doubling_packing}) implies that
  \[
  |P_{v}|\le\left(\left\lceil \frac{4\Lambda}{c_{\Lambda}\cdot\Lambda}\right\rceil \right)^{d}=e^{d\cdot\ln\left\lceil \frac{4}{c_{\Lambda}}\right\rceil }~.
  \]

	We are now ready to analyze the clustering
        probability. Initially ignore the truncation in the
        shifts. 
%
%
%
        We show that the
        probability that, for any given vertex $v$, the portal that maximizes $g_v(\cdot)$ is at least $2\Delta$ ahead of other portals is at
        least $e^{-2}$.         
        Let $p_{1}$ and $p_{2}$ be the first and second portals maximizing
$g_{v}$ (i.e. $g_{v}(p_{1})\ge g_{v}(p_{2})\ge g_{v}(p)$ for $p\ne p_{1},p_{2}$).
Denote $a=\delta^{(j)}_{p_{2}}-d_{G_j}(v,p_{2})+d_{G_j}(v,p_{1})$.
As $g_{v}(p_{1})\ge g_{v}(p_{2})$ it holds that $\delta^{(j)}_{p_{1}}\ge a$.
Since $\tilde{\delta}^{(j)}_p$ has exponential distribution with parameter $1$ and $\delta^{(j)}_p$ equals $\tilde{\delta}^{(j)}_p \cdot \Delta$ assuming no truncation, it follows from the memoryless property of the exponential distribution that 
\[
\Pr\left[\delta^{(j)}_{p_{1}}\ge a+2 \Delta\mid\delta^{(j)}_{p_{1}}\ge a\right] = \Pr\left[\tilde{\delta}^{(j)}_{p_{1}}\ge a/\Delta +2 \mid\tilde{\delta}^{(j)}_{p_{1}}\ge a/\Delta\right] \ge e^{-2}\,,
\]
(there is no equality as it might be $a<0$). If this event indeed
occurs, then $g_{v}(p_{1})\ge 2\Delta+g_{v}(p_{2})$. By
the law of total probability (summing over all $p_{1},p_{2}$),
it holds that $\Pr\left[g_{v}(p_{v}) \ge 2\Delta+\max_{p\ne p_{v}}g_{v}(p)\right]$ is at least $e^{-2}$.
        
        We now consider truncation of the shift $\delta_p^{(j)}$ for any portal $p$.  The probability that some portal in $P_{v}$ is
        truncated, is, by a union bound, at most $e^{d\cdot\ln\left\lceil \frac{4}{c_{\Lambda}}\right\rceil }\cdot e^{-c_{\top}\cdot d}\le\frac{e^{-2}}{2}$ (recall that $c_{\top}=\ln\left\lceil \frac{4}{c_{\Lambda}}\right\rceil +3$).  By another union bound, the
        probability that either $C$ is not clustered ignoring
        truncation, or someone in $P_{v}$ is truncated is at most
        $1-e^{-2}+\frac{e^{-2}}{2}=1-\frac{e^{-2}}{2}$. We conclude
        that with probability $\frac{e^{-2}}{2}$ both events do
        not occur, leading to $C$ being clustered.
	
	The probability that $C$ remains unclustered after $s$ rounds is
	thus at most $(1-\frac{e^{-2}}{2})^{s}\le2^{-\frac{s}{10 }}$.
\end{proof}

\subsubsection{The Global Argument}

Recall that the algorithm runs in $s$ rounds, and in each round $i$,
each center $p\in P$ samples shifts $\delta_{p}^{(i)}$, according to
the betailed exponential distribution. For every cluster $C$, let
$\Psi_{C}$ be the event that $C$ remained unclustered after $s$
rounds. Then, by \Cref{lem:clusterProb}, $\Pr[\Psi_{C}]\le2^{-\frac{s}{10}}$. Denote by ${\cal
  A}=\left\{ \Psi_{C}\right\} _{C\in{\cal C}}$ the collection of this
events. Let $\Gamma[\Psi_{C}]\subset\left\{ \Psi_{C'}\in{\cal A}\mid
d_{G}(C,C')\le 8 s\cdot\Lambda \right\} $.
\begin{lemma}\label{lem:ddimInependentEvents}
	For every cluster $C$, $\Psi_{C}$ is independent from the collection
	of events ${\cal A}\setminus\Gamma(\Psi_{C})$.
\end{lemma}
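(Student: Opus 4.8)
The plan is to argue that the event $\Psi_C$ is a deterministic function of the shifts $\{\delta_p^{(i)}\}$ only for portals $p$ that lie within a bounded distance of $C$ — specifically, portals $p$ with $d_G(C,p) \le O(s\Lambda)$ — and that two such ``local portal sets'' for clusters $C, C'$ with $d_G(C,C') > 8s\Lambda$ are disjoint. Since the shifts are sampled independently across portals (and across rounds), disjointness of the determining variables immediately yields independence of $\Psi_C$ from the collection $\{\Psi_{C'}\}$ indexed by clusters outside $\Gamma(\Psi_C)$.

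First I would establish a ``locality of influence'' claim: whether a cluster $C$ gets satisfied in round $i$ of the algorithm, and which portal satisfies it, depends only on the functions $g^{(i)}_u(\cdot)$ restricted to portals $p$ with $d_{G_{i-1}}(C,p)$ at most roughly $\tilde\Lambda + (c_\top d + 2)\Delta \le 2\Lambda$ (by the truncation of the shifts, as used in the proof of \Cref{lem:clusterProb}: a portal farther than $\tilde\Lambda + (c_\top d+2)\Delta$ can never be an almost-winner of any vertex of $C$, and in particular can never be chosen by $C$ or satisfy $C$). Combined with \Cref{lem:DistanceToClosestPortal}, distances in $G_{i-1}$ are within an additive $j\cdot(c_\top d + 2)\Delta \le s\cdot(c_\top d+2)\Delta = O(\Lambda)$ of distances in $G$, so this set of ``relevant'' portals is contained in $\{p : d_G(C,p) \le O(\Lambda)\}$ for each single round. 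Moreover the contraction that produces $G_i$ from $G_{i-1}$ only merges clusters satisfied in round $i$ — all of which are themselves within $O(\Lambda)$ of their satisfying portal — so the graph $G_i$ in a neighborhood of $C$ is itself determined by the shifts of portals near $C$. Iterating over all $s$ rounds, the event $\Psi_C$ is a function of $\{\delta_p^{(i)} : 1 \le i \le s,\ d_G(C,p) \le c\cdot s\Lambda\}$ for a suitable constant $c$; here the factor $s$ enters because the ``region of influence'' can drift by $O(\Lambda)$ per round.

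Next I would pick the constant hidden in $\Gamma[\Psi_C] \subset \{\Psi_{C'} : d_G(C,C') \le 8s\Lambda\}$ to be large enough that if $d_G(C,C') > 8s\Lambda$ then the relevant portal sets of $C$ and $C'$ are disjoint: any portal $p$ with $d_G(C,p) \le c s\Lambda$ and $d_G(C',p) \le c s\Lambda$ would force $d_G(C,C') \le 2cs\Lambda$, a contradiction once $8 \ge 2c$. Since distinct portals get independent shift sequences, $\Psi_C$ — a function of the $C$-relevant shifts — is independent of any event that is a function of shifts disjoint from these, and in particular of the joint distribution of $\{\Psi_{C'} : \Psi_{C'} \in \mathcal A \setminus \Gamma(\Psi_C)\}$, which is exactly the claim. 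The main obstacle I anticipate is making the ``locality of influence'' bookkeeping fully rigorous across the recursive contractions: one must carefully track that the edges added when forming $G_{i+1}$ (with weights $d_{G[A_p^{(i)} \cup \{v\}]}(p,v)$) do not create long-range dependence, i.e. that an edge incident to a far-away portal in $G_{i+1}$ still corresponds to a path staying within the $O(\Lambda)$-neighborhood of $C$ that was already under control in $G_i$; this follows because each $A_p^{(i)}$ has ``radius'' $O(\Lambda)$ around $p$ by \Cref{lem:DistanceToClosestPortal}, but it requires stating the right invariant and maintaining it inductively.
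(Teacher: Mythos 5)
Your plan is correct and is essentially the paper's argument: the paper likewise shows that $\Psi_C$ is determined only by the shifts of portals within $O(s\Lambda)$ of $C$ (it obtains radius $4s\Lambda$, whence the $8s\Lambda$ separation threshold), and concludes independence from the disjointness of the determining, independently sampled shift variables. The only difference is presentational --- the paper handles the locality bookkeeping you flag as the main obstacle by simulating the algorithm in the LOCAL model of distributed computing and bounding the number of communication rounds per iteration by $4\Lambda$ (using \Cref{lem:DistanceToClosestPortal} and the shift truncation exactly as you propose), which packages your per-round ``drift of the region of influence'' invariant.
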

\begin{proof}
To establish the independence of
  $\Psi_{C}$ from the events in ${\cal A}\setminus\Gamma(\Psi_{C})$,
  we consider the execution of 
  cluster aggregation as a distributed algorithm and show that no
  vertex in $C$ ever communicates with any vertex in ${\cal A}
  \setminus \Gamma(\Psi_{C})$.  For simplicity, we will assume that
  all the edges in $G$ have integer weight (we can scale the weights
  accordingly to achieve this). In the remaining, we assume that every edge of weight $w$ in the graph is subdivided into $w$ unit weight edges. Thus we will assume that the graph is unweighted. Note that all this changes are for the sake of argument only.

  To formalize the argument, consider the LOCAL model of computation. Here each vertex represent a server, and communication happens in synchronized rounds, where arbitrary messages can be send on every edge.
  A natural implementation of the \cite{MillerPX2013} algorithm in the LOCAL model is the following: each portal $p$
  wake up at time $-\delta_p$ and begin to ``spread'' along the edges
  in a synchronous manner.  The spread of all portals is
  performed simultaneously.  For any vertex $v$, the first
  portal that spreads to it is the $p$ that maximizes $g_v(p)$. The required number of rounds is bounded by the maximum shift+maximum distance to portal. See \cite{EN22} for additional details of this implementation.
  
  In our \Cref{alg:cluster-agg-doubling}, we have $s$ iterations.
  In the $j$th iteration of the algorithm, every vertex has a portal at distance at most $\Lambda+j\cdot(c_{\top}\cdot d+2)\cdot \Delta$ (by \Cref{lem:DistanceToClosestPortal}). 
  Since the maximum shift is $c_{\top}\cdot d\cdot \Delta$, each vertex $v$ can find that $p$ that maximizes $g_v(p)$ in $\Lambda+ (j+2)\cdot c_{\top}\cdot
  d\cdot \Delta$ synchronous rounds. 
%
%
%
%
%
  Once every vertex $v$ in a cluster has identified the portal that maximizes $g_v(\cdot)$ for that iteration, in another $2\Delta$ rounds of communication, the cluster can determine if it has chosen $p$.  If a cluster $C$ has chosen $p$, to determine whether $C$ is satisfied by $p$, each vertex in $C$ communicates along the shortest path from $v$ to $p$.  This requires an additional $2\cdot\left(\Lambda+ (j+2)\cdot c_{\top}\cdot
  d\cdot \Delta\right)$ rounds to complete. 
  Therefore, \Cref{line:ddimCreageGi+1}  of iteration $j$ can be completed in at most
  \begin{equation}
  	3\Lambda + (3(j+2)\cdot c_{\top}\cdot d+2)\cdot \Delta\le4\cdot\Lambda~,\label{eq:LambdaTildeBound2}
  \end{equation}
   rounds, where the inequality holds for appropriate choise of the constants $c_s$ and $\lambda$ (see \Cref{rem:constantsChoiseDoubling}).
%
  In total, as our algorithm runs for $s$ iterations, it can be
  executed in $4s\cdot\Lambda$ 
	synchronous rounds.
\junk{We actually need to reconstruct the graph and similar  issues.}

        Thus, the event of a cluster being assigned to a portal is a function only of the random choices of the shifts made by the portals at most $4s\cdot\Lambda$ away from the cluster.  Hence, if two clusters are at distance greater than
        $8s\cdot\Lambda$, their corresponding events are completely independent. The
        lemma follows.
\end{proof}
The proof of the main theorem in this section uses the constructive
version of the Lov\'asz Local Lemma by Moser and Tardos \cite{MT10}.
\begin{lemma}[Constructive Lov\'asz Local Lemma] \label{lem:lovasz}
	Let $\mathcal{P}$ be a finite set of mutually independent random variables in a probability space. Let $\mathcal{A}$ be a set of events determined by these variables. For $A\in\mathcal{A}$ let $\Gamma(A)$ be a subset of $\mathcal{A}$ satisfying that $A$ is independent from the collection of events $\mathcal{A}\setminus(\{A\}\cup\Gamma(A))$.
	If there exist an assignment of reals $x:\mathcal{A}\rightarrow(0,1)$  such that
	\[\forall A\in \mathcal{A}~:~~\Pr[A]\le x(A) \cdot \Pi_{B\in\Gamma(A)}(1-x(B))~,\]
	then there exists an assignment to the variables $\mathcal{P}$ not violating any of the events in $\mathcal{A}$. Moreover, there is an algorithm that finds such an assignment in expected time  $\sum_{A\in\mathcal{A}}\frac{x(A)}{1-x(A)}\cdot{\rm poly} \left(|\mathcal{A}|+|\mathcal{P}|\right)$.
\end{lemma}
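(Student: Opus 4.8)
The plan is to reproduce the resampling algorithm and witness-tree analysis of Moser and Tardos \cite{MT10}; since the lemma is stated there in essentially this form, the most economical route is simply to cite that paper, but I outline the argument for completeness. The algorithm: sample each variable of $\mathcal{P}$ independently from its marginal; then, while some event $A\in\mathcal{A}$ holds under the current assignment, pick one such $A$ (by any fixed rule) and independently resample exactly the variables that determine $A$; return the assignment once no event of $\mathcal{A}$ holds. Correctness upon termination is immediate, so the whole task reduces to bounding the expected number of resampling steps: multiplying that bound by the $\poly(|\mathcal{A}|+|\mathcal{P}|)$ per-step cost (scanning for a violated event, resampling it, and re-checking the events it affects) gives the stated running time, and finiteness of the expected running time forces termination with positive probability on an assignment avoiding all events, which gives the existence claim.

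The core estimate is a bound on $\E[N_A]$, the expected number of times $A$ is selected for resampling. I would record the run as a log $C(1),C(2),\dots$ of the events resampled, and for each step $t$ with $C(t)=A$ build a witness tree $\tau_C(t)$: root labelled $A$, then for $i=t-1,t-2,\dots,1$ attach $C(i)$ as a new child of the deepest node whose label lies in $\{C(i)\}\cup\Gamma(C(i))$ (discarding $C(i)$ if there is no such node). Two facts are needed. First, a structural/injectivity statement: distinct steps yield distinct witness trees, and in every tree the children of a common node have pairwise distinct labels (the tree is \emph{proper}). Second, the coupling statement: for any fixed proper witness tree $\tau$ rooted at $A$, the probability that $\tau$ arises as $\tau_C(t)$ for some $t$ in the run is at most $\prod_{v}\Pr[A_{\ell(v)}]$, where $\ell(v)$ denotes the label of $v$. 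This is the delicate part, proved by comparing the algorithm against an auxiliary ``resampling table'' of independent samples and checking, node by node from the leaves upward, that realizing $\tau$ forces each labelled event to have been satisfied by a fresh independent draw.

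Combining the two facts, $\E[N_A]\le\sum_{\tau:\text{root label }A}\prod_{v}\Pr[A_{\ell(v)}]\le\sum_{\tau}\prod_{v}\big(x(A_{\ell(v)})\prod_{B\in\Gamma(A_{\ell(v)})}(1-x(B))\big)$, invoking the hypothesis $\Pr[A]\le x(A)\prod_{B\in\Gamma(A)}(1-x(B))$. The sum over proper witness trees rooted at $A$ is then controlled by the classical branching-process computation: associate to $A$ a Galton--Watson process whose offspring are produced by independently including each ``neighbour'' of a node's label with the corresponding probability $x(\cdot)$, and match terms to conclude $\sum_{\tau}\prod_{v}\big(x(A_{\ell(v)})\prod_{B\in\Gamma(A_{\ell(v)})}(1-x(B))\big)\le\frac{x(A)}{1-x(A)}$. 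Summing over all $A\in\mathcal{A}$ bounds the expected total number of resamplings by $\sum_{A\in\mathcal{A}}\frac{x(A)}{1-x(A)}$, which finishes the proof.

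I expect the main obstacle to be the coupling step: making the ``resampling table'' model precise, arguing that the shape of $\tau_C(t)$ pins down which table cells must have realized which events, and verifying that the independence guaranteed by $\Gamma(\cdot)$ is exactly what licenses multiplying the per-node probabilities with no overcounting. The injectivity/properness claim and the branching-process bound are comparatively mechanical once the definitions are fixed; alternatively, one may simply defer all of this to \cite{MT10}.
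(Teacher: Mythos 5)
Your proposal is correct and matches the paper's treatment: the paper states this lemma as a known result and simply cites Moser--Tardos \cite{MT10}, which is exactly the route you identify as most economical, and your sketch of the resampling algorithm, witness trees, coupling, and Galton--Watson bound is a faithful outline of that proof. No further work is needed.
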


\begin{LabeledProof}{\Cref{thm:caDD}}
  We are given a graph $G$ with doubling dimension $d$ and a partition
  ${\cal C}$ of clusters with strong diameter at most $\Delta$. The clusters also contains centers $\{v_C\}_{C\in\calC}$ at minimum pairwise distance  $\min_{C,C'\in\calC}d_G(v_C,v_{C'})=\frac{c_\Lambda}{3}\cdot\Delta$.
%
  By the packing property, it follows that
  \begin{align*}
  	\left|\Gamma[\Psi_{C}]\right| & =\left|\left\{ v_{C'}\mid d_{G}(C,C')\le8s\cdot\Lambda\right\} \right|\\
  	& \le\left|\left\{ v_{C'}\mid d_{G}(v_{C},v_{C'})\le8s\cdot\Lambda+2\Delta\right\} \right|\\
  	& =\left(\frac{2\cdot(8s\cdot\Lambda+2\Delta)}{\frac{c_{\Lambda}}{3}\cdot\Delta}\right)^{d}\\
  	& \le\left(\frac{60\cdot c_{s}\cdot\lambda}{c_{\Lambda}}\cdot\ddim^{4}\log^{2}\ddim\right)^{d}~.
  \end{align*}
For every cluster $C\in\calC$, let $x(\Psi_{C})=p=e\cdot2^{-\frac{s}{10}}$.
Then for every
$\Psi_{C}\in{\cal A}$ it holds that
\begin{align}
	x(\Psi_{C})\cdot\Pi_{\Psi_{C'}\in\Gamma(\Psi_{C})}(1-x(\Psi_{C'})) & =p\cdot(1-p)^{|\Gamma(\Psi_{C})|}\nonumber \\
	& \ge p\cdot e^{-2p\cdot|\Gamma(\Psi_{C})|}\nonumber \\
	& \ge e\cdot2^{-\frac{s}{10}}\cdot\exp\left(-2e\cdot2^{-\frac{s}{10}}\cdot\left(\frac{60\cdot c_{s}\cdot\lambda}{c_{\Lambda}}\cdot\ddim^{4}\log^{2}\ddim\right)^{d}\right)\nonumber \\
	& \ge2^{-\frac{s}{10}}\label{eq:cSLambdaChoise}\\
	& \ge\Pr[\Psi_{C}]~,\nonumber 
\end{align}
where the last inequality holds by \Cref{lem:clusterProb}, and inequality (\ref{eq:cSLambdaChoise}) hold for appropriate choice of the constants $\lambda,c_s$ (see \Cref{rem:constantsChoiseDoubling}).
Hence by \Cref{lem:lovasz}, there is an algorithm running in polynomial
time that assigns all the shifts so that after $s$ rounds all the clusters will be assigned. By
\Cref{lem:DistanceToClosestPortal}, each vertex $v$ will be
assigned to a portal at distance (in the induced graph) at most
$d_{G}(P,v)+s\cdot(c_{\top}\cdot d+2)\cdot \Delta\le d_{G}(P,v)+4c_{s}\cdot\ln\frac{4}{c_{\Lambda}}\cdot\ddim^{2}\log
\ddim\cdot \Delta$.  The theorem follows.
\end{LabeledProof}
\begin{remark}\label{rem:constantsChoiseDoubling}
	We did not explicitly state the values of the constants $\lambda$ and $c_s$ during the proof. However, their value came into play only in Equations (\ref{eq:LambdaTildeBound}), (\ref{eq:LambdaTildeBound2}), (\ref{eq:cSLambdaChoise}), and also Equation (\ref{eq:cslambdaGlobal}) from the proof of \Cref{thm:SSPDD}. One can verify that for every large enough constant $\Lambda$, an appropriate constant $c_s$ exist. Indeed,
	\begin{itemize}
		\item In Equation (\ref{eq:LambdaTildeBound}) it is enough that  $c_{s}\cdot\ddim\cdot\log\ddim=s\le\frac{\Lambda}{\Delta\cdot(c_{\top}\cdot d+2)}$
		or $c_{s}\le\frac{\lambda\cdot\ddim^{2}}{c_{\top}\cdot\ddim+2}$.
		\item  In Equation (\ref{eq:LambdaTildeBound2}) it is enough that $(3(s+2)\cdot c_{\top}\cdot d+2)\cdot\Delta\le\Lambda$ or $c_{s}=\frac{s}{\ddim\cdot\log\ddim}\le\frac{1}{\ddim\cdot\log\ddim}\cdot((\frac{\Lambda}{\Delta}-2)\cdot\frac{1}{3\cdot c_{\top}\cdot d}-2)=\frac{1}{\ddim\cdot\log\ddim}\cdot((\lambda\cdot\ddim^{3}\log\ddim-2)\cdot\frac{1}{3\cdot c_{\top}\cdot d}-2)$.
		\item In Equation (\ref{eq:cslambdaGlobal}) it is enough that $c_{s}\le\lambda\cdot\frac{\ddim'}{\alpha}\cdot\frac{1}{8\cdot\ln\frac{4}{c_{\Lambda}}}=\lambda\cdot\frac{\ddim}{\alpha}\cdot\frac{1+\log\frac{8}{c_{\Lambda}}}{8\cdot\ln\frac{4}{c_{\Lambda}}}$, where $\alpha=O(\ddim)$ is the parameter from \Cref{thm:MPXbasedClusteringDoubling} (determining the maximum number of clusters a ball can intersect).
		\item While in Equation (\ref{eq:cSLambdaChoise}) it is enough that $\frac{s}{10d}=\frac{c_{s}\cdot\log d}{10}\ge\log\left(\frac{60\cdot c_{s}\cdot\lambda}{c_{\Lambda}}\cdot\ddim^{4}\log^{2}\ddim\right)$.
	\end{itemize}
\end{remark}

\subsection{Cluster Aggregation in Bounded Pathwidth Graphs}\label{sec:CApathwidth}
This section is devoted to proving the following theorem.
\begin{restatable}{theorem}{CAPW}\label{thm:caPW}
	Every instance of cluster aggregation on a graph of pathwidth $\pw$ has a $8( \pw + 1 )$-distortion solution that can be computed in polynomial time.
\end{restatable}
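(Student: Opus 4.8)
The plan is to exploit the linear structure of a path decomposition to organize the unassigned clusters into a bounded number of "phases", with only $O(1)$ additive distortion per phase, so that after $\pw+1$ phases everything is assigned. First I would fix a path decomposition $X_1,\dots,X_s$ of $G$ of width $\pw$, and reduce to the case where every portal lies in a singleton cluster (as in the tree case, attach a zero-weight pendant portal $p'$ to each $p\in P$). For each unassigned cluster $C$ pick a representative $v_C\in C$ and a shortest path $\pi_C$ from $v_C$ to its nearest portal. The key combinatorial fact I would use is that a path decomposition induces a left-to-right order on vertices via the index of the first (or last) bag containing them, and any connected subgraph — in particular each cluster and each $\pi_C$ — "occupies" a contiguous interval of bags. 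I would assign to each cluster a position (say the leftmost bag index it touches), and in each phase process clusters so that at least one vertex from every bag $X_i$ gets assigned to a portal; since each bag has $\le\pw+1$ vertices, after $\pw+1$ phases every bag — hence every vertex — is exhausted of unassigned clusters, so $f$ is total.

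The heart of the argument, and the step I expect to be the main obstacle, is bounding the distortion incurred in a single phase by $O(1)\cdot\Delta$ while still guaranteeing the "at least one per bag" progress. The idea is: within a phase, look at the unassigned clusters whose MID-path (maximal prefix of $\pi_C$ disjoint from already-assigned clusters) reaches an already-assigned region; merge $C$ into the portal at the far end of that prefix. The difficulty is that two such merges can conflict (overlapping MID-paths, or a cluster lying on another cluster's path). To control this I would, following the overview, group the unassigned clusters by consecutive bags into a sequence $G_1,G_2,\dots$ of groups, chosen so that a cluster in group $G_k$ and its relevant shortest-path prefix only interact with bags belonging to $G_{k-1},G_k,G_{k+1}$; hence clusters in non-adjacent groups are conflict-free and can be merged in parallel. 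Then each phase splits into two subphases — odd-indexed groups, then even-indexed groups — and within a subphase all merges are independent, each costing at most a bounded multiple of $\Delta$: a cluster pays $\le\Delta$ to traverse itself to its representative, $\le\Delta$ for the detour of the representative, plus the $O(1)\cdot\Delta$ already accumulated by the cluster it merges into. Summing the $\pw+1$ phases (two subphases each) gives the claimed $8(\pw+1)\cdot\Delta$ bound.

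Concretely, the steps in order are: (1) set up the path decomposition, the reduction to singleton portals, representatives and MID-paths; (2) prove the containment lemma — every cluster and every shortest path occupies a contiguous set of bags — and use it to define the group sequence and show non-adjacent groups are conflict-free; (3) describe the phase/subphase merging procedure and prove that each phase assigns at least one vertex of every still-nonempty bag, so $\pw+1$ phases suffice for $f$ to be total and for $f^{-1}(p)$ to stay connected; (4) do the per-phase detour accounting by induction on the phase number, exactly mirroring the inductive inequality of the general-graph proof (\Cref{eq:cluster-agg-induction}), showing the additive detour grows by $O(1)\cdot\Delta$ per phase; (5) conclude $\dtr_f(v)\le 8(\pw+1)\Delta$ for all $v$ and note the whole construction is deterministic polynomial time. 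The main obstacle is getting the grouping to simultaneously make subphases conflict-free \emph{and} keep the per-subphase distortion a fixed constant; I expect the containment lemma and a careful definition of "relevant bags of a cluster" (including the bags its MID-path can touch, not just the bags of $C$ itself) to be what makes this work.
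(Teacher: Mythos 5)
Your proposal reproduces the architecture of the paper's proof (exactly $\pw+1$ phases, groups of clusters built from consecutive bags, an odd/even two-subphase schedule, conflict-freeness of non-adjacent groups, and a constant additive detour per phase), but the mechanism you give for the crucial conflict-freeness step does not work, and the correct mechanism is the real content of the proof. You propose to derive conflict-freeness from a ``containment lemma'' saying that each cluster together with its MID-path occupies a contiguous interval of bags, and to define groups so that a cluster's ``relevant bags'' only touch adjacent groups. Contiguity is true (any connected subgraph meets a contiguous set of bags), but it gives no upper bound on \emph{how many} bags a shortest path to a portal spans: a single path $\pi_C$ can stretch across arbitrarily many groups. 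If you enlarge a cluster's ``relevant bags'' to include all bags its path touches, the groups blow up and you lose the guarantee of at most three candidate portals per group, which is what the $6\Delta$ bound for the even subphase rests on; if you don't, non-adjacent groups are simply not conflict-free.

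The paper resolves this differently: groups are defined via \emph{preferred portals}. Each partial bag $Y_i$ has a set of preferred portals, and a group is the set of clusters meeting the \emph{longest} run $\widehat{L}_i$ of consecutive partial bags all sharing a common preferred portal $\widehat{p}_i$; groups are then carved out greedily left to right. Conflict-freeness of groups $i$ and $j$ with $j\ge i+2$ (Lemma~\ref{lemma:paths-same-cluster}) is proved by contradiction with the \emph{maximality} of these runs: if two paths from non-adjacent groups met a common cluster, that cluster would carry both preferred portals across the intervening bags and some list would not have been longest. Your sketch also glosses over two further points that require work: (i) the even subphase does \emph{not} assign every cluster of its group, so the ``at least one vertex per bag per phase'' progress claim needs the separate argument of Lemma~\ref{lemma:group-assign}; and (ii) the constant $8$ comes from $2\Delta$ for the odd subphase plus $6\Delta$ for the even one, where the $6\Delta$ arises from a three-step cascade (a cluster may chain through an assignment made in Step 1 or Step 2), which your accounting does not capture. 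Without the preferred-portal grouping and the maximality argument, the proof as proposed does not go through.
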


\subsubsection{Overview of Cluster Aggregation Algorithm}

Consider a graph $G = (V, E, w)$ with pathwidth $\pw \geq 1$
with a path decomposition $T$
that has bags
$X_1, \ldots, X_s$ and width $\pw$.
Thus, 
$X_i \subseteq V$ and $|X_i| \leq \pw+1$, for all $1 \leq  i \leq s$.
Assume an orientation of the bags in $T$ from left to right such that
$X_1$ is the leftmost bag and $X_s$ is the rightmost bag of $T$,
while $X_i$ is adjacent to $X_{i-1}$ on the left and $X_{i+1}$ on the right, 
where $1 < i < s$.
Let $\mcC$ be a partition of $V$ 
with strong diameter at most $\Delta$.
Let $P \subseteq V$ be a set of portal nodes.
We present an aggregation algorithm for graph $G$ with path decompositions $T$ that 
produces a set $\mcC'$ of aggregated clusters
through the aggregation mapping $f$.

\begin{figure}[t]
    \centering
        \centering
        \includegraphics[width=0.5\columnwidth]{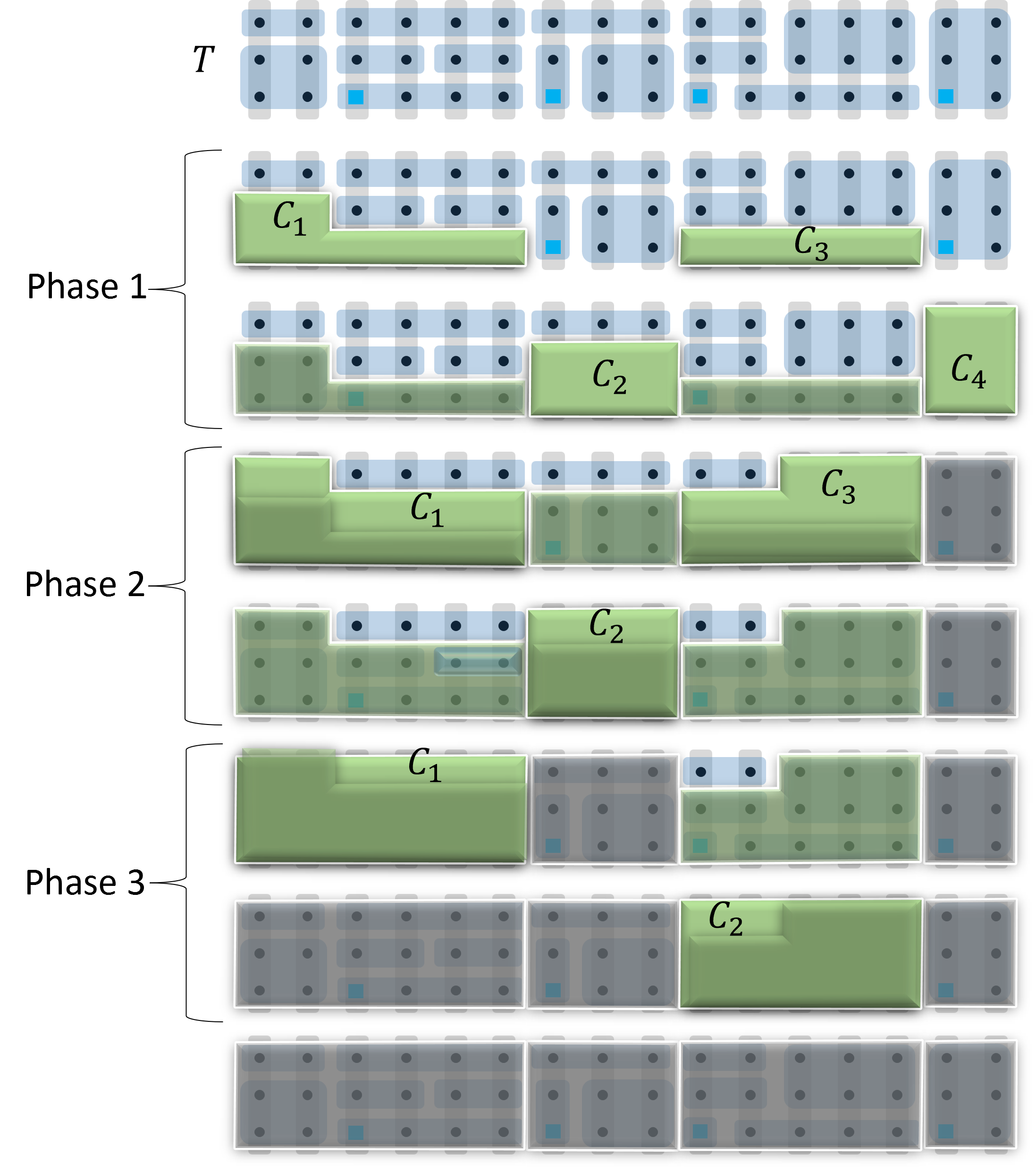}
    \caption{
    An example execution of the pathwidth aggregation algorithm on a path $T$ with width $\pw =2$.
    Each bag of $T$ is a vertical set of 3 nodes. The clusters and four portals are highlighted in the top part of the figure. The algorithm executes in $\pw + 1 = 3$ phases.
    Phase $i$ creates groups of clusters $\mcC_{1},\mcC_{2}, \ldots$.
    Phase $i$ consists of two subphases so that in the first (resp.\ second) subphase 
    the groups of the odd (resp.\ even) subsequence get assigned to a portal. The groups are shown with green-shaded areas. At the bottom is the final clustering where gray-shaded areas indicate finalized clusters that cannot grow further. 
    }\label{fig:pathwidth}
\end{figure}

The algorithm works in $\pw+1$ phases.
At each phase, each bag $X_i$ that contains nodes unassigned to a portal
has at least one of its member nodes assigned to a portal.
Therefore, by the end of the last phase $\pw+1$, 
each bag $X_i$ has all of its member nodes assigned to portals.
Whenever a node $u \in X_i$ is assigned to a portal,
all the nodes in the cluster $C_u \in \mcC$ that contains $u$
are also assigned to the same portal.
Hence, by the end of the last phase $\pw+1$, 
each cluster $C \in \mcC$ is assigned to a portal $f(C) \in P$.

The total detour in our algorithm is $O(\pw \cdot \Delta)$.
A simplified explanation is as follows.
At each phase,
an unassigned cluster 
could merge with a cluster that was previously assigned to a portal at an earlier phase.
Such merging incurs $O(\Delta)$ additional detour to the newly merged cluster.
Hence, in the worst case, the total detour is the product  
of the number of phases $\pw + 1$ times the detour $O(\Delta)$ contributed by a phase, giving a total detour of $O(\pw \cdot \Delta)$.

A more detailed explanation is as follows.
At each phase,
the algorithm organizes some of the unassigned clusters
into groups. Clusters 
in the same group will aggregate toward the same portal.
The groups are formed in a way that allows their clusters to merge with $O(\Delta)$ detour at each phase.
Moreover, each bag of $T$ with unassigned nodes will contribute at least one of its unassigned nodes in a group,
which guarantees the necessary progress of node assignments for the bags at each phase.

In a phase,
let the cluster groups be
$\mcC_1, \ldots, \mcC_{\zeta}$ 
(see Figure \ref{fig:pathwidth}).
Clusters in the same group $\mcC_j$ 
span consecutive bags of $T$ with unassigned nodes.
The groups are ordered from left to right, so that 
$\mcC_1$ contains the leftmost bags of $T$ with unassigned clusters,
while $\mcC_\zeta$ has the rightmost bags.

The groups are processed in two subphases,
where the first subphase is dedicated to the odd subsequence groups,
while the second subphase is dedicated to the even subsequence groups.
Odd subsequence
groups $\mcC_1, \mcC_3, \ldots$,
are conflict-free in the sense that paths to portals from one group
do not use the same cluster as the paths to portals from another group.
This allows the odd subsequence groups to be processed in one subphase,
adding at most $2\Delta$ detour to the merged clusters.

In a similar way, 
the even subsequence groups $\mcC_2, \mcC_4, \ldots$
are processed in a second subphase,
adding at most $6 \Delta$ detour.
The reason for the factor $3$ (compared to  first subphase) 
is that there could be conflicts 
between adjacent even and odd subsequence groups. 
In particular, group $\mcC_j$ of the even subsequence  
may conflict with the odd subsequence 
group $\mcC_{j-1}$ on the left 
and group $\mcC_{j+1}$ on the right. 
The left and right groups may each define an additional possible assignment portal.
Hence, there are three candidate assignment portals 
for group $\mcC_j$.
Merging of $\mcC_j$ can be accomplished in three main steps,
one step for each portal, where each step contributes at most $2\Delta$ detour.
Thus, the second subphase contributes at most $6\Delta$ detour in total.

In fact, the second subphase may not merge all the clusters in a group $\mcC_j$ of the even subsequence.
But for each cluster $C \in \mcC_j$ that is not merged,
the algorithm guarantees that
at least one cluster (and therefore at least one node) from each bag spanned by $C$
is assigned to a portal.
This guarantees the required assignment progress for the phase.

Therefore, each phase adds a total $2\Delta + 6\Delta = 8\Delta$ detour.
Consequently, considering all the $\pw + 1$ phases, 
the overall detour is at most $8 (\pw + 1) \Delta$.
Hence, the distortion of the cluster aggregation algorithm is $8(\pw + 1)$.

\subsubsection{Detailed Pathwidth Algorithm}

Assume that every vertex $v\in V$ has a
unique closest portal $p_v \in P$ 
(if $v$ has multiple closest portals, then pick one arbitrarily).
Let $\pi_v$ be a shortest path from $v$ to $p_v$.
We also assume that each node $x \in \pi_v$ has the same portal $p_x = p_v$,
where the shortest path $\pi_x$ is the {\em suffix} (the remaining part) of $\pi_v$ that starts at $x$ and ends at $v$.
For any set of nodes $Y \subseteq V$,
let $\mcC(Y) = \{C \in \mcC \mid C \cap Y \neq \emptyset \}$ 
denote the set of clusters that intersect $Y$.

We say that a vertex $v\in V$ is assigned if $f(v)$ is already defined.
In our cluster merging algorithm,
for a cluster $C \in \mcC$, 
when a node $v \in C$ is assigned to a portal $p$ 
then 
the whole cluster is assigned to $p$.
In this case,
we write
$f(C) = p$ to denote that each $v \in C$ is assigned to the same portal $p$ and 
we say that $C$ is assigned to $p$. 
For a portal $p$ we denote by $f^{-1}(p)$
the set of clusters that have been assigned to $p$.

An unassigned node $v$ has a {\em preferred} portal.
By default, the preferred portal of $v$ is $p_v$.
However, during the execution of the algorithm,
nodes in path $\pi_v$ may be assigned and the
preferred portal of $v$ may change.
Let $x$ be the first node assigned on path $\pi_v$
starting from $v$.
Then, the preferred portal of $v$ is $f(x)$, the portal to which $x$ was assigned.
However, if $\pi_v$ has no assigned nodes,
then the preferred portal of $v$ remains $p_v$.

We can extend the notion of preferred portals to a cluster $C \in \mcC$.
The set of preferred portals of $C$, denoted $P(C)$, includes the 
preferred portals of all the nodes in $C$.

\paragraph*{Cluster Groups.}
Consider phase $\phi$.
For a bag $X_i$ the {\em partial bag} $Y \subseteq X_i$
contains the nodes
that are unassigned at the beginning of phase $\phi$. 
Denote by $Y_1, Y_2, \ldots$ the partial bags of $T$ which are non-empty at the beginning of phase $\phi$ and are ordered from left to right in $T$.

Let $P(Y_i) = \bigcup_{C \in \mcC(Y_i)} P(C)$ 
denote the preferred portals of all the clusters in $\mcC(Y_i)$.
For each portal $p \in P(Y_i)$ 
we define the respective list of partial bags 
$L_{i}(p) = Y_i, Y_{i+1}, \ldots$,
to be the longest sequence of consecutive non-empty partial bags 
that starts at $Y_{i}$ such that each partial bag $Y_j$
in the list has $p$ as one of its preferred portals, 
that is, $p \in P(Y_j)$.

Let ${\widehat p}_{i} \in P(Y_i)$
be the portal that defines the longest list for $Y_{i}$,
namely, 
$|L_{i}({\widehat p}_{i})| = \max_{p \in P(Y_{i})} |L_{i}(p)|$
(if there is more than one portal in $P(Y_i)$ that defines the list of maximum length for $Y_i$,
then pick one of these portals arbitrarily to be ${\widehat p}_{i}$).
Denote this longest list for $Y_i$ as ${\widehat L}_{i} = L_{i}({\widehat p}_{i})$.

Let $\mcC({\widehat L}_{i}) \subseteq \mcC$ 
denote the {\em group} (set of) of clusters 
that intersect with the partial bags of ${\widehat L}_{i}$,
where each $C \in \mcC({\widehat L}_i)$ has ${\widehat p}_{i}$ 
as a preferred portal.
We also say that ${\widehat p}_{i}$ is the preferred portal of group $\mcC({\widehat L}_{i})$.
We recursively define a sequence of cluster groups 
$\mcC_1, \mcC_2, \ldots$,
as follows:
\begin{itemize}
\item The first group is $\mcC_1 = \mcC({\widehat L}_1)$.
\item Suppose that we have defined group $\mcC_k = \mcC({\widehat L}_{i})$,
based on a list ${\widehat L}_{i}$, where $k,i \geq 1$.
\item
We define group $\mcC_{k+1} = \mcC({\widehat L}_{j})$
such that the first (leftmost) partial bag $Y_{j} \in {\widehat L}_{j}$ 
immediately follows the last (rightmost) partial bag $Y_\ell \in {\widehat L}_{i}$,
namely, $j = \ell + 1$, where $\ell \geq i$.
\end{itemize}
Let $\mcC_{\zeta}$ be the last group defined in the above sequence.
Note that each partial bag $Y_i$ is included in the list ${\widehat L}_{j}$ of exactly one group in $\mcC_1, \ldots, \mcC_{\zeta}$.

\paragraph*{Subphases.}

Consider a phase $\phi$/
Let $S = \mcC_1, \mcC_2, \ldots, \mcC_{\zeta}$ 
be the groups of phase $\phi$  
with respective preferred portals 
$p_1, p_2, \ldots, p_{\zeta}$.

Consider a group $\mcC_i \in S$.
at the beginning of phase $\phi$, 
each cluster $C \in \mcC_i$ has $p_i$
as a preferred portal.
We define a path $\pi_C$ 
toward portal $p_i$
which consists only of unassigned nodes as follows.
Pick arbitrarily a node $v \in C$ with preferred portal $p_i$.
If $\pi_v$ does not have any assigned nodes,
then $\pi_C = \pi_v$.
If $\pi_v$ has assigned nodes, 
then let $x$ be the first assigned node on $\pi_v$ starting from $v$.
Clearly, $f(x) = p_i$.
Let $\pi_C$ be the prefix (earlier part) of $\pi_v$ from $v$ up to
the node just before $x$.

Let $S_{\text{odd}} = \mcC_1, \mcC_3,\ldots$ 
and $S_{\text{even}} = \mcC_2, \mcC_4,\ldots$ 
be the respective odd and even subsequences of $S$.
Groups in $S$ are processed
in two subphases of $\phi$, where the first subphase handles the
groups in $S_{\text{odd}}$,
while the second subphase handles the groups in $S_{\text{even}}$.

\begin{itemize}
\item{\em First subphase of $\phi$ (groups $S_{\text{odd}}$):}

Consider a group $\mcC_i \in S_{\text{odd}}$ (with preferred portal $p_i$).
For each $C \in \mcC_i$,
all unassigned clusters of $\mcC$ that intersect ${\pi}_C$
are assigned to the portal $p_i$.

\item{\em Second subphase of $\phi$ (groups $S_{\text{even}}$):}

Consider a group $\mcC_i \in S_{\text{even}}$ (with preferred portal $p_i$).
For each cluster $C \in \mcC_i$,
the respective path $\pi_C$ 
may now contain nodes assigned from the first subphase.

We define a prefix path $\pi'_C$ of $\pi_C$ consisting only of unassigned nodes
as follows.
If $\pi_C$ does not have any assigned nodes then $\pi'_C = \pi_C$. 
If $\pi_C$ has assigned nodes (from the first subphase), 
let $x_C \in \pi_C$ be its first assigned node with respect to the start node $v$ of $\pi_C$.
In this case, $\pi'_C$ is the prefix of $\pi_C$ from $v$ to the node just before $x_C$.

From Lemma~\ref{lemma:left-right-portal},
$f(x_C) = p_j$ where $j \in \{i-1, i+1\}$.
Thus, we can write $\mcC_i = \mcC_i^1 \sqcup \mcC_i^2 \sqcup \mcC_i^3$,
such that for each cluster $C \in \mcC_i$ there are three possibilities: 
(i) $C \in \mcC_i^1$ if $\pi_C$ does not have any assigned nodes, that is, $\pi_C = \pi'_C$;
(ii) $C \in \mcC_i^2$ if $\pi_C$ has some assigned node and $f(x_C) = p_{i-1}$;
(iii) $C \in \mcC_i^3$ if $\pi_C$ has some assigned node and $f(x_C) = p_{i+1}$.
We perform the merging for the clusters of $\mcC_i$ in three respective steps.
Let $\text{Assigned}(\mcC_i^k)$ denote the set of clusters that are assigned in step $k$.

\begin{itemize}
    \item {\em Step 1 (clusters $\mcC_i^1$):} 
    For each cluster $C \in \mcC_i^1$
    all the unassigned clusters of $\mcC$ that intersect ${\pi}'_C$
    are assigned to portal $p_i$.
    
    \item {\em Step 2 (clusters $\mcC_i^2$):}    
    Let ${\widehat L}_{l}$ be the respective list that defines $\mcC_i$.
    Let $Y_R \in {\widehat L}_{l}$ be the rightmost partial bag of $T$
    that intersects with some cluster $C \in \mcC_i^2$.

    We can write $\pi'_C$ as a concatenation of subpaths of maximal length $\pi''_1, \pi''_2, \ldots$ such that odd subpaths $\pi''_1, \pi''_3, \ldots$ do not intersect any clusters of $\text{Assigned}(\mcC_i^1)$,
    while even subpaths $\pi''_2, \pi''_4, \ldots$ intersect only with clusters of $\text{Assigned}(\mcC_i^1)$.
    If $C \in \text{Assigned}(\mcC_i^1)$, then we assume for simplicity that $\pi''_1$ is empty, so that $\pi''_2$ starts with nodes of $C$;
    otherwise (that is, $C \notin \text{Assigned}(\mcC_i^1)$), $\pi''_1$ starts with nodes of $C$.
    Note that odd subpaths have all their nodes unassigned.

    The clusters overlapping the even subpaths have already been assigned in Step 1.
    Therefore, consider the odd subpaths.
    Take a subpath $\pi''_z$, where $z$ is odd (hence, none of the nodes in $\pi''_z$ are assigned).
    Let $C'$ be the assigned cluster that immediately follows $\pi''_z$.
    Each cluster $C''$ intersecting $\pi''_z$ is assigned to portal $f(C'') = f(C')$.
    We treat similarly the other odd subpaths.
    
    \item {\em Step 3 (clusters $\mcC_i^3$):}
    This is similar to Step 2. 
    Symmetrically, instead of $Y_R$, we use $Y_L \in {\widehat L}_{l}$ to be the leftmost partial bag of $T$
    that intersects with a cluster $C \in \mcC_i^3$.
    Similarly to Step 2, we write $\pi'_C$ as a concatenation of subpaths.
    The odd subpaths do not intersect with any assigned clusters in $\text{Assigned}(\mcC_i^1) \cup \text{Assigned}(\mcC_i^2)$,
    while the even subpaths intersect only with clusters in $\text{Assigned}(\mcC_i^1) \cup \text{Assigned}(\mcC_i^2)$.
    The assignment of clusters intersecting odd subpaths is performed as in Step 2.
\end{itemize}
Hence, the newly assigned clusters are $\text{Assigned}(\mcC_i) = \bigcup_{j=1}^3\text{Assigned}(\mcC_i^j)$.
Note that the second subphase may not assign all clusters of group $\mcC_i \in S_{\text{even}}$.
That is, there may exist a cluster $C' \in \mcC_i^2 \cup \mcC_i^3$
that is not assigned (i.e. $C' \notin \text{Assigned}(\mcC_i)$),
because the chosen cluster $C$ based on which the aggregation is performed in Step 2 or 3 may have a respective path $\pi'_C$ 
that does not cross $C'$ (and also $C'$ was not assigned in Step 1).
However, this is not an issue since, as we show in the analysis,
each partial bag of ${\widehat L}_{l}$ (which defines $\mcC_i$)
will have a node assigned in the current phase $\phi$ (Lemma \ref{lemma:group-assign}).
\end{itemize}

\subsubsection{Analysis}

Consider a phase $\phi$.
We start with results that prove the independence
between groups of the odd (resp.\ even) subsequence.
Thus, when analyzing the detour and the assignment process
we can focus on one group at a time.

\begin{lemma}
\label{lemma:disjoint-groups}
For any $i \neq j$, $\mcC_i \cap \mcC_j = \emptyset$.
\end{lemma}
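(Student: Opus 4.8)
The plan is to argue by contradiction: suppose some cluster $C$ lies in two distinct groups $\mcC_k, \mcC_{k'}$ of phase $\phi$ with $k < k'$, and derive an impossibility from the maximality built into the lists $\widehat{L}_{j}$. First I would record two structural facts. \textbf{(a)} Every cluster of $\mcC$ has strong diameter at most $\Delta$, so $G[C]$ is connected; by the standard property of path decompositions (a connected subgraph meets a contiguous subpath of bags), the bags of $T$ meeting $C$ form an interval $X_{l_C}, X_{l_C+1}, \dots, X_{r_C}$. \textbf{(b)} Any cluster placed into a group of phase $\phi$ is unassigned at the start of phase $\phi$ (partial bags contain only unassigned nodes, and a cluster with an unassigned node is unassigned), so every node of $C$ is unassigned at the start of phase $\phi$; consequently, whenever $C$ meets a bag $X_m$ it meets the partial bag $Y_m \subseteq X_m$, which is therefore nonempty. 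Combining \textbf{(a)} and \textbf{(b)}: if $C$ meets the (nonempty) partial bags occupying positions $x$ and $y$ with $x \le y$ in the reindexed sequence $Y_1, Y_2, \dots$, then $C$ meets the partial bag at every position in $[x,y]$.

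Next I would unwind the recursive construction of the groups. By the remark preceding the subphase description, the lists $\widehat{L}_{j_1}, \widehat{L}_{j_2}, \dots$ underlying $\mcC_1, \mcC_2, \dots$ partition $Y_1, Y_2, \dots$ into consecutive blocks; write $\widehat{L}_{j_k} = (Y_{j_k}, \dots, Y_{j_{k+1}-1})$ with preferred portal $\widehat{p}_{j_k}$. Since $C \in \mcC_k = \mcC(\widehat{L}_{j_k})$, the cluster $C$ meets some $Y_x$ with $j_k \le x \le j_{k+1}-1$ and $\widehat{p}_{j_k} \in P(C)$. Since $C \in \mcC_{k'}$ with $k' > k$, the cluster $C$ also meets some $Y_y$ with $y \ge j_{k'} \ge j_{k+1}$, so $x < j_{k+1} \le y$. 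By the combined fact above, $C$ meets $Y_{j_{k+1}}$, the first partial bag of the block immediately following $\widehat{L}_{j_k}$.

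Finally I would close the argument. From $C \in \mcC(Y_{j_{k+1}})$ we get $\widehat{p}_{j_k} \in P(C) \subseteq P(Y_{j_{k+1}})$. On the other hand, $\widehat{L}_{j_k} = L_{j_k}(\widehat{p}_{j_k})$ is, by definition, the \emph{longest} run of consecutive nonempty partial bags starting at $Y_{j_k}$ all of which have $\widehat{p}_{j_k}$ as a preferred portal, and it ends at $Y_{j_{k+1}-1}$; since $Y_{j_{k+1}}$ is the next nonempty partial bag, maximality forces $\widehat{p}_{j_k} \notin P(Y_{j_{k+1}})$---a contradiction. The degenerate case in which $Y_{j_{k+1}-1}$ is the last nonempty partial bag cannot arise, since then $\mcC_k$ would be the final group $\mcC_{\zeta}$ and no index $k' > k$ would exist. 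Hence no cluster belongs to two groups, i.e.\ $\mcC_i \cap \mcC_j = \emptyset$ whenever $i \ne j$.

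The only genuinely load-bearing step is fact \textbf{(a)}---the passage from ``$C$ is connected'' to ``$C$ meets a contiguous block of bags of $T$''---reinforced by the observation in \textbf{(b)} that unassignedness upgrades ``$C$ meets $X_m$'' to ``$C$ meets $Y_m$''. Everything after that is bookkeeping against the recursive block structure and the maximality clause in the definition of $\widehat{L}_{j_k}$, where I do not anticipate hidden difficulties.
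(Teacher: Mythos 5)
Your proof is correct and follows essentially the same route as the paper's: assume a common cluster, use its connectivity and unassignedness to conclude it spans a contiguous run of nonempty partial bags reaching past the end of the earlier group's list, and contradict the maximality of that list. The only difference is that you spell out the path-decomposition interval argument (facts \textbf{(a)} and \textbf{(b)}) that the paper asserts implicitly in the sentence ``Thus, $C$ overlaps a sequence of partial bags\dots''.
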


\begin{proof}
    Without loss of generality suppose that $i < j$.
    Let ${\widehat L}_{i'}$ and ${\widehat L}_{j'}$
    be the respective lists that defined groups $\mcC_i$ and $\mcC_j$.
    Suppose that there is a cluster $C \in \mcC_i \cap \mcC_j$.
    Clearly, cluster $C$ is unassigned.
    Thus, $C$ overlaps a sequence of partial bags including the last partial bag of ${\widehat L}_{i'}$ up to at least the first partial bag of ${\widehat L}_{j'}$.
    Since $p_i = {\widehat p}_{i'}$ is a preferred portal by $C$ (due to $C \in \mcC_i$),
    the partial bags spanned by $C$ qualify to be members of ${\widehat L}_{i'}$.
    Thus, ${\widehat L}_{i'}$ is not the longest list with respect to portal ${\widehat p}_{i'}$.
    A contradiction.
\end{proof}

\begin{lemma} 
\label{lemma:path-preferred-portal}
For any cluster $C \in \mcC_{i}$,
the clusters that path $\pi_C$ crosses
must have $p_i$ as a preferred portal. 
\end{lemma}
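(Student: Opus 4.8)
The plan is to reduce the claim to a statement about individual nodes: I will show that \emph{every} node lying on $\pi_C$ has $p_i$ as its preferred portal. Granting this, if a cluster $C'$ is crossed by $\pi_C$, pick $u \in C' \cap \pi_C$; since $P(C')$ is by definition the union of the preferred portals of all nodes of $C'$ and $u$'s preferred portal is $p_i$, we get $p_i \in P(C')$, which is exactly the conclusion.

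To prove the node-level statement, I would first recall how $\pi_C$ is built: we fix a representative $v \in C$ whose preferred portal is $p_i$, take its shortest path $\pi_v$ to the closest portal $p_v$, and let $\pi_C$ be the prefix of $\pi_v$ consisting of unassigned nodes (either all of $\pi_v$, or the part before the first assigned node $x$ on $\pi_v$, in which case $f(x)=p_i$ by the choice of $v$). I would then use the stated convention that a node $u$ lying on $\pi_v$ has $p_u = p_v$, with $\pi_u$ taken to be the sub-path of $\pi_v$ from $u$ onward. The argument splits into two cases according to whether $\pi_v$ contains an assigned node at the start of phase $\phi$. If it does not, then $v$'s preferred portal equals its closest portal, so $p_v = p_i$, and for each $u$ on $\pi_C=\pi_v$ the suffix $\pi_u$ also contains no assigned node, so $u$'s preferred portal is $p_u = p_v = p_i$. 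If $\pi_v$ does contain an assigned node, let $x$ be the first such node starting from $v$; then $v$'s preferred portal is $f(x)$, hence $f(x)=p_i$, and $\pi_C$ is precisely the unassigned prefix of $\pi_v$ ending just before $x$. For any $u$ on $\pi_C$, every node of $\pi_v$ strictly between $u$ and $x$ also lies on $\pi_C$ and is therefore unassigned, so $x$ remains the first assigned node encountered along $\pi_u$; consequently $u$'s preferred portal is $f(x)=p_i$. In both cases every node of $\pi_C$ has preferred portal $p_i$, which is what the node-level statement asserts.

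The one place I would be most careful is the bookkeeping around \emph{preferred portals}: these are defined relative to the set of currently assigned nodes, which grows during phase $\phi$, so I need to fix the reference point (the beginning of phase $\phi$, where $\pi_C$ is defined) and to invoke the convention that nodes lying on a shortest path inherit its closest portal and the corresponding suffix path. Once that is pinned down, the case analysis above is routine and the lemma follows directly; no appeal to the group structure (disjointness of groups, the left/right portal lemma) is needed for this statement.
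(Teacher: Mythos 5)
Your proof is correct and follows essentially the same route as the paper's: both reduce to showing every node of $\pi_C$ has preferred portal $p_i$ via the same two-case analysis on whether $\pi_v$ contains an assigned node, and then pass from nodes to clusters via the definition of $P(C')$. The only cosmetic difference is that the paper first invokes Lemma~\ref{lemma:disjoint-groups} to note that $\pi_C$ is unambiguously defined, whereas you work directly from the choice of the representative $v$ with preferred portal $p_i$; your write-up is in fact slightly more explicit about why each intermediate node $u$ inherits $f(x)=p_i$ along its suffix $\pi_u$.
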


\begin{proof}
    From Lemma \ref{lemma:disjoint-groups}, $C$ belongs only to $\mcC_{i}$,
    which implies that there is a unique path $\pi_C$ for $C$.
    Let $v \in C$ be the origin node of path $\pi_{C}$.
    We have that $\pi_{C}$ is a prefix of path $\pi_v$ toward portal $p_v$.
    All nodes along path $\pi_v$
    also have the same closest portal $p_v$.
    If $\pi_{C} = \pi_v$, then it must be $p_i = p_v$, and all nodes along
    $\pi_{C}$
    also have trivially preferred portal $p_i = p_v$.
    If $\pi_{C} \neq \pi_v$, then the first assigned node $x$ along $\pi_v$
    defines the preferred portal $p_i = f(x)$, which 
    makes all the nodes along $\pi_{C}$ to have $p_i = f(x)$ as their preferred portal as well.
    Therefore, each cluster $C'$ that intersects $\pi_C$ must also have $p_i$
    as a preferred portal.
\end{proof}

\begin{lemma}
\label{lemma:paths-same-cluster}
    For any two clusters $C_1 \in \mcC_{i}$ and $C_2 \in \mcC_{j}$,
    where $j \geq i + 2$,
    the respective paths $\pi_{C_1}$ and $\pi_{C_2}$
    do not go through any of the same clusters of $\mcC$.
\end{lemma}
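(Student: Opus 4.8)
The plan is to confine each of the two paths, together with every $\mcC$-cluster it crosses, to a bounded block of bags of $T$ that is ``attached'' to its group, and then to argue that these two blocks are disjoint whenever $j\ge i+2$.

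First I would record a standard consequence of the path-decomposition axioms: for any path $Q$ in $G$, or any connected cluster $D$, the set of bags of $T$ meeting $Q$ (resp.\ $D$) is a contiguous subpath of $T$---call it $\mathrm{span}(Q)$ (resp.\ $\mathrm{span}(D)$)---since the bags containing a fixed vertex form a subpath and consecutive vertices of $Q$ (or any two vertices of $D$, joined by a path inside $D$) share a bag. Crucially, if $Q$ or $D$ uses only vertices unassigned at the start of the current phase $\phi$, then every bag in its span is a non-empty partial bag, so the span is a contiguous, non-empty block of the re-indexed partial-bag sequence $Y_1,Y_2,\dots$ on which the groups are defined. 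For group $\mcC_k$, let $\widehat L_{m_k}$ be the list defining it, occupying re-indices $[m_k,M_k]$ (so $m_{k+1}=M_k+1$), with preferred portal $p_k=\widehat p_{m_k}$; since $j\ge i+2$, the groups $\mcC_{i+1},\dots,\mcC_{j-1}$ all exist.

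Next I would localize group $i$. By (the proof of) \Cref{lemma:path-preferred-portal}, every node on $\pi_{C_1}$---and hence $C_1$ itself and every $\mcC$-cluster crossed by $\pi_{C_1}$---has $p_i$ as a preferred portal, so every partial bag met by any of these objects lies in $B_i:=\{\,b:p_i\in P(Y_b)\,\}$. Let $R_i$ be the maximal contiguous block of $B_i$ containing $[m_i,M_i]$. Then $\mathrm{span}(\pi_{C_1})$, $\mathrm{span}(C_1)$, and $\mathrm{span}(D)$ for every $\mcC$-cluster $D$ crossed by $\pi_{C_1}$ are all contained in $R_i$: each is a contiguous sub-block of $B_i$; $\mathrm{span}(C_1)$ meets $[m_i,M_i]$ (because $C_1\in\mcC(\widehat L_{m_i})$) and hence lies in $R_i$; $\mathrm{span}(\pi_{C_1})$ overlaps $\mathrm{span}(C_1)$ (they share the bags containing the start vertex of $\pi_{C_1}$, which lies in $C_1$) and hence also lies in $R_i$; and each $\mathrm{span}(D)$ meets $\mathrm{span}(\pi_{C_1})\subseteq R_i$, so lies in $R_i$ too. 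Moreover $\max R_i=M_i$: were $Y_{M_i+1}=Y_{m_{i+1}}$ to have $p_i$ as a preferred portal, the list $\widehat L_{m_i}=L_{m_i}(p_i)$ could be extended by one bag, contradicting its maximality.

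The crux is a purely combinatorial claim about the greedily chosen lists: the left endpoint $\ell_j$ of $R_j$ satisfies $\ell_j>M_i$, so $R_i\cap R_j=\emptyset$. Suppose instead $\ell_j\le m_{j-1}$. Then $Y_{m_{j-1}},\dots,Y_{M_j}$ all have $p_j$ as a preferred portal (in particular $p_j\in P(Y_{m_{j-1}})$), so the list $L_{m_{j-1}}(p_j)$ has length at least $M_j-m_{j-1}+1\ge m_j-m_{j-1}+1=|\widehat L_{m_{j-1}}|+1$, contradicting that $\widehat p_{m_{j-1}}=p_{j-1}$ was chosen to maximize the list length over portals of $Y_{m_{j-1}}$. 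Hence $\ell_j\ge m_{j-1}+1\ge m_{i+1}+1=M_i+2>M_i=\max R_i$, so $R_i$ and $R_j$ are disjoint. Localizing $\pi_{C_2}$ by $R_j$ in exactly the same way, any $\mcC$-cluster crossed by both $\pi_{C_1}$ and $\pi_{C_2}$ would have its non-empty span inside $R_i\cap R_j=\emptyset$, which is impossible. This proves the lemma. I expect this last combinatorial step to be the main obstacle---the insight being that the only leverage one needs is the maximality of the lists $\widehat L$, converting ``$R_j$ reaches back into $\mcC_{j-1}$'' into a strictly longer admissible list at $Y_{m_{j-1}}$; the span bookkeeping of the earlier steps is routine, but one must keep careful track that every object involved consists only of currently-unassigned nodes, so that it genuinely lives in the partial-bag sequence.
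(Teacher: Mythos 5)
Your proof is correct and follows essentially the same route as the paper's: both rely on Lemma~\ref{lemma:path-preferred-portal} to propagate the preferred portal $p_i$ (resp.\ $p_j$) to every cluster met by $\pi_{C_1}$ (resp.\ $\pi_{C_2}$), use contiguity of spans in the path decomposition, and derive a contradiction from the maximality of the greedy lists $\widehat L$. The only difference is packaging: you localize each group's reach into a contiguous block and show the two blocks are disjoint (bounding the right end of $R_i$ by maximality of $\widehat L_{m_i}$ and the left end of $R_j$ by the greedy portal choice at $Y_{m_{j-1}}$), whereas the paper runs the equivalent two-case analysis on the rightmost bag $Y_\xi$ touched by $\mcC(\pi_{C_1})$ and compares against the list defining $\mcC_{i+1}$.
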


\begin{proof}
    Suppose that $\pi_{C_1}$ and $\pi_{C_2}$
    go through the same cluster $C' \in \mcC$.
    Let $\mcC(\pi_{C_1})$ (resp. $\mcC(\pi_{C_2})$)
    be the set of clusters that intersect $\pi_{C_1}$ (resp. $\pi_{C_2}$).
    Clearly, $C' \in \mcC(\pi_{C_1}) \cap \mcC(\pi_{C_2})$.
    From Lemma~\ref{lemma:path-preferred-portal}, since $C_1 \in \mcC_i$, 
    all clusters in $\mcC(\pi_{C_1})$ have $p_i$ as a preferred portal.
    Similarly, all clusters in $\mcC(\pi_{C_2})$ have $p_j$ as a preferred portal.

    Let ${\widehat L}_{i'}$ and ${\widehat L}_{j'}$
    be the respective lists that defined groups $\mcC_i$ and $\mcC_j$
    (hence, the respective preferred portals 
    are $p_i = {\widehat p}_{i'}$ and $p_j = {\widehat p}_{j'}$).
    Let $Y_\ell$ be the rightmost partial bag of ${\widehat L}_{i'}$.
    Let $Y_{\xi}$ be the rightmost partial bag that 
    intersects a cluster of $\mcC(\pi_{C_1})$.
    There are two possible cases:
    \begin{itemize}
        \item $\xi > \ell$:
        All partial bags of ${\widehat L}_{i'} = Y_{i'}, \ldots, Y_{\ell}$ have $p_i = {\widehat p}_{i'}$ as their preferred portal by definition.
        Since $v_1 \in C_1$ and $C_1 \in \mcC_i$, the origin $v_1$ of path $\pi_{C_1}$ must be in a bag $Y_l$, where $l \leq \ell$
        (otherwise the list ${\widehat L}_{i'}$ would be longer).
        Hence, all partial bags 
        $Y_{\ell + 1}, \ldots, Y_{\xi}$ 
        must have ${\widehat p}_{i'}$ as a preferred portal,
        since each such bag intersects some cluster of $\mcC(\pi_{C_1})$
        which, as shown above, has $p_i = {\widehat p}_{i'}$ as a preferred portal.
        Thus, all partial bags $Y_{i'}, \ldots, Y_{\xi}$
        have ${\widehat p}_{i'}$ as preferred portal.
        Since $\xi > \ell$, the list ${\widehat L}_{i'} = L_{i'}({\widehat p}_{i'})$ is not the longest with respect to portal ${\widehat p}_{i'}$.
        A contradiction.

        \item $\xi \leq \ell$:
        In this case, cluster $C' \in \mcC(\pi_{C_1})$ must intersect some partial bag $Y_l$, where $l \leq \ell$.
        The origin $v_2$ of path $\pi_{C_2}$ is in a cluster $C'' \in \mcC_{j}$ that intersects some bag $Y_r$, where $r \geq {j'}$  (by definition of $\mcC_{j}$ and its respective list ${\widehat L}_{j'}$).
        Hence, all partial bags 
        $Y_{\xi}, \ldots, Y_{r}$ 
        must have $p_j$ as a preferred portal,
        since each such bag intersects some cluster of $\mcC(\pi_{C_2})$
        which, as shown above, has $p_j$ as a preferred portal.
        
        Consider the in-between list ${\widehat L}_{\ell+1} = L_{\ell+1}({\widehat p}_{\ell + 1})$ that defines respective group $\mcC_{i+1}$ between groups $\mcC_{i}$ and $\mcC_{j}$ (recall $j \geq i+2$).
        List ${\widehat L}_{\ell+1}$ spans partial bags starting from $Y_{\ell+1}$
        up to at most $Y_{j'-1}$.
        Consider also list $L_{\ell+1}(p_{j})$
        which must span all partial bags $Y_{\ell+1}, \ldots, Y_{r}$,
        since $\xi \leq \ell < \ell + 1$ and all these bags have $p_j$ as a preferred portal.
        Since $r \geq j'$, we have that $|L_{\ell+1}(p_{j})| > |L_{\ell+1}({\widehat p}_{\ell + 1})|$. Thus, ${\widehat p}_{\ell + 1}$ does not define the maximum length list for $Y_{\ell+1}$. A contradiction.
        \qedhere 
    \end{itemize}
\end{proof}

\begin{lemma}
\label{lemma:left-right-portal}
For $C \in \mcC_i$, where $\mcC_i \in S_{\text{even}}$,
any node $x \in \pi_C$ that was assigned during the first subphase 
has $f(x) = p_j$, where $j \in \{i-1,i+1\}$.
\end{lemma}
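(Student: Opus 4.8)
The plan is to reduce this lemma entirely to the cluster–disjointness of paths from groups that are far apart in the ordering (\Cref{lemma:paths-same-cluster}) together with a parity observation. First I would fix $C \in \mcC_i$ with $\mcC_i \in S_{\text{even}}$ and a node $x \in \pi_C$ that becomes assigned during the first subphase, and let $C_x \in \mcC$ be the cluster containing $x$. Since $\pi_C$ is constructed at the start of phase $\phi$ out of unassigned nodes only, $x$ (and hence $C_x$) is unassigned at the start of $\phi$; and since assigning any one node of a cluster assigns the whole cluster to the same portal, the premise ``$x$ assigned during the first subphase'' means $C_x$ is assigned during the first subphase. The only assignment mechanism in the first subphase is: for a group $\mcC_k \in S_{\text{odd}}$ and $C' \in \mcC_k$, every unassigned cluster meeting $\pi_{C'}$ is assigned to $p_k$. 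Hence there exist an index $k$ with $\mcC_k \in S_{\text{odd}}$ and a cluster $C' \in \mcC_k$ such that $C_x \cap \pi_{C'} \neq \emptyset$ and $f(x) = f(C_x) = p_k$.

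Next I would pin down $k$. Because $x \in C_x \cap \pi_C$, the cluster $C_x$ is met by both $\pi_C$ (with $C \in \mcC_i$) and $\pi_{C'}$ (with $C' \in \mcC_k$), i.e.\ $C_x$ lies on both paths. If $|i-k|\ge 2$, then \Cref{lemma:paths-same-cluster} — applied with the two groups taken in whichever order makes the index gap at least $2$ (the statement is symmetric under relabeling) — says $\pi_C$ and $\pi_{C'}$ pass through no common cluster of $\mcC$, contradicting the existence of $C_x$. So $|i-k|\le 1$. Finally, $\mcC_i \in S_{\text{even}}$ and $\mcC_k \in S_{\text{odd}}$ have opposite parity, so $i \neq k$ and therefore $|i-k| = 1$, i.e.\ $k \in \{i-1,i+1\}$. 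Setting $j = k$ gives $f(x) = p_j$ with $j \in \{i-1,i+1\}$.

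The step I expect to require the most care is the first one: verifying that ``$x$ assigned during the first subphase'' truly forces $C_x$ to have been assigned via some odd group's path $\pi_{C'}$ (and not, say, through an inherited assignment from an earlier phase — ruled out because $x\in\pi_C$ is unassigned at the start of $\phi$ — or through some other bookkeeping in the subphase description), and that this $\pi_{C'}$ is exactly the object to which \Cref{lemma:paths-same-cluster} applies. Once that is nailed down, the remainder is just the index/parity deduction on top of \Cref{lemma:paths-same-cluster}, with no real difficulty; in particular there is no issue at the boundary of the sequence (when $i-1$ would be $0$ or $i+1$ would exceed $\zeta$), since the claim only asserts membership of $j$ in $\{i-1,i+1\}$ and not that both options are attainable.
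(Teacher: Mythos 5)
Your proposal is correct and follows essentially the same route as the paper: both arguments observe that the only assignment mechanism in the first subphase is the processing of some odd group $\mcC_k$ along a path $\pi_{C'}$, so the cluster $C_x$ containing $x$ is a common cluster of $\pi_C$ and $\pi_{C'}$, and then invoke \Cref{lemma:paths-same-cluster} to rule out $|i-k|\ge 2$, with parity forcing $k\in\{i-1,i+1\}$. Your write-up just makes explicit the contrapositive application of \Cref{lemma:paths-same-cluster} and the bookkeeping (that $x$ is unassigned at the start of the phase, and that whole clusters are assigned at once) that the paper's terse proof leaves implicit.
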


\begin{proof}
    Lemma \ref{lemma:paths-same-cluster} implies that 
    cluster $C_x$, where $x \in C_x$, was assigned in the first subphase during the merging process 
    of a group $\mcC_j$ where $j \in \{j-1, j+1\}$.
    In the first subphase, the processing of group $\mcC_j$ 
    assigns clusters to portal $p_j$,
    hence, $C_x$ is assigned to $p_j$.
    Thus, $x$ is assigned to $p_{j}$ as well, that is, $f(x) = p_j$.
\end{proof}

We say that two groups $\mcC_{i}$ and $\mcC_{j}$ 
are {\em conflict-free in the first subphase} if for any two clusters
$C_1 \in \mcC_{i}$ and $C_2 \in \mcC_{j}$
the respective paths $\pi_{C_1}$ and $\pi_{C_2}$ 
do not go through the same cluster in $\mcC$.
Lemma \ref{lemma:paths-same-cluster}
implies that the clusters in $S_{\text{odd}}$ 
are conflict-free in the first subphase.

We say that two groups $\mcC_{i}, \mcC_{j} \in S_{\text{even}}$ 
are {\em conflict-free in the second subphase} if for any two clusters
$C_1 \in \mcC_{i}$ and $C_2 \in \mcC_{j}$
the respective paths $\pi'_{C_1}$ and $\pi'_{C_2}$
do not go through the same cluster in $\mcC$.
Since for a cluster $C$ of a group in $S_{\text{even}}$ 
path $\pi'_C$ is a subpath of $\pi_C$,
Lemma \ref{lemma:paths-same-cluster}
implies that the clusters in $S_{\text{even}}$
are conflict-free in the second subphase.
    
\begin{lemma}
    \label{lemma:conflict-free}
    Any two groups in $S_{\text{odd}}$
    are conflict-free in the first subphase,
    and any two groups in $S_{\text{even}}$
    are conflict-free in the second subphase.
\end{lemma}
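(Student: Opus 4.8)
The plan is to derive both halves of the statement as essentially immediate corollaries of Lemma~\ref{lemma:paths-same-cluster}, which already does the real work: it shows that if $C_1 \in \mcC_i$ and $C_2 \in \mcC_j$ with $j \geq i+2$, then $\pi_{C_1}$ and $\pi_{C_2}$ traverse no common cluster of $\mcC$. The only thing left is to observe that consecutive groups within $S_{\text{odd}}$ (respectively within $S_{\text{even}}$) always have indices differing by at least $2$, so Lemma~\ref{lemma:paths-same-cluster} applies to every pair, and to check that the path objects relevant to each subphase are the ones controlled by that lemma.

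First I would handle $S_{\text{odd}}$. Take any two distinct groups $\mcC_i, \mcC_j \in S_{\text{odd}}$, and assume without loss of generality $i < j$. Since both indices are odd, $j \geq i+2$, so Lemma~\ref{lemma:paths-same-cluster} gives that for every $C_1 \in \mcC_i$ and $C_2 \in \mcC_j$ the paths $\pi_{C_1}$ and $\pi_{C_2}$ share no cluster of $\mcC$; this is exactly the definition of being conflict-free in the first subphase, since the first subphase assigns clusters intersecting the paths $\pi_C$. Next I would handle $S_{\text{even}}$ in the same way: for distinct $\mcC_i, \mcC_j \in S_{\text{even}}$ with $i < j$ we again have $j \geq i+2$. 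The subtlety here is that the second subphase works with the truncated paths $\pi'_C$ rather than $\pi_C$. But by construction $\pi'_C$ is a prefix of $\pi_C$, hence $\mcC(\pi'_{C_1}) \subseteq \mcC(\pi_{C_1})$ and $\mcC(\pi'_{C_2}) \subseteq \mcC(\pi_{C_2})$; since $\mcC(\pi_{C_1}) \cap \mcC(\pi_{C_2}) = \emptyset$ by Lemma~\ref{lemma:paths-same-cluster}, also $\mcC(\pi'_{C_1}) \cap \mcC(\pi'_{C_2}) = \emptyset$, which is conflict-freeness in the second subphase.

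There is no genuine obstacle in this lemma itself — it is a bookkeeping corollary — so the ``hard part'' is really just making sure the definitions line up: that the path used to perform merging in the first subphase is precisely $\pi_C$, that the one used in the second subphase is $\pi'_C$, and that $\pi'_C$ is a subpath of $\pi_C$ (all of which hold by the construction in the ``Subphases'' paragraph). I would also remark, for completeness, that Lemma~\ref{lemma:disjoint-groups} guarantees each unassigned cluster belongs to a unique group, so the paths $\pi_C$ and $\pi'_C$ referred to above are well defined.
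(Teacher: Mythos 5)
Your proposal is correct and follows exactly the paper's own argument: the paper likewise derives the lemma directly from Lemma~\ref{lemma:paths-same-cluster} using the fact that distinct groups within the odd (or even) subsequence have indices differing by at least $2$, together with the observation that each $\pi'_C$ is a subpath of $\pi_C$ so the second-subphase case reduces to the first.
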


Lemma \ref{lemma:conflict-free}
implies that we can analyze the detour of each group individually.
Let $\delta$ denote the detour of each assigned node at the end of the previous phase.
We give a bound on the detour caused by the first subphase.

\begin{lemma}
\label{lemma:odd-detour}
    Each node assigned while processing group $\mcC_i \in S_{\text{odd}}$
    has detour at most $\delta + 2\Delta$.
\end{lemma}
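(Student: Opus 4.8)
The plan is to fix a vertex $v$ that becomes assigned to the portal $p_i$ during the processing of the odd group $\mcC_i$ in the first subphase, and to exhibit a short $v$–$p_i$ path inside $G[f^{-1}(p_i)]$. By the description of the first subphase, $v$ lies in a cluster $C''\in\mcC$ that was unassigned at the start of phase $\phi$, that intersects the path $\pi_C$ of some $C\in\mcC_i$, and that (like every cluster meeting $\pi_C$) is assigned to $p_i$ here. Recall $\pi_C$ is a prefix, starting at some $v_C\in C$, of a shortest path $\pi_{v_C}$ from $v_C$ to its closest portal $p_{v_C}=p_i$; every vertex of $\pi_{v_C}$ has $p_i$ as its designated closest portal; and $\pi_C$ either reaches $p_i$, or ends at the vertex $x'$ immediately preceding the first already-assigned vertex $x$ of $\pi_{v_C}$, in which case $f(x)=p_i$. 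Since $f^{-1}(p_i)$ only grows over the course of the algorithm, the detour of $v$ can only decrease after it is assigned, so it suffices to bound it at assignment time.

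First I would pick any $w\in C''\cap\pi_C$. As $v,w$ lie in the common cluster $C''\subseteq f^{-1}(p_i)$ of strong diameter at most $\Delta$, we get $d_{G[f^{-1}(p_i)]}(v,w)\le\Delta$ and $d_G(w,P)\le d_G(v,P)+\Delta$, and since $w$ lies on $\pi_{v_C}$ we have $d_G(w,P)=d_G(w,p_i)$. Next, split on how $\pi_C$ ends. If $\pi_C$ reaches $p_i$, the segment of $\pi_{v_C}$ from $w$ to $p_i$ is a shortest path of length $d_G(w,p_i)=d_G(w,P)$ and lies entirely in $G[f^{-1}(p_i)]$ (every cluster along it meets $\pi_C$, hence is assigned to $p_i$), so $d_{G[f^{-1}(p_i)]}(v,p_i)\le\Delta+d_G(w,P)\le d_G(v,P)+2\Delta$.

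Otherwise $\pi_C$ ends at $x'$, the vertex immediately preceding on $\pi_{v_C}$ the first previously-assigned vertex $x$, with $f(x)=p_i$. The segment of $\pi_{v_C}$ from $w$ through $x'$ to $x$ (here $w$ precedes $x$ on $\pi_{v_C}$) is a shortest path of length $d_G(w,x)=d_G(w,p_i)-d_G(x,p_i)=d_G(w,P)-d_G(x,P)$, and it lies in $G[f^{-1}(p_i)]$ since $w,\dots,x'$ are on $\pi_C$ and $x$ belongs to a cluster already assigned to $p_i$. Because $x$ was assigned in an earlier phase, the definition of $\delta$ gives an $x$–$p_i$ path within $f^{-1}(p_i)$ of length at most $d_G(x,P)+\delta$, which still exists now. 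Concatenating these three pieces,
\[
d_{G[f^{-1}(p_i)]}(v,p_i)\le \Delta + \bigl(d_G(w,P)-d_G(x,P)\bigr) + \bigl(d_G(x,P)+\delta\bigr) = \Delta + d_G(w,P) + \delta \le d_G(v,P)+\delta+2\Delta,
\]
so $\dtr_f(v)\le\delta+2\Delta$ (in the first case using $\delta\ge0$), as claimed.

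The genuine work is just this two-case concatenation; the remaining points are bookkeeping. I would justify that $C''$ and all clusters on $\pi_C$ are unassigned at the start of phase $\phi$ and get assigned to $p_i$ in this subphase using the rule that assigning any vertex of a cluster assigns the whole cluster, together with conflict-freeness of the odd groups (Lemma~\ref{lemma:conflict-free}), so that no earlier odd group already claimed them; and that every vertex of $\pi_{v_C}$ shares the closest portal $p_i$ along a consistent shortest subpath, which is part of the algorithm's setup. The only place one must be careful is the second case, where terminating $\pi_C$ at an already-assigned cluster is exactly what spends the budget $\delta$.
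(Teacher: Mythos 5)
Your proof is correct and follows essentially the same route as the paper's: pick a vertex $w$ (the paper's $z$) in the intersection of the assigned cluster with $\pi_C$, pay $\Delta$ to reach it, split on whether $\pi_C$ reaches the portal or terminates at a previously assigned vertex $x$, and in the latter case charge $\delta$ to $x$'s existing detour while telescoping distances along the shortest path. The only imprecision is your blanket claim that $p_{v_C}=p_i$ and that every vertex of $\pi_{v_C}$ has $p_i$ as its closest portal --- in the second case the closest portal of $v_C$ (and of $w$, $x$) may differ from the preferred portal $p_i=f(x)$ --- but your computation only uses $d_G(\cdot,P)$ and the shared closest portal of $w$ and $x$, so the bound goes through unchanged.
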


\begin{proof}
Consider a node $y \in C_y$ which is assigned while processing $\mcC_i \in S_{\text{odd}}$.
There is a cluster $C \in \mcC_i$ such that $\pi_C$ intersects $C_y$. 
Each cluster that is intersected by path $\pi_C$, including $C_y$, 
is assigned to $p_i$, hence, $f(C_y) = f(y) = p_i$.

Let $v \in C$ be the origin of path $\pi_C$ and take node $z \in C_y \cap \pi_C$.
Thus, $d_{G[C_y]}(y, z) \leq \Delta$.
We also have $p_z = p_v$, since $z \in p_v$.
Moreover, 
$$d_{G}(z,p_v) = d_{G}(z,p_z) \leq d_G(y,z) + d_G(y,p_y) \leq d_{G[C_y]}(y,z) + d_G(y,p_y) \ ,$$
since otherwise $z$ would have $p_y$ as its closest portal.
Furthermore,
$$
\dtr_f(y) 
= d_{G[f^{-1}(p_i)]}(y, p_i) - d_G(y, p_y)
\leq (d_{G[C_y]}(y, z) + d_{G[f^{-1}(p_i)]}(z, p_i)) - d_G(y, p_y)\ .
$$
We examine two cases:
\begin{itemize}
    \item $\pi_C = \pi_v$:
    In this case, the end node of $\pi_C$ is $p_i$.
    Thus, $p_v = p_i$, $p_z = p_i$ and $d_{G[f^{-1}(p_i)]}(z, p_z) = d_G(z,p_z)$,
    since all nodes of $\pi_z$ are assigned to $p_z$.
    Thus,
    $$
    \dtr_f(y) 
    \leq (d_{G[C_y]}(y, z) + d_{G}(z, p_z)) - d_G(y, p_y)
    \leq 2 d_{G[C_y]}(y, z)
    \leq 2 \Delta \ .
    $$

    \item $\pi_C \neq \pi_v$:
    In this case, the first encountered node (starting from v) $x \in \pi_C$ that was assigned has $f(x) = p_i$.
    We have that $p_x = p_v$, since $x \in \pi_v$.
    By the detour $\delta$ caused in previous phases, we have 
    $d_{G[f^{-1}(p_i)]}(x, p_i) \leq \delta + d_{G}(x, p_v)$.
    Hence,
    $$d_{G[f^{-1}(p_i)]}(z, p_i) 
    \leq d_{G}(z,x) + d_{G[f^{-1}(p_i)]}(x, p_i) 
    \leq d_{G}(z,x) + \delta + d_{G}(x, p_v) = \delta + d_{G}(z,p_v) \ .$$
    Thus,
    $$
    \dtr_f(y) 
    \leq (d_{G[C_y]}(y, z) + \delta + d_{G}(z,p_v)) - d_G(y, p_y)
    \leq \delta + 2 d_{G[C_y]}(y, z)
    \leq \delta + 2 \Delta \ .
    $$
\end{itemize}
Consequently, in all cases, the detour for $y$ is bounded by $\delta + 2 \Delta$.
\end{proof}

Let $\delta'$ denote the detour of each assigned node at the beginning of 
the second subphase.
We give a bound on the detour caused by the second subphase.

\begin{lemma}
\label{lemma:even-detour}
    Each node assigned while processing group $\mcC_i \in S_{\text{even}}$,
    has detour at most $\delta' + 6\Delta$.
\end{lemma}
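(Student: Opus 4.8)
The plan is to follow the template of \Cref{lemma:odd-detour}, processing the three merging steps of the second subphase in order and showing that each step raises the detour bound by an additive $2\Delta$, so that after Step~3 the bound is $\delta'+6\Delta$. By \Cref{lemma:conflict-free} the groups of $S_{\text{even}}$ are conflict-free in the second subphase, so I can fix one group $\mcC_i\in S_{\text{even}}$ and analyze it in isolation.

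First I would isolate the single estimate that gets applied three times. Suppose a cluster $C_y$ is newly assigned to a portal $q$ because some cluster $C\in\mcC_i$ has a shortest-path prefix $\pi$ (an initial segment of $\pi_v$ for the origin $v$ of $\pi_C$, so every node of $\pi$ has closest portal $p_v$) that meets $C_y$ at a node $z$, all clusters meeting the used portion of $\pi$ get assigned to $q$, and the first cluster $C'$ strictly beyond that portion on $\pi$ is already assigned to $q$ with detour at most $D$ at its node $w\in C'$ on $\pi$ (if the portion instead ends at the portal $q$ itself, set $D=0$). Then for every $y\in C_y$,
\begin{align*}
d_{G[f^{-1}(q)]}(y,q)
&\le d_{G[C_y]}(y,z)+d_{G[f^{-1}(q)]}(z,w)+d_{G[f^{-1}(q)]}(w,q)\\
&\le \Delta + \bigl(d_G(z,p_v)-d_G(w,p_v)\bigr) + \bigl(D+d_G(w,p_v)\bigr)\\
&\le \Delta + \bigl(\Delta + d_G(y,p_y)\bigr) + D,
\end{align*}
where the middle line uses that the $z$-to-$w$ stretch of $\pi$ stays inside $f^{-1}(q)$ (every node on it lies in a cluster assigned to $q$, including $C'$) together with $d_{G[f^{-1}(q)]}(w,q)\le D+d_G(w,p_w)=D+d_G(w,p_v)$, and the last line uses the strong-diameter bound on $C_y$ and $d_G(z,p_v)=d_G(z,p_z)\le d_{G[C_y]}(y,z)+d_G(y,p_y)$. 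Hence $\dtr_f(y)\le D+2\Delta$.

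Then I would apply this step by step. In \textbf{Step 1}, for $C\in\mcC_i^1$ the path $\pi'_C=\pi_C$ is entirely unassigned and its far end is either the portal $p_i$ itself or a node assigned in an earlier phase (detour $\le\delta'$) to $p_i$ (this is the preferred portal of the origin $v$); so every node assigned in Step 1 has detour $\le\delta'+2\Delta$. In \textbf{Step 2}, a node assigned along an odd subpath $\pi''_z$ of some $\pi'_C$ with $C\in\mcC_i^2$ attaches to the cluster $C'$ immediately following $\pi''_z$; because the subpath decomposition is taken relative to the already-fixed set $\text{Assigned}(\mcC_i^1)$ and, by \Cref{lemma:left-right-portal}, the far end of $\pi_C$ is governed by a cluster assigned in the first subphase, that $C'$ lies in $\text{Assigned}(\mcC_i^1)$ or was assigned in the first subphase or an earlier phase, so $\dtr_f$ on $C'$ is $\le\delta'+2\Delta$; the estimate then gives detour $\le\delta'+4\Delta$. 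In \textbf{Step 3}, the analogous $C'$ lies in $\text{Assigned}(\mcC_i^1)\cup\text{Assigned}(\mcC_i^2)$ or was assigned earlier, so $\dtr_f$ on $C'$ is $\le\delta'+4\Delta$, giving detour $\le\delta'+6\Delta$. Taking the maximum over the three steps, every node assigned while processing $\mcC_i$ has detour at most $\delta'+6\Delta$.

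\textbf{Main obstacle.} The arithmetic is routine; the real work is the structural claim that a cluster merged in Step $k$ always attaches to a cluster whose detour was already bounded in Step $k-1$ (or earlier, or in the first subphase), never to one merged within Step $k$ itself, so that detour cannot telescope uncontrollably inside a single step. Establishing this needs the decompositions of the paths $\pi'_C$ to be frozen at the start of each step relative to the already-finalized sets $\text{Assigned}(\mcC_i^1)$ and $\text{Assigned}(\mcC_i^2)$, \Cref{lemma:paths-same-cluster} and \Cref{lemma:conflict-free} to exclude interference between distinct even groups, and \Cref{lemma:left-right-portal} to pin down which earlier-assigned cluster controls the end of each path. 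With those in hand, the chain $\delta'\to\delta'+2\Delta\to\delta'+4\Delta\to\delta'+6\Delta$ closes the proof.
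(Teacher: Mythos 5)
Your proposal is correct and follows essentially the same route as the paper: the same per-attachment estimate (detour of the attaching cluster plus $2\Delta$, exactly the computation from Lemma~\ref{lemma:odd-detour}) applied once per step, with the case analysis on whether the cluster being attached to was finalized before the second subphase, in Step~1, or in Step~2, yielding the chain $\delta'\to\delta'+2\Delta\to\delta'+4\Delta\to\delta'+6\Delta$. The only difference is presentational — you factor the $2\Delta$ estimate into a single reusable claim and take slightly coarser per-subcase bounds, whereas the paper re-invokes the odd-detour argument in each subcase — but the maxima and the final bound coincide.
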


\begin{proof}
Consider a node $y \in C_y$ which is assigned to a portal while processing
group $\mcC_i \in S_{\text{even}}$.
From the algorithm we have that $y$ was assigned in one of the three steps
of the second subphase.
In Step $k$, $y$ was assigned while processing some cluster $C \in \mcC_i^k$
because path $\pi'_C$ goes through cluster $C'_y$.
Let $\delta_k$ denote the detour of Step $k$.
We examine each step separately,

\begin{itemize}
    \item{\em Step 1:}
    In this step $f(y) = p_i$.
    The detour analysis is similar as in \ref{lemma:odd-detour},
    where instead of $\delta$, we have $\delta'$.
    Hence, the detour for $y$ is at most $\delta_1 \leq \delta' + 2 \Delta$. 
    
    \item{\em Step 2:}
    In this step $f(y) = p_j$, where $j \in \{i-1, i\}$.
    There is subpath $\pi''_z$ of $\pi'_C$, where $z$ is odd, that intersects $C_y$.
    Let $x$ be the node that immediately follows $\pi''_z$ along $\pi'_C$.
    Then, all clusters intersecting $\pi''_z$, including $C_y$, are assigned to $f(x)$.
    There are two subcases:
    \begin{itemize}
        \item 
        Node $x$ is assigned in the first subphase. Hence $f(y) = f(x) = f(x_C) = p_{i-1}$ (recall $x_C$ is the first assigned node along $\pi_C$).
        The detour analysis for $y$ is similar as in the proof of Lemma \ref{lemma:odd-detour}, where instead of $\delta$ we have $\delta'$.
        Thus, $\delta_2 \leq \delta' + 2 \Delta$.
        
        \item
        Node $x$ is assigned in Step 1. Thus, $f(y) = f(x) = p_i$.
        Again the detour analysis is similar to Lemma \ref{lemma:odd-detour},
        where instead of $\delta$ we have $\delta_1$,
        giving $\delta_2 \leq \delta_1 + 2 \Delta \leq \delta' + 4 \Delta$.
    \end{itemize}
    Combining the two subcases,
    the maximum detour for $y$ is
    $\delta_2 \leq \delta' + 4 \Delta$.
    
    \item{\em Step 3:}
    This is similar to Step 2. However, now $f(y) = p_j$, where $j \in \{i-1, i, i+1\}$. There are three subcases:
    \begin{itemize}
        \item 
        Node $x$ is assigned in the first subphase. Hence, $f(y) = f(x) = f(x_C) = p_{i+1}$.
        As in Step 2, first subcase, $\delta_3 \leq \delta' + 2 \Delta$.
        
        \item
        Node $x$ is assigned in Step 1. Hence, $f(y) = f(x) = p_{i}$.
        As in Step 2, second subcase, $\delta_3 \leq \delta' + 4 \Delta$.

        \item 
        Node $x$ is assigned in Step 2. Hence, $f(y) = f(x) = p_j$, where $j \in \{i-1, i\}$.
        The analysis of this subcase is similar to Step 2, second subcase,
        where instead of $\delta_1$ we use $\delta_2$,
        which gives detour $\delta_3 \leq \delta_2 + 2 \Delta \leq \delta' + 6 \Delta$.
    \end{itemize}
    Combining the three subcases,
    the maximum detour for $y$ is
    $\delta_3 \leq \delta' + 6 \Delta$.
\end{itemize}
Hence, the overall detour for $y$ is at most $\delta' + 6 \Delta$.
\end{proof}

\begin{lemma}
\label{lemma:group-assign}
    For group $\mcC_i = \mcC({\widehat L}_j)$,
    at least one node from each partial bag of ${\widehat L}_j$
    gets assigned to a portal during phase $\phi$.
\end{lemma}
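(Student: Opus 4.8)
The plan is, for each partial bag $Y_\ell$ appearing in the defining list ${\widehat L}_j$, to exhibit a single cluster of $\mcC$ that both meets $Y_\ell$ and becomes assigned to a portal during phase $\phi$; once that cluster is assigned, all of its vertices, and in particular its vertices in $Y_\ell$, are assigned. The starting point is the observation that, since $Y_\ell \in {\widehat L}_j$, we have $p_i = {\widehat p}_j \in P(Y_\ell)$, so there is a cluster $C_\ell \in \mcC(Y_\ell)$ having $p_i$ as a preferred portal; as $C_\ell$ meets a bag of ${\widehat L}_j$ and prefers ${\widehat p}_j$, it follows that $C_\ell \in \mcC({\widehat L}_j) = \mcC_i$. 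Hence every bag of ${\widehat L}_j$ is met by a cluster of $\mcC_i = \mcC_i^1 \sqcup \mcC_i^2 \sqcup \mcC_i^3$, and it suffices to handle the three parts separately. (Throughout, Lemma~\ref{lemma:disjoint-groups} makes each cluster belong to a unique group so $\pi_C, \pi'_C$ are well defined, and Lemma~\ref{lemma:left-right-portal} tells us that for $C \in \mcC_i^2$ the first vertex $x_C$ on $\pi_C$ assigned during the first subphase is assigned to $p_{i-1}$, and for $C \in \mcC_i^3$ to $p_{i+1}$.)

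The two easy cases are $\mcC_i \in S_{\text{odd}}$ and the part $\mcC_i^1$ of an even group. If $\mcC_i \in S_{\text{odd}}$, the first subphase processes every $C \in \mcC_i$, assigning to $p_i$ all then‑unassigned clusters meeting $\pi_C$; since $\pi_C$ begins at a vertex of $C$, processing $C_\ell$ (if it is not already assigned) assigns $C_\ell$, so a vertex of $Y_\ell$ is assigned. Identically, if $C_\ell \in \mcC_i^1$ then $\pi'_{C_\ell} = \pi_{C_\ell}$ carries no vertex assigned in the first subphase, and Step~1 of the second subphase assigns $C_\ell$ to $p_i$.

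The real work is the part $\mcC_i^2$ of an even group (the part $\mcC_i^3$ being symmetric via Step~3). Step~2 processes a single path $\pi'_C$, for the chosen cluster $C \in \mcC_i^2$ meeting $Y_R$, the rightmost partial bag of ${\widehat L}_j$ met by $\mcC_i^2$. Writing ${\widehat L}_j = Y_a,\dots,Y_b$, the plan is to show that the connected set $C \cup \pi'_C$ meets every partial bag $Y_a,\dots,Y_R$. The engine is the interval structure of the path decomposition $T$: each vertex occupies an interval of bags, so every connected vertex set whose vertices are all currently unassigned — in particular every cluster of $\mcC$, and $C \cup \pi'_C$ — occupies an interval of consecutive non‑empty partial bags. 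Now $C \cup \pi'_C$ meets $Y_R$ through $C$, while its other end, the predecessor $y$ on $\pi_C$ of $x_C$, shares a bag with $x_C$; by Lemma~\ref{lemma:left-right-portal}, $x_C$ lies in a cluster $D \in \mcC_{i-1}$, and by maximality of the list defining $\mcC_{i-1}$, $D$ cannot meet $Y_a$ (otherwise ${\widehat p}_{i-1}$ would be a preferred portal of $Y_a$ and that list would extend past $Y_{a-1}$), so, $D$'s partial‑bag interval being contiguous, $x_C$ — hence $y$ — sits in a partial bag of index at most $a-1$. Therefore the partial‑bag interval of $C \cup \pi'_C$ contains $Y_a,\dots,Y_R$, and since $Y_R$ is the rightmost bag of ${\widehat L}_j$ met by $\mcC_i^2$, this covers every bag of ${\widehat L}_j$ met by a $\mcC_i^2$‑cluster: for each such $Y_\ell$ some vertex of $C \cup \pi'_C$ lies in $Y_\ell$, hence some cluster meets both $\pi'_C$ and $Y_\ell$, and Step~2 assigns every cluster meeting $\pi'_C$ (clusters on the even subpaths were already assigned in Step~1; clusters on the odd subpaths are assigned to the portal of the following assigned cluster). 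Symmetrically Step~3 covers every bag of ${\widehat L}_j$ met by $\mcC_i^3$, and Step~1 covers those met by $\mcC_i^1$; since every bag of ${\widehat L}_j$ is met by a cluster of one of the three parts, all of $Y_a,\dots,Y_b$ is covered.

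The main obstacle is exactly the interval bookkeeping in the $\mcC_i^2$ case: verifying that the single processed path $\pi'_C$, together with $C$, really reaches across every partial bag of ${\widehat L}_j$ met by $\mcC_i^2$, and handling the degenerate possibility that $\pi'_C$ is empty (that is, $C$ was itself assigned during the first subphase, in which case $C$ still meets $Y_R$ and is assigned, but the interval argument must be phrased so it does not depend on $\pi'_C$ being nonempty) as well as the possibility that $C$ or the start vertex $v$ of $\pi_C$ lies in bags outside ${\widehat L}_j$. Everything else is routine given the earlier structural lemmas: processing a path assigns every cluster it meets, and the odd (resp.\ even) subsequence groups are pairwise conflict‑free (Lemmas~\ref{lemma:disjoint-groups}, \ref{lemma:paths-same-cluster}, \ref{lemma:conflict-free}), so the argument may be carried out one group at a time.
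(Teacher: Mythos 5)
Your proposal is correct and follows essentially the same route as the paper's proof: the same case split (odd groups, then $\mcC_i^1$, $\mcC_i^2$, $\mcC_i^3$), the same use of the chosen cluster meeting $Y_R$ (resp.\ $Y_L$) and its path $\pi'_C$, and the same appeal to the interval property of the path decomposition to conclude that $\pi'_C$ crosses the intermediate partial bags. The only differences are presentational—you argue globally that $C \cup \pi'_C$ spans $Y_a,\dots,Y_R$ and locate $x_C$ via maximality of the preceding group's list, whereas the paper argues bag-by-bag—and the loose ends you flag (the empty-$\pi'_C$ degeneracy, and why $x_C$'s cluster lies to the \emph{left} of $Y_a$ rather than to the right) are closed by the very same maximality-plus-connectivity argument you already use, so they do not constitute a gap.
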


\begin{proof}
    Let $Y_k$ be a partial bag of ${\widehat L}_j$.
    We will show that at least one node of $Y_k$ gets assigned
    during phase $\phi$.
    By definition of $\mcC_i$, $Y_k$ intersects a cluster $C' \in \mcC_{i}$.
    
    First, consider the case where $\mcC_i$ is handled
    at the first subphase ($i$ is odd).
    Cluster $C'$ is assigned to the portal $p_i$ in the first subphase.
    Hence, at least one node of $Y_k$ gets assigned, as needed. 

    Now, consider the case where $\mcC_i$ is handled
    at the second subphase ($i$ is even).
    If some node of $Y_k$ has already been assigned during the first subphase
    (when processing adjacent groups in $S_{\text{odd}}$),
    then the result is proven.
    
    Suppose none of the nodes in $Y_k$
    are assigned in the first subphase.
    If some cluster that intersects $Y_k$
    gets assigned in Step 1 of the second subphase, 
    then clearly, at least one node of $Y_k$
    gets assigned, as needed.
    
    Hence, consider the case where no cluster intersecting $Y_k$
    is assigned in Step 1 of the second subphase.
    Thus, it has to be that $C' \in \mcC^2_i \cup \mcC^3_i$,
    since if $C' \in \mcC^1_i$, $C'$ would be assigned in Step 1.
    We examine the following cases:
    \begin{itemize}
        \item $C' \in \mcC^2_i$:
        In this case, we show that some node of $Y_k$ will get assigned in Step 3
        of the subphase.
        Let $C \in \mcC^2_i$ be the cluster intersecting $Y_R$
        that the algorithm uses to determine $x'_C$
        (we know that $C$ exists since $\mcC^2_i$ is not empty).     
        If $Y_k = Y_R$, then cluster $C$ gets assigned as well,
        which implies that some node of $Y_k$ is assigned, as needed.
        
        Now consider, $Y_k \neq Y_L$.
        It has to be that $k < R$, by the definition of $Y_R$.
        Node $x_C$ must be in a partial bag on the left of $Y_k$,
        since none of the nodes of $Y_k$ were assigned in the first subphase.
        Therefore,
        path $\pi'_{C}$ must go through partial bag $Y_k$.
        Thus, some subpath $\pi''_z$ of unassigned nodes 
        ($z$ is odd) intersects an unassigned cluster $C''$ 
        that intersects $Y_k$.
        Cluster $C''$ will get assigned, and hence, 
        a node of $Y_k$ in the intersection with $C''$ is also assigned, as needed.
        
        \item $C' \in \mcC^3_i$:
        If a node in $Y_k$ is assigned in Step 2 the result is proven.
        Thus, suppose that no node in $Y_k$ is assigned in Step 2.
        This case is symmetric to the previous case,
        where instead of $Y_R$ we have $Y_L$.
        Thus, a node of $Y_k$ will be assigned in Step 3.
        \qedhere
    \end{itemize}
\end{proof}

The next result follows immediately by Lemma \ref{lemma:group-assign}
and by induction on the number of phases.
Each cluster in $\mcC'$ 
is connected, since whenever a cluster $C \in \mcC$ 
that contains the origin of one of the paths $x_C$, $x'_C$, or $x''_z$ (for some $z \geq 1$)
is assigned to a portal $p$,
together with $C$ all the unassigned clusters along the respective path 
($x_C$, $x'_C$, or $x''_z$) 
are assigned to the same portal $p$.

\begin{lemma}
\label{lemma:pathwidth-correctness}
By the end of phase $\pw +1$, each node $v \in V$ is assigned to a portal
such that the respective cluster $G[f^{-1}(v)] \in \mcC'$ is connected. 
\end{lemma}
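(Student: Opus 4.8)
The plan is to combine the two things we have already established: (1) the \emph{progress} guarantee of each phase (Lemma \ref{lemma:group-assign}, together with the observation that odd-subsequence groups are handled directly in the first subphase), and (2) the \emph{connectivity} preservation remark stated just before the lemma. I would argue by induction on the phase number $\phi$, proving the invariant: \emph{at the end of phase $\phi$, every bag $X_i$ of $T$ either has all its member nodes assigned, or has at least $\phi$ of its nodes assigned}, and additionally \emph{every cluster that has been assigned lies, together with all the unassigned clusters along the relevant path that triggered its assignment, inside a connected piece of $\mcC'$}.

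First I would verify the progress invariant. Fix a phase $\phi$ and a bag $X_i$ with at least one still-unassigned node at the start of $\phi$; its partial bag $Y$ is then nonempty. By the construction of cluster groups, $Y$ is a (sub)bag of exactly one list ${\widehat L}_j$, hence of exactly one group $\mcC_k = \mcC({\widehat L}_j)$ in the sequence $\mcC_1,\dots,\mcC_\zeta$. If $k$ is odd, then in the first subphase the cluster $C' \in \mcC_k$ that $Y$ intersects is assigned to $p_k$, so a node of $Y$ becomes assigned; if $k$ is even, Lemma \ref{lemma:group-assign} directly says at least one node from each partial bag of ${\widehat L}_j$ gets assigned during phase $\phi$. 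Either way, the number of unassigned nodes in $X_i$ strictly decreases, so it drops by at least one each phase until $X_i$ is fully assigned. Since $|X_i| \le \pw+1$, after $\pw+1$ phases $X_i$ has no unassigned node; as this holds for every bag and every vertex of $V$ lies in some bag, every $v \in V$ is assigned by the end of phase $\pw+1$. (That $f(C)$ is always a genuine portal follows since each assignment step sets $f(C)=p_i$ or $f(C)=f(C')$ for a previously-assigned $C'$, and the base assignments in the subphases are to portals $p_i\in P$.)

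Next I would handle connectivity, which is really just spelling out the displayed remark preceding the lemma. Every time the algorithm assigns a batch of clusters, it does so along a path $\pi_C$, $\pi'_C$, $\pi''_z$, or a suffix/prefix thereof: it takes the cluster $C$ containing the path's origin and \emph{all} unassigned clusters the path crosses, and assigns them to a single portal $p$ (either a new $p_i$, or the $f(\cdot)$-value of the already-assigned cluster sitting at the far end of the path). Since the path is a connected walk in $G$ and consecutive clusters along it share a vertex, the union of these newly-assigned clusters is connected; and when it attaches to a previously-built piece of $f^{-1}(p)$ it does so through the shared endpoint cluster at the end of the path. A straightforward induction on the order in which clusters are assigned then shows $G[f^{-1}(p)]$ is connected for every portal $p$, i.e. every $G[f^{-1}(v)] \in \mcC'$ is connected.

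The main obstacle, and the one deserving the most care, is the progress argument in the \emph{even} subphase: Lemma \ref{lemma:group-assign} is invoked as a black box, but one must make sure its hypotheses are met — in particular that when $C' \in \mcC_i^2 \cup \mcC_i^3$ the chosen pivot cluster $C$ intersecting $Y_R$ (resp.\ $Y_L$) really does induce a path $\pi'_C$ through the bag $Y_k$, which relies on the fact (from the list-maximality arguments of Lemmas \ref{lemma:disjoint-groups}–\ref{lemma:paths-same-cluster}) that $Y_k$ genuinely belongs to the list ${\widehat L}_j$ defining $\mcC_i$ and that no node of $Y_k$ was stolen in the first subphase or in Step 1. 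Everything else is bookkeeping; once the per-phase progress of exactly $\ge 1$ new assigned node per nonempty partial bag is nailed down, the $\pw+1$ bound and the final statement follow immediately by induction.
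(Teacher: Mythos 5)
Your proposal is correct and follows essentially the same route as the paper, which proves this lemma by exactly the induction on phases via Lemma \ref{lemma:group-assign} (each bag of size at most $\pw+1$ gains at least one newly assigned node per phase while any remain unassigned) together with the observation that every assignment step absorbs all unassigned clusters along a path into a single portal's class, preserving connectivity. Your write-up is simply a more detailed elaboration of the paper's two-sentence argument, so no further comparison is needed.
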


Next, we continue to prove a bound on the maximum detour by the end of phase $\phi$.

\begin{lemma}
\label{lemma:subphase-induction}
At the end of phase $\phi$, where $1 \leq \phi \leq \pw + 1$,
each $v \in V$ that has been assigned to a portal has a detour $\dtr_f(v) \leq 8 \phi \Delta$.
\end{lemma}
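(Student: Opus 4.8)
\textbf{Proof plan for Lemma~\ref{lemma:subphase-induction}.}
The plan is to induct on the phase number $\phi$, using Lemmas~\ref{lemma:odd-detour} and~\ref{lemma:even-detour} as the per-subphase detour increments and Lemma~\ref{lemma:conflict-free} to justify analyzing groups one at a time. For the base case I would take $\phi = 0$: before any phase runs, the only assigned nodes are the portals themselves (each assigned to itself), so their detour is $0 \le 8 \cdot 0 \cdot \Delta$. (Equivalently one could verify $\phi = 1$ directly, but $\phi=0$ is cleaner since it gives $\delta = 0$ going into the first phase.)

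For the inductive step, assume that at the end of phase $\phi - 1$ every assigned vertex $v$ has $\dtr_f(v) \le 8(\phi-1)\Delta$. First I would observe the monotonicity fact that underlies the whole argument: once a cluster is assigned to a portal $p$, the set $f^{-1}(p)$ only grows in later subphases and phases, so $d_{G[f^{-1}(p)]}(v,p)$ can only decrease; hence the detour of an already-assigned vertex never increases, and in particular all vertices assigned before phase $\phi$ still have detour $\le 8(\phi-1)\Delta \le 8\phi\Delta$ at the end of phase $\phi$. It remains to bound the detour of vertices first assigned during phase $\phi$. In the notation of Lemma~\ref{lemma:odd-detour}, the quantity $\delta$ (the detour of each assigned node at the end of the previous phase) satisfies $\delta \le 8(\phi-1)\Delta$ by the inductive hypothesis; therefore any vertex assigned in the first subphase (processing $S_{\text{odd}}$) has detour at most $\delta + 2\Delta \le 8(\phi-1)\Delta + 2\Delta$. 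Consequently, the quantity $\delta'$ appearing in Lemma~\ref{lemma:even-detour} (the detour of each assigned node at the \emph{beginning of the second subphase}) is bounded by the maximum of $8(\phi-1)\Delta$ (nodes from earlier phases) and $8(\phi-1)\Delta + 2\Delta$ (nodes from the first subphase), i.e.\ $\delta' \le 8(\phi-1)\Delta + 2\Delta$. Plugging this into Lemma~\ref{lemma:even-detour}, every vertex assigned in the second subphase has detour at most $\delta' + 6\Delta \le 8(\phi-1)\Delta + 2\Delta + 6\Delta = 8\phi\Delta$.

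Combining the three cases—vertices assigned before phase $\phi$, vertices assigned in the first subphase of phase $\phi$, and vertices assigned in the second subphase of phase $\phi$—every assigned vertex has detour at most $8\phi\Delta$ at the end of phase $\phi$, completing the induction. (As a final remark to be made after this lemma, combined with Lemma~\ref{lemma:pathwidth-correctness} this yields that after phase $\pw+1$ every vertex is assigned with detour at most $8(\pw+1)\Delta$, giving the $8(\pw+1)$-distortion bound claimed in Theorem~\ref{thm:caPW}.)

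I expect the only real subtlety—rather than a genuine obstacle—to be the careful bookkeeping of the two auxiliary detour parameters $\delta$ and $\delta'$: one must be precise that $\delta'$ accounts for first-subphase assignments (hence the extra $+2\Delta$) but \emph{not} a compounding of the full $+6\Delta$ from the second subphase, and one must invoke the monotonicity of $d_{G[f^{-1}(p)]}(\cdot,p)$ to rule out any increase in the detour of previously assigned vertices when clusters are later merged into the same coarsened cluster.
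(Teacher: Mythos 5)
Your proposal is correct and follows essentially the same route as the paper: induction on $\phi$, with Lemma~\ref{lemma:odd-detour} contributing $+2\Delta$ in the first subphase and Lemma~\ref{lemma:even-detour} contributing $+6\Delta$ in the second, for a total of $+8\Delta$ per phase. Your explicit tracking of $\delta$ and $\delta'$ and the monotonicity remark (that $d_{G[f^{-1}(p)]}(v,p)$ can only decrease as $f^{-1}(p)$ grows, so previously assigned vertices keep their bound) just make precise what the paper's terser proof leaves implicit.
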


\begin{proof}
We prove the claim by induction on $\phi$.
For the base case $\phi=1$,
from Lemma \ref{lemma:odd-detour},
since $\delta = 0$,
the detour contributed to newly assigned nodes 
in the first subphase is at most $\delta' \leq 2\Delta$.
From Lemma \ref{lemma:even-detour},
the additional detour contributed to newly assigned nodes 
in the second subphase is at most $\delta' + 6\Delta \leq 8 \Delta$.
Therefore,
for $\phi = 1$,
$\dtr_f(v) \leq 8 \Delta$ for any assigned node $v \in V$.

Suppose that the claim holds for $\phi \geq 1$.
Now consider phase $\phi+1$. 
Similar to the base case analysis,
the additional detour due to the two subphases 
is at most $8\Delta$.
Thus, the total detour of any assigned node $v \in V$
is $\dtr_f(v) \leq 8 \phi \Delta + 8 \Delta = 8(\phi+1) \Delta$, 
as needed.
\end{proof}

Clearly, the pathwidth aggregation algorithm executes 
in a polynomial number of steps 
with respect to the number of nodes in $G$, bags in $T$, and $\pw$.
Theorem \ref{thm:caPW} follows 
from Lemma \ref{lemma:pathwidth-correctness} and 
from Lemma \ref{lemma:subphase-induction} 
by setting $\phi = \pw+1$.

\section{Combining Reduction, Cluster Aggregation and Dangling Nets}\label{sec:putTogether}
In this section, we combine our reduction of strong sparse partitions to dangling nets and cluster aggregation (\Cref{thm:MetaHierarchical}) with known dangling net constructions and our cluster aggregation solutions from \Cref{sec:improvedCA}. The result is our strong sparse partition hierarchies (\Cref{dfn:hierarchies}) which when fed into \Cref{thm:decompToUST} gives our UST constructions.

\subsection{Hierarchies and USTs in General Graphs}

We begin by proving the existence of good hierarchies of strong sparse partitions for general graphs.
\begin{theorem}\label{thm:SSPGen}
    Every edge-weighted graph $G = (V,E, w)$ admits a $\gamma$-hierarchy of $(\alpha, \tau)$-sparse strong partitions for $\alpha = \tau = O(\log n)$ and $\gamma = O(\log ^ 2 n)$.
\end{theorem}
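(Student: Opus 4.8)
The plan is to obtain \Cref{thm:SSPGen} as an immediate consequence of the reduction \Cref{thm:MetaHierarchical}, instantiated with the general-graph dangling net of \Cref{thm:MPXbasedClusteringGeneral} and the general-graph cluster aggregation algorithm of \Cref{thm:caGen}. All that is needed is to verify the two hypotheses of \Cref{thm:MetaHierarchical} for $G$ with parameters $\alpha = \tau = O(\log n)$ and $\beta = O(\log n)$, and then substitute into its conclusion.

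For the dangling net hypothesis, \Cref{thm:MPXbasedClusteringGeneral} supplies, for every scale $\Delta > 0$, a poly-time computable $\Delta$-covering $(O(\log n), O(\log n))$-sparse dangling net $N$ of $G$; this gives $\alpha = \tau = O(\log n)$. Since $N$ is constructed in polynomial time it has $|N| = \poly(n)$ vertices, so $G+N$ has $\poly(n)$ vertices. For the cluster aggregation hypothesis, recall that \Cref{alg:Hierarchical} invokes cluster aggregation on $G+N$ with portal set $N$ and input partition $\mcC_i \cup \{\{t\}\}_{t \in N}$; this partition has $\kappa \le |\mcC_i| + |N| \le n + |N| = \poly(n)$ clusters, so by \Cref{thm:caGen} the instance admits a poly-time $O(\log\kappa) = O(\log n)$-distortion solution, i.e.\ $\beta = O(\log n)$.

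Feeding $\alpha,\tau,\beta = O(\log n)$ into \Cref{thm:MetaHierarchical} then yields a $\gamma$-hierarchy of strong $(8\alpha+4, \tau)$-sparse partitions of $G$ with
\[
8\alpha + 4 = O(\log n), \qquad \tau = O(\log n), \qquad \gamma = 2\beta(2\alpha+1) = O(\log^2 n),
\]
which is exactly the claimed bound, and the construction is polynomial-time by the ``furthermore'' clause of \Cref{thm:MetaHierarchical}.

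There is no conceptual obstacle here; the only points requiring care are bookkeeping. First, the cluster aggregation guarantee is stated as $O(\log\kappa)$ rather than $O(\log n)$, so one must observe that $\kappa$ is polynomial in $n$ — which holds precisely because the dangling net has polynomial size, whence $\log \kappa = O(\log n)$. Second, since the dangling net and cluster aggregation routines are randomized and succeed only with high probability, and \Cref{alg:Hierarchical} calls each once per level over polynomially-many levels (the loop runs at most $O(\log_\gamma \mathrm{diam}(G)) = \poly(\text{input size})$ times, since once $\gamma^i \geq \mathrm{diam}(G)$ all of $V$ can be placed into a single cluster), one should amplify the per-call success probability to $1 - n^{-c}$ for a large constant $c$ (e.g.\ by a constant-factor increase in the number of rounds of \Cref{alg:cluster-agg-general}) and take a union bound over all levels to conclude that the entire hierarchy is produced correctly with high probability.
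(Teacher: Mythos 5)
Your proposal is correct and follows essentially the same route as the paper: instantiate \Cref{thm:MetaHierarchical} with the dangling nets of \Cref{thm:MPXbasedClusteringGeneral} ($\alpha=\tau=O(\log n)$) and the cluster aggregation of \Cref{thm:caGen} ($\beta=O(\log n)$), yielding $\gamma = 2\beta(2\alpha+1) = O(\log^2 n)$. The extra bookkeeping you supply (that $\kappa = \poly(n)$ so $O(\log\kappa)=O(\log n)$, and the union bound over levels for the high-probability guarantees) is left implicit in the paper but is a correct and welcome elaboration.
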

\begin{proof}
	\sloppy 
By \Cref{thm:MPXbasedClusteringGeneral} every general graph has a poly-time computable $\Delta$-covering $(O(\log n), O(\log n))$-sparse dangling net $N$ for every $\Delta > 0$. Furthermore, for any such $N$ we have that $G+N$ has a poly-time computable $O(\log n)$-distortion cluster aggregation by \Cref{thm:caGen}. Applying our reduction theorem (\Cref{thm:MetaHierarchical}) gives the result.
\end{proof}
\noindent We note that our cluster aggregation for trees (\Cref{thm:caTree}) allows us to improve $\gamma$ to $O(\log n)$ in the above theorem (for the case where $G$ is a tree).

Combining \Cref{thm:decompToUST} with \Cref{thm:SSPGen} gives our UST theorem for general graphs.
\begin{theorem}\label{thm:USTGen}
    Every edge-weighted graph $G = (V,E, w)$ admits an $O(\log ^ 7 n)$-approximate universal Steiner tree. Furthermore, this tree can be computed in polynomial time.
\end{theorem}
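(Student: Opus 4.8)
The plan is to derive this as an immediate corollary of the two ingredients already in hand: the hierarchy-existence result \Cref{thm:SSPGen} and the hierarchy-to-UST reduction \Cref{thm:decompToUST}. First I would invoke \Cref{thm:SSPGen} to obtain, for the given edge-weighted graph $G = (V,E,w)$, a $\gamma$-hierarchy of strong $(\alpha,\tau)$-sparse partitions with parameters $\alpha = \tau = O(\log n)$ and $\gamma = O(\log^2 n)$. Crucially, I would note that this hierarchy is produced in polynomial time (with high probability), since the construction in \Cref{thm:SSPGen} runs \Cref{alg:Hierarchical} using the poly-time dangling nets of \Cref{thm:MPXbasedClusteringGeneral} and the poly-time cluster aggregation of \Cref{thm:caGen}, each of which succeeds with probability $1 - n^{-\Omega(1)}$; a union bound over the $O(\log_\gamma(\mathrm{diam}\, G)) = \poly(n)$-bounded number of levels keeps the overall success probability at $1 - n^{-\Omega(1)}$.

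Next I would feed this hierarchy into \Cref{thm:decompToUST}, which guarantees a poly-time algorithm computing an $O(\alpha^2 \tau^2 \gamma \log n)$-approximate UST. Substituting the parameters above gives
\[
\alpha^2 \tau^2 \gamma \log n = O(\log^2 n) \cdot O(\log^2 n) \cdot O(\log^2 n) \cdot \log n = O(\log^7 n),
\]
which is exactly the claimed bound. Since both the hierarchy construction and the reduction of \Cref{thm:decompToUST} run in polynomial time, the resulting spanning tree $T$ is computed in polynomial time, completing the proof.

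There is no genuine obstacle here: the theorem is a one-line composition of earlier results, and all the real work lives in \Cref{thm:SSPGen} (hence in \Cref{thm:MetaHierarchical} and \Cref{thm:caGen}) and in \Cref{thm:decompToUST}. The only points requiring care are (i) the exponent arithmetic — making sure the three $\log^2 n$ factors from $\alpha^2$, $\tau^2$, $\gamma$ together with one more $\log n$ multiply to $\log^7 n$ rather than, say, $\log^6 n$ or $\log^8 n$ — and (ii) the randomization bookkeeping: the general-graph dangling nets and cluster aggregation are randomized, so the statement ``can be computed in polynomial time'' should be read as ``with high probability'' (or in expected polynomial time), and I would phrase it accordingly, mirroring the convention stated in \Cref{tab:overview}.
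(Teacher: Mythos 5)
Your proposal is correct and is exactly the paper's argument: Theorem~\ref{thm:USTGen} is obtained by combining the hierarchy of Theorem~\ref{thm:SSPGen} (with $\alpha = \tau = O(\log n)$, $\gamma = O(\log^2 n)$) with the reduction of Theorem~\ref{thm:decompToUST}, and the exponent arithmetic $\alpha^2\tau^2\gamma\log n = O(\log^7 n)$ is right. Your added remarks on the with-high-probability reading of the running-time claim are consistent with the paper's stated conventions.
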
    

\subsection{Hierarchies and USTs in Bounded Pathwidth Graphs}

We begin by giving our strong sparse partition hierarchies for pathwidth $\pw$ graphs.
\begin{theorem}\label{thm:SSPPW}
	Every edge-weighted graph $G = (V,E, w)$ with pathwidth $\pw$ admits a $\gamma$-hierarchy of $(\alpha, \tau)$-sparse strong partitions for $\alpha = O(\pw)$, $\tau = O(\pw^2)$ and $\gamma = O(\pw^2)$.
\end{theorem}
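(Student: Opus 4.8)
The plan is to instantiate our reduction theorem (\Cref{thm:MetaHierarchical}) with the dangling net construction for bounded pathwidth graphs (\Cref{thm:MPXbasedClusteringPathwidth}) and our cluster aggregation algorithm for bounded pathwidth graphs (\Cref{thm:caPW}), exactly paralleling the proof of \Cref{thm:SSPGen}. Concretely, fix $\Delta > 0$. By \Cref{thm:MPXbasedClusteringPathwidth}, $G$ admits a poly-time computable $\Delta$-covering $(O(\pw), O(\pw^2))$-sparse dangling net $N$; set $\alpha = O(\pw)$ and $\tau = O(\pw^2)$ accordingly.

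The one point that needs checking before invoking cluster aggregation on $G+N$ is that the augmented graph still has small pathwidth. Since $M$ is a matching from $N$ to $V$, every vertex of $N$ has degree exactly one in $G+N$, so $G+N$ is obtained from $G$ by attaching pendant vertices. Given a path decomposition of $G$ of width $\pw$, for each $t \in N$ matched to $v \in V$ we insert a new bag $\{v, t\}$ immediately after some bag containing $v$; this yields a path decomposition of $G+N$ of width $\max(\pw, 1)$. Taking $\pw \ge 1$ (the case $\pw = 0$ is a graph with no edges, where the hierarchy is trivial), $G+N$ has pathwidth $\pw$, so \Cref{thm:caPW} gives a poly-time $8(\pw+1) = O(\pw)$-distortion cluster aggregation solution on $G+N$ with portals $N$; set $\beta = 8(\pw+1)$.

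Applying \Cref{thm:MetaHierarchical} with these $\alpha, \beta, \tau$ then yields a $2\beta(2\alpha+1)$-hierarchy of strong $(8\alpha+4, \tau)$-sparse partitions. Substituting $\alpha = O(\pw)$, $\beta = O(\pw)$, $\tau = O(\pw^2)$ gives $\gamma = 2\beta(2\alpha+1) = O(\pw) \cdot O(\pw) = O(\pw^2)$, sparsity parameter $8\alpha + 4 = O(\pw)$, and ball-intersection bound $\tau = O(\pw^2)$, as claimed. Poly-time computability of the hierarchy follows from the poly-time computability of both the dangling nets (\Cref{thm:MPXbasedClusteringPathwidth}) and the cluster aggregation solutions (\Cref{thm:caPW}), as guaranteed by the ``furthermore'' clause of \Cref{thm:MetaHierarchical}.

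This is essentially a bookkeeping exercise rather than a result with a genuine obstacle; all the work has already been done in \Cref{thm:caPW}, \Cref{thm:MPXbasedClusteringPathwidth}, and \Cref{thm:MetaHierarchical}. The only mild subtlety is the pathwidth-preservation claim for $G+N$ described above, and the trivial handling of the degenerate $\pw = 0$ case; everything else is a direct substitution of parameters into the meta-theorem.
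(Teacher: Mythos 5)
Your proposal is correct and follows exactly the paper's own argument: instantiate \Cref{thm:MetaHierarchical} with the dangling net of \Cref{thm:MPXbasedClusteringPathwidth} and the cluster aggregation of \Cref{thm:caPW}, after noting that attaching the pendant net vertices increases the pathwidth by at most one. Your explicit bag-insertion argument for that last point is slightly more detailed than the paper's one-line observation, but the route and the parameter bookkeeping are identical.
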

\begin{proof}
	\sloppy 
	By \Cref{thm:MPXbasedClusteringPathwidth} every pathwidth $\pw$ graph has a poly-time computable $\Delta$-covering $(O(\pw), O(\pw^2))$-sparse dangling net $N$ for every $\Delta > 0$. Furthermore, observe that adding a dangling net to such a graph increases its pathwidth by at most $1$. It follows that for any such $N$ we have that $G+N$ has a poly-time computable $O(\pw)$-distortion cluster aggregation by \Cref{thm:caPW}. Applying our reduction theorem then (\Cref{thm:MetaHierarchical}) gives the result.
\end{proof}

Combining \Cref{thm:SSPPW} with \Cref{thm:decompToUST} we conclude our UST for pathwidth $\pw$ graphs.
\begin{theorem}\label{thm:USTPW}
	Every edge-weighted graph $G = (V,E, w)$ with pathwidth $\pw$ admits an $O(\pw^8 \cdot \log n)$-approximate universal Steiner tree. Furthermore, this tree can be computed in polynomial time.
\end{theorem}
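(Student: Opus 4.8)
The plan is to derive Theorem~\ref{thm:USTPW} as an immediate corollary of Theorem~\ref{thm:SSPPW} and Theorem~\ref{thm:decompToUST}, exactly as the general-graph case is handled. First I would invoke Theorem~\ref{thm:SSPPW} to obtain, for any edge-weighted graph $G$ of pathwidth $\pw$, a $\gamma$-hierarchy of strong $(\alpha,\tau)$-sparse partitions with $\alpha = O(\pw)$, $\tau = O(\pw^2)$ and $\gamma = O(\pw^2)$, computable in polynomial time. Then I would feed this hierarchy into Theorem~\ref{thm:decompToUST}, which converts a $\gamma$-hierarchy of strong $(\alpha,\tau)$-sparse partitions into an $O(\alpha^2\tau^2\gamma\log n)$-approximate UST in polynomial time.

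The remaining work is just arithmetic on the parameters. Substituting the bounds from Theorem~\ref{thm:SSPPW} gives an approximation ratio of
\[
O(\alpha^2\tau^2\gamma\log n) = O\!\left(\pw^2 \cdot \pw^4 \cdot \pw^2 \cdot \log n\right) = O(\pw^8 \cdot \log n),
\]
which is exactly the claimed bound. Polynomial-time computability is inherited: Theorem~\ref{thm:SSPPW} produces the hierarchy in polynomial time (its proof relies on the poly-time dangling net of Theorem~\ref{thm:MPXbasedClusteringPathwidth} and the poly-time cluster aggregation of Theorem~\ref{thm:caPW}, combined via the constructive reduction Theorem~\ref{thm:MetaHierarchical}), and Theorem~\ref{thm:decompToUST} runs in polynomial time given the hierarchy. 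One small point worth spelling out, already noted in the proof of Theorem~\ref{thm:SSPPW}, is that attaching a dangling net to a pathwidth-$\pw$ graph only raises the pathwidth to $\pw+1$ (each net vertex is a degree-one pendant, so it can be inserted into a single bag), so Theorem~\ref{thm:caPW} applies to $G+N$ with distortion $8((\pw+1)+1) = O(\pw)$; this is already absorbed into the $\alpha,\tau,\gamma$ bounds of Theorem~\ref{thm:SSPPW}, so nothing extra is needed here.

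There is essentially no obstacle: the theorem is a composition of two black-box results whose hypotheses have been established earlier in the paper, and the only content is tracking the polynomial dependence on $\pw$. The one place to be mildly careful is ensuring the exponents multiply correctly — $\alpha^2$ contributes $\pw^2$, $\tau^2$ contributes $\pw^4$, and $\gamma$ contributes $\pw^2$, for a total of $\pw^8$ — and that the $\log n$ factor from Theorem~\ref{thm:decompToUST} is the only $n$-dependence, matching the statement. So the proof will be a two-line citation plus the parameter substitution above.
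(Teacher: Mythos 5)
Your proposal matches the paper exactly: Theorem~\ref{thm:USTPW} is obtained by plugging the parameters $\alpha = O(\pw)$, $\tau = O(\pw^2)$, $\gamma = O(\pw^2)$ from Theorem~\ref{thm:SSPPW} into the $O(\alpha^2\tau^2\gamma\log n)$ bound of Theorem~\ref{thm:decompToUST}, yielding $O(\pw^8\log n)$. The parameter arithmetic and the poly-time claim are both correct, and the aside about pathwidth increasing by one under the dangling net is indeed already handled inside Theorem~\ref{thm:SSPPW}, so nothing further is needed.
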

\subsection{Hierarchies and USTs in Bounded Doubling Dimension Graphs}

We next give our strong sparse partition hierarchies for doubling dimension $\ddim$ graphs. 
The proof of \Cref{thm:SSPDD} requires more work than \Cref{thm:SSPGen,thm:SSPPW}.
\begin{theorem}\label{thm:SSPDD}
	Every edge-weighted graph $G = (V,E, w)$ with doubling dimension $\ddim$ admits a $\gamma$-hierarchy of $(\alpha, \tau)$-sparse strong partitions for $\alpha = O(\ddim)$, $\tau = \tilde{O}(\ddim)$ and $\gamma = \tilde{O}(\ddim^3)$.
\end{theorem}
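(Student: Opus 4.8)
The plan is to obtain the hierarchy from the doubling dangling net of \Cref{thm:MPXbasedClusteringDoubling} and the doubling cluster aggregation of \Cref{thm:caDD}, fed into the bottom-up scheme of \Cref{alg:Hierarchical} --- exactly as in \Cref{thm:SSPGen,thm:SSPPW}. The complication, and the reason this proof needs more work, is that \Cref{thm:caDD} is not a drop-in cluster-aggregation subroutine: it applies only when the portals form a $\Lambda$-net at the scale $\Lambda=\Theta(\ddim^3\log\ddim)\cdot\Delta$ and, crucially, when the input clusters come equipped with \emph{well-separated centers} (pairwise distance $\ge\tfrac{c_\Lambda}{3}\Delta$). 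So I would re-run the reduction of \Cref{thm:MetaHierarchical} while threading through an extra invariant: for every $i$, beyond $\calC_i$ being a strong $\gamma^i$-diameter $(\alpha,\tau)$-sparse partition, each cluster of $\calC_i$ carries a center, and all these centers are pairwise at distance at least $\tfrac{c_\Lambda}{3}\gamma^i$. The base case $\calC_0=\{\{v\}\}$ is immediate (each vertex is its own center; pairwise distances are $\ge 1$, which dominates $\tfrac{c_\Lambda}{3}$ after a harmless rescaling of $G$). Two preliminary observations I would record: using the separation of the net's matching endpoints guaranteed by \Cref{thm:MPXbasedClusteringDoubling}, the augmented graph $G+N$ has doubling dimension $\ddim'=O(\ddim)$, so \Cref{thm:caDD} run on $G+N$ gives distortion $\beta=O((\ddim')^2\log\ddim')=\tilde O(\ddim^2)$; and the matching endpoints $\{v_t\}_{t\in N}$ are pairwise at distance at least $c_\Lambda\Lambda$.

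For the inductive step at level $i$ I would set the net scale to $\Lambda^{(i)}=\Theta(\lambda(\ddim')^3\log\ddim')\cdot\gamma^i$ --- precisely what \Cref{thm:caDD} asks for relative to the cluster diameter $\Delta=\gamma^i$ --- and invoke \Cref{thm:MPXbasedClusteringDoubling} to build a $\Lambda^{(i)}$-covering $(O(\ddim'),\tilde O(\ddim'))$-sparse dangling net $N$. Inside $G+N$ the portals $N$ indeed form a $\Lambda^{(i)}$-net: covering is the net's covering property, and separation holds since $d_{G+N}(t,t')\ge d_G(v_t,v_{t'})\ge c_\Lambda\Lambda^{(i)}$. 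To apply \Cref{thm:caDD} on $G+N$ with portals $N$ and input partition $\calC_i\cup\{\{t\}\}_{t\in N}$ I must supply well-separated centers for all of these clusters: the $\calC_i$-clusters keep their inductive centers, which are $\ge\tfrac{c_\Lambda}{3}\gamma^i$ apart by induction as long as $\gamma$ is not too large; and the singleton $\{t\}$ takes $t$ as its center. The one point needing care is that a net vertex $t$ be $\ge\tfrac{c_\Lambda}{3}\gamma^i$ from the (real) centers of the $\calC_i$-clusters. For this I would apply a mild modification to the dangling net --- a uniform additive shift of all matching weights by a suitable constant fraction of $\Lambda^{(i)}$ --- which forces $d_{G+N}(t,v)\ge w_M(t)\ge\tfrac{c_\Lambda}{3}\gamma^i$ for every $v\in V$, leaves the additive-sparsity parameter unchanged up to a constant, and only inflates the covering radius by a constant factor (absorbable into the constant hidden in $\Lambda^{(i)}$, with $c_\Lambda$ a universal constant). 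With the hypotheses verified, \Cref{thm:caDD} yields an $f$ with detour at most $\beta\gamma^i$ at every vertex.

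I then form $\calC_{i+1}=\{f^{-1}(t)\setminus N:t\in N\}$ (discarding empty parts). Diameter, coarsening, and ball preservation are verified verbatim as in the proof of \Cref{thm:MetaHierarchical}: the bound $d_{G+N[f^{-1}(t)]}(v,t)\le d_{G+N}(v,N)+\beta\gamma^i\le\Lambda^{(i)}+\beta\gamma^i$ gives strong diameter at most $2(\Lambda^{(i)}+\beta\gamma^i)$, which we need to equal $\gamma^{i+1}$ (forcing $\gamma\gtrsim\lambda(\ddim')^3\log\ddim'+\beta$), and ball preservation is the same argument using the net's additive sparsity. It remains to re-establish the center invariant for $\calC_{i+1}$, and this is the heart of the argument: I would designate $v_t$ (the matching endpoint of $t$) as the center of $f^{-1}(t)\setminus N$, and the key claim is that $v_t$ genuinely lies in $f^{-1}(t)$ whenever that part is non-empty. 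Indeed, if $f^{-1}(t)$ contains a real vertex then some $\calC_i$-cluster is \emph{satisfied by} $t$ along a shortest path that necessarily ends with the matching edge $v_t\,t$ (the only edge into $t$); hence the $\calC_i$-cluster containing $v_t$ \emph{chooses} $t$, and --- via that same edge, which is itself a shortest $v_t$--$t$ path along which every cluster chooses $t$ --- is satisfied by $t$ and thus assigned to $t$. Since distinct $v_t$'s are $\ge c_\Lambda\Lambda^{(i)}=\Theta(\lambda(\ddim')^3\log\ddim')\cdot\gamma^i$ apart, the invariant for $\calC_{i+1}$ holds provided $\gamma\lesssim\lambda(\ddim')^3\log\ddim'$.

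Finally, the two constraints on the contraction factor --- $\gamma\gtrsim\lambda(\ddim')^3\log\ddim'+\beta$ from the diameter growth and $\gamma\lesssim\lambda(\ddim')^3\log\ddim'$ from propagating the center separation --- leave a non-empty window because $\beta=\tilde O(\ddim^2)$ is dominated by $\lambda(\ddim')^3\log\ddim'$ once $\lambda$ (and $\ddim'$) are large enough; this is exactly the constant-balancing recorded in \Cref{rem:constantsChoiseDoubling}. Picking $\gamma=\tilde\Theta(\ddim^3)$ in this window and reading off $\alpha$ and $\tau$ from the net (\Cref{thm:MPXbasedClusteringDoubling}, together with the constant-factor losses from the matching-weight shift) yields a $\gamma$-hierarchy of strong $(\alpha,\tau)$-sparse partitions with $\alpha=O(\ddim)$, $\tau=\tilde O(\ddim)$, $\gamma=\tilde O(\ddim^3)$. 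The main obstacle throughout is maintaining the center invariant across all levels while simultaneously satisfying every hypothesis of \Cref{thm:caDD} --- in particular reconciling the large $\Lambda$-net scale that cluster aggregation demands with the requirement that $\gamma$ stay $\tilde O(\ddim^3)$ and that the freshly introduced net vertices not crowd the previously chosen centers.
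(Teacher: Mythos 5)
Your proposal is correct and follows essentially the same route as the paper's proof: thread the well-separated-center invariant through the induction, bound the doubling dimension of $G+N$, choose $\Lambda$ and $\gamma$ at the $\tilde\Theta(\ddim^3)$ scale so that the net separation $c_\Lambda\Lambda$ becomes exactly the required $\tfrac{c_\Lambda}{3}\gamma^{i+1}$ separation of the new centers $N_V$, and balance the constants as recorded in \Cref{rem:constantsChoiseDoubling}. Your explicit argument that $v_t\in f^{-1}(t)$ (which the paper asserts without proof) and the matching-weight shift to keep the singleton net clusters away from the old centers are careful refinements of details the paper glosses over, not a different approach.
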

\begin{proof}
	Let $c_\Lambda=\Theta(1),\alpha=O(\ddim),\tau=\tilde{O}(\ddim)$ be the parameters from \Cref{thm:MPXbasedClusteringDoubling}. That is given a graph with doubling dimension $\ddim$ and parameter $\Lambda>0$, \Cref{thm:MPXbasedClusteringDoubling} returns a $\Lambda$-covering $(\alpha,\tau)$-sparse dangling net $N$, such that $\min_{u,v\in N_V}d_G(u,v)\ge c_\Lambda\cdot\Lambda$ ($c_{\Lambda}\in
	(0,1)$).
	
	We will construct $\gamma$-hierarchy of strong $(12\alpha,\tau)$-sparse partitions, (for $\gamma=\tilde{O}(\ddim^3)$ to be determined later) with the additional property that every cluster $C\in\calC_i$ has a center $v_C$ such that the minimal distance between the cluster centers is $\min_{C,C'\in\calC_i}d_G(v_C,v_{C'})\ge \frac{c_{\Lambda}}{3}\cdot \gamma^i$.
	Specifically, we will follow the lines of the  proof of \Cref{thm:MetaHierarchical}, and argue that the cluster aggregation procedure can always be applied.
	The construction of the hierarchical partition is bottom up. The lowest level of singletons is trivially constructed. Suppose by induction that we obtained a partition $\calC_i$ with the cluster centers as above.
	
	Set 
	$\ddim'=\ddim\cdot(1+\log\frac{8}{c_\Lambda})$.
	For the inductive step, 
	we first apply \Cref{thm:MPXbasedClusteringDoubling} with $\Lambda=\lambda\cdot\ddim'^3\cdot\log\ddim'\cdot\gamma^i=O(\ddim^3\cdot\log\ddim)\cdot\gamma^i$, (for a large enough constant $\lambda$ to be determined later)
	to obtain a $\Lambda$-covering $(\alpha=O(\ddim),\tau=\tilde{O}(\ddim))$-sparse dangling net $N$.
	For every $p\in N$, let $p_v\in V$ be the vertex $p$ is matched to in $V$. Then by 
	\Cref{thm:MPXbasedClusteringDoubling} it holds that the minimum pairwise distance between $N_V=\{p_v\mid p\in N\}$ is $c_\Lambda\cdot\Lambda$.
	We argue that the doubling dimension of $G+N$ is $\ddim'$.
	\begin{claim}
		$G+N$ has doubling dimension at most $\ddim'=\ddim\cdot(1+\log\frac{8}{c_\Lambda})$.
	\end{claim}
	\begin{proof}
		Consider a ball $B_{G+N}(v,r)$, we will show that it can be covered by $2^{\ddim'}$ balls of radius $\frac r2$. 
		We can assume that $v\in V$, as otherwise, if $v=p\in N$, than either $B_{G+N}(p,r)$ is the singleton $\{p\}$, or $B_{G+N}(p,r)\subseteq B_{G+N}(p_v,r)$.
		We proceed by case analysis.
		\begin{itemize}
			
			\item If $r\ge4\Lambda$, the ball $B_G(v,r)$ can be covered by $2^{\ddim}$ balls of radius $\frac r2$, each of which can be covered by $2^{\ddim}$ balls of radius $\frac r4$. In particular, there are  $2^{2\ddim}$ centers $u_1,u_2,\dots$ such that  $B_G(v,r)\subseteq\bigcup_{i=1}^{2^{2\ddim}}B_G(u_i,\frac r4)$. As every vertex in $N$ is at distance at most $\Lambda$ from some vertex in $V$, it follows that $B_{G+N}(v,r)\subseteq\bigcup_{i=1}^{2^{2\ddim}}B_{G}(u_{i},\frac{r}{4}+\Lambda)\subseteq\bigcup_{i=1}^{2^{2\ddim}}B_{G}(u_{i},\frac{r}{2})$, as required.

			\item Else, $r<4\Lambda$. Then by the packing property (\Cref{lem:doubling_packing}), $\left|N_V\cap B_G(v,r)\right|\le\left(\frac{2r}{c_\Lambda\cdot\Lambda}\right)^{\ddim}< \left(\frac{8}{c_\Lambda}\right)^{\ddim}$. $B_{G+N}(v,r)$ can be covered by $2^{\ddim}$ balls of radius $\frac r2$ in $G$, and in addition $\left(\frac{8}{c_\Lambda}\right)^{\ddim}$ balls centered in $N_V\cap B_G(v,r)$ vertices, $2^{\ddim\cdot(1+\log\frac{8}{c_\Lambda})}$ balls in total.\qedhere
		\end{itemize}
		
	\end{proof}
	
	We next apply \Cref{thm:caDD} of $G+N$ on the clusters $\calC_i$ with portals $N$. Note that we can use \Cref{thm:caDD} as indeed $\calC_i$ have strong diameter $\gamma^i$, $N$ is a $\lambda\cdot\ddim'^3\log\ddim'\cdot\gamma^i$-net with minimum distance between portals $c_\Lambda\cdot\Lambda$, and the clusters $\calC_i$ have cluster centers at minimum pairwise distance $\frac{c_\Lambda}{3}\cdot\gamma^i$.
	Thus \Cref{thm:caDD} is guaranteed to return a solution to the cluster aggregation problem with distortion $4c_{s}\cdot\ln\frac{4}{c_{\Lambda}}\cdot\ddim'^{2}\log\ddim'\cdot\gamma^{i}=\frac{4c_{s}\cdot\ln\frac{4}{c_{\Lambda}}}{\lambda\cdot\ddim'}\cdot\Lambda$.
	Denote
	\begin{equation}
		\beta=\frac{4c_{s}\cdot\ln\frac{4}{c_{\Lambda}}}{\lambda\cdot\ddim'}\le\frac{1}{2\alpha}~.\label{eq:cslambdaGlobal}
	\end{equation}
	Here $c_s$ is the constant defined in \Cref{thm:caDD}. The inequality holds for an appropriate choice of $c_s,\lambda$ (see \Cref{rem:constantsChoiseDoubling}).
	The diameter of the resulting clusters will be at most
	\begin{equation*}
		2\cdot\left(\Lambda+\beta\cdot\Lambda\right)\le3\Lambda=3\lambda\cdot\ddim'^{3}\cdot\log\ddim'\cdot\gamma^{i}=\gamma^{i+1}~.
	\end{equation*}
	In particular, we set $\gamma=3\lambda\cdot\ddim'^{3}\cdot\log\ddim'=O(\ddim^{3}\log\ddim)$.
	Note that the resulting clusters $\calC_{i+1}$ have centers $N_V$ at pairwise distance $c_{\Lambda}\cdot\Lambda=\frac{c_{\Lambda}}{3}\cdot\gamma^{i+1}$
	as required by the induction.

	The actual clusters we will return $\calC_{i+1}$, are obtained by removing the auxilery vertices $N$ from the clusters returned by the cluster aggregation procedure. As these are degree $1$  vertices, the strong diameter can only decrease by this.
	Clearly, $\calC_{i}$ is a refinement of $\calC_{i+1}$.
	
	It remains to  prove the ball preservation property. Fix a vertex $v\in V$. 
	Consider a ball $B_G(v, R)$ around $v$ of radius $R=\frac{\gamma^{i+1}}{12\alpha}=\frac{\Lambda}{4\alpha}$.
	For every $u\in B_{G}(v,R)$, the cluster aggregation solution assigns $u$ to a portal $t_u\in N$.
	By the guarantees of cluster aggregation we have
	\begin{align*}
		d_{G+N}(v,t_{u}) & \le d_{G+N}(v,u)+d_{G+N}(u,t_{u})\\
		& \le d_{G}(v,u)+d_{G+N}(u,N)+\beta\cdot\Lambda\\
		& \le d_{G+N}(v,N)+2d_{G}(v,u)+\beta\cdot\Lambda\\
		& \le d_{G+N}(v,N)+\left(\frac{2}{4\alpha}+\beta\right)\cdot\Lambda\\
		& \le d_{G+N}(v,N)+\left(\frac{1}{2\alpha}+\frac{1}{2\alpha}\right)\cdot\Lambda=d_{G+N}(v,N)+\frac{\Lambda}{\alpha}~.
	\end{align*}

	As $N$ is a $\Lambda$-covering $(\alpha, \tau)$-sparse dangling net, it holds that $$\left|\left\{ t\in N\mid d_{G+N}(v,t)\le d_{G+N}(v,N)+\frac{\Delta}{\alpha}\right\} \right|\le\tau.$$ It follows that the vertices in $B_{G}(v,R)$ are assigned to at most
	$\tau$ different portals, as required.

	%
\end{proof}

Combining \Cref{thm:SSPDD} with \Cref{thm:decompToUST} we conclude our UST for doubling graphs:
\begin{theorem}\label{thm:USTDD}
	Every edge-weighted graph $G = (V,E, w)$ with doubling dimension $\ddim$ admits an $\tilde{O}(\ddim^7) \cdot \log n$-approximate universal Steiner tree. Furthermore, this tree can be computed in polynomial time.
\end{theorem}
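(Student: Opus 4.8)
The plan is to simply combine the strong sparse partition hierarchy we built for doubling graphs (\Cref{thm:SSPDD}) with the black-box reduction from such hierarchies to USTs (\Cref{thm:decompToUST}). Concretely, first I would invoke \Cref{thm:SSPDD} to obtain, in polynomial time, a $\gamma$-hierarchy of strong $(\alpha,\tau)$-sparse partitions of $G$ with $\alpha = O(\ddim)$, $\tau = \tilde{O}(\ddim)$, and $\gamma = \tilde{O}(\ddim^3)$. Then I would feed this hierarchy into \Cref{thm:decompToUST}, which produces in polynomial time an $O(\alpha^2\tau^2\gamma\log n)$-approximate UST.

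The remaining step is the routine arithmetic of substituting the parameters: $O(\alpha^2\tau^2\gamma\log n) = O\big(\ddim^2 \cdot \tilde{O}(\ddim^2) \cdot \tilde{O}(\ddim^3)\big)\cdot\log n = \tilde{O}(\ddim^7)\cdot\log n$, where the $\tilde{O}$ absorbs the $\poly(\log\ddim)$ factors coming from $\tau$ and $\gamma$. Since both invoked theorems guarantee polynomial-time computability and the composition of two polynomial-time procedures is polynomial-time, the resulting UST is also computable in polynomial time. There is essentially no obstacle here—all the difficulty was already discharged in proving \Cref{thm:SSPDD} (in particular the doubling-dimension cluster aggregation of \Cref{thm:caDD} and the analysis of $G+N$'s doubling dimension) and in the prior work underlying \Cref{thm:decompToUST}; the only thing to be careful about is bookkeeping the $\tilde{O}$ notation consistently, i.e.\ confirming that the $\poly(\log\ddim)$ factors hidden in $\tau=\tilde O(\ddim)$ and $\gamma=\tilde O(\ddim^3)$ remain inside $\tilde O(\ddim^7)$ after squaring and multiplying.

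\begin{proof}
By \Cref{thm:SSPDD}, $G$ admits a poly-time computable $\gamma$-hierarchy of strong $(\alpha,\tau)$-sparse partitions with $\alpha = O(\ddim)$, $\tau = \tilde{O}(\ddim)$, and $\gamma = \tilde{O}(\ddim^3)$. Applying \Cref{thm:decompToUST} to this hierarchy yields, in polynomial time, an $O(\alpha^2\tau^2\gamma\log n)$-approximate UST. Substituting the parameters gives
\[
O(\alpha^2\tau^2\gamma\log n) = O\!\left(\ddim^2\cdot\tilde{O}(\ddim^2)\cdot\tilde{O}(\ddim^3)\right)\cdot\log n = \tilde{O}(\ddim^7)\cdot\log n,
\]
where the $\tilde{O}$ absorbs the $\poly(\log\ddim)$ factors hidden in $\tau$ and $\gamma$. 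Since both applied theorems provide polynomial-time algorithms, the resulting UST is computable in polynomial time.
\end{proof}
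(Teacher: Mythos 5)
Your proposal is correct and matches the paper exactly: Theorem~\ref{thm:USTDD} is obtained precisely by combining the hierarchy of Theorem~\ref{thm:SSPDD} with the reduction of Theorem~\ref{thm:decompToUST}, and your parameter substitution $O(\alpha^2\tau^2\gamma\log n)=\tilde{O}(\ddim^7)\cdot\log n$ is the same routine bookkeeping the paper leaves implicit.
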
    

\section{Conclusion and Future Directions}\label{sec:conclusion}

In this work we gave the first poly-logarithmic universal Steiner trees in general graphs and strong sparse partition hierarchies. Our approach reduces poly-logarithmic strong sparse partition hierarchies to the cluster aggregation problem and dangling nets and then leverages a known connection between strong sparse partition hierarchies and universal Steiner trees. We gave $O(\log n)$-distortion solutions for cluster aggregation and improved solutions in trees and bounded pathwidth and doubling dimension graphs.

We conclude with some open questions and potential future directions:
\begin{enumerate}
    \item \textbf{Improved UST and Strong Sparse Partition Bounds:} The most obvious remaining open direction is to close the gap between our $O(\log ^7 n)$-approximate USTs and the known $\Omega(\log n )$ lower bound of \cite{jia2005universal}. 
    Note that even assuming that $G$ is the complete graph with metric weights, the best upper bound is $O(\log ^2 n)$ \cite{gupta2006oblivious}.
    Along these lines, it would be interesting to improve the reduction of USTs to hierarchical strong sparse partitions or to improve the $\gamma$-parameter in the strong hierarchical sparse partitions for general graphs (the $\alpha$ and $\tau$ parameters are tight up to a $\log\log n$ factor \cite{filtser20}). 
    \item \textbf{USTs and Strong Sparse Partition Hierarchies for New Graph Families:} Similarly, improving the bounds for new restricted graph families to get better USTs and strong sparse partitions is exciting. Currently, we know of no super-constant lower bound for UST for either constant treewidth or constant pathwidth graphs.
    \item \textbf{Improved Cluster Aggregation in Restricted Graph Families:} One particularly interesting piece of this puzzle for restricted graph families is the status of cluster aggregation in special graph families. In particular, we conjecture that planar graphs and, more generally, minor-free graphs always admit $O(1)$-distortion cluster aggregation solutions. Likewise, we conjecture that $\tilde{O}(\ddim)$-distortion cluster aggregation should be possible in graphs of doubling dimension $\ddim$ (our current upper bound is $\tilde{O}(\ddim^2)$).
    \item \textbf{Scattering Partitions:} A graph decomposition closely related to sparse partitions are the scattering partitions of \cite{filtser20}. A $(1,\tau,\Delta)$-scattering partition is a partition into connected clusters with (weak) diameter at most $\Delta$, such that every shortest path of length at most $\Delta$ intersects at most $\tau$ different clusters. Note that every strong sparse partition is also scattering, while weak sparse partitions and scattering partitions are incomparable. 
    In a similar spirit to \Cref{thm:decompToUST}, in \cite{filtser20} it was shown that if every induced subgraph of $G$ admits an $(1,\tau,\Delta)$-scattering partition for all $\Delta$, then $G$ admits an $O(\tau^3)$-stretch solution for the ``Steiner point removal problem'' (SPR). See \cite{filtser20} for background and definitions. Recently Chang \etal \cite{CCLMST23} showed that planar graphs admit $(1,O(1))$-scattering partition schemes (implying an $O(1)$-stretch solution for SPR problem on such graphs).
    However, scattering partitions for general graphs are not yet understood. Filtser \cite{filtser20} showed that $n$-vertex graphs admit $(1,O(\log^2 n))$-scattering partition schemes, and conjectured that they admit $(1,O(\log n))$-scattering partition schemes.
    %
\end{enumerate}

\section*{Acknowledgments}
The authors would like to thank Bernhard Haeupler for helpful discussions leading to the proof of \Cref{lem:ddimInependentEvents}.
The authors would also like to thank R.\ Ravi for many useful discussions related to \Cref{thm:caGen}.

Busch supported by National Science Foundation grant CNS-2131538.
Filtser supported by the Israel Science Foundation (grant No.\ 1042/22).
Hathcock supported by the National Science Foundation Graduate Research Fellowship under grant No.~DGE-2140739.
Hershkowitz funded by the SNSF, Swiss National Science Foundation grant 200021\_184622.  Rajaraman supported by National Science Foundation grant CCF-1909363.

\bibliography{abb,main,Fil20}
\bibliographystyle{alpha}

\end{document}